\newcommand{\anote}[1]{}
\newcommand{\gnote}[1]{}
\begin{document}

\title{Efficient Certificates of Anti-Concentration \\
Beyond Gaussians}
\author{
Ainesh Bakshi\thanks{Supported by the NSF TRIPODS program (award DMS2022448) and Ankur Moitra’s ONR grant.} \\
\texttt{ainesh@mit.edu} \\
MIT
\and
Pravesh K. Kothari \\
\texttt{kothari@cs.princeton.edu} \\
IAS \& Princeton University
\and
Goutham Rajendran \\
\texttt{gouthamr@cmu.edu} \\
CMU
\and
Madhur Tulsiani\thanks{ Supported by the NSF grants CCF-1816372 and CCF-2326685.}\\
\texttt{madhurt@ttic.edu} \\
TTIC
\and
Aravindan Vijayaraghavan\thanks{Supported by NSF grants CCF-1652491, CCF-2154100 and ECCS-2216970.}\\
\texttt{aravindv@northwestern.edu}\\
Northwestern
}
\date{}

\maketitle

\begin{abstract}
A set of high dimensional points $X=\{x_1, x_2,\ldots, x_n\} \subseteq \R^d$ in isotropic position is said to be $\delta$-anti concentrated if for every direction $v$, the fraction of points in $X$ satisfying $|\iprod{x_i,v}|\leq \delta$ is at most $O(\delta)$. 
%
%
%
Motivated by applications to list-decodable learning and clustering, three recent works \cite{karmalkar2019list,raghavendra2020list,bakshi2021list} considered the problem of constructing efficient certificates of anti-concentration in the average case, when the set of points $X$ corresponds to samples from a Gaussian distribution.
Their certificates played a crucial role in several subsequent works in algorithmic robust statistics on list-decodable learning and settling the robust learnability of arbitrary Gaussian mixtures.
Unlike related efficient certificates of concentration properties that are known for wide class of distributions~\cite{kothari2018robust}, the aforementioned approach has been limited only to \textit{rotationally invariant} distributions (and their affine transformations) with the only prominent example being Gaussian distributions.

This work presents a new (and arguably the most natural) formulation for anti-
concentration. 
Using this formulation, we give quasi-polynomial time verifiable sum-of-squares certificates of anti-concentration that hold for a wide class of \emph{non-Gaussian} distributions including anti-concentrated bounded product distributions and uniform distributions over $L_p$ balls (and their affine transformations). Consequently, our method upgrades and extends results in algorithmic robust statistics e.g., list-decodable learning and clustering, to such distributions. 
%

As in the case of previous works, our certificates are also obtained via relaxations in the sum-of-squares hierarchy. 
However, the nature of our argument differs significantly from prior works that formulate anti-concentration as the non-negativity of an explicit polynomial.
Our argument constructs a canonical integer program for anti-concentration and analyzes an SoS relaxation of it. The certificates used in prior works can be seen as specific dual certificates to this program. 

From a technical standpoint, unlike existing works that explicitly construct sum-of-squares certificates, our argument relies on duality and analyzes a pseudo-expectation on large subsets of the input points that take a small value in some direction. Our analysis uses the method of polynomial \emph{reweightings} to reduce the problem to analyzing only \emph{analytically dense} or \emph{sparse} directions.




\end{abstract}

\thispagestyle{empty}
\clearpage
\newpage

\microtypesetup{protrusion=false}
\tableofcontents{}
\thispagestyle{empty}
\microtypesetup{protrusion=true}
\clearpage
\setcounter{page}{1}


\newcommand{\ie}{i.e.,\xspace}
\newcommand{\eg}{e.g.,\xspace}
\newcommand{\etal}{et al.\xspace}


\section{Introduction}

%
A high-dimensional random variable $X \in \R^d$ is said to be $(\eps,\delta)$ anti-concentrated, if
along any direction $v \in S^{d-1}$ and $a \in \R$, we have $\Pr\Bracks{\Abs{\Iprod{X,v}-a} \leq \delta} \leq
\eps$. 
For example, a spherical Gaussian (zero mean and identity covariance) can easily be shown to be $(\delta, \delta)$ anti-concentrated for
any $\delta > 0$ (for sufficiently large $d$).
While concentration properties of distributions can be studied via tools such as moment bounds, the study of anti-concentration often requires more delicate structural information about the underlying
distribution~\cite{krishnapur2016anti}.

In this work, motivated by various statistical applications, we consider the problem of \emph{certifying} anti-concentration properties for a set of points. 
In analogy with the above definition of random variables, a set of high dimensional
points $X=\{x_1, x_2,\ldots, x_n\} \subseteq R^d$ in isotropic position (zero empirical mean, and identity empirical covariance) is said to be
$(\eps,\delta)$ anti-concentrated if for every unit vector $v$ and a shift $a \in \R$, the fraction of points
in $X$ satisfying $|\iprod{x_i,v}|\leq \delta$ is at most $\eps$  \ie $\Pr_{i \in
  [n]}\Bracks{\Abs{\Iprod{x_i,v}-a} \leq \delta} \leq \eps$. 
%
%
\madhur{Actually, the move from $(\eps,\delta)$ anti-concentration to $\delta$ anti-concentration is not always without loss of generality, or even just polynomial loss of parameters, so may need to add more why we can do this. Moving it to later when are already in the statistical setting.}

The problem of deciding if a given set $X$ is $(\eps,\delta)$ anti-concentrated is
NP-hard in the worst case.\footnote{In the special case when $\delta=0$, this corresponds to deciding whether there exists an $\eps$ fraction of points that lie on a $d-1$ dimensional subspace. This is known to be NP-hard, see e.g. ~\cite{hardt2013algorithms, khachiyan1995}.} 
%
Moreover, for sufficiently small values of the anti-concentration scale $\delta \approx
1/\sqrt{d}$, efficient algorithms for certifying $\delta$ anti-concentration 
imply algorithms for the (homogeneous version of) the Continuous Learning With Errors (hCLWE)
problem considered by Bruna, Regev, Song, and Tang~\cite{bruna2021continuous}. 
%

In the average case, when the point set $X$ corresponds to independent samples of certain
distributions, the works~\cite{karmalkar2019list,raghavendra2020list} identified certificates of anti-concentration as a key 
 primitive for problems in algorithmic robust statistics. This eventually led to algorithms for list-decodable
linear regression~\cite{karmalkar2019list}, subspace recovery~\cite{raghavendra2020list2, bakshi2021list}, covariance estimation~\cite{ ivkov2022list}, and learning mixtures of
Gaussians~\cite{bakshi2022robustly, liu2021settling}.
Moreover, certifying anti-concentration is also shown to be \emph{necessary} for list-decodable
regression~\cite{karmalkar2019list}. 

The certificates for anti-concentration in prior works~\cite{karmalkar2019list,raghavendra2020list} were obtained by formulating it as the non-negativity of an explicit polynomial that admits a sum-of-squares decomposition. In fact, the explicit forms needed additional tinkering to suit the application at hand. And the only natural examples of distributions that satisfied the requirements included Gaussian distributions and the uniform distribution over the Euclidean ball/sphere.

%
%
%
In this work, we consider the following natural question:

\begin{quote}
    \begin{center}
        {\em Does there exist an application-agnostic certificate of anti-concentration that applies to a broad class of distributions? }
    \end{center}
\end{quote}

In this work, we obtain such certificates of anti-concentration via sum-of-squares relaxations of a natural integer program. Our certificates apply to a broad class of distributions that goes beyond Gaussians, 
or rotationally invariant distributions (or their affine transformations).
We also unify several prior applications of anti-concentration and resolve other open problems raised in prior works.











\subsection{Our Results}

We begin by defining the class of distributions that would be the focus of this paper (refer to \cref{def:reasonably-anti-concentrated-dist} for a quantitative definition). 

\begin{definition}[Reasonably anti-concentrated distributions]
\label{def:reasonably-anti-concentrated-dist-intro}
A distribution $\calD$ over $\R^d$ is \textit{reasonably anti-concentrated} if it satisfies the following: (a) $\calD$ is anti-concentrated (b) $\calD$ has \textit{strictly sub-exponential} tails, (c) $\calD$ satisfies \textit{hyper-contractivity} of linear forms and (d) $\calD$ is almost $k$-wise independent. 
\end{definition}

Apart from Gaussian-like distributions, we show that a broad family of distributions are \textit{reasonably anti-concentrated}, including arbitrary affine transformations of anti-concentrated bounded product distributions e.g., the solid hypercube (see \cref{thm:product-distributions-are}) and uniform distributions over $\ell_p$ balls for any $p>1$ (see \cref{thm:anti_concentrated_distributions}).

Now, given a point set $X = \{x_1, \ldots, x_n\} \subseteq \R^d$ of samples from a distribution with mean $\mu$ and covariance $\Sigma$, consider the following natural optimization problem to certify anti-concentration:
\begin{equation}
\label{eqn:cons-system-intro}
\begin{split}
    & \hspace{0.2in} \max_{w \in \mathbb{R}^n , v\in \mathbb{R}^d  } \hspace{0.2in} \frac{1}{n} \sum_{ i \in [n]} w_i  
    \\
     \textrm{s.t.} & \left \{\begin{aligned}
      &\forall i\in [n]
      & w_i^2
      & = w_i \\
      &\forall i\in [n] & w_i \Iprod{x_i-\mu, v}^2 & \leq w_i \cdot \delta^2  \cdot (v^\top \Sigma v) \\
    \end{aligned}\right\}
\end{split}
\end{equation}
Here, the $w_i$'s are indicators, and the constraints count how many samples are smaller than $\delta^2$ fraction of the variance along $v$. This program is \textit{the} natural integer program formulation of anti-concentration\footnote{This formulation was first suggested in a \href{https://youtu.be/xKoMFCQojCk?t=2496}{talk} at the Simons Institute Berkeley.}.  Intuitively, this program finds a direction $v$ such that projecting the samples along $v$ witnesses the largest fraction of \textit{concentrated} samples.

Our main result shows that we can efficiently \textit{certify} an upper bound on the objective value of this program as long as $X$ is drawn i.i.d. from any affine transformation of a \textit{reasonably anti-concentrated} distribution.  

\begin{theorem}[Efficient Anti-Concentration certificate, Informal version of \cref{thm:main-anti-concentration-thm}]
Given a point set $X = \Set{x_i }_{i\in [n]}$ in $\R^d$ such that $X$ is drawn iid from an unknown affine transformation of a \textit{reasonably anti-concentrated} distribution $\calD$ and $0<\delta <1$, there exists an algorithm that with high probability over the samples, certifies $\max_{w, v} \sum_{i \in [n]} w_i /n \leq \delta$ and runs in $n^{  \mathcal{O}_{\delta}(\log(d)) }$ time.
\footnote{Throughout, we use $\mathcal{O}_{\delta}$
 to suppress the explicit dependence on $\delta$}
\end{theorem}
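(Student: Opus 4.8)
The plan is to solve the degree-$D$ sum-of-squares relaxation of the program in \cref{eqn:cons-system-intro}, for $D = \mathcal{O}_\delta(\log d)$; as this is a semidefinite program over pseudo-moments of degree $D$ in the $n+d$ variables $(w,v)$, it is solvable in time $(nd)^{\mathcal{O}_\delta(\log d)} = n^{\mathcal{O}_\delta(\log d)}$ (using $n \geq \mathrm{poly}_\delta(d)$, which we need anyway for concentration), and the content of the theorem is that its optimum is at most $\delta$ --- after absorbing absolute constants into $\delta$, as in the formal statement --- with high probability over $X$. Since the class of reasonably anti-concentrated distributions is closed under affine maps and we may whiten using the empirical mean and covariance (the resulting error is $o(1)$ in operator norm by matrix concentration under the sub-exponential and hypercontractivity hypotheses, hence harmless), we assume from now on that $\mu = 0$ and $\Sigma = I$, so the constraints read $w_i^2 = w_i$ and $w_i \langle x_i, v\rangle^2 \leq w_i \delta^2$, to which we append $\|v\|^2 = 1$ by homogeneity.

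Suppose for contradiction that some degree-$D$ pseudo-expectation $\widetilde{\mathbb{E}}$ over $(w,v)$ satisfies these constraints with $\widetilde{\mathbb{E}}\bigl[\tfrac1n\sum_i w_i\bigr] > \delta$; we aim to contradict the anti-concentration of the empirical distribution, which holds with high probability by anti-concentration of $\calD$ and a union bound over a $\delta$-net of directions. The tool is polynomial reweighting: reweighting $\widetilde{\mathbb{E}}$ by the square of a low-degree polynomial yields a new, lower-degree pseudo-expectation supported where that polynomial is large. We first reweight by powers of $\tfrac1n\sum_i w_i$ to restrict attention to the directions $v$ witnessing the largest concentrated subsets; writing $\Sigma_v := \widetilde{\mathbb{E}}[vv^\top]$ (PSD, trace $1$) for the resulting pseudo-covariance, we then iterate --- as long as $\Sigma_v$ places a constant fraction of its mass on an $\mathcal{O}_\delta(1)$-dimensional subspace and the residual direction is not yet ``spread'', reweight to expose that low-dimensional part and recurse on its orthogonal complement. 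A spectral potential initialized at $\mathrm{poly}(d)$ shrinks by a constant factor each round, so after $\mathcal{O}_\delta(\log d)$ rounds --- which is precisely what pins down the degree $D$ --- every branch has reached one of two regimes: the \emph{analytically dense} regime, in which no single coordinate of the pseudo-typical $v$ (in the appropriate basis) dominates, or the \emph{sparse} regime, in which $v$ effectively lives on only $\mathcal{O}_\delta(1)$ coordinates.

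In the dense regime, $\langle x,v\rangle = \sum_j v_j x_j$ is a sum of many comparably small independent terms, so a quantitative Berry--Esseen bound applies; the key point is that it can be \emph{certified within SoS}, using hypercontractivity of linear forms together with strictly sub-exponential tails to control the higher moments and almost $k$-wise independence to substitute the moments of $\calD$ by those of a genuine $k$-wise-independent proxy with only $o(1)$ error. Consequently the degree-$\mathrm{poly}(1/\delta)$ polynomial $p_\delta$ approximating $\mathbf{1}[|t| \le \delta]$ from above that underlies the Gaussian case still yields $w_i \le w_i\, p_\delta(\langle x_i, v\rangle)$ as an SoS consequence of the constraints, hence $\tfrac1n\sum_i w_i \le \tfrac1n\sum_i p_\delta(\langle x_i,v\rangle)$, and empirical concentration of the $\mathcal{O}_\delta(1)$ lowest moment tensors bounds the right-hand side by $\mathcal{O}(\delta)$ with high probability --- a contradiction. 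In the sparse regime a CLT is unavailable, but $v$ is supported on $\mathcal{O}_\delta(1)$ coordinates, so anti-concentration of the one-dimensional marginals of the product (or $\ell_p$-ball) distribution together with independence of the remaining coordinates certifiably forbids a $\delta$-fraction of the samples from being $\delta$-concentrated along it; this is a bounded-dimensional statement and is dispatched by brute force inside SoS.

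The crux, and the main obstacle, is the dense regime: proving a Berry--Esseen / invariance-principle statement inside a low-degree SoS proof system while relying only on hypercontractivity and \emph{approximate} $k$-wise independence --- rather than on full independence or rotational invariance, as in all prior work --- is the technical heart, and one must track carefully how the moment-matching error propagates through $p_\delta$, whose degree scales with $1/\delta$. A secondary difficulty is to arrange the reweighting so that every pseudo-distribution is forced into exactly one of the two regimes within $\mathcal{O}_\delta(\log d)$ rounds, with the density threshold separating them strong enough for Berry--Esseen yet reached in only logarithmically many steps.
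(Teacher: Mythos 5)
Your proposal follows the same overall architecture as the paper: pass to the degree-$O_\delta(\log d)$ SoS relaxation, argue about pseudo-expectations and invoke completeness/duality, first reweight by powers of the objective $\tfrac1n\sum_i w_i$ to pin its value, then iteratively reweight $v$ until every branch is either analytically dense or sparse, certifying anti-concentration by Gaussian moment-matching under an $\ell_4$-density axiom in the dense case and by one-dimensional anti-concentration in the sparse case. However, there are genuine gaps at exactly the points where the argument is most delicate. Your termination argument for the reweighting is not the one that works: you posit a spectral potential starting at $\mathrm{poly}(d)$ that shrinks by a constant factor per round, so that $O_\delta(\log d)$ rounds "pin down" the degree. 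In the paper (\cref{thm:key-re-weighting-theorem}) the number of rounds is $O_\delta(1)$, independent of $d$: the potential $\widetilde{\mathbb{E}}[\|v^{(\ell)}\|_2^{2y}]$ starts at $1$ and decreases \emph{additively} by $\Omega(\delta^{2y}\eta)$ per round. The $\log d$ instead enters through the \emph{order} of the moments used to detect a heavy coordinate: one needs $z \geq \log d$ so that the absence of a coordinate with $\widetilde{\mathbb{E}}[v_i^{2z}] \geq \delta^{2z}\widetilde{\mathbb{E}}[\|v\|_2^{2z}]$ implies $\widetilde{\mathbb{E}}[\|v\|_{2z}^{2z}] \leq (2\delta)^{2z}\widetilde{\mathbb{E}}[\|v\|_2^{2z}]$ (\cref{lem:bounded-coordiates-to-hypercontractivity}; the uniform distribution over standard basis vectors shows lower moments do not suffice). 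Your multiplicative-decay potential is unsubstantiated and I do not see how to make it decrease.

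Second, your sparse regime does not go through as described. You propose to "brute force inside SoS" anti-concentration for an indeterminate $v$ supported on $O_\delta(1)$ coordinates; a net argument over a bounded-dimensional sphere gives a true statement but not an SoS certificate, and no direct certificate for sparse indeterminate directions is known (the paper explicitly says it does not have one). The paper's way out is that the reweighting outputs a \emph{constant} vector $u_\calS$ (the pseudo-expectations of the fixed coordinates), so anti-concentration along $u_\calS$ is a genuine univariate fact requiring no SoS proof, and the indeterminate residual $v - u_\calS$ enters only as additive error controlled by its small pseudo-moments via a binomial expansion (\cref{lem:case1-decomposition}). Relatedly, in the dense regime you certify anti-concentration of $\langle x_i, v\rangle$ around the origin, but after peeling off $u_\calS$ one must certify anti-concentration of $\langle x_i, v - u_\calS\rangle$ around the bounded shifts $\langle x_i, u_\calS\rangle$, which is why the paper needs the shifted certificate of \cref{thm:anti-concentration-certificates-with-shifts}. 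Finally, you do not justify why a bound on the objective of the \emph{reweighted} pseudo-expectation contradicts the original assumption: the existence of a reweighting with small objective does not by itself yield an SoS proof, and the paper must additionally show that its specific reweighting polynomial $\prod_{i \in \calS} v_i^{2t'}$ has non-negligible pseudo-expectation and bounded second moment, so that the objective value fixed in the first step is approximately preserved.
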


\begin{remark}[Comparison to prior work]
Prior works~\cite{karmalkar2019list, raghavendra2020list, raghavendra2020list2,bakshi2021list,bakshi2020outlier} have exhibited similar certificates of anti-concentration, in the context of list-decodable regression, subspace recovery and clustering, but there are two main differences compared to our work.
    \begin{enumerate}
        \item While their certificates also also obtained via SoS relaxations, they have worked with non-standard formulations which are usually adapted to the problem at hand and consider explicit polynomial approximations to the box indicator function. Furthermore, the analyses of such certificates in the corresponding algorithmic applications have been ad-hoc. In contrast, our certificate considers the canonical integer program formulation \cref{eqn:cons-system-intro} that essentially mimics the definition of anti-concentration.
        \item The certificates in prior works only work for spherically symmetric distributions such as Gaussians or uniform distribution on the sphere. This essentially reduces certifying anti-concentration to a univariate problem. Our certificate on the other hand goes beyond spherical symmetry and allows us to handle the more general notion of {\em reasonably anti-concentrated} distributions instead (\cref{def:reasonably-anti-concentrated-dist-intro}), albeit at the cost of quasi-polynomial running time.
    \end{enumerate}
\end{remark}
Finally, we note that we indirectly infer the existence of a sum-of-squares proof by showing that any low-degree pseudo-distribution consistent with the program in \cref{eqn:cons-system-intro} must in fact satisfy that the pseudo-expected objective value is at most $\delta$. It then follows from SDP duality that a low-degree sum-of-squares certificate of this inequality must exist. Our inability to obtain a direct sum-of-squares proof perhaps illustrates why finding such certificates has remained elusive in the past.

\subsection{Applications} 
We show that we can use our certificate of anti-concentration to obtain efficient algorithms for various statistics problems, including the first algorithms for clustering mixtures of \textit{reasonably anti-concentrated} distributions with arbitrary covariances, and list-decodable regression for such distributions. We note that all prior work on these problems was limited to affine transformations of rotationally invariant distributions, specifically Gaussians.

\paragraph{Warmup: Spectrally separated mixture with two clusters.} As a warmup, we use our new anti-concentration certificate to give a quasi-polynomial time algorithm for clustering any mixture of two reasonably anti-concentrated distributions that are spectrally separated i.e., there is a direction $v$ along which the variances of the two components differ significantly.  

\begin{theorem}[Spectrally Separated Mixture, Informal version of \cref{thm:clusterin-2-spectrally-separated}]
Consider a set of $n$ i.i.d. samples drawn from $\calM = \frac{1}{2} \calD(0, \Sigma_1) + \frac{1}{2} \calD(0, \Sigma_2)$ such that there exists a direction $v$ satisfying $v^\top \Sigma_1 v < \delta^3  v^\top \Sigma_2 v$, for some $\delta <1$. Then, there exists an algorithm that runs in $n^{ \mathcal{O}_{\delta}(\log^2(d)) }$ time and with probability at least $99/100$ outputs a direction $z$ such that $z^\top \Sigma_1  z < \bigO{\delta} z^\top \Sigma_2 z$ , whenever $n = d^{ \Omega_{\delta}(\log(d)) }$.  
\end{theorem}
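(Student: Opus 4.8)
\emph{Proof sketch.} The plan is to reduce the claim to the anti-concentration certificate of \cref{thm:main-anti-concentration-thm}, invoked \emph{separately} on the two (unknown) sub-samples $S_1,S_2$ drawn from the two components. Let $\widehat{\Sigma}=\frac1n\sum_i x_ix_i^\top$ be the empirical covariance of all $n$ points; since $n=d^{\Omega_\delta(\log d)}$ and $\calD$ has strictly sub-exponential tails, matrix concentration gives $\widehat{\Sigma}\approx\tfrac12(\Sigma_1+\Sigma_2)$ and, on the component sub-samples, $\widehat{\Sigma}_{S_j}\approx\Sigma_j$ (spectrally, up to a $(1\pm o(1))$ factor). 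I would run the sum-of-squares relaxation of \cref{eqn:cons-system-intro} with $\mu=0$, $\Sigma=\widehat{\Sigma}$, scale $\delta$, and degree $\mathcal{O}_\delta(\log^2 d)$, and first verify that its value is at least $\tfrac12-\bigO{\delta}$: take $z=v$ and set $w_i=1$ exactly for those $i\in S_1$ with $\iprod{x_i,v}^2\le\delta^2\,v^\top\widehat{\Sigma}v$; since $v^\top\widehat{\Sigma}v\ge\tfrac12 v^\top\Sigma_2 v>\tfrac{1}{2\delta^3}v^\top\Sigma_1 v$, Markov's inequality applied to $\iprod{x,v}^2$ under $\calD(0,\Sigma_1)$ shows all but an $\bigO{\delta}$ fraction of $S_1$ meets the constraint, so this integral solution — hence the relaxation — has objective $\ge\tfrac12-\bigO{\delta}$. (This is the one place the $\delta^3$ hypothesis is used quantitatively.)

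The heart of the argument is that any near-optimal pseudo-distribution must be concentrated on \emph{spectrally separating} directions. Let $\widetilde{\mathbb{E}}$ be a degree-$\mathcal{O}_\delta(\log^2 d)$ pseudo-distribution on $(w,z)$ feasible with pseudo-objective $\ge\tfrac12-\bigO{\delta}$, and pass to the basis where $\widehat{\Sigma}=I$, so $\widehat{\Sigma}_{S_1}+\widehat{\Sigma}_{S_2}\approx 2I$ and $\widehat{\Sigma}_{S_j}\preceq(2+o(1))I$. Call $z$ \emph{balanced} if $z^\top\widehat{\Sigma}_{S_1}z,\ z^\top\widehat{\Sigma}_{S_2}z\ge\gamma\|z\|^2$ with $\gamma=K\delta^2$ for a large absolute constant $K$. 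On balanced directions the constraint $w_i\iprod{x_i,z}^2\le w_i\delta^2\|z\|^2$ implies $w_i\iprod{x_i,z}^2\le w_i(\delta^2/\gamma)(z^\top\widehat{\Sigma}_{S_j}z)$, so \cref{thm:main-anti-concentration-thm} applied to $S_j$ (i.i.d.\ from the reasonably anti-concentrated $\calD(0,\Sigma_j)$) yields an SoS proof that $\tfrac{1}{|S_j|}\sum_{i\in S_j}w_i\le\bigO{\delta/\sqrt\gamma}=\bigO{1/\sqrt K}$ for $j\in\{1,2\}$; summing over $j$ and using $|S_1|+|S_2|=n$ bounds the objective by $\bigO{1/\sqrt K}<\tfrac12-\bigO{\delta}$ — a contradiction. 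Hence imposing balancedness makes the program infeasible at the target objective, so, after a polynomial reweighting of $\widetilde{\mathbb{E}}$ that suppresses the balanced region (in the spirit of the reduction to analytically dense or sparse directions used to prove \cref{thm:main-anti-concentration-thm}), the surviving near-optimal $z$'s all satisfy $\min_j z^\top\widehat{\Sigma}_{S_j}z<\gamma\|z\|^2$, i.e.\ (using $z^\top\widehat{\Sigma}_{S_1}z+z^\top\widehat{\Sigma}_{S_2}z\approx2\|z\|^2$) they spectrally separate $(\Sigma_1,\Sigma_2)$ by a factor $\Omega(1/\gamma)$. The hypothesis is exactly what makes this region nonempty, since $v$ itself has $v^\top\widehat{\Sigma}_{S_1}v\le2\delta^3\|v\|^2<\gamma\|v\|^2$.

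To produce a concrete output I would read off $z^\ast$ from the pseudo-moment matrix $\widetilde{\mathbb{E}}[(\tfrac1n\sum_i w_i)^{\Theta(\log^2 d)}zz^\top]$ (reweighting by a power of the objective isolates the near-optimal directions) via a generalized-eigenvector computation against $\widehat{\Sigma}$. Tracking the constants above — and controlling the minority component's contribution to the selected subset's second moments via $\tfrac{1}{|S_j|}\sum_{i\in S_j}w_i\le\bigO{1/\sqrt K}$ — shows $z^\ast$ spectrally separates $(\Sigma_1,\Sigma_2)$ by $\Omega_\delta(1)$, and in particular $(z^\ast)^\top\Sigma_1 z^\ast<\bigO{\delta}(z^\ast)^\top\Sigma_2 z^\ast$ (up to exchanging the two components, which is unavoidable by the label symmetry of $\calM$), after translating $\widehat{\Sigma}_{S_j}$ back to $\Sigma_j$ using the concentration of the first paragraph. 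The running time is $n^{\mathcal{O}_\delta(\log^2 d)}$: solving a degree-$\mathcal{O}_\delta(\log^2 d)$ SoS program on $n+d$ variables, where the extra $\log d$ over \cref{thm:main-anti-concentration-thm} is incurred because the per-component certificate (degree $\mathcal{O}_\delta(\log d)$) is invoked inside the $O(\log d)$-depth recursive reweighting.

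I expect the reconciliation in the previous paragraph to be the main obstacle. Because $\calM$ is \emph{not} reasonably anti-concentrated — a mixture of almost $k$-wise-independent distributions need not be almost $k$-wise independent, and along $v$ the mixture is extremely concentrated — \cref{thm:main-anti-concentration-thm} cannot be invoked for $\calM$ directly; one must apply it per component on the \emph{un-normalized} constraint and then relate the global normalization $\widehat{\Sigma}$ to the per-component covariances $\widehat{\Sigma}_{S_j}$, a comparison that degrades precisely on the separating directions one is trying to output. Pushing this dichotomy through an SoS-sound pseudo-distribution argument is what forces the polynomial-reweighting machinery and the $\mathrm{poly}(1/\delta)$ and $\log d$ overheads, and is the reason we can only afford constant-order $\delta$ and a quasi-polynomial running time.
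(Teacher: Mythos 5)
Your first two paragraphs capture the right intuition — feasibility via Markov along the separating direction, and the use of the per-component anti-concentration certificate to rule out directions along which both components have non-trivial variance — and your diagnosis in the last paragraph correctly identifies where the difficulty lives. But the proposal has a genuine gap exactly at that point, and it is the step the paper spends most of its effort on. Two problems. First, your dichotomy argument is not SoS-sound as stated: ``balanced'' is a condition on actual directions $z$, not an axiom the pseudo-distribution satisfies, so you cannot invoke \cref{thm:main-anti-concentration-thm} ``on balanced directions'' for a pseudo-distribution, and the phrase ``after a polynomial reweighting that suppresses the balanced region'' names the missing object rather than constructing it. The paper's version of this issue is visible in \cref{lem:key-sos-identity}: the certificate leaves behind a term $\sum_i q_i^2(v,w)\,w_i\big(v^\top\Sigma_2 v - \delta\norm{v}^2/100\big)$ that is \emph{not} axiomatically non-negative, and the entire content of \cref{lem:fixing-the-quad-form,lem:simulating-fixing-the-quad-form} is to exhibit a concrete reweighting polynomial — $\Iprod{x_j,v}^{2t}$ for a uniformly random sample $x_j$, which in expectation over $x_j\in\calC_2$ behaves like reweighting by $(v^\top\Sigma_2 v)^{2t}$ — under which that term becomes non-negative in pseudo-expectation. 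Second, your extraction step (``read off $z^\ast$ from the pseudo-moment matrix via a generalized-eigenvector computation'') fails on the degenerate instances the paper explicitly warns about: a feasible near-optimal pseudo-distribution can have $\pexpecf{\mu}{vv^\top}=I$ (e.g.\ spread uniformly over many separating directions, in both orders of separation), in which case no eigenvector computation against $\widehat{\Sigma}$ distinguishes anything. The random-sample reweighting is what breaks this symmetry before Gaussian rounding (\cref{lem:rounding-the-pd}), and it succeeds only with constant probability over the choice of $x_j$ — which is why the theorem's guarantee is probabilistic.

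A secondary structural difference: the paper does not apply the certificate to both components. It applies anti-concentration only to $\calC_2$ (the large-variance component) to bound $\frac{2}{n}\sum_{i\in\calC_2}w_i$, and controls the contribution of $\calC_1$ via the program constraint plus certifiable hypercontractivity of quadratic forms, arriving at a bound on $\Iprod{\Sigma_1,vv^\top}^2$ directly (\cref{lem:key-sos-identity}). Your symmetric two-sided application is not wrong in spirit, but it forces you to compare $\widehat{\Sigma}$ to both $\widehat{\Sigma}_{S_1}$ and $\widehat{\Sigma}_{S_2}$ inside the proof system, which is harder than the one-sided route the paper takes. If you want to repair the proposal, the minimal fix is to replace the ``suppress the balanced region'' step with the explicit reweighting by $\Iprod{x_j,v}^{2t}$ and redo the rounding as Gaussian rounding on the reweighted moment matrix.
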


The above algorithm when given samples from such a spectrally separated mixture finds a direction $z$ where the variances of the two components differ by a factor of $\Omega(1/\delta)$. One can then project along this direction to recovers the two clusters up to $O(\delta)$ error. 

This setting of spectrally separated mixtures captures as a special case the well-studied problem of {\em subspace clustering} \cite{vidal2003generalized,bakshi2021list}, where the components are supported on two distinct subspaces. Hence there is a direction $v$ along which the first component has non-zero variance, and the second component has variance $0$ i.e., the spectral separation is arbitrarily large. 
The above algorithm provides quasi-polynomial time guarantees when the points in each cluster are drawn from {\em reasonably anti-concentrated} distributions supported on the corresponding subspace. To the best of our knowledge, previous guarantees only hold in the special case when the points in the subspace is drawn from a Gaussian (or a rotationally invariant distribution) restricted to the subspace, or assume other structural assumptions on the subspaces~\cite{bakshi2021list, chandra2024learning}. Moreover, Theorem~\ref{thm:clusterin-2-spectrally-separated} guarantees approximate recovery of the clusters even when the spectral separation is only a (sufficiently large) constant factor. 




\paragraph{Clustering mixtures of \emph{ reasonably anti-concentrated} distributions.}
Next, we consider the more general setting when the points are drawn from a mixture of $k$ \textit{reasonably anti-concentrated } distributions. There is rich body of work on clustering mixture models where efficient polynomial time algorithms are known when the means of the clusters are separated sufficiently see e.g., \cite{vempala2004spectral, achlioptas2005spectral, kumar2010clustering, hopkins2018mixture, kothari2017outlier,diakonikolas2017learning}. In the specific setting when the component distributions are Gaussians, recent works provide polynomial time guarantees under significantly less stringent notion of separation called the ``clusterable setting'', which includes spectral separation, or Frobenius separation, or mean separation; see \cref{def:clusterable-mixture} for a formal definition. 

To the best of our knowledge, there are no known algorithms in the clusterable setting for more general distributions beyond Gaussians. We make progress on this question by giving an algorithm that
recovers the clusters up to $O(\delta)$ error when the underlying mixture is clusterable.


\begin{theorem}[Clustering Mixtures, Informal version of \cref{thm:main-clusterin-theorem}]
 Given $0< \eps< 1$ and  $n$ samples from a mixture of $k=O(1)$ reasonably anti-concentrated distributions that also have certifiably hypercontractive quadratic forms (\cref{def:certifiable-hypercontractivity}) such that the resulting mixture is \textit{clusterable} (\cref{def:clusterable-mixture}), there exists an algorithm that runs in $n^{ \mathcal{O}_{k,\eps}(\log^2(d) ) }$ time and outputs an $\eps$-accurate clustering  whenever $n  = d^{ \Omega_{k,\eps}\Paren{\log(d)} }$.
\end{theorem}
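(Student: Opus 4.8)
The plan is to reduce the clustering-mixtures theorem to three components: (i) the anti-concentration certificate of the main theorem, (ii) the spectrally-separated clustering subroutine, and (iii) a standard boosting/recursion over clusters. First I would recall what the \emph{clusterable} condition (\cref{def:clusterable-mixture}) gives us: for any pair of components $a \neq b$, at least one of three separation conditions holds — spectral separation (a direction in which the variances differ by a large factor), Frobenius separation (the covariances differ substantially in Frobenius norm after whitening), or mean separation (the means are far apart relative to the covariances). The key structural point is that each of these, after an appropriate affine transformation, can be detected by examining second moments of the data restricted to large subsets, and in each case the obstruction translates into a direction (or a small set of directions) in which one component is far more anti-concentrated than a mixture of the others would allow. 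This is precisely where the certificate of anti-concentration for \emph{reasonably anti-concentrated} distributions enters: it lets us \emph{certify}, inside the sum-of-squares proof system, that no single component's projection onto a low-variance direction can capture more than an $O(\delta)$ fraction of its own points, so any large cluster-indicator pseudo-distribution that concentrates in such a direction must in fact be supported (mostly) on a single component.

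The main workhorse would be an SoS-based rounding argument, in the spirit of prior work on robustly clustering Gaussian mixtures. I would set up an SoS program with variables $w \in \R^n$ (cluster indicators) and the covariance/mean parameters of a putative component, imposing: $w_i^2 = w_i$, that $\sum_i w_i \approx n/k$, and certifiable-hypercontractivity / moment-boundedness constraints on the reweighted points $\{w_i x_i\}$ (this is where the hypothesis that the components have \emph{certifiably hypercontractive quadratic forms}, \cref{def:certifiable-hypercontractivity}, is used — it gives SoS control over $\sum_i w_i (\iprod{x_i,v}^2 - v^\top \Sigma v)$ uniformly over $v$). The anti-concentration certificate then gives the crucial ``no spurious cluster'' guarantee: any pseudo-distribution satisfying the program must, for every pair of true components, assign $w$ that is $O(\delta)$-close to being supported on one component — because otherwise $w$ would witness a direction violating anti-concentration of that component, contradicting the certificate. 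Combining the pairwise guarantees across the $k = O(1)$ components (a union bound over $\binom{k}{2}$ pairs, absorbed into the $\calO_{k,\eps}$ overhead), one extracts from the pseudo-moments of $w$ a genuine clustering; the accuracy parameter $\eps$ is obtained by taking $\delta = \mathrm{poly}(\eps/k)$ and boosting via a constant number of SoS rounds, which is why the running time is $n^{\calO_{k,\eps}(\log^2 d)}$ — one factor of $\log d$ from the anti-concentration certificate's quasi-polynomial degree, and another from iterating over the recursion depth / the Frobenius-separation tensoring trick.

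The sample-complexity claim $n = d^{\Omega_{k,\eps}(\log d)}$ I would obtain from a matrix/polynomial concentration argument: with this many samples, the empirical moments of each component (up to the degree $\calO_\delta(\log d)$ needed for the SoS certificate) are close to their population counterparts in the relevant SoS-spectral sense, so the certificate for the population distribution transfers to the empirical point set. This uses that each component is \emph{almost $k$-wise independent} and strictly subexponential (parts (b) and (d) of \cref{def:reasonably-anti-concentrated-dist-intro}), which control the tails of the degree-$\calO(\log d)$ empirical moment tensors. Finally, the $\eps$-accurate clustering is read off by rounding the pseudo-expectation $\tilde{\E}[w]$ — thresholding $\tilde{\E}[w_i]$, or running the standard ``propagation''/simultaneous-diagonalization rounding used for Gaussian mixtures — and arguing the symmetric difference with the true partition is $O(\delta n) \le \eps n$.

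I expect the main obstacle to be the Frobenius-separation case. Spectral separation is handled almost directly by \cref{thm:clusterin-2-spectrally-separated}, and mean separation reduces to a one-dimensional anti-concentration argument after projecting onto the mean-difference direction; but Frobenius separation does not by itself produce a single good direction. The standard fix for Gaussians is to pass to the degree-$2$ Veronese (tensor) embedding, where Frobenius separation of covariances becomes mean separation, and then invoke anti-concentration \emph{of quadratic forms} of the distribution. Making this work here is delicate because a reasonably anti-concentrated distribution need not have anti-concentrated quadratic forms for free — this is exactly why the theorem hypothesis explicitly demands \emph{certifiably hypercontractive quadratic forms}. The technical heart of the proof is therefore to show that certifiable hypercontractivity of quadratic forms, combined with anti-concentration of linear forms and the reweighting machinery from the main theorem, suffices to run the anti-concentration certificate in the lifted space — i.e., to certify anti-concentration of $\iprod{x_i x_i^\top - \Sigma, M}$ over symmetric matrices $M$ — at the cost of the extra $\log d$ factor in the exponent (hence $\log^2 d$ overall). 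Controlling the interaction between the tensored certificate and the hypercontractivity constraints, and ensuring the SoS degree stays $\calO_{k,\eps}(\log^2 d)$, is the part I would allocate the most care to.
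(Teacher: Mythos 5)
Your high-level skeleton (case analysis over the three separation types, partial clustering, recursion) matches the paper, and your treatment of the spectral case is in the right spirit. But there is a genuine gap in how you propose to handle Frobenius separation. You plan to pass to the degree-$2$ Veronese embedding and run the anti-concentration certificate in the lifted space, i.e.\ to certify anti-concentration of $\iprod{x_ix_i^\top - \Sigma, M}$ over symmetric matrices $M$. The paper's certificate (\cref{thm:main-anti-concentration-thm}) and its entire reweighting machinery are built only for \emph{linear} forms $\iprod{x_i,v}$, and, more fundamentally, a reasonably anti-concentrated distribution (\cref{def:reasonably-anti-concentrated-dist}) is not assumed to have anti-concentrated quadratic forms — certifiable \emph{hypercontractivity} of quadratic forms is a concentration property and gives you no anti-concentration in the lifted space. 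So the "technical heart" you describe is not something the paper's tools let you execute, and the theorem's hypotheses do not supply the missing ingredient. The paper avoids this entirely: for both the mean-separated and Frobenius-separated cases it invokes the partial-clustering subroutine of Bakshi--Kothari (Theorem 4.2 of \cite{bakshi2022robustly}) as a black box (\cref{lem:partial-clustering-mean-frob}), and explicitly notes that those subroutines do not rely on anti-concentration at all — certifiable hypercontractivity of quadratic forms is assumed precisely so that this black-box call is valid. Only the spectral-separation case uses the new anti-concentration certificate (\cref{lem:anti-conc-k-clusters}), via a reduction to the two-component setting of \cref{thm:clusterin-2-spectrally-separated} with guessed thresholds and shifted certificates (\cref{cor:anti-concentration-thm-shift}).

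A secondary issue is your rounding step. Reading the clustering off by thresholding $\pexpecf{\mu}{w_i}$, or by a pairwise "no spurious cluster" argument, does not work here: the paper points out that the pseudo-distribution can spread mass over many valid separating directions so that $\pexpecf{\mu}{vv^\top} = I$ and Gaussian rounding fails. The fix — reweighting $\mu$ by $\iprod{x_j,v}^{2t}$ for a uniformly random sample $x_j$, which with constant probability simulates fixing $v^\top\Sigma_2 v$ (\cref{lem:simulating-fixing-the-quad-form}) and only then applying Gaussian rounding to $\pexpecf{\mu}{vv^\top}$ — is an essential step that your proposal omits.
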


We note that affine transformations of product distributions and uniform distributions over $\ell_p$ balls, for $p \in (1, \infty)$ satisfy all the technical conditions for our statement. We prove this in \cref{thm:anti_concentrated_distributions}. 

\begin{remark}[Outlier-robust algorithms]
    We remark that using standard machinery in robust statistics, the sum-of-squares relaxation we analyze can be made to work in the strong contamination model. Here, an $\eta$-fraction of input samples are adversarially corrupted, and the rest are drawn from a mixture of reasonably anti-concentrated distributions. We refer the reader to~\cite{fleming2019semialgebraic, bakshi2020outliermegre} for details.
\end{remark}

\paragraph{List-decodable linear regression.}
Finally, we consider the setting of list-decodable linear regression. Here an $\alpha$-fraction of the samples are \textit{inliers} and are drawn i.i.d. from some \textit{reasonably anti-concentrated} distribution $\calD$. The remaining $(1-\alpha)$ fraction of the samples are arbitrary, and potentially adversarially generated. The underlying statistical model is simply $y_i = \Iprod{ \Theta, x_i }$, where $x_i \sim \calD$, and $\Theta$ is some unknown hyperplane. For the formal definition, see \cref{model:regression}. The following theorem gives a quasi-polynomial time guarantee for this problem when the inliers are drawn from any resonably anti-concentration distribution. 

\begin{theorem}[List-Decodable Regression, Informal version of \cref{thm:list-decodable-regression}]
\label{thm:list-decodable-informal}
For any $0< \alpha, \eps < 1$, there exists an algorithm that takes $n$ samples from the above model where an $\alpha$-fraction are from a \textit{reasonably anti-concentrated} distribution, runs in $ n^{ \mathcal{O}_{\eps, \alpha}(\log(d))}$ time and outputs a list of $\bigO{1/\alpha}$ vectors such that with probability at least $99/100$, the list contains at least one vector $\hat{\Theta}$ such that $\norm{ \hat{\Theta} - \Theta }_2^2\leq \eps$, when $n= d^{ \Omega_{\eps, \alpha} ( \log(d)  )}$. 
\end{theorem}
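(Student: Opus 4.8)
The plan is to run the sum-of-squares framework for list-decodable learning (as in \cite{karmalkar2019list}), using the anti-concentration certificate of \cref{thm:main-anti-concentration-thm} — instantiated on the unknown error direction $\Theta' - \Theta$ — as the identifiability ingredient. We may assume the inlier distribution $\calD$ is in isotropic position, so that its Mahalanobis geometry coincides with $\norm{\cdot}_2$ (this is built into the definition of anti-concentration; see \cref{model:regression}). Write $S \subseteq [n]$ for the (unknown) set of $\alpha n$ inliers. First I would set up the polynomial system in indeterminates $w \in \R^n$, $\Theta' \in \R^d$ with constraints $w_i^2 = w_i$ for all $i$, $\frac1n \sum_i w_i = \alpha$, and the regression-consistency constraints $w_i \, (y_i - \Iprod{\Theta', x_i})^2 \le w_i \cdot \eta$ for a tiny slack $\eta = \mathrm{poly}(\eps,\alpha)$. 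The assignment $w = \mathbf{1}_S$, $\Theta' = \Theta$ is feasible (using $y_i = \Iprod{\Theta,x_i}$), so a degree-$q$ pseudo-expectation $\widetilde{\mathbb{E}}$ with $q = \mathcal{O}_{\eps,\alpha}(\log d)$ exists and is computed by semidefinite programming in $(n+d)^{\bigO{q}} = n^{\mathcal{O}_{\eps,\alpha}(\log d)}$ time (using $n = d^{\Omega(\log d)} \ge d$), where $q$ is dominated by the degree of the anti-concentration certificate for the parameter $\delta := \mathrm{poly}(\eps,\alpha)$.

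The heart of the argument is an identifiability inequality established inside the degree-$q$ SoS proof system. For the analysis-only quantity $u_i := w_i \cdot \mathbf{1}[i \in S]$ (Boolean since $w$ is), the consistency constraints give $u_i \Iprod{x_i, \Theta' - \Theta}^2 = u_i (y_i - \Iprod{\Theta, x_i})^2 \le u_i \eta$ for $i \in S$, so after normalizing the direction, $(u,\, \Theta'-\Theta)$ is — up to the additive slack $\eta$ — a feasible point of the anti-concentration program \cref{eqn:cons-system-intro} instantiated on the inlier subsample $\Set{x_i}_{i \in S}$ (whose empirical mean and covariance are $\approx 0$ and $\approx I$ since $\calD$ is isotropic). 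Feeding this into the SoS proof supplied by \cref{thm:main-anti-concentration-thm} — valid for $\Set{x_i}_{i \in S}$ with probability $1 - o(1)$ once $\alpha n = d^{\Omega_\delta(\log d)}$ — and keeping track of how the slack and the (possibly tiny) quantity $\norm{\Theta'-\Theta}_2$ enter, one obtains within the SoS system
\[
  \Paren{\frac{1}{\alpha n}\sum_{i \in S} w_i - \delta}\cdot \norm{\Theta' - \Theta}_2^2 \;\le\; \frac{\eta}{\delta}\,.
\]
In words: any solution whose support captures more than a $\delta$-fraction of the true inliers must have $\norm{\Theta' - \Theta}_2^2 \le \bigO{\eta/\delta^2} \le \eps$ for $\delta, \eta$ chosen appropriately in terms of $\eps$. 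The homogenized form above is exactly what makes the statement non-vacuous: the anti-concentration certificate says nothing about the zero direction, which is the case $\Theta' = \Theta$, where the inequality degrades gracefully.

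To produce the list I would use the standard pseudo-distribution rounding for list-decodable learning: either iteratively re-solve the SDP, each time forbidding large overlap with the integral supports recovered in previous rounds, or perform one solve followed by a random-conditioning rounding (as in prior list-decoding-via-SoS arguments \cite{karmalkar2019list,raghavendra2020list}); in either case one reads off a candidate $\hat\Theta := \widetilde{\mathbb{E}}[\Theta']$ per round, with $\norm{\hat\Theta - \Theta}_2^2 \le \widetilde{\mathbb{E}}\Bracks{\norm{\Theta'-\Theta}_2^2}$ by pseudo-Cauchy--Schwarz. A pigeonhole argument bounds the number of rounds: each round's recovered support has $\Omega(\alpha n)$ indices and shares $\le \delta\alpha n$ of them with each previous round, so a round whose support meets $S$ in at most $\delta\alpha n$ indices consumes $\Omega(\alpha n)$ fresh outliers, of which there are only $(1-\alpha)n$; hence after $\bigO{1/\alpha}$ rounds some round meets $S$ in more than $\delta\alpha n$ indices, for which $\widetilde{\mathbb{E}}\Bracks{\frac{1}{\alpha n}\sum_{i\in S}w_i}$ is bounded below by a constant multiple of $\delta$, and taking pseudo-expectations in the displayed inequality gives $\norm{\hat\Theta-\Theta}_2^2 \le \eps$. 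This yields a list of size $\bigO{1/\alpha}$; the success probability $99/100$ follows by a union bound over the $1-o(1)$-probability events that the anti-concentration certificate holds on $\Set{x_i}_{i\in S}$ and that the empirical degree-$\bigO{q}$ moments of the inliers concentrate — which is where $n = d^{\Omega_{\eps,\alpha}(\log d)}$ is used.

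The step I expect to be the main obstacle is the middle paragraph: transporting the anti-concentration certificate — which \cref{thm:main-anti-concentration-thm} guarantees only for a genuine i.i.d.\ sample, here the inlier subsample $\Set{x_i}_{i\in S}$ — through to the pseudo-indeterminate direction $\Theta'-\Theta$, controlling it in a case-split-free (homogenized) form so that it degrades correctly as $\norm{\Theta'-\Theta}_2 \to 0$ and so that the additive slack from the regression constraints does not swamp the bound, all while keeping the composed proof degree at $\mathcal{O}_{\eps,\alpha}(\log d)$. The remaining steps — feasibility of the intended solution, the running-time bookkeeping, the pigeonhole over rounds, and the moment-concentration union bound — are routine given the machinery already developed for the Gaussian case in the cited prior work.
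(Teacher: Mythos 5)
Your proposal follows essentially the same route as the paper: set up the Boolean/consistency program, instantiate the anti-concentration certificate of \cref{thm:main-anti-concentration-thm} on the inlier subsample along the direction $\Theta'-\Theta$ (the paper's ``Key SoS Identity,'' which derives $\frac{1}{|\calI|}\sum_{i\in\calI} w_i\,\norm{\ell-\ell^*}^2 \le O(\delta)$ in SoS), and then round via the standard list-decoding machinery of \cite{karmalkar2019list}, which the paper likewise defers to. The minor differences (a slack $\eta$ in the consistency constraints versus the paper's exact equality, and your homogenized phrasing of the identifiability inequality) do not change the argument.
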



\subsection{Technical Overview}

In this section, we provide an outline of the techniques we introduce to obtain our certificate of anti-concentration. 

\paragraph{Inefficient certificates of anti-concentration.} As a starting point, we note that the uniform distribution over any convex body is anti-concentrated, and therefore the optimization problem in \cref{eqn:cons-system-intro} achievies a value of at most $\delta$, for any $\delta <1$. The proof of anti-concentration simply follows from observing that the density of any univariate projection is upper bounded by $\exp(-t/c)$, for a fixed constant $c$. One na\"ive certificate that demonstrates $\sum_i w_i$ is bounded in all directions is one where we simply create an $\eps$-net and check each univariate projection, i.e. for each vector in the net, we project the distribution along this vector and check that it is anti-concentrated. However, it is easy to see that the size of such a certificate is at least $\exp(d)$ and our goal is to avoid this exponential dependence on the dimension. 

\paragraph{Existing approaches to certifying anti-concentration.} As alluded to earlier, the notion of certifiable anti-concentration was introduced in two concurrent works, one by Raghavendra and Yau~\cite{raghavendra2020list} and one by Karmalkar, Klivans and Kothari~\cite{karmalkar2019list}, and they formulate anti-concentration as a polynomial inequality. First, let $f(x)$ be the univariate box-indicator function: 
\begin{equation*}
    f(x) = \begin{cases}
    1  & \textrm{  if  }  -\delta \leq x \leq \delta    \\
    0 & \textrm{ otherwise}
    \end{cases}
\end{equation*}
Then, there exists a polynomial $p(x)$ that approximates $f$ everywhere in the interval $[-1,1]$, and the degree of this polynomial is roughly $\bigO{1/\delta^2}$ (in fact, there is a sum-of-squares polynomial satisfying these criteria). The formulation of anti-concentration considered in prior works is as follows:
\begin{equation}
\label{eqn:box-indicator-formulation-intro}
    \begin{split}
    \forall v \in \R^d,  \hspace{0.2in}\frac{1}{n}\sum_{i \in [n]} p( \Iprod{x_i, v}^2 )  \leq \delta \norm{v}^2_2.
    \end{split}
\end{equation}
Since $p$ is large in the interval $[-\delta, \delta]$, such a certificate implies that the number of samples for which $\abs{\Iprod{x_i, v}} \leq \delta$ is bounded. To avoid brute force search over an $\eps$-net,  \cite{raghavendra2020list, karmalkar2019list, bakshi2020outlier} show that there is a sum-of-squares proof of the inequality in \cref{eqn:box-indicator-formulation-intro} in the indeterminate $v$ (see \cref{subsec:sos-framework} for background on sum-of-squares proofs) and this in turn implies an efficient certificate. 

However, all known proofs of such a certificate assume that the input distribution is rotationally invariant. Therefore, $p(\Iprod{x_i , v}^2 ) = p( g \norm{v}^2 )$, where $g$ is a standard Gaussian, implying $p$ is a univariate function, and any univariate inequality admits a sum-of-squares proof (see \cref{fact:univariate-sos-proofs}). It is easy to see that all prior works thus heavily rely on the input distribution being rotationally invariant.

\paragraph{Anti-concentration along analytically dense directions.} It is well known that the marginal distribution obtained by projecting the uniform distribution over a convex body along a random direction is well-approximated by a Gaussian. Restricting our attention to well-behaved convex bodies,  we develop an analytic formulation of this statement: For any direction $v$ such that $\norm{v}_4^4 \leq \lambda \norm{v}_2^4$, we show that there is a sum-of-squares proof in $v$ that $\frac{1}{n} \sum_{i \in [n] } p^2( \Iprod{x_i, v}^2 )\leq \delta $, where $p^2$ is the sum-of-squares polynomial that approximates the box-indicator function. In the language of sum-of-squares proofs, 
\begin{equation}
    \Set{\norm{v}_4^4 \leq \exp\Paren{-1/\delta^2} \norm{v}_2^4 } \sststile{}{v} \Set{ \frac{1}{n} \sum_{i \in [n] } p^2( \Iprod{x_i, v}^2 )\leq \delta }, 
\end{equation}
i.e. under the axiom that the $\ell_4^4$ norm of $v$ is bounded, there is an efficient certificate that $p^2$ is small on the input samples in all directions (see~\cref{thm:anti-concentration-certificates-with-shifts}). The captures the intuition that $v$ cannot be a sparse direction or have large magnitude coordinates. Our proof proceeds by explicitly bounding the higher moments of $\Iprod{x ,v}^2$ and showing that under the axiom $\Set{\norm{v}_4^4 \leq \exp\Paren{-1/\delta^2} \norm{v}_2^4 }$, the moments are $\exp\Paren{-1/\delta^2}$-close to that of a Gaussian, additively (see \cref{lem:lower-bound-k-th-moment,lem: upper_bound_moment_independent})\footnote{Our proof is inspired by Lemma 2.6 in Kindler, Naor and Schechtman~\cite{kindler2010ugc}.}.

Additionally, we show that $p^2$ certifies anti-concentration around any bounded point $\phi$, as opposed to just the origin, i.e. 
\begin{equation*}
    \Set{\norm{v}_4^4 \leq \exp\Paren{-1/\delta^2} \norm{v}_2^4 } \sststile{}{v} \Set{ \frac{1}{n} \sum_{i \in [n] } p^2( \Iprod{x_i, v}^2 -\phi )\leq \delta },
\end{equation*}
only at the cost of $p^2$ having degree $\bigO{ 1/\delta^2 + \phi }$. This is a qualitative improvement over prior works, and is necessary to the remainder of our argument.

\paragraph{Anti-concentration on the remaining directions.} Directions that are not analytically dense include, in particular, sparse directions, and marginals along sparse directions do not behave like the Gaussian distribution. Nevertheless, it is tempting to try to show that the remaining set of directions is small, and perhaps admits a small $\eps$-net. Certainly, the set of sparse directions satisfies this property. However, consider the direction $v = [1/2, 1/2, c/\sqrt{n}, \ldots , c/\sqrt{n}  ]$, for a fixed constant $c$. This direction is neither sparse nor analytically dense, and we cannot discard the large or the small coordinates, since they contribute an equal amount of $\ell_2$-mass. Surprisingly, we do not know how to obtain an explicit sum-of-squares proof of anti-concentration for such directions. Instead, we show that we can still indirectly certify anti-concentration.

Inspired by this example, let us assume we have the ability to condition indeterminates, i.e. coordinates of $v$ that are large. Given this ability, it is not too hard to show that if we condition on all the coordinates that are at least $\lambda$, and denote this set by $u_\calS$, the resulting residual vector, $v - u_\calS$ would have $\ell_\infty$ norm bounded by $\lambda$, and in turn $\norm{  v - u_\calS }_4^4 \leq \lambda$. Further, $u_\calS$ will have at most $1/\lambda$ non-zero entries.  We could then appeal to univariate anti-concentration on $\Iprod{x, u_\calS}$ and analytically dense certificates on $\Iprod{x, v - u_\calS}$. While we do not posess the ability to condition large coordinates of $v$ in the proof system, we show how to execute such a proof strategy by looking at dual objects to sum-of-squares proofs, i.e. \textit{pseudo-distributions}.

\paragraph{Re-weighting pseudo-distributions.} Consider a pseudo-distribution $\mu$ over a vector valued indeterminate $v$ such that $\norm{v}^2 \leq 1$. Since $v$ is a unit vector, it does not take values in a discrete set and we cannot condition $\mu$ on a specific value (unlike the boolean setting, e.g.~\cite{barak2011rounding}). Instead, we show that we via polynomial re-weighting (a generalization of ``conditioning" introduced in~\cite{barak2017quantum}) the pseudo-distribution by a low degree polynomial, such as $v_i^{2t}$, where $v_i$ is the $i$-th coordinate of $v$ and $t$ is a large integer, and this morally has the same effect as that of conditioning (we refer the reader to~\cref{subsec:re-weighting-background} for background on re-weighting pseudo-distributions). 


The main theorem we obtain is that after an iterative re-weighting process (\cref{algo:re-weighting}), there exists a new pseudo-distribution $\mu'$ and a sparse vector $u_\calS$ (this is a vector without any indeterminates) such that either $\pexpecf{\mu'}{\norm{v - u_\calS}_2^{2t}} \leq \eta$, i.e. all the low-degree moments of the residual vector are small, or $\pexpecf{\mu'}{ \norm{v - u_\calS}_{2t}^{2t} } \leq \eta$, i.e. the residual vector is analytically dense. Here $\eta$ can be made arbitrarily small at the cost of running time, however, $t$ must be $\Omega(\log(d))$, which results in the quasi-polynomial running time in all our results. 

The iterative re-weighting proceeds as follows: given the pseudo-distribution $\mu$ we find a coordinate $v_i$ such that $\pexpecf{\mu}{ v_i^{2t} } \geq \lambda^{2t} \pexpecf{\mu}{ \norm{v}_2^{2t} }$. If no such coordinate exists, it's easy to show that $\pexpecf{\mu}{ \norm{v}_{2t}^{2t}  }$ is small, since we essentially have an $\ell_{\infty}$ bound on $v$. Otherwise, we re-weight $\mu$ by $v_i^{2t}$, and set the new vector $v = v - \pexpecf{\mu}{v_i}$. We iterate until there is either no such coordinate or the norm of the residual vector is small. 

The analysis has two key components. The first is a scalar re-weighting statement that was obtained by Barak, Kothari and Steurer~\cite{barak2017quantum}: Given a bounded indeterminate $z$, there exists a re-weighting of $\mu$, denoted by $\mu'$, such that $ \pexpecf{\mu'}{ (z- \pexpecf{\mu}{ z})^{2t} } \leq \eps^{2t} \pexpecf{\mu}{ z}^{2t}$. Using this statement, we can conclude that in a single iteration, the re-weighting process essentially fixes the $i$-th coordinate. However, since the pseudo-distribution gets re-weighted over and over again, we might undo this conditioning. We instead design a carefully chosen potential that monotonically decreases, showing that the re-weighting process must terminate after a few iterations (see the proof of \cref{thm:key-re-weighting-theorem} for details).  

Finally, we note that $t=\Omega(\log(d))$ is necessary for our re-weighting strategy. Consider the case where $\mu$ is simply the uniform distribution over the standard basis vectors (a special case of a pseudo-distribution). Each element in the support has only one non-zero coordinate, but $\expecf{\mu}{ v_i^{2t} }$ remains a negligible fraction of the $\ell_2$ norm until $t=\Omega(\log(d))$.

\paragraph{Pseudo-expectations vs. axioms.}
Given our re-weighting scheme, it is not too hard to show that for the re-weighted pseudo-distribution $\mu'$, $\pexpecf{\mu'}{ \norm{v  - u_\calS }_4^4 } \leq \eta$. We would now like to invoke our certificates of anti-concentration along the direction $v-u_\calS$. However, we only obtain the axiom $\Set{ \norm{v - u_\calS }_4^4 \leq \eta }$ if the pseudo-expectation remains small for all low-degree sum-of-squares polynomials multiplied by  $\norm{v - u_\calS }_4^4$. Instead, we show that our certificates used the axiom by only multiplying it by $\norm{v - u_\calS}^{2\ell}_2$, for different integral $\ell$, and we can indeed prove that $\pexpecf{\mu'}{ \norm{v  - u_\calS}_2^{2\ell} \cdot \norm{v  - u_\calS }_4^4 } \leq \eta$ (see \cref{lem:upgraded-muirhead-sos,}). This allows us to invoke our certificates for directions that satisfy $\pexpecf{\mu'}{ \norm{v  - u_\calS }_4^4 } \leq \eta$.

\paragraph{Objective is bounded under a re-weighting.}
With the aforementioned components in place, we are ready to analyze the objective value of the anti-concentration program. Let $\mu$ be any pseudo-distribution that is consistent with the constraints in \cref{eqn:cons-system-intro}. For ease of exposition, we assume that $w_i \Iprod{x_i, v}^2 = 0$, and this constraint can be easily relaxed. Then, there exists a re-weighting of $\mu$ and a sparse vector $u_\calS$ such that either $\pexpecf{\mu'}{ \norm{v-u_\calS }_4^4 }$ is small or $\pexpecf{\mu'}{ \norm{v-u_\calS }_2^{2t} }$ is small. Consider the first case. Since $p^2(0)=1$, we can bound the objective as follows: 
\begin{equation}
\label{eqn:bounding-theobjective-intro}
\begin{split}
    \frac{1}{n} \pexpecf{\mu'}{ \sum_{i \in [n]} w_i} & =\frac{1}{n}\pexpecf{\mu'}{\sum_{i \in [n]} w_i p^2(0)}\\
    & = \frac{1}{n}\pexpecf{\mu'}{\sum_{i \in [n]} w_i p^2(w_i \Iprod{x_i, v})}\\
    & \leq \frac{1}{n} \pexpecf{\mu'}{ \sum_{i \in [n]} p^2\Paren{  \Iprod{x_i, u_\calS} + \Iprod{x_i, v-u_\calS} } }
\end{split}
\end{equation}
where the second equality follows from $p^2$ being a sum-of-squares polynomial and our constraint. We can then drop the indicators since they are at most $1$. Now, we treat the vector $\Iprod{x_i, u_\calS}$ as a shift (since this is a fixed vector) and invoke our analytic certificate on the direction $v- u_\calS$. For most samples $\Iprod{x_i, u_\calS}$ is bounded, and therefore,
the objective value is small (see \cref{lem:re-weighting-to-certificate} for details). 

Now, consider the second case. We again proceed as \cref{eqn:bounding-theobjective-intro} and split into the $p^2( \Iprod{x_i, u_\calS } + \Iprod{x, v-  u_\calS })$. We then just expand out the polynomial by writing it in the monomial basis, and show that we can treat all the terms involving $\Iprod{x, v-  u_\calS }$ as additive error. Since all low-degree moments of this vector are bounded, the additive error remains small (see \cref{lem:case1-decomposition}). We now invoke anti-concentration properties on $p^2(\Iprod{x, u_\calS})$, but since $u_\calS$ is sparse, and fixed, we no longer need sum-of-squares certificates, and can simply reduce to the univariate case. Taking a step back, we managed to show that any pseudo-distribution $\mu$ can be re-weighted to one for which $\frac{1}{n} \sum_{i \in [n]} w_i$ is small.

\paragraph{Inferring the existence of SoS proof.}
Semi-definite programming duality implies that if an inequality is true for all pseudo-distributions of degree-$k$, then there must be a degree-$k$ sum-of-squares proof of the inequality (see \cref{fact:sos-completeness}). Thus far, we have managed to show that any pseudo-distribution $\mu$ can be re-weighted to obtain the desired inequality, i.e. $\frac{1}{n}\pexpecf{\mu'}{\sum_{i\in [n]} w_i  } \leq \delta$. In general, the existence of a re-weighting need not imply a sum-of-squares proof. However, we use the specific structure of our re-weighting to show that such a proof must indeed exist. 

We proceed via contradiction. Let $\mu$ be a pseudo-distribution under which $\frac{1}{n}\pexpecf{\mu}{\sum_{i\in [n]} w_i  } > 10 \delta$. Then, we re-weight $\mu$ by $(\sum_{i \in [n]} w_i)^{2t}$ for a large $t$, to essentially fix the objective value, and let the resulting pseudo-distribution be $\mu_1$. The scalar fixing lemma described above ensures that the objective value remains large even if we apply low-degree polynomials to the objective. Now, we invoke our re-weighting result on $\mu_1$, we can show that there is a re-weighting of $\mu_1$, say $\mu_2$, such that $\frac{1}{n}\pexpecf{\mu_2 }{\sum_{i\in [n]} w_i  } < \delta/100 $. However, since this re-weighting is a low-degree polynomial, it could not have altered the objective value under $\mu_1$ by too much, therefore yielding a contradiction (we refer the reader to \cref{thm:main-anti-concentration-thm} for a complete proof). Formally, we can infer a certificate of anti-concentration of the following form:
\begin{equation}
\label{eqn:parametric-form-cert-intro}
    \delta - \frac{1}{n}\sum_{i\in [n]} w_i = \textrm{sos}(v,w) + \sum_{i \in [n]} q_i(v,w)\Paren{ w_i ( \delta \norm{v}^2 - \Iprod{x_i, v}^2) } + \sum_{i \in [n]} r_i( w_i^2 - w_i ) ,
\end{equation}
where $\textrm{sos}(v,w), q_i(v,w)$ are sum-of-squares polynomials and $r_i$ are arbitrary low-degree polynomials.

\paragraph{Clustering via certificates.}
We note that the certificates we obtain do not go via the certificates obtained in prior works that were discussed in \eqref{eqn:box-indicator-formulation-intro} and therefore, does not imply the algorithmic applications of anti-concentration in a black-box manner. We describe how to use our certificates to design algorithms. For the purposes of the overview, we consider a simple example: let $X = \Set{x_i }_{i\in[n}$ be $n$ samples from a mixture of two \textit{reasonably anti-concentrated} distributions, denoted by $\calM = \frac{1}{2}\calD(0, \Sigma_1) + \frac{1}{2} \calD(0, \Sigma_2)$. The two distributions are concentric (both means are zero) and the mixture is isotropized, so the mixture covariance is $I$. Further, there is some unknown direction $v$ such that $v^\top \Sigma_1 v \leq \delta v^\top \Sigma_2 v$, for some sufficiently small $\delta$. Our goal is to simply recover some direction that continues to separate $\Sigma_1$ and $\Sigma_2$. Note, we can use such a direction to cluster by simply projecting all the samples along this direction and thresholding them. 

To find such a direction we consider the following program: 
\begin{equation}
\label{eqn:cons-system-clustering-intro}
\begin{split}
    \calC_{\delta} = & \left \{\begin{aligned}
      &\forall i\in [n]
      & w_i^2
      & = w_i \\
      && \sum_{i \in [n]} w_i & =  n/2\\
      & \forall i\in [n] & w_i \Iprod{x_i, v}^2 & \leq c  \delta w_i \norm{v}^2    \\
    \end{aligned}\right\}
\end{split}
\end{equation}
This program is feasible, since we can set the $w_i$'s to indicate the samples from $\calD( 0, \Sigma_1)$. We then compute a pseudo-distribution that is consistent with $\calC_\delta$ and our goal is to obtain a separating direction. A natural candidate is to compute the covariance matrix of the indeterminates, i.e. $ \pexpecf{\mu}{ v v^\top}$ and then sample $g \sim \calN(0,  \pexpecf{\mu}{ v v^\top})$ (often referred to as Gaussian rounding). However, there can be several directions where the variance of $\calD(0, \Sigma_2)$ is significantly smaller than $\calD(0, \Sigma_1)$ and $\mu$ could be the uniform distribution over all such directions. In fact, it is not to hard to construct an instance where $\pexpecf{\mu}{ v v^\top} = I$ and therefore, our program would not succeed in finding a separating direction.

Instead, we pick one of the samples uniformly at random and re-weight the pseudo-distribution by $\Iprod{x_i, v}^{2t}$. With probability $1/2$, we sample $x_i$ from $\calD(0, \Sigma_2)$ and the re-weighting essentially conditions the pseudo-distribution to be supported on directions where $\Sigma_2$ has larger variance, and the aforementioned Gaussian rounding scheme succeeds.

A bit more formally, using the certificate from~\cref{eqn:parametric-form-cert-intro} we prove the following key sum-of-squares inequality (see \cref{lem:key-sos-identity} for a formal statement and proof): 
\begin{equation}
\label{eqn:key-sos-identity}
    \begin{split}
        \calC_{\delta} \sststile{}{} \Set{ \Iprod{ \Sigma_1, vv^\top }^2 \leq \bigO{1} \cdot \Biggl( { \delta^2 \norm{v}^4 - \sum_{ x_i \in \calD(0, \Sigma_2) }  \underbrace{ q_i(w,v)w_i\Paren{ \delta v^\top \Sigma_2 v  - \frac{\delta \norm{v}^2_2 }{100} } }_{ \eqref{eqn:key-sos-identity}.(1) } \Biggr)   }},
    \end{split}
\end{equation}
where the term \eqref{eqn:key-sos-identity}.(1) need not axiomatically be non-negative. Since we compute a pseudo-distribution consistent with $\calC_\delta$, we can conclude 
\begin{equation*}
    \Iprod{ \Sigma_1, \pexpecf{\mu}{ vv^\top  } }^2  \leq \pexpecf{\mu}{ \Iprod{ \Sigma_1, vv^\top }^2 } \leq \delta^2 \norm{v}^4_2 - \sum_{i} \pexpecf{\mu}{ q_i(w,v) w_i \Paren{ \delta v^\top \Sigma_2 v  - \frac{\delta \norm{v}^2_2 }{100} }   }.
\end{equation*}
Observe, as long as $\pexpecf{\mu}{ q_i(w,v) v^\top \Sigma_2 v } \geq 0.1$, we are in good shape, since the additive term disappears, and we have $\Iprod{\Sigma_1, \pexpecf{\mu}{ v v^\top }} \leq \delta \norm{v}_2^2$, an obtaining a separating direction from here is straight-forward. 

As a thought experiment, consider the case where we know the scalar quantity $v^\top \Sigma_2 v$. Then, re-weighting $\mu$ by $(v^\top \Sigma_2 v)^{2t}$ outputs a pseudo-distribution $\mu'$ such that the expected value of $v^\top \Sigma_2 v$ remains large, even when multiplied by low-degree sum-of-squares polynomials, in particular, the $q_i$'s and therefore $\pexpecf{\mu}{ q_i(w,v) v^\top \Sigma_2 v } \geq 0.1$ (see~\cref{lem:fixing-the-quad-form}).  Therefore, we can now simply perform Gaussian rounding on $\pexpecf{\mu'}{ vv^\top }$. Of course, re-weighting by $v^\top \Sigma_2 v$ is as hard as learning $\Sigma_2$, but if we pick a uniformly random sample, and condition on $x_i$ being from $\calD(0, \Sigma_2)$, in expectation, we are re-weighting $v^\top \Sigma_2 v$ and the above argument continues to work (see \cref{lem:simulating-fixing-the-quad-form} for a full proof).

\subsection{Related work}

Anti-concentration is a vast field and has been subject to intense study in probability theory (e.g. \cite{erdos1945lemma, costello2006random, lovett2010elementary, bakshi2021list}) and statistics (e.g. \cite{li2001gaussian, chernozhukov2015comparison}). Similar to concentration inequalities \cite{tropp2015:book, medarametla2016bounds, rajendran2023concentration}, anti-concentration inequalities have found numerous applications in computer science, e.g. \cite{simchowitz2015dictionary, karmalkar2019list, raghavendra2020list, aaronson2011computational, chakrabarti2011optimal, vidick2012concentration, sherstov2012communication, nie2022matrix}. 
Among such applications, the works most closely related to ours are the ones exploiting anticoncentration certificates to design approximation algorithms \cite{karmalkar2019list, raghavendra2020list, bakshi2020outlier, raghavendra2020list2, diakonikolas2020robustly, bakshi2021list, ivkov2022list}.
Notably, many of these certificates are based on the Sum of Squares (SoS) hierarchy. In particular, these SoS certificates of anti-concentration have been modified in various ways to design statistical algorithms. However, they suffer from two drawbacks -- Most of the programs used have been application-specific, requiring ad-hoc analyses designed for the problem at hand; and for technical reasons, they work only for spherical distributions, e.g. Gaussians. In this work, we present a unified and the most canonical way to certify anti-concentration phenomena for a wide class of distributions that go beyond spherical distributions.

The sum of squares hierarchy (SoS) \cite{shor1987approach, nesterov2000squared, parrilo2000structured, grigoriev2001complexity} uses insights from proof complexity to design algorithms, leading to the powerful framework known as ``proofs to algorithms'' \cite{FKP19}. This has been highly effective in theoretical computer science, capturing state-of-the-art approximation algorithms for many
problems, e.g. \cite{AroraRV04, GW94, hopkins2015tensor, Raghavendra08}. Due to its effectiveness in capturing many algorithmic reasoning techniques \cite{FKP19}, lower bounds against SoS is also an active area of study \cite{BHKKMP16, KothariMOW17, sklowerbounds, potechin2020machinery, rajendran2022nonlinear, kunisky2020positivity, potechin2022sub, jones2022symmetrized, jones2022sum, jones2023sum}. 
In recent years, many works have exploited SoS hierarchy for applications in algorithmic robust statistics \cite{diakonikolas2023algorithmic}, leading to to breakthrough algorithms for
long-standing open problems \cite{bakshi2022robustly, liu2021settling, hopkins2020mean, klivans2018efficient, FKP19, kothari2017outlier, bakshi2020outlier, bakshi2021list, schramm2017fast}. Highlights include robustly learning mixtures of high dimensional Gaussians (a long line of works culminating in \cite{bakshi2022robustly, liu2021settling}), efficient algorithms for the fundamental problems of regression \cite{klivans2018efficient}, moment estimation \cite{kothari2017outlier}, clustering \cite{bakshi2020outlier} and subspace recovery \cite{bakshi2021list} in the presence of outliers. In recent years, many more applications are being discovered, e.g. in differential privacy \cite{hopkins2022efficient, georgiev2022privacy, kothari2022private,   hopkins2023robustness} and 
quantum information \cite{barak2012hypercontractivity, barak2017quantum, bakshi2023learning}.

Finally, we discuss works related to our applications. The literature on clustering and regression is incredibly vast.
These fundamental problems have many connections to other standard statistical tasks such as mixture modeling, density estimation, latent variable modeling, etc. For the most closely related works on clustering and regression, see \cite{bakshi2020outlier, liu2022clustering, diakonikolas2022clustering, karmalkar2019list, diakonikolas2017statistical, kothari2017better, hopkins2018mixture, chen2020learning, raghavendra2020list} and citations therein.

\paragraph{Organization of the paper.}

In \cref{sec: prelims}, we present the preliminaries of the sum-of-squares framework and define the distributional properties we use. In \cref{sec: anticonc_program}, we present our main anti-concentration program and state our main theorem, with the proof details filled out in Sections \ref{sec: anticonc_program}, \ref{sec:dense-cert-anti-conc} and \ref{sec:re-weighting-pseudo-dist}.
We then show the application of our program to spectral clustering in \cref{sec:clustering-warmup},
$k$-clustering in \cref{sec:clustering-with-all-separation} and list decodable regression in
\cref{sec:list-decodable-regression}. 


\section{Preliminaries}\label{sec: prelims}

\subsection{The Sum-of-Squares framework}
\label{subsec:sos-framework}

We now provide an overview of the Sum-of-Squares proof system.
We closely follow the exposition as it appears in the lecture notes of Barak~\cite{barak2016proofs}.   

\paragraph{Pseudo-Distributions.}
A discrete probability distribution over $\R^m$ is defined by its probability mass function, $D\from \R^m \to \R$, which must satisfy $\sum_{x \in \mathrm{supp}(D)} D(x) = 1$ and $D \geq 0$.
We extend this definition by relaxing the non-negativity constraint to merely requiring that $D$ passes certain low-degree non-negativity tests.
We call the resulting object a pseudo-distribution.

\begin{definition}[Pseudo-distribution]
A \emph{degree-$\ell$ pseudo-distribution} is a finitely-supported function $D:\R^m \rightarrow \R$ such that $\sum_{x} D(x) = 1$ and $\sum_{x} D(x) p(x)^2 \geq 0$ for every polynomial $p$ of degree at most $\ell/2$, where the summation is over all $x$ in the support of $D$.
\end{definition}
Next, we define the related notion of pseudo-expectation.
\begin{definition}[Pseudo-expectation]
The \emph{pseudo-expectation} of a function $f$ on $\R^m$ with respect to a pseudo-distribution $\mu$, denoted by $\pexpecf{\mu(x)}{f(x)}$,  is defined as
\begin{equation*}
  \pexpecf{\mu(x)}{f(x)} = \sum_{x} \mu(x) f(x).
\end{equation*}
\end{definition}
We use the notation $\pexpecf{\mu(x)}{(1,x_1, x_2,\ldots, x_m)^{\otimes \ell}}$ to denote the degree-$\ell$ moment tensor of the pseudo-distribution $\mu$.
In particular, each entry in the moment tensor corresponds to the pseudo-expectation of a monomial of degree at most $\ell$ in $x$. 

\begin{definition}[Constrained pseudo-distributions]
\label{def:constrained-pseudo-distributions}
Let $\mu$ be a degree-$\ell$ pseudo-distribution over $\mathbb{R}^d$. Let $\calA = \Set{ p_1\geq 0 , p_2\geq0 , \ldots p_m\geq 0}$ be a system of $m$ polynomial inequality constraints of degree at most $r$. We define $\mu$ to satisfy $\calA$ at degree $\ell \ge1$ if for every subset $\calS \subset [m]$, and every sum-of-squares polynomial $q$ such that $\textrm{deg}(q) + \sum_{i \in \calS } \max\Paren{p_i, r } \leq \ell$, $\pexpecf{\mu}{ q \prod_{i \in \calS} p_i } \geq 0$. Further, we define $\mu$ to \textit{ approximately satisfy} the system of constraints $\calA$ if the above inequalities are satisfied up to additive error $2^{-n^{\ell} } \norm{q} \prod_{i \in \calS} \norm{p_i}$, where $\norm{\cdot}$ denotes the euclidean norm of the coefficients of the polynomial, represented in the monomial basis.  
\end{definition}

Crucially, there's an efficient separation oracle for moment tensors of constrained pseudo-distributions. 

\begin{fact}[\cite{shor1987approach, nesterov2000squared, parrilo2000structured, grigoriev2001complexity}]
 \label{fact:sos-separation-efficient}
  For any $m,\ell \in \N$, the following set has a $m^{O(\ell)}$-time weak separation oracle (in the sense of \cite{grotschel1981ellipsoid}):
  \begin{equation*}
    \Set{  \pexpecf{\mu(x)} { (1,x_1, x_2, \ldots, x_m)^{\otimes \ell } } \Big\vert \text{ $\mu$ is a degree-$\ell$ pseudo-distribution over $\R^m$}}
  \end{equation*}
\end{fact}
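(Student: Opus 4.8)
The plan is to realize the set in the statement as an affine slice of the positive-semidefinite cone and then apply the standard weak separation oracle for spectrahedra. A degree-$\ell$ pseudo-distribution $\mu$ over $\R^m$ is, up to its degree-$\ell$ moments, the same data as a linear functional $L=\pexpecf{\mu}{\cdot}$ on polynomials of degree at most $\ell$ with $L(1)=1$. Index the rows and columns of a matrix $M(L)$ by monomials $x^\alpha$ with $\Abs{\alpha}\le \ell/2$ and set $M(L)_{\alpha,\beta}=L(x^{\alpha+\beta})$. Then for any $p=\sum_\alpha c_\alpha x^\alpha$ of degree at most $\ell/2$ we have $L(p^2)=c^\top M(L)c$, so the defining requirement $\sum_x \mu(x)p(x)^2\ge 0$ of a pseudo-distribution is exactly $M(L)\succeq 0$. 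Hence the set of degree-$\ell$ moment tensors is the image, under a fixed linear map, of
\[
\Set{ M\succeq 0 \;:\; M_{\emptyset,\emptyset}=1,\ M_{\alpha,\beta}=M_{\alpha',\beta'} \text{ whenever } \alpha+\beta=\alpha'+\beta' },
\]
an affine slice of the PSD cone living in dimension $N:=\binom{m+\lfloor \ell/2\rfloor}{\lfloor \ell/2\rfloor}=m^{O(\ell)}$.

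Given this identification I would build the oracle as follows. On input a candidate tensor $T$, first check the polynomially many linear constraints above (normalization, and the Hankel-type consistency forcing entries that represent the same monomial to agree); any violated equality immediately yields a separating hyperplane. Otherwise assemble $M(T)$ and compute an approximate smallest eigenpair $(\lambda_{\min},u)$ in time polynomial in $N$, i.e.\ $m^{O(\ell)}$. If $\lambda_{\min}\ge -\eps$, report approximate feasibility. If $\lambda_{\min}<-\eps$, output the polynomial $p_u:=\sum_\alpha u_\alpha x^\alpha$ of degree at most $\ell/2$: the linear functional $T'\mapsto \Iprod{T',\widehat{p_u^2}}$ (with $\widehat{p_u^2}$ the monomial-basis coefficient vector of $p_u^2$) takes the value $u^\top M(T)u<-\eps$ on $T$, while on every genuine degree-$\ell$ moment tensor it equals $\pexpecf{\mu}{p_u(x)^2}\ge 0$; this is the desired separating hyperplane. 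Since the set is a spectrahedron rather than a polytope and the eigendecomposition is only approximate, what we obtain is necessarily a \emph{weak} separation oracle in the sense of Gr\"otschel--Lov\'asz--Schrijver, which is exactly the notion the ellipsoid method needs.

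I expect the only genuinely non-routine point to be the equivalence between the abstract object above (a linear functional with a PSD moment matrix) and the concrete \cref{def:constrained-pseudo-distributions}-style definition of a pseudo-distribution as a finitely supported function $D:\R^m\to\R$: one must show that every $L$ with $M(L)\succeq 0$ and $L(1)=1$ really \emph{is} the pseudo-expectation of some finitely supported (signed) $D$. This follows by observing that the set of moment tensors of finitely supported pseudo-distributions is convex and closed, contains the rank-one moment tensors of point masses $\delta_x$, and has precisely the half-spaces $\{\Iprod{\cdot,\widehat{p^2}}\ge 0:\deg p\le \ell/2\}$ together with $\{L(1)=1\}$ as its defining inequalities, so it coincides with the spectrahedron; once this is granted, everything else is bookkeeping. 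The remaining routine items are: counting monomials to obtain the $m^{O(\ell)}$ size bound; and tracking that the entries of $M(T)$ are given with bounded bit-complexity, so that the approximate eigen-computation runs in time $m^{O(\ell)}$ and its additive slack can be absorbed into the ``approximately satisfies'' tolerance of \cref{def:constrained-pseudo-distributions}.
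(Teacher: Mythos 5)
The paper does not prove this statement at all: it is quoted as a known fact with citations to the classical sum-of-squares/Lasserre/Parrilo literature, so there is no in-paper argument to compare against. Your proposal is the standard proof underlying those citations — identify the degree-$\ell$ pseudo-moment tensors with an affine slice of the PSD cone via the moment matrix, and turn an approximate minimum-eigenvector computation into a weak separating hyperplane of the form $T' \mapsto \langle T', \widehat{p_u^2}\rangle$ — and it is essentially correct, including the correct observation that one only gets a \emph{weak} oracle. The one step you flag as non-routine (that every linear functional $L$ with $L(1)=1$ and $M(L)\succeq 0$ is the pseudo-expectation of a finitely supported signed $D$) is true, but your justification via ``defining inequalities'' is slightly circular as phrased; the clean argument is that the point-evaluation moment vectors $(1,x,\ldots,x^{\otimes\ell})$, $x\in\R^m$, span the full space of degree-$\le\ell$ symmetric tensors (no nonzero polynomial vanishes on all of $\R^m$), so every linear functional is a finite signed combination of point evaluations, and the PSD condition then transfers to the nonnegativity test $\sum_x D(x)p(x)^2\ge 0$.
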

This fact, together with the equivalence of weak separation and optimization \cite{grotschel1981ellipsoid} forms the basis of the sum-of-squares algorithm, as it allows us to efficiently approximately optimize over pseudo-distributions. 

Given a system of polynomial constraints, denoted by $ \calA$, we say that it is \emph{explicitly bounded} if it contains a constraint of the form $\{ \|x\|^2 \leq M\}$. Then, the following fact follows from  \cref{fact:sos-separation-efficient} and \cite{grotschel1981ellipsoid}:

\begin{fact}[Efficient Optimization over Pseudo-distributions]
There exists an $(m+t)^{O(\ell)} $-time algorithm that, given any explicitly bounded and satisfiable system $ \calA$ of $t$ polynomial constraints in $m$ variables, outputs a level-$\ell$ pseudo-distribution that satisfies $ \calA$ approximately (see~\cref{def:constrained-pseudo-distributions}) \footnote{Here, we assume that the bit complexity of the constraints in $ \calA$ is $(m+t)^{O(1)}$.}\label{fact:eff-pseudo-distribution}.
\end{fact}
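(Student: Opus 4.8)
The plan is to reduce the statement to the classical equivalence between weak separation and weak optimization (the ellipsoid method of \cite{grotschel1981ellipsoid}), applied to the convex set of degree-$\ell$ pseudo-moment tensors that are consistent with $\calA$.

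First I would observe that the set $K \subseteq \R^{(m+1)^{\ell}}$ of degree-$\ell$ moment tensors $\pexpecf{\mu}{(1,x_1,\ldots,x_m)^{\otimes \ell}}$, as $\mu$ ranges over degree-$\ell$ pseudo-distributions over $\R^m$ satisfying $\calA$, is convex and is in fact the projection of a spectrahedron. Indeed, being a degree-$\ell$ pseudo-distribution forces the degree-$\ell$ moment matrix to be positive semidefinite, and for each subset $\calS \subseteq [t]$ and sum-of-squares $q$ with $\deg(q) + \sum_{i \in \calS}\max(\deg p_i, r) \le \ell$ the associated localizing matrix --- whose entries are pseudo-expectations of $\Paren{\prod_{i \in \calS} p_i}\cdot(\text{monomial})$ and hence \emph{linear} functions, with polynomially bounded coefficients, of the entries of the degree-$\ell$ moment tensor --- must also be positive semidefinite. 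Since positive semidefiniteness of an explicit matrix admits a polynomial-time weak separation oracle (via a minimum-eigenvalue computation), and only $t^{O(\ell)}$ of the $2^t$ subsets have small enough total degree to be relevant, combining this with \cref{fact:sos-separation-efficient} yields a weak separation oracle for $K$ running in time $(m+t)^{O(\ell)}$.

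Second, I would check the two geometric hypotheses the ellipsoid method needs. The ``explicitly bounded'' assumption provides a constraint $\Set{\norm{x}^2 \le M}$ inside $\calA$; applying pseudo-Cauchy--Schwarz $\ell$ times (equivalently, using that $M^{|\alpha|/2} \pm x^{\alpha}$ is nonnegative modulo $\norm{x}^2 \le M$ in the degree-$\ell$ SoS sense) shows that every coordinate of a moment tensor in $K$ has magnitude at most $M^{\ell/2}$, so $K$ lies in a Euclidean ball of radius $R = (m+1)^{O(\ell)} M^{\ell/2}$, and thus $\log R$ is bounded by $(m+t)^{O(\ell)}$ given the assumed polynomial bit-complexity of the constraints. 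Feeding the separation oracle, the radius $R$, and a target accuracy $\eps = 2^{-(m+t)^{\Theta(\ell)}}$ into the weak-optimization (equivalently weak-nonemptiness) routine of \cite{grotschel1981ellipsoid} then produces, in $(m+t)^{O(\ell)}$ time, a point $y$ within Euclidean distance $\eps$ of $K$; the assumed satisfiability of $\calA$ guarantees $K \neq \emptyset$, so this routine does not report infeasibility.

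The main obstacle --- and the reason the conclusion promises only \emph{approximate} satisfaction --- is that $K$ need not be full-dimensional: an equality constraint, or any constraint forcing a moment submatrix to be rank-deficient, confines $K$ to a proper face, so one cannot demand an exactly feasible output. The resolution is to argue that the near-feasible $y$ above is already the moment tensor of a genuine degree-$\ell$ pseudo-distribution that approximately satisfies $\calA$ in the precise sense of \cref{def:constrained-pseudo-distributions}. Since $y$ is within distance $\eps$ of some exact moment tensor $z^\star \in K$, and each quantity $\pexpecf{\mu}{q \prod_{i \in \calS} p_i}$ is a linear function of the moment tensor with coefficient vector of norm at most $(m+t)^{O(\ell)}\norm{q}\prod_{i}\norm{p_i}$, the pseudo-expectation read off from $y$ differs from the nonnegative quantity $\pexpecf{\mu^\star}{q \prod_{i \in \calS} p_i} \ge 0$ attained at $z^\star$ by at most $2^{-n^{\ell}}\norm{q}\prod_{i \in \calS}\norm{p_i}$, provided $\eps$ is chosen small enough --- exactly the slack allowed by \cref{def:constrained-pseudo-distributions}. (If one insists on outputting a finitely-supported function $D$ rather than just its degree-$\ell$ pseudo-expectation functional, it can be recovered from $y$ by a standard atomic-representation / flat-extension argument, since any truncated moment sequence with positive semidefinite moment matrix is realized by a finitely supported signed measure passing the required low-degree nonnegativity tests.) The total running time is $(m+t)^{O(\ell)}$ ellipsoid iterations, each calling the $(m+t)^{O(\ell)}$-time separation oracle, which gives the claimed bound.
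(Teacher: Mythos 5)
Your proposal is correct and follows exactly the route the paper takes, which is to cite \cref{fact:sos-separation-efficient} together with the equivalence of weak separation and weak optimization from \cite{grotschel1981ellipsoid}; the paper gives no further detail, and your write-up simply fleshes out that citation (separation oracle from PSD-ness of the moment and localizing matrices, boundedness from the explicit $\norm{x}^2 \le M$ constraint, and absorption of the ellipsoid method's $\eps$-error into the $2^{-n^{\ell}}$ slack of \cref{def:constrained-pseudo-distributions}). The only point worth flagging is that a point within distance $\eps$ of $K$ need not have an exactly PSD moment matrix, so strictly speaking the output is only an approximate pseudo-distribution as well — but this is the standard caveat implicitly covered by the same approximate-satisfaction slack, and the paper treats it no more carefully.
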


We now define basic facts for pseudo-distributions, which extend facts that hold for standard probability distributions, which can be found in several prior works listed above.

\begin{fact}[Cauchy-Schwarz for Pseudo-distributions]
Let $f,g$ be polynomials of degree at most $d$ in indeterminate $x \in \R^d$. Then, for any degree d pseudo-distribution $\mu$,
$\pexpecf{\mu}{fg}  \leq \sqrt{ \pexpecf{\mu}{ f^2} } \cdot \sqrt{ \pexpecf{\mu}{ g^2} }$.
 \label{fact:pseudo-expectation-cauchy-schwarz}
\end{fact}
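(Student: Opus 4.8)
The plan is to transplant the classical one-line derivation of the Cauchy--Schwarz inequality into the pseudo-expectation calculus, using only linearity of $\pexpecf{\mu}{\cdot}$ together with the single defining property of pseudo-distributions that matters here: $\pexpecf{\mu}{q^2}\ge 0$ for every polynomial $q$ whose degree is at most half the degree of $\mu$. Concretely, I would introduce a real parameter $\lambda\in\R$ and apply this non-negativity property to the polynomial $q_\lambda := \lambda f - g$. Since $f$ and $g$ have degree at most $d$, so does $q_\lambda$, and hence $q_\lambda^2$ has degree at most $2d$; this is where one needs $\mu$ to have degree at least $2d$ (the intended reading of the statement, up to the usual factor-of-two convention on ``degree $\ell$''). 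Expanding $q_\lambda^2$ and invoking linearity gives, for every $\lambda\in\R$,
\begin{equation*}
0 \;\le\; \pexpecf{\mu}{q_\lambda^2} \;=\; \lambda^2\,\pexpecf{\mu}{f^2} \;-\; 2\lambda\,\pexpecf{\mu}{fg} \;+\; \pexpecf{\mu}{g^2}.
\end{equation*}

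The next step is to read this as a univariate quadratic in $\lambda$ that is non-negative on all of $\R$. If $\pexpecf{\mu}{f^2}>0$, I would conclude that its discriminant is non-positive, i.e.\ $\big(\pexpecf{\mu}{fg}\big)^2 \le \pexpecf{\mu}{f^2}\cdot\pexpecf{\mu}{g^2}$; equivalently, one may just substitute the minimizer $\lambda = \pexpecf{\mu}{fg}/\pexpecf{\mu}{f^2}$ into the displayed inequality and rearrange. Taking square roots and using $\pexpecf{\mu}{fg}\le\big|\pexpecf{\mu}{fg}\big|$ yields the claimed one-sided bound. The only case needing a separate (easy) word is the degenerate one $\pexpecf{\mu}{f^2}=0$: then the display reads $\pexpecf{\mu}{g^2}\ge 2\lambda\,\pexpecf{\mu}{fg}$ for all $\lambda\in\R$, which forces $\pexpecf{\mu}{fg}=0$, so the inequality holds trivially; the case $\pexpecf{\mu}{g^2}=0$ is symmetric.

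I do not expect a genuine obstacle here — this is the textbook argument — so the only thing to be careful about is the degree bookkeeping: making sure $q_\lambda^2$ is legitimately an ``admissible'' test polynomial for $\mu$ (a square of a polynomial of degree at most half the degree of $\mu$), which is exactly why the pseudo-distribution must have degree at least twice that of $f$ and $g$. Everything else follows from linearity of the pseudo-expectation and the squared-polynomial non-negativity axiom, both available by definition.
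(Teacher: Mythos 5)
Your argument is correct and is the standard one; the paper itself states this fact without proof (deferring to prior works), and the intended justification is exactly the quadratic-in-$\lambda$ / discriminant argument you give, including the degenerate case $\pexpecf{\mu}{f^2}=0$. Your remark on degree bookkeeping is also right: under the paper's convention (a degree-$\ell$ pseudo-distribution certifies $\pexpecf{\mu}{p^2}\ge 0$ only for $\deg p\le \ell/2$), the hypothesis should be read as $\mu$ having degree at least $2d$, so you have correctly identified and repaired the only imprecision in the statement.
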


\begin{fact}[Hölder's Inequality for Pseudo-Distributions] \label{fact:pseudo-expectation-holder}
Let $f,g$ be polynomials of degree at most $d$ in indeterminate $x \in \R^d$. 
Fix $t \in \N$. Then, for any degree $dt$ pseudo-distribution $\mu$,
\begin{equation*}
    \pexpecf{\mu}{ f^{t-1}  g} \leq \Paren{ \pexpecf{\mu}{ f^t }  }^{\frac{t-1}{t}} \cdot  \Paren{  \pexpecf{\mu }{ g^t }  }^{\frac{1}{t}}.  
\end{equation*}
In particular, for all even integers $k$,
$\pexpecf{\mu}{f}^k \leq \pexpecf{\mu}{ f^k }$.
\end{fact}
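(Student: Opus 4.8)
The plan is to derive this from Cauchy--Schwarz for pseudo-distributions (\cref{fact:pseudo-expectation-cauchy-schwarz}) via homogenization and an induction on $t$. First I would dispose of the degenerate case $\pexpecf{\mu}{f^t}=0$: writing (for $t$ even) $f^{t-1}g = f^{t/2}\cdot f^{t/2-1}g$ and applying \cref{fact:pseudo-expectation-cauchy-schwarz} gives $\pexpecf{\mu}{f^{t-1}g} \le \pexpecf{\mu}{f^t}^{1/2}\,\pexpecf{\mu}{(f^{t/2-1}g)^2}^{1/2} = 0$, which matches the right-hand side; the case $\pexpecf{\mu}{g^t}=0$ is symmetric. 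Hence we may assume both pseudo-moments are strictly positive. Since pseudo-expectation is linear, rescaling $f\mapsto \alpha f$ multiplies the left-hand side by $\alpha^{t-1}$ and the right-hand side by $(\alpha^t)^{(t-1)/t}=\alpha^{t-1}$, and likewise for $g$, so we may normalize $\pexpecf{\mu}{f^t}=\pexpecf{\mu}{g^t}=1$ and it suffices to prove $\pexpecf{\mu}{f^{t-1}g}\le 1$.

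For the normalized statement I would induct on $t$, with base case $t=2$ being exactly \cref{fact:pseudo-expectation-cauchy-schwarz}. For the inductive step (taking $t=2s$), apply \cref{fact:pseudo-expectation-cauchy-schwarz} with the balanced split $f^{2s-1}g = f^{s}\cdot f^{s-1}g$ to get $\pexpecf{\mu}{f^{2s-1}g} \le \pexpecf{\mu}{f^{2s}}^{1/2}\pexpecf{\mu}{f^{2s-2}g^2}^{1/2}$, and then invoke the inductive hypothesis with parameter $s$ on the polynomials $f^2$ and $g^2$ to bound $\pexpecf{\mu}{(f^2)^{s-1}(g^2)} \le \pexpecf{\mu}{f^{2s}}^{(s-1)/s}\pexpecf{\mu}{g^{2s}}^{1/s}$; combining and simplifying exponents via $\tfrac12+\tfrac{s-1}{2s}=\tfrac{t-1}{t}$ and $\tfrac1{2s}=\tfrac1t$ yields the claim. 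An equivalent and perhaps cleaner bookkeeping: set $S_k:=\pexpecf{\mu}{f^{t-k}g^k}$ for $0\le k\le t$; each balanced split gives $S_k\le S_{k_1}^{1/2}S_{k_2}^{1/2}$ where the two halves are squares of polynomials of degree exactly $\tfrac{dt}{2}$ (so a degree-$dt$ pseudo-distribution suffices at every step), and chaining these inequalities telescopes to $S_1\le S_0^{(t-1)/t}S_t^{1/t}$.

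The ``in particular'' claim is the specialization $g\equiv 1$: the main inequality gives $\pexpecf{\mu}{f^{t-1}}\le\pexpecf{\mu}{f^t}^{(t-1)/t}$, and chaining this for $t=2,3,\dots,k$ starting from $\pexpecf{\mu}{f}\le\pexpecf{\mu}{f^2}^{1/2}$ telescopes to $\pexpecf{\mu}{f}\le\pexpecf{\mu}{f^k}^{1/k}$; since $k$ is even, both sides are nonnegative, so raising to the $k$-th power gives $\pexpecf{\mu}{f}^k\le\pexpecf{\mu}{f^k}$. When $k$ is a power of two one can even skip the general inequality: $\pexpecf{\mu}{h}^2\le\pexpecf{\mu}{h^2}$ is \cref{fact:pseudo-expectation-cauchy-schwarz} with the second function equal to $1$, and iterating with $h=f,f^2,f^4,\dots$ gives the bound directly.

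I expect the only real subtlety to be bookkeeping rather than a conceptual obstacle: one must track which intermediate pseudo-expectations are guaranteed nonnegative so that taking $1/t$-th powers and multiplying the resulting inequalities is legitimate (all the quantities above are pseudo-expectations of squares), and one must mind parity --- the balanced splits land within degree $dt$ when $t$ is even, whereas for odd $t$ one either tolerates a degree-$(t+1)d$ pseudo-distribution or reduces to the even case, which is harmless since the applications in this paper use $t$ an even integer (and indeed the right-hand side only makes unambiguous sense when $\pexpecf{\mu}{f^t}\ge 0$).
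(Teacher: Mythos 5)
The paper states this fact without proof (it is listed among ``basic facts \dots which can be found in several prior works''), so there is no internal proof to compare against; your derivation from \cref{fact:pseudo-expectation-cauchy-schwarz} by normalization and repeated balanced splits is the standard one, and the proof of the main inequality is correct in outline. The parity bookkeeping you flag is real but surmountable: when the inductive hypothesis is invoked with an odd parameter $s=t/2$, it is applied to the squares $F=f^2$, $G=g^2$, and for squares the odd case can be closed by a two-step bootstrap ($\pexpecf{\mu}{f^4g^2}\le\pexpecf{\mu}{f^6}^{1/2}\pexpecf{\mu}{f^2g^4}^{1/2}$ and $\pexpecf{\mu}{f^2g^4}\le\pexpecf{\mu}{g^6}^{1/2}\pexpecf{\mu}{f^4g^2}^{1/2}$, then solve for $\pexpecf{\mu}{f^4g^2}$) without exceeding degree $dt$. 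One genuine slip: the degenerate case $\pexpecf{\mu}{g^t}=0$ is \emph{not} symmetric under a single Cauchy--Schwarz application, since $g$ appears only to the first power on the left; you need the full chain $S_1\le S_0^{1/2}S_2^{1/2}$, $S_2\le S_0^{1/2}S_4^{1/2}$, \dots\ to reach $S_t$.

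The more substantive gap is in your first route to the ``in particular'' clause. Chaining $\pexpecf{\mu}{f^{t-1}}\le\pexpecf{\mu}{f^t}^{(t-1)/t}$ over $t=2,3,\dots,k$ passes through odd exponents, and for odd $t$ the quantity $\pexpecf{\mu}{f^t}$ can be negative (take $f\equiv -1$: then $\pexpecf{\mu}{f^2}=1\not\le\pexpecf{\mu}{f^3}^{2/3}$ in any sensible interpretation), so that chain breaks. Your power-of-two fallback is correct but only covers $k\in\{2,4,8,\dots\}$, not all even $k$. The clean fix is to specialize the main inequality with the \emph{first} function equal to $1$ rather than the second: taking $t=k$, $f\to 1$, $g\to f$ gives $\pexpecf{\mu}{f}=\pexpecf{\mu}{1^{k-1}f}\le\pexpecf{\mu}{f^k}^{1/k}$ directly, and applying the same to $-f$ (using that $k$ is even) gives $\abs{\pexpecf{\mu}{f}}\le\pexpecf{\mu}{f^k}^{1/k}$, whence $\pexpecf{\mu}{f}^k\le\pexpecf{\mu}{f^k}$. (Note also that $\pexpecf{\mu}{f}$ itself may be negative, so the absolute value is needed before raising to the $k$-th power.)
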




\paragraph{Sum-of-squares proofs.}

Let $f_1, f_2, \ldots, f_r$ and $g$ be multivariate polynomials in the indeterminate $x$.
Given constraints $\{f_1 \geq 0, \ldots, f_m \geq 0\}$,   a \emph{sum-of-squares proof} that implies the identity $\{g \geq 0\}$ consists of  polynomials $(p_S)_{S \subseteq [m]}$ such that
\begin{equation*}
g = \sum_{S \subseteq [m]} p^2_S \cdot \prod_{i \in S} f_i
\end{equation*}
We say that this proof has \emph{degree-$\ell$} if for every set $S \subseteq [m]$, the polynomial $p^2_S \Pi_{i \in S} f_i$ has degree at most $\ell$.
If there is a degree $\ell$ SoS proof that $\{f_i \geq 0 \mid i \leq r\}$ implies $\{g \geq 0\}$, we write:
\begin{equation}
  \{f_i \geq 0 \mid i \leq r\} \sststile{\ell}{}\{g \geq 0\}
\end{equation}
For all polynomials $f,g\colon\R^n \to \R$ and for all functions $F\colon \R^n \to \R^m$, $G\colon \R^n \to \R^k$, $H\colon \R^{p} \to \R^n$ such that each of the coordinates of the outputs are polynomials of the inputs, we have the following inference rules.

The first one derives new inequalities by addition/multiplication:
\begin{equation} \label{eq:sos-addition-multiplication-rule}
\frac{ \calA \sststile{\ell}{} \{f \geq 0, g \geq 0 \} } { \calA \sststile{\ell}{} \{f + g \geq 0\}}, \frac{ \calA \sststile{\ell}{} \{f \geq 0\},  \calA \sststile{\ell'}{} \{g \geq 0\}} { \calA \sststile{\ell+\ell'}{} \{f \cdot g \geq 0\}} \tag{Addition/Multiplication Rule} 
\end{equation}
The next one derives new inequalities by transitivity: 
\begin{equation} \label{eq:sos-transitivity}
\frac{ \calA \sststile{\ell}{}  \calB,  \calB \sststile{\ell'}{} C}{ \calA \sststile{\ell \cdot \ell'}{} C}   \tag{Transitivity Rule} 
\end{equation}
Finally, the last rule derives new inequalities via substitution:
\begin{equation} \label{eq:sos-substitution}
\frac{\{F \geq 0\} \sststile{\ell}{} \{G \geq 0\}}{\{F(H) \geq 0\} \sststile{\ell \cdot \deg(H)} {} \{G(H) \geq 0\}}\tag{Substitution Rule} 
\end{equation}

Sum-of-squares proofs allow us to deduce properties of pseudo-distributions that satisfy some constraints.
\begin{fact}[Soundness]
  \label{fact:sos-soundness}
  If $\mu \sdtstile{\ell}{}  \calA$ for a level-$\ell$ pseudo-distribution $\mu$ and there exists a sum-of-squares proof $ \calA \sststile{\ell'}{}  \calB$, then $\mu \sdtstile{\ell \cdot \ell'+\ell'}{}  \calB$.
\end{fact}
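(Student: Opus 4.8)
The plan is to prove this standard soundness statement using only two ingredients: the linearity of the pseudo-expectation functional, and the fact that a product of sum-of-squares polynomials is again a sum of squares. Write $\calB = \Set{g_1 \geq 0, \ldots, g_{m'} \geq 0}$. By definition of a degree-$\ell'$ sum-of-squares proof, the hypothesis $\calA \sststile{\ell'}{} \calB$ supplies, for every $j \in [m']$, polynomials $(p_{S,j})_{S \subseteq [m]}$ with
\[
g_j \;=\; \sum_{S \subseteq [m]} p_{S,j}^2 \prod_{i \in S} p_i ,
\]
where each summand $p_{S,j}^2 \prod_{i \in S} p_i$ has degree at most $\ell'$. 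To establish $\mu \sdtstile{\ell \ell' + \ell'}{} \calB$, I would fix — as required by \cref{def:constrained-pseudo-distributions} — a subset $\mathcal{T} \subseteq [m']$ and a sum-of-squares polynomial $q$ respecting the degree budget of that definition, and argue that $\pexpecf{\mu}{q \prod_{j \in \mathcal{T}} g_j} \geq 0$.

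The core step is to substitute the identities above into $q \prod_{j \in \mathcal{T}} g_j$ and expand the product over $j \in \mathcal{T}$. This produces a finite sum of terms of the form $q \cdot \bigl(\prod_{j \in \mathcal{T}} p_{S_j, j}^2\bigr) \cdot \prod_{j \in \mathcal{T}} \prod_{i \in S_j} p_i$, indexed by tuples $(S_j)_{j \in \mathcal{T}}$. In each such term I would collect, for each constraint index $i \in [m]$, its total multiplicity $c_i = \abs{\Set{ j \in \mathcal{T} : i \in S_j}}$ and split $c_i = 2\lfloor c_i/2 \rfloor + (c_i \bmod 2)$; the even parts contribute the factor $\prod_i p_i^{2 \lfloor c_i/2 \rfloor}$, which is a perfect square, and what is left over is $\prod_{i \in S^\ast} p_i$ with $S^\ast = \Set{ i : c_i \text{ odd}} \subseteq [m]$. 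Hence every term rewrites as $\widehat{q} \cdot \prod_{i \in S^\ast} p_i$, where $\widehat{q}$ is a product of $q$ with several squares and is therefore itself a sum of squares.

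Finally I would invoke the hypothesis $\mu \sdtstile{\ell}{} \calA$: for each term, \cref{def:constrained-pseudo-distributions} gives $\pexpecf{\mu}{\widehat{q} \prod_{i \in S^\ast} p_i} \geq 0$, provided $\deg(\widehat{q}) + \sum_{i \in S^\ast} \max(\deg(p_i), r) \leq \ell$. Summing over the finitely many terms and using linearity of $\pexpecf{\mu}{\cdot}$ then yields $\pexpecf{\mu}{q \prod_{j \in \mathcal{T}} g_j} \geq 0$, which is exactly the inequality defining $\mu \sdtstile{}{} \calB$. Conceptually this is all there is to it; the only delicate part — and the step I expect to require care — is the degree accounting. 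One has to track how the degree-$\ell'$ representations of the $g_j$, the product over $\mathcal{T}$, and the outer multiplication by $q$ combine (in the manner of the \eqref{eq:sos-addition-multiplication-rule} and \eqref{eq:sos-transitivity} inference rules), and verify that the claimed conclusion level $\ell \ell' + \ell'$ is exactly enough so that every invocation of the degree-$\ell$ satisfaction hypothesis for $\calA$ above is legitimate. This bookkeeping, rather than any genuinely new idea, is the entire content of the proof.
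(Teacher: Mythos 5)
The paper does not prove this statement at all: it is imported as a standard fact from the sum-of-squares literature (following Barak's lecture notes), so there is no in-paper argument to compare against. Your proposal is the standard folklore proof, and its structure is right. The two genuinely load-bearing observations are both present: (i) substituting the degree-$\ell'$ representations $g_j = \sum_S p_{S,j}^2\prod_{i\in S}p_i$ into $q\prod_{j\in\mathcal T}g_j$ and expanding, and (ii) the parity-splitting step that converts the resulting \emph{multiset} of constraint indices into a genuine subset $S^\ast\subseteq[m]$ by absorbing $\prod_i p_i^{2\lfloor c_i/2\rfloor}$ into the SoS prefactor. Point (ii) is essential because \cref{def:constrained-pseudo-distributions} only quantifies over subsets, not multisets, and many sketches of this fact omit it; you did not.

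The one place your write-up is incomplete is exactly the place you flag: the degree accounting is asserted, not performed, and it is not entirely cosmetic. To finish, note that in \cref{def:constrained-pseudo-distributions} the quantity $\max(\deg(p_i),r)$ is just $r$ (since $r$ bounds all constraint degrees), so the hypothesis you need for each term is $\deg(\widehat q) + \abs{S^\ast}\cdot r \le \ell$, while the budget you are given is $\deg(q) + \abs{\mathcal T}\cdot r_{\calB} \le \ell\ell'+\ell'$ with $r_{\calB}\le \ell'$; since each summand of $g_j$ has degree at most $\ell'$, the total degree of every expanded term is at most $\deg(q)+\abs{\mathcal T}\ell'$, and one must verify that this, minus the degree spent on $\prod_{i\in S^\ast}p_i$, fits under $\ell$. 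Carrying this out (and seeing that the slightly odd-looking constant $\ell\cdot\ell'+\ell'$ is what makes it close) is the only remaining work; without it the proof is a correct skeleton rather than a proof. Since the paper itself treats the fact as black-box, this level of detail is arguably more than the paper demands, but if you intend the argument to stand alone you should write the arithmetic out.
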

If the pseudo-distribution $D$ satisfies $ \calA$ only approximately, soundness continues to hold if we require an upper bound on the bit-complexity of the sum-of-squares $ \calA \sststile{r'}{} B$  (number of bits required to write down the proof). In our applications, the bit complexity of all sum of squares proofs will be $n^{O(\ell)}$ (assuming that all numbers in the input have bit complexity $n^{O(1)}$). This bound suffices in order to argue about pseudo-distributions that satisfy polynomial constraints approximately.

The following fact shows that every property of low-level pseudo-distributions can be derived by low-degree sum-of-squares proofs.
\begin{fact}[Completeness]
  \label{fact:sos-completeness}
  Suppose $d \geq r' \geq r$ and $ \calA$ is a collection of polynomial constraints with degree at most $r$, and $ \calA \vdash \{ \sum_{i = 1}^n x_i^2 \leq B\}$ for some finite $B$.
  Let $\{g \geq 0 \}$ be a polynomial constraint.
  If every degree-$d$ pseudo-distribution that satisfies $D \sdtstile{r}{}  \calA$ also satisfies $D \sdtstile{r'}{} \{g \geq 0 \}$, then for every $\epsilon > 0$, there is a sum-of-squares proof $ \calA \sststile{d}{} \{g \geq - \epsilon \}$.
\end{fact}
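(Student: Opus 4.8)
The plan is to prove this as an instance of convex (semidefinite-programming) duality, dualizing the moment relaxation underlying \cref{fact:sos-separation-efficient}. Fix $\epsilon > 0$, let $\calV$ be the space of polynomials in $x_1, \dots, x_n$ of degree at most $d$, and let
\[
\calC \;=\; \left\{\, \sum_{S \subseteq [m]} p_S^2 \prod_{i \in S} f_i \;:\; \deg\!\Paren{p_S^2 {\textstyle\prod_{i \in S}} f_i} \le d \ \text{ for all } S \,\right\} \subseteq \calV ,
\]
i.e.\ the cone of polynomials possessing a degree-$d$ sum-of-squares proof from $\calA = \{f_i \ge 0\}$. The goal is exactly to show $g + \epsilon \in \calC$. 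I would argue by contradiction: assuming $g + \epsilon \notin \calC$, and granting that $\calC$ is a closed convex cone, the separating-hyperplane theorem yields a linear functional $L \from \calV \to \R$ with $L(g + \epsilon) < 0$ and $L(c) \ge 0$ for all $c \in \calC$.

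Next I would reinterpret $L$ as a pseudo-expectation operator. Specializing $L \ge 0$ on $\calC$ to $S = \emptyset$ gives $L(p^2) \ge 0$ for all $p$ of degree $\le d/2$, and to general $S$ gives the localizing inequalities $L(q \prod_{i\in S} f_i) \ge 0$ for sum-of-squares $q$ of the appropriate degree; after checking $L(1) > 0$ and rescaling so that $L(1) = 1$, the functional $L$ is precisely $\pexpecf{\mu}{\cdot}$ for a degree-$d$ pseudo-distribution $\mu$ with $\mu \sdtstile{d}{} \calA$, hence $\mu \sdtstile{r}{} \calA$ since $r \le d$. The hypothesis then forces $\mu \sdtstile{r'}{} \{g \ge 0\}$, and taking the trivial multiplier $q = 1$ gives $\pexpecf{\mu}{g} \ge 0$; but then $L(g + \epsilon) = \pexpecf{\mu}{g} + \epsilon \ge \epsilon > 0$, contradicting the choice of $L$. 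Therefore $g + \epsilon \in \calC$, which is the asserted degree-$d$ sum-of-squares proof of $\{g \ge -\epsilon\}$.

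The two places where real work is needed — and where the explicit-boundedness hypothesis $\calA \vdash \{\sum_i x_i^2 \le B\}$ is used — are the main obstacles. First, one must justify that $\calC$ is closed (or, more cheaply, separate $g + \epsilon$ from the \emph{closure} $\overline{\calC}$ and absorb the gap by running the argument with $\epsilon/2$, which is precisely why the theorem only claims a certificate with $-\epsilon$ slack): truncated quadratic modules are not closed in general, but the Archimedean-type inequality $B - \sum_i x_i^2 \ge 0$ bounds the coefficients of the $p_S$ in any representation of a fixed polynomial, restoring the needed compactness. Second, one must exclude $L(1) = 0$: here $B - \sum_i x_i^2 \ge 0$ together with $L \ge 0$ on $\calC$ and a Cauchy--Schwarz estimate for the bilinear form $(p,q)\mapsto L(pq)$ forces $L(p^2) = 0$ for all $p$ of bounded degree, hence $L \equiv 0$, again contradicting $L(g + \epsilon) < 0$. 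Modulo these standard but careful analytic points, the proof is just weak duality between the moment SDP $\min\{\pexpecf{\mu}{g} : \mu \sdtstile{d}{} \calA\}$ and its dual, whose feasible points are exactly certificates $g - \gamma = \sum_S p_S^2 \prod_{i\in S} f_i$, with the $\epsilon$ accounting for a possible duality gap or non-attainment of the dual optimum.
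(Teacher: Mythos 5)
The paper offers no proof of this statement at all: it is recorded as a known ``Fact'' in the preliminaries (standard SoS completeness/duality, in the spirit of the cited works of Shor, Nesterov, Parrilo, Grigoriev, and Lasserre), so there is nothing internal to compare your argument against. Your proof is the standard one and is sound in outline: separate $g+\epsilon$ from the cone $\mathcal{C}$ of polynomials with degree-$d$ SoS proofs from $\mathcal{A}$, reinterpret the separating functional as a degree-$d$ pseudo-expectation satisfying $\mathcal{A}$ (hence satisfying $\mathcal{A}$ at degree $r \le d$), invoke the hypothesis to get $\pexpecf{\mu}{g}\ge 0$, and contradict $L(g+\epsilon)<0$. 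You also correctly locate the two places where the explicit ball constraint is indispensable, namely ruling out $L(1)=0$ (your Cauchy--Schwarz argument propagating $L(x_i^2)=0$ to all moments is the right one, modulo routine degree bookkeeping) and the closedness issue. The one step you gloss that still needs a real lemma is the ``absorb the gap with $\epsilon/2$'' move: knowing $g+\epsilon/2\in\overline{\mathcal{C}}$ does not by itself yield $g+\epsilon\in\mathcal{C}$; you need the quantitative robustness statement that, given $\mathcal{A}\vdash\{\sum_i x_i^2\le B\}$, any degree-$\le d$ polynomial whose coefficients are at most $\eta$ becomes a member of $\mathcal{C}$ after adding the constant $\eta\cdot(nB)^{O(d)}$, so that the tail $h-h_n$ of an approximating sequence plus $\epsilon/2$ lies in $\mathcal{C}$ for $n$ large. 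With that lemma (or, alternatively, with a direct closedness proof for the truncated module under the ball constraint), your argument is complete. One cosmetic mismatch with the paper's formalism: its pseudo-distributions are finitely supported functions rather than abstract functionals, so strictly one also needs the standard realization of a PSD truncated moment sequence by a finitely supported signed measure; this is routine and universally elided.
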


Next, we recall the fact that any univariate polynomial inequality admits a sum-of-squares proof over the reals. 
\begin{fact}[Univariate Polynomial Inequalities admit SoS Proofs,~\cite{laurent2009sums}]
\label{fact:univariate-sos-proofs}
Given a univariate degree-$t$ polynomial $p$ such that for all $x \in \mathbb{R}$, $p(x)\geq 0$, we have $\sststile{d}{x} \Set{ p(x) \geq 0 } $. Further, if for all $ a \leq x \leq b$, we have $p(x) \geq 0$, then $\Set{ x\geq a, x\leq b } \sststile{d}{x} \Set{ p(x) \geq 0 }$.
\end{fact}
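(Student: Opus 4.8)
The plan is to prove the two assertions separately; in each case the certificate produced will have degree at most $t$, so the stated derivation holds for any $d \ge t$. For the first assertion I would prove the sharper statement that a univariate polynomial $p$ with $p(x) \ge 0$ for all $x \in \R$ is a sum of \emph{two} squares of polynomials, which is itself a degree-$t$ SoS proof of $\sststile{t}{x}\Set{p(x) \ge 0}$. Factor $p$ over $\mathbb{C}$: its leading coefficient $c$ is positive, every real root has even multiplicity (otherwise $p$ changes sign), and the non-real roots occur in conjugate pairs $z_j, \bar z_j$. Set $q(x) := \sqrt{c}\,\prod_{\textrm{real }r}(x-r)^{m_r/2}\prod_j(x-z_j)$ and split its coefficients into real and imaginary parts, $q = A + iB$ with $A,B \in \R[x]$ of degree $\le t/2$. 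For real $x$, $\overline{q(x)} = A(x) - iB(x)$, hence $p(x) = |q(x)|^2 = A(x)^2 + B(x)^2$; this holds at infinitely many points, hence identically.

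For the second assertion I would first reduce to $[a,b] = [-1,1]$ by the affine substitution $x \mapsto \frac{b-a}{2}x + \frac{a+b}{2}$, under which the axioms $\Set{x \ge a,\ x \le b}$ become $\Set{1+x \ge 0,\ 1-x \ge 0}$ up to positive scalars and the proof degree is unchanged. On $[-1,1]$ the goal is the Markov--Luk\'acs representation: if $\deg p$ is even then $p = \sigma_0 + (1-x^2)\sigma_1$ with $\sigma_0,\sigma_1$ sums of squares of degree at most $\deg p$ and $\deg p - 2$; if $\deg p$ is odd then $p = (1+x)\sigma_0 + (1-x)\sigma_1$ with $\sigma_0,\sigma_1$ sums of squares of degree at most $\deg p - 1$. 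The even case reduces to the first assertion via the Cayley-type substitution $x = \frac{y^2-1}{y^2+1}$: with $\deg p = 2m$, the polynomial $q(y) := (1+y^2)^{2m}\,p\!\left(\frac{y^2-1}{y^2+1}\right)$ is even in $y$ and nonnegative on all of $\R$ (since $\frac{y^2-1}{y^2+1}\in[-1,1)$), so $q = \sum_i s_i^2$; writing $s_i = e_i(y^2) + y\,o_i(y^2)$ and using evenness of $q$ to kill the odd part gives $q(y) = \sum_i e_i(y^2)^2 + y^2\sum_i o_i(y^2)^2$, and substituting back $y^2 = \frac{1+x}{1-x}$ and clearing the powers of $(1-x)$ collapses this to $p(x) = \sigma_0(x) + (1-x^2)\sigma_1(x)$ with the stated degrees. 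The odd case follows by applying the even case to $(1+x)p(x)$, observing that a sum of squares divisible by $(1+x)$ is divisible by $(1+x)^2$, and dividing by $(1+x)$.

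To finish, I would translate the representation back through the affine map and invoke the proof system: given $p = \sigma_0 + (x-a)(b-x)\sigma_1$ (resp.\ $p = (x-a)\sigma_0 + (b-x)\sigma_1$), the multiplication rule applied to $\Set{x \ge a,\ x \le b}$ gives $(x-a)(b-x) \ge 0$, so the identity itself is a degree-$t$ SoS proof of $\Set{p(x) \ge 0}$, i.e.\ $\Set{x \ge a,\ x \le b} \sststile{d}{x} \Set{p(x) \ge 0}$ for every $d \ge t$.

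I expect the second assertion to be the main obstacle, specifically producing a \emph{degree-bounded} certificate on the interval (as opposed to the non-constructive bound coming from generic Positivstellens\"atze): one must carefully track how the degrees of $q(y)$, of the $e_i$ and $o_i$, and of their compositions with $y^2 = \tfrac{1+x}{1-x}$ combine after clearing denominators, and justify the parity splitting of a sum of squares of an even polynomial. An alternative is to group the complex roots of $p$ according to whether they lie strictly inside, strictly outside, or at the endpoints of $[a,b]$ and assemble the certificate by hand, but then the root-multiplicity bookkeeping plays the analogous delicate role.
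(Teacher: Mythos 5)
Your proof is correct: the first assertion is the standard sum-of-two-squares decomposition via complex factorization, and the second is the Markov--Luk\'acs representation obtained through the Goursat substitution $x=\frac{y^2-1}{y^2+1}$, with the degree bookkeeping and the parity splitting handled properly. The paper gives no proof of this fact and simply cites \cite{laurent2009sums}; your argument is essentially the one found in that reference, so there is nothing to reconcile.
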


\begin{fact}\label{fact:nonnegative-quadratic}(Non-negative Quadratic Polynomials Inequalities admit SoS Proofs)
Given a multivariate polynomial $p(x_1, x_2, \ldots , x_m) \in \mathbb{R}[x_1, \ldots, x_m]$ such that $p$ has degree $2$ and $p(x_1, x_2, \ldots , x_m) \geq 0$ for all $x_1, x_2, \ldots x_m \in \mathbb{R}$. Then, $\sststile{2}{x} \Set{ p(x) \geq 0  }$. 
\end{fact}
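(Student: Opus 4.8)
The plan is to reduce this to the spectral theorem via homogenization. Write $p(x) = x^\top A x + 2 b^\top x + c$ where $A \in \R^{m \times m}$ is symmetric, $b \in \R^m$, and $c \in \R$, and associate to $p$ the $(m+1)\times(m+1)$ symmetric matrix
\[
M \;=\; \begin{pmatrix} A & b \\ b^\top & c \end{pmatrix}.
\]
The first step is to show that the hypothesis ``$p(x) \ge 0$ for all $x \in \R^m$'' forces $M \succeq 0$. Indeed, for any $x \in \R^m$ and any $t \neq 0$, homogeneity gives $t^2 \, p(x/t) = (x, t)^\top M (x, t)$, and the left-hand side is nonnegative by hypothesis; hence the quadratic form $(y, t) \mapsto (y,t)^\top M (y,t)$ is nonnegative on the dense set $\{t \neq 0\} \subseteq \R^{m+1}$, and therefore, by continuity, on all of $\R^{m+1}$. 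Thus $M \succeq 0$.

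The second step is purely algebraic. By the spectral theorem, $M = \sum_{i} \lambda_i u_i u_i^\top$ with $\lambda_i \ge 0$ and $u_i \in \R^{m+1}$; setting $w_i := \sqrt{\lambda_i}\, u_i$ gives $M = \sum_i w_i w_i^\top$. Writing $w_i = (a_i, d_i)$ with $a_i \in \R^m$ and $d_i \in \R$, and evaluating the homogenized identity at $t = 1$, yields
\[
p(x) \;=\; (x, 1)^\top M (x, 1) \;=\; \sum_i \big( (x,1)^\top w_i \big)^2 \;=\; \sum_i \ell_i(x)^2, \qquad \ell_i(x) := a_i^\top x + d_i .
\]
Each $\ell_i$ is an affine form, so each $\ell_i^2$ is the square of a polynomial of degree at most $1$ and hence has degree at most $2$. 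This is exactly a degree-$2$ sum-of-squares proof of $\{p(x) \ge 0\}$ from the empty set of constraints, i.e.\ $\sststile{2}{x} \{p(x) \ge 0\}$.

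The only real content is the first step (deducing $M \succeq 0$), and even that is routine: it is the standard fact that a quadratic polynomial is globally nonnegative iff the Gram matrix of its homogenization is PSD. The potential pitfall is the boundary case $t = 0$ in the homogenization argument, but this is handled by the one-line continuity/density argument above; there is no loss of rationality or need to enlarge the coefficient field since we work over $\R$ throughout. Alternatively, one can avoid homogenization: global nonnegativity first forces $A \succeq 0$ (let $x \to \infty$ along an eigenvector of a would-be negative eigenvalue), then forces $b$ to lie in the column space of $A$ (otherwise $p$ is unbounded below along a suitable direction in $\ker A$), so that completing the square gives $p(x) = (x - x_0)^\top A (x - x_0) + c_0$ with $c_0 = \min p \ge 0$; a Cholesky-type factorization $A = \sum_i v_i v_i^\top$ together with $c_0 = (\sqrt{c_0})^2$ again exhibits $p$ as a sum of squares of affine forms, giving the same degree-$2$ SoS proof.
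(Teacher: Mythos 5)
Your proof is correct and follows essentially the same route as the paper: both associate to $p$ the $(m+1)\times(m+1)$ Gram matrix of its homogenization, deduce positive semidefiniteness from global nonnegativity (your density/continuity argument for the $t=0$ boundary case is the same limit argument the paper uses for the $v_1=0$ case), and then decompose $M$ into rank-one terms to exhibit $p$ as a sum of squares of affine forms.
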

\begin{proof}
Note that there is a unique $(m + 1) \times (m+1 )$ symmetric matrix $M$ such that if we let $v(x) = (1,x_1, \dots , x_m)^\dagger$ then
\[
p(x_1, \dots , x_m) = v(x)^\dagger M v(x) \,.
\]
Now we claim that $M$ must be PSD.  Indeed for any vector $v  = (v_1, \dots , v_{m+1}) \in \R^{m+1}$, if $v_1 \neq 0$, then we can consider plugging in $x_1 = v_2/v_1, \dots , x_m = v_{m+1}/v_1$ and since $p(x_1, \dots , x_m) \geq 0$ we get that $v^\dagger M v \geq 0$.  If $v_1 = 0$, then we can consider plugging in $x_1 = cv_2, \dots , x_m = cv_{m+1}$ and take the limit as $c \rightarrow \infty$ and again we deduce that $v^\dagger M v \geq 0$.  Thus, $M$ must be PSD.  We can now write $M = \sum_{i = 1}^{m+1} u_iu_i^\dagger$ for some vectors $u_i \in \R^{m+1}$.  Thus, we can write
\[
p(x_1, \dots , x_m) = v(x)^\dagger M v(x) = \sum_{i = 1}^{m+1} \langle u_i, v(x)\rangle^2
\]
which is a degree-$2$ SoS polynomial and we are done.
\end{proof}

\paragraph{Basic Sum-of-Squares Proofs.} Next, we use the following basic facts regarding sum-of-squares proofs. For further details, we refer the reader to a recent monograph~\cite{fleming2019semialgebraic}.

\begin{fact}[Operator norm Bound]
\label{fact:operator_norm}
Let $A$ be a symmetric $d\times d$ matrix and $v$ be a vector in $\mathbb{R}^d$. Then,
\[
\sststile{2}{v} \Set{ v^{\top} A v \leq \norm{A}_2\norm{v}_2^2}.
\]
\end{fact}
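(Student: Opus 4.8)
The plan is to reduce the claimed inequality to the non-negativity of an explicit quadratic form and then invoke \cref{fact:nonnegative-quadratic}. First I would consider the polynomial
\[
q(v) \;=\; \norm{A}_2 \norm{v}_2^2 - v^\top A v \;=\; v^\top\Paren{\norm{A}_2 I - A} v,
\]
which is a degree-$2$ polynomial in the indeterminate $v$. The key structural fact is that the matrix $\norm{A}_2 I - A$ is positive semidefinite: since $A$ is symmetric its eigenvalues $\lambda_1,\dots,\lambda_d$ are real, and $\norm{A}_2 = \max_i \abs{\lambda_i} \ge \max_i \lambda_i$, so every eigenvalue of $\norm{A}_2 I - A$ (which equals $\norm{A}_2 - \lambda_i$) is non-negative. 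Hence $q(v) \ge 0$ for all $v \in \R^d$, and since $q$ is a non-negative quadratic polynomial, \cref{fact:nonnegative-quadratic} yields $\sststile{2}{v}\Set{q(v)\ge 0}$, which is exactly the desired statement $\sststile{2}{v}\Set{v^\top A v \le \norm{A}_2 \norm{v}_2^2}$.

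Alternatively, and essentially equivalently, I would make the SoS decomposition explicit: writing the eigendecomposition (or a Cholesky factorization) $\norm{A}_2 I - A = \sum_{i} u_i u_i^\top$ for suitable vectors $u_i \in \R^d$ gives $q(v) = \sum_i \iprod{u_i, v}^2$, a manifest sum of squares of linear forms in $v$, hence a degree-$2$ SoS proof. I do not expect any genuine obstacle here; the only point requiring a moment of care is the observation that for a symmetric matrix the operator norm equals the spectral radius and therefore dominates the largest (signed) eigenvalue, which is what makes $\norm{A}_2 I - A$ PSD rather than merely making $\norm{A}_2 I + A$ or the like work.
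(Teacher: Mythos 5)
Your proof is correct. The paper states this as a basic fact without proof (deferring to the cited monograph), and your argument — observing that $\norm{A}_2 I - A$ is PSD for symmetric $A$ and writing the resulting non-negative quadratic form as $\sum_i \iprod{u_i,v}^2$ via a spectral or Cholesky factorization — is exactly the standard degree-$2$ SoS certificate that the paper implicitly relies on.
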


\begin{fact}[SoS Almost Triangle Inequality] \label{fact:sos-almost-triangle}
Let $f_1, f_2, \ldots, f_r$ be indeterminates. Then,
\[
\sststile{2t}{f_1, f_2,\ldots,f_r} \Set{ \Paren{\sum_{i\leq r} f_i}^{2t} \leq r^{2t-1} \Paren{\sum_{i =1}^r f_i^{2t}}}.
\]
\end{fact}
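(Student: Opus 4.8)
The plan is to give a self-contained degree-$2t$ sum-of-squares proof by induction on $t$, using only the degree-$2$ Cauchy--Schwarz inequality and a pairwise weighted AM--GM inequality. Write $g = \sum_{i \le r} f_i$. The base case $t = 1$ is the identity $r\sum_i f_i^2 - g^2 = \sum_{i<j}(f_i - f_j)^2$, which is visibly a sum of squares, so $\sststile{2}{f_1,\ldots,f_r} \Set{ g^2 \le r\sum_i f_i^2 }$. For the inductive step, assume the degree-$2t$ certificate $g^{2t} \le r^{2t-1}\sum_i f_i^{2t}$. Multiplying it by the square $g^2$ (an SoS polynomial) gives $g^{2t+2} \le r^{2t-1} g^2 \sum_j f_j^{2t}$ at degree $2t+2$; multiplying the base case by the SoS quantity $\sum_j f_j^{2t}$ gives $g^2 \sum_j f_j^{2t} \le r \sum_{i,j} f_i^2 f_j^{2t}$; and, as explained next, $\sum_{i,j} f_i^2 f_j^{2t} \le r \sum_k f_k^{2t+2}$. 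Chaining the three inequalities by the Addition rule (after scaling by positive constants so the intermediate terms telescope) produces $g^{2t+2} \le r^{2t+1} \sum_k f_k^{2t+2}$, which is exactly the claim for $t+1$, still at degree $2(t+1)$.

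\textbf{The pairwise building block.} It suffices to prove, for each ordered pair $(i,j)$, the degree-$(2t+2)$ SoS inequality $f_i^2 f_j^{2t} \le \tfrac{1}{t+1} f_i^{2t+2} + \tfrac{t}{t+1} f_j^{2t+2}$, since summing over all $i,j \in [r]$ yields the bound used above. The polynomial $\tfrac{1}{t+1} f_i^{2t+2} + \tfrac{t}{t+1} f_j^{2t+2} - f_i^2 f_j^{2t}$ is a binary form that is nonnegative on all of $\R^2$ (ordinary weighted AM--GM applied to $|f_i|^{2t+2}$ and $|f_j|^{2t+2}$), and every nonnegative binary form is a sum of squares of forms of half the degree: dehomogenize in $f_j$, apply \cref{fact:univariate-sos-proofs} to the resulting nonnegative univariate polynomial, and rehomogenize. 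Alternatively, one can make the decomposition explicit: $\tfrac{1}{t+1} a^{t+1} + \tfrac{t}{t+1} b^{t+1} - a b^t = \tfrac{1}{t+1}(a-b)^2 q(a,b)$ for a polynomial $q$ with nonnegative coefficients, and specializing $a = f_i^2$, $b = f_j^2$ turns every factor into a perfect square.

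\textbf{Main obstacle and an alternative.} There is no real obstacle here beyond recognizing that the pairwise weighted AM--GM admits a bounded-degree SoS proof, and then carefully tracking degrees, so that each multiplication step lands at degree $2t+2$ and the three pieces are combined by addition rather than by transitivity (which would multiply degrees). A slicker route sidesteps the induction: the tangent-line form $x^{2t} + (2t-1)m^{2t} - 2t\, m^{2t-1} x$ is a nonnegative binary form, hence an SoS of degree $2t$; substituting $x = f_i$ and $m = \tfrac1r \sum_j f_j$ (a degree-$1$ substitution, so no degree increase by the Substitution rule), then summing over $i \in [r]$ so the linear term cancels and clearing the factor $r^{-2t}$, yields the inequality in one step. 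I expect both routes to work; the inductive one is more elementary given what the excerpt has already set up.
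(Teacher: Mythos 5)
The paper states this inequality as a known fact with no proof, deferring to the cited monograph, so there is nothing in the paper to compare against; what matters is whether your argument stands on its own, and it does. The base identity $r\sum_i f_i^2 - \bigl(\sum_i f_i\bigr)^2 = \sum_{i<j}(f_i-f_j)^2$ is correct, and each of the three inequalities in your inductive step is certified at degree $2t+2$: the first two by multiplying an existing certificate by the square $g^2$ (respectively by the SoS polynomial $\sum_j f_j^{2t}$), and the third by summing your pairwise bound over all ordered pairs, since $\sum_{i,j}\bigl(\tfrac{1}{t+1}f_i^{2t+2}+\tfrac{t}{t+1}f_j^{2t+2}\bigr)=r\sum_k f_k^{2t+2}$. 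The pairwise bound is indeed SoS: with $a=f_i^2$, $b=f_j^2$ one has the explicit factorization
\[
a^{t+1}+t\,b^{t+1}-(t+1)\,a b^{t}=(a-b)^2\sum_{k=0}^{t-1}b^{k}\bigl(a^{t-k-1}+a^{t-k-2}b+\cdots+b^{t-k-1}\bigr),
\]
and after the substitution every monomial on the right has even degree in each variable, so the whole expression is a sum of squares of degree exactly $2t+2$. You are also right to chain the three pieces via the addition rule (telescoping $C-A=(C-B)+(B-A)$) rather than the transitivity rule, which would multiply degrees. Your alternative tangent-line route is the standard one-step proof and is also correct: $x^{2t}-2t\,m^{2t-1}x+(2t-1)m^{2t}$ is convex in $x$ with a double root at $x=m$, hence a nonnegative binary form and an SoS of degree $2t$; the substitution $x=f_i$, $m=\tfrac1r\sum_j f_j$ is linear, and summing over $i$ cancels the linear term. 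Either route is a valid replacement for the missing proof; the tangent-line one is shorter, the inductive one uses only degree-$2$ Cauchy--Schwarz and weighted AM--GM.
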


\begin{fact}[SoS Hölder's Inequality]\label{fact:sos-holder}
Let $w_1, \ldots w_n$ be indeterminates and let $f_1,\ldots f_n$ be polynomials of degree $m$ in vector valued variable $x$. 
Let $k$ be a power of 2.  
Then, 
\[
\Set{w_i^2 = w_i, \forall i\in[n] } \sststile{2km}{x,w} \Set{  \Paren{\frac{1}{n} \sum_{i = 1}^n w_i f_i}^{k} \leq \Paren{\frac{1}{n} \sum_{i = 1}^n w_i}^{k-1} \Paren{\frac{1}{n} \sum_{i = 1}^n f_i^k}}. 
\]
\end{fact}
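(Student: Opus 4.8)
The plan is the standard reduction of SoS Hölder to iterated SoS Cauchy--Schwarz, leaning throughout on the two consequences of the axioms $w_i^2 = w_i$: that $w_i^j = w_i$ for every $j \ge 1$, and that $1 - w_i = (1-w_i)^2$, so that $0 \le w_i \le 1$ and (for $k$ even) $(1-w_i) f_i^k = \Paren{(1-w_i) f_i^{k/2}}^2 \ge 0$ are all SoS-derivable. First I would clear the harmless positive scalars $1/n$, reducing the claim to $\Paren{\sum_i w_i f_i}^{k} \le \Paren{\sum_i w_i}^{k-1}\Paren{\sum_i f_i^k}$, and then observe that it suffices to prove the sharper bound with $\sum_i w_i f_i^k$ on the right: indeed $\sum_i f_i^k - \sum_i w_i f_i^k = \sum_i \Paren{(1-w_i) f_i^{k/2}}^2 \ge 0$, and multiplying this by the SoS-nonnegative quantity $\Paren{\sum_i w_i}^{k-1}$ recovers the stated inequality from the sharper one. (For $k = 1$ the claim is degenerate since the $f_i$ are unconstrained in sign, so I would take $k = 2$ as the base case.)

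\textbf{Core induction.} I would prove $\Paren{\sum_i w_i f_i}^{k} \le \Paren{\sum_i w_i}^{k-1}\Paren{\sum_i w_i f_i^k}$ by induction on $k$ over powers of two, using at each stage the degree-two SoS Cauchy--Schwarz fact $\Paren{\sum_i a_i b_i}^2 \le \Paren{\sum_i a_i^2}\Paren{\sum_i b_i^2}$, whose certificate is the Lagrange identity $\sum_{i<j}(a_i b_j - a_j b_i)^2$. For the base case $k = 2$, taking $a_i = w_i$ and $b_i = w_i f_i$ and simplifying via $w_i^2 = w_i$ (so $a_i^2 = w_i$, $b_i^2 = w_i f_i^2$, $a_i b_i = w_i f_i$) gives exactly $\Paren{\sum_i w_i f_i}^2 \le \Paren{\sum_i w_i}\Paren{\sum_i w_i f_i^2}$. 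For the step $k \to 2k$ (with $k \ge 2$), both $\Paren{\sum_i w_i f_i}^{k} = \Paren{\Paren{\sum_i w_i f_i}^{k/2}}^2$ and $\Paren{\sum_i w_i}^{k-1}\Paren{\sum_i w_i f_i^k}$ are SoS-nonnegative (the latter since $\sum_i w_i = \sum_i w_i^2 \ge 0$ and $\sum_i w_i f_i^k = \sum_i (w_i f_i^{k/2})^2 \ge 0$), so the induction hypothesis can be squared to yield $\Paren{\sum_i w_i f_i}^{2k} \le \Paren{\sum_i w_i}^{2k-2}\Paren{\sum_i w_i f_i^k}^2$; then applying SoS Cauchy--Schwarz again with $a_i = w_i$, $b_i = w_i f_i^k$ bounds $\Paren{\sum_i w_i f_i^k}^2 \le \Paren{\sum_i w_i}\Paren{\sum_i w_i f_i^{2k}}$, and multiplying this by $\Paren{\sum_i w_i}^{2k-2} \ge 0$ and chaining gives $\Paren{\sum_i w_i f_i}^{2k} \le \Paren{\sum_i w_i}^{2k-1}\Paren{\sum_i w_i f_i^{2k}}$, closing the induction.

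\textbf{Degrees, and the only subtlety.} For the degree bound I would track that each doubling step squares the previous certificate and adjoins a Cauchy--Schwarz certificate of degree $O(km)$; since $\Paren{\sum_i w_i f_i}^{k}$ already has degree $k(1+m)$, the certificate at level $k$ has degree $O(km)$, and a routine accounting checks that it fits inside $2km$ in the variables $(x,w)$. I do not expect a genuine obstacle here: the only point requiring care is that the two ``monotone'' moves in the induction --- squaring an inequality and multiplying it by a power of $\sum_i w_i$ --- are SoS-valid only because at every stage both sides are provably nonnegative, and that nonnegativity is exactly what $w_i^2 = w_i$ buys us (through $w_i f_i^k = (w_i f_i^{k/2})^2$ and $\sum_i w_i = \sum_i w_i^2$). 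So the whole argument amounts to a careful accounting of where SoS-nonnegativity is available.
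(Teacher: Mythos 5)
Your proposal is correct. The paper does not actually prove this fact --- it is stated as background and cited to the literature --- so there is no in-paper argument to compare against; your doubling argument (base case $k=2$ via the Lagrange-identity form of Cauchy--Schwarz with $a_i = w_i$, $b_i = w_i f_i$, then squaring the inequality and applying Cauchy--Schwarz once more to pass from $k$ to $2k$, all justified by the SoS multiplication rule because $w_i^2 = w_i$ makes every intermediate quantity provably nonnegative) is exactly the standard proof. Your observation that $k=1$ must be excluded is also right, since $\sum_i(1-w_i)f_i$ need not be nonnegative for sign-unconstrained $f_i$. The only soft spot is the final degree bookkeeping: the raw Lagrange certificates carry extra factors of $w$ (e.g.\ $w_i^2w_j^2(f_i-f_j)^2$ has degree $4+2m$, exceeding $4m$ when $m=1$), so to land exactly at degree $2km$ you should reduce the certificate modulo the ideal generated by $w_i^2 - w_i$ (multilinearize in $w$); since the paper only ever uses such degree bounds up to constants, this is a cosmetic rather than substantive issue.
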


\begin{fact}[Almost square-root]
\label{fact:squared-value-to-magntitude}
Given a scalar indeterminate $v$, $   \Set{v^2 \leq 1 } \sststile{}{v} \Set{  -1 \leq v\leq 1  }$.
\end{fact}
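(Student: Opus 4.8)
The claim is that from the single axiom $\Set{v^2 \le 1}$, i.e.\ $\Set{1 - v^2 \ge 0}$, one can derive both $\Set{1 - v \ge 0}$ and $\Set{1 + v \ge 0}$ via a low-degree sum-of-squares proof. The plan is simply to exhibit the explicit SoS certificate: write each of the target linear polynomials as a nonnegative combination of a square and a multiple of the axiom by a square. Concretely, I would use the identities
\begin{equation*}
1 - v \;=\; \tfrac12 (1-v)^2 \;+\; \tfrac12 \Paren{1 - v^2}, \qquad 1 + v \;=\; \tfrac12 (1+v)^2 \;+\; \tfrac12 \Paren{1 - v^2},
\end{equation*}
which are verified by expanding: $\tfrac12(1 - 2v + v^2) + \tfrac12(1 - v^2) = 1 - v$, and symmetrically for the other. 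Both right-hand sides are of the form (sum-of-squares) $+$ (sum-of-squares)$\cdot(1-v^2)$, so each gives a degree-$2$ SoS proof that $\calA \sststile{2}{v} \Set{1 - v \ge 0}$ and $\calA \sststile{2}{v} \Set{1 + v \ge 0}$, where $\calA = \Set{1 - v^2 \ge 0}$.

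First I would state the two algebraic identities and check them by direct expansion. Then I would invoke the definition of an SoS proof from \cref{subsec:sos-framework}: taking $p_{\emptyset} = \tfrac{1}{\sqrt 2}(1 \mp v)$ and $p_{\{1\}} = \tfrac{1}{\sqrt 2}$ (with $f_1 = 1 - v^2$) exhibits the required decomposition $g = \sum_{S} p_S^2 \prod_{i \in S} f_i$ for $g = 1 \mp v$, and every term has degree at most $2$. Finally, combining the two derived inequalities with the \ref{eq:sos-addition-multiplication-rule} (or simply noting they together are the assertion $-1 \le v \le 1$) completes the argument.

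The statement is elementary enough that there is essentially no obstacle; the only thing to be careful about is bookkeeping the degree (it is exactly $2$, not larger) and making sure the decomposition genuinely matches the syntactic form of an SoS proof as defined in the preliminaries, rather than an informal "nonnegativity" argument. No appeal to \cref{fact:univariate-sos-proofs} is needed since the certificate is fully explicit, though one could alternatively deduce it from that fact applied to the univariate polynomials $1-v$ and $1+v$ on the interval cut out by $v^2 \le 1$.
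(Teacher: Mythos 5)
Your proposal is correct and is essentially identical to the paper's proof: the paper adds $(1\mp v)^2 \ge 0$ to the axiom $1 - v^2 \ge 0$ to obtain $2 \mp 2v \ge 0$, which after dividing by $2$ is exactly your explicit decomposition $1 \mp v = \tfrac12(1\mp v)^2 + \tfrac12(1-v^2)$.
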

\begin{proof}
We know that $\Set{ \Paren{1-v}^2 = 1 + v^2 -2v \geq 0 }$ and  $\Set{ \Paren{1 + v}^2 = 1 + v^2 + 2v \geq 0 }$. Further, by assumption, $\Set{ 1-v^2 \geq 0}$ and by the addition rule we have $\Set{ 2 + 2v\geq0  }$ and  $\Set{ 2 - 2v \geq0  }$. Rearranging yields the claim. 
\end{proof}

\subsection{Re-weighting Pseudo-distributions}
\label{subsec:re-weighting-background}
Let $\mu$ be a pseudo-distribution of degree $\geq t$ on $d$-dimensional vector valued indeterminate $v$. Let $q$ be a sum-of-squares polynomial in $v$ of degree $t'<t$. Then, $\mu'$ defined by $\mu'(v) = \mu(v) \cdot q(v)$ is called the \emph{reweighting} of $\mu$ by the polynomial $q$. It is easy to observe that $\mu'$ is a pseudo-distribution of degree at least $t-t'$. Further, if $\mu$ satisfying a polynomial equality constraint $\{r=0\}$ and degree of $r$ is at most $t-t'$, then, so does $\mu'$.  See \cite{barak2017quantum} for background on reweighting pseudo-distributions.

We also require the following straightforward fact that ensures subsequent re-weightings of pseudo-distributions do not decrease the pseudo-expectation of the indeterminate:

\begin{fact}[Scalar Re-weightings are Monotone]
\label{fact:scalar-reweighting-monotone}
Let $\mu$ be a pseudo-distribution over an indeterminate $z$. Let $\mu'$ be a re-weighting of $\mu$ obtained by using the polynomial $z^{2\ell}$, for some $\ell \in \mathbb{N}$. Then, $\pexpecf{\mu'}{z^2} \geq \pexpecf{\mu}{ z^2}$, given $\mu$ is has degree at least $3\ell+3$.     
\end{fact}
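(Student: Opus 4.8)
The plan is to reduce the claim to a single application of the SoS Cauchy--Schwarz / Hölder inequality for pseudo-distributions. Write $\alpha = \pexpecf{\mu}{z^{2\ell}}$, which is non-negative since $z^{2\ell} = (z^\ell)^2$ is a square and $\mu$ has degree at least $2\ell$. By definition of the re-weighting, $\mu'(z) = \mu(z)\cdot z^{2\ell}/\alpha$ (the normalization by $\alpha$ is exactly what makes $\mu'$ a pseudo-distribution, i.e. $\sum_z \mu'(z) = 1$; if $\alpha = 0$ the statement is vacuous or one takes the usual convention). Hence
\begin{equation*}
  \pexpecf{\mu'}{z^2} \;=\; \frac{1}{\alpha}\,\pexpecf{\mu}{z^{2\ell+2}}, \qquad \pexpecf{\mu}{z^2} \;=\; \pexpecf{\mu}{z^2},
\end{equation*}
so the claim $\pexpecf{\mu'}{z^2}\ge \pexpecf{\mu}{z^2}$ is equivalent to the single-pseudo-distribution inequality $\pexpecf{\mu}{z^{2\ell+2}} \ge \pexpecf{\mu}{z^{2\ell}}\cdot\pexpecf{\mu}{z^2}$.

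First I would prove this last inequality by an appropriate power-mean / Hölder argument on $\mu$. Apply \cref{fact:pseudo-expectation-holder} with $f = z^2$ and the Hölder parameter $t = \ell+1$: this gives $\pexpecf{\mu}{(z^2)^{\ell}\cdot 1} = \pexpecf{\mu}{z^{2\ell}} \le \pexpecf{\mu}{(z^2)^{\ell+1}}^{\ell/(\ell+1)} = \pexpecf{\mu}{z^{2\ell+2}}^{\ell/(\ell+1)}$, and similarly (with $f=z^2$, $g=(z^2)^{\ell+1}$, or directly) $\pexpecf{\mu}{z^2}\le \pexpecf{\mu}{z^{2\ell+2}}^{1/(\ell+1)}$. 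Multiplying these two bounds yields $\pexpecf{\mu}{z^{2\ell}}\cdot\pexpecf{\mu}{z^2} \le \pexpecf{\mu}{z^{2\ell+2}}$, which is exactly what we need. Note this uses pseudo-expectations of powers of $z^2$ up to degree $2(\ell+1)$, so $\mu$ having degree at least $2\ell+2$ suffices for the moments to be defined; the slightly larger degree bound $3\ell+3$ in the statement comfortably covers the Hölder applications (each of which needs degree $\deg(f)\cdot t = 2(\ell+1)$) and also ensures $\mu'$ itself is a genuine pseudo-distribution of positive degree.

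The step I expect to require the most care is bookkeeping the degree requirements rather than any genuine mathematical obstacle: one must check that the intermediate objects $\pexpecf{\mu}{z^{2\ell+2}}$, $\pexpecf{\mu}{z^{2j}}$ are legitimately handled by a degree-$(3\ell+3)$ pseudo-distribution, and that \cref{fact:pseudo-expectation-holder} is being invoked at a valid degree (it is stated for a degree-$dt$ pseudo-distribution where $d$ bounds $\deg f,\deg g$). A clean way to package both Hölder applications at once is to instead directly invoke the "in particular" clause $\pexpecf{\mu}{f}^k\le\pexpecf{\mu}{f^k}$ only where convenient and otherwise fall back on SoS Cauchy--Schwarz (\cref{fact:pseudo-expectation-cauchy-schwarz}) applied to $f = z^{\ell+1}$ and $g = z^{\ell-1}$, which gives $\pexpecf{\mu}{z^{2\ell}}^2 \le \pexpecf{\mu}{z^{2\ell+2}}\pexpecf{\mu}{z^{2\ell-2}}$ and then telescoping; but the Hölder route above is shortest. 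Either way the proof is a few lines once the reduction to $\pexpecf{\mu}{z^{2\ell+2}} \ge \pexpecf{\mu}{z^{2\ell}}\pexpecf{\mu}{z^2}$ is made explicit.
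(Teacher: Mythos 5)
Your proof is correct and follows essentially the same route as the paper: both reduce the claim to the moment inequality $\pexpecf{\mu}{z^{2\ell+2}} \geq \pexpecf{\mu}{z^{2\ell}}\cdot\pexpecf{\mu}{z^{2}}$ and establish it via two applications of H\"older's inequality for pseudo-distributions. The only difference is cosmetic (you bound each factor directly by a power of $\pexpecf{\mu}{z^{2\ell+2}}$, whereas the paper first bounds $\pexpecf{\mu}{z^2}$ by $\pexpecf{\mu}{z^{2\ell}}^{1/\ell}$), so no further comparison is needed.
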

\begin{proof}
Recall, by the definition of re-weighting a pseudo-distribution, we have
\begin{equation}
    \pexpecf{\mu'}{ z^2 } = \frac{ \pexpecf{\mu}{ z^2 \cdot z^{2\ell} } }{ \pexpecf{\mu}{ z^{2\ell}} }. 
\end{equation}
By H\"older's inequality for pseudo-distributions (Fact~\ref{fact:pseudo-expectation-holder} applied with $f=z^2$ and $k=\ell$ ), we have $\pexpecf{\mu}{z^2 }  \leq \Paren{ \pexpecf{\mu}{ z^{2\ell} } }^{1/\ell}$
and thus
\begin{equation}
\label{eqn:holder's-to-z-2l}
\begin{split}
    \pexpecf{\mu }{ z^2  } \pexpecf{\mu}{z^{2\ell}} & \leq   \pexpecf{\mu}{ z^{2\ell}}^{\frac{\ell+1}{\ell}}  \leq \Paren{ \pexpecf{\mu}{ z^{2\ell + 2} }^{\frac{2\ell}{2\ell+2 }} \cdot \pexpecf{\mu}{1}^{\frac{2}{2\ell+2}}   }^{\frac{\ell+1}{\ell}}  = \pexpecf{\mu}{ z^{2\ell +2} },
\end{split}
\end{equation}
where the second inequality follows from applying H\"older's inequality again.
Therefore, we can rearrange Equation~\eqref{eqn:holder's-to-z-2l} and conclude $\pexpecf{\mu'}{ z^2  } \geq \pexpecf{\mu}{z^2}$.
\end{proof}

\begin{fact}[Bounding moments of subtracting a coordinate]
\label{fact:moment-inequality-sos}
Let $v$ be a vector-valued indeterminate over $\mathbb{R}^d$ and $y  \in \mathbb{N}$. Then, for any $i \in [d]$,
\begin{equation*}
    \sststile{2y}{v} \Set{  \Paren{ \norm{ v}^2_2 - v_i^2  }^{y} \leq \norm{v}_2^{2y} - v_i^{2y}  }
\end{equation*}
\end{fact}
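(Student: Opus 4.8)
The plan is to reduce the claim to the elementary scalar inequality $(A+B)^y \ge A^y + B^y$ for non-negative reals $A,B$ and $y \ge 1$, and then observe that its standard binomial-theorem proof is ``SoS-friendly'' because, in the indeterminate $v$, both $A$ and $B$ will be manifest sums of squares. Concretely, set $A := v_i^2$ and $B := \norm{v}_2^2 - v_i^2 = \sum_{j \ne i} v_j^2$. As polynomials in $v$ we have $A + B = \norm{v}_2^2$, so $\norm{v}_2^{2y} = (A+B)^y$, and the target inequality is exactly $\Set{ (A+B)^y - A^y - B^y \ge 0 }$.

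The key step is the polynomial identity obtained from the binomial theorem,
\begin{equation*}
(A+B)^y - A^y - B^y \;=\; \sum_{k=1}^{y-1} \binom{y}{k}\, A^k B^{y-k}.
\end{equation*}
Here $A^k = v_i^{2k} = (v_i^{k})^2$ is a square, and $B^{y-k} = \Paren{\sum_{j \ne i} v_j^2}^{y-k}$ is a power of a sum of squares, hence itself a sum of squares (sums of squares are closed under products). Therefore each product $A^k B^{y-k}$ is a sum of squares, and it has degree $2k + 2(y-k) = 2y$ in $v$. Since the binomial coefficients $\binom{y}{k}$ are non-negative, the right-hand side above is a sum of squares of degree at most $2y$, which is precisely a degree-$2y$ SoS proof of $\Set{ \Paren{\norm{v}_2^2 - v_i^2}^{y} \le \norm{v}_2^{2y} - v_i^{2y} }$. (The degenerate case $y=1$ is a tautology, with the sum being empty.)

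The only bookkeeping to double-check is the degree accounting — immediate, since every monomial $A^k B^{y-k}$ has degree exactly $2y$ — and that $\norm{v}_2^{2y}$ is literally the polynomial $(A+B)^y$ rather than a re-expansion requiring higher-degree cancellations. I do not anticipate any genuine obstacle: the entire content is the binomial expansion together with closure of sums of squares under products and non-negative linear combinations.
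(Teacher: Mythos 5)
Your proof is correct and is essentially the paper's argument: the paper likewise sets $a=\norm{v}_2^2-v_i^2$, $b=v_i^2$ and invokes the superadditivity $a^{y}+b^{y}\le(a+b)^{y}$, which is exactly your binomial expansion. If anything, your version is the more careful one, since you make explicit that the non-negativity of the cross terms comes from $A$ and $B$ being sums of squares in $v$ (the inequality is false for arbitrary indeterminates $a,b$), whereas the paper cites it as a generic fact.
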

\begin{proof}
We use the fact that $\sststile{2t}{a,b} \Set{ a^{2t} + b^{2t} \leq \Paren{a+b}^{2t} }$ as follows:
    \begin{equation}
        \sststile{2y}{v} \Set{  \Paren{ \norm{ v}^2_2 - v_i^2  }^{2y}   + v_i^{2y} \leq \Paren{ \norm{v}_2^2 - v_i^2 + v_i^2  }^{2y} =  \norm{v}_2^{2y} }
    \end{equation}
Rearranging yields the claim.
\end{proof}



\subsection{Distributions}


\begin{definition}[Certifiable Hypercontractivity of linear forms]
\label{def:certifiable-hypercontractivity-of-linear}
An  distribution $\calD$ on $\R^d$ is said to be $h$-certifiably $c_h$-hypercontractive if there's a degree-$h$ sum-of-squares proof of the following unconstrained polynomial inequality in $d$-dimensional vector-valued indeterminate $v$:
\[ \E_{x \sim \calD} \iprod{x,v}^{2h} \leq \paren{c_h}^{2h} \Paren{\E_{x \sim \calD} \iprod{ x,v }^2 }^{h},
\]
for $h$ being an even integer.
A set of points $X \subseteq \R^d$ is said to be  $(c_h, h)$-certifiably hypercontractive if the uniform distribution on $X$ is  $h$-certifiably $c_h$-hypercontractive.
\end{definition}

\begin{definition}[Certifiable Hypercontractivity of quadratic forms]
\label{def:certifiable-hypercontractivity}
An isotropic distribution $\calD$ on $\R^d$ is said to be $h$-certifiably $c_h$-hypercontractive if there's a degree $h$ sum-of-squares proof of the following unconstrained polynomial inequality in $d \times d$ matrix-valued indeterminate $Q$:
\[ \E_{x \sim \calD} \Paren{x^{\top}Qx - \expecf{x\sim \calD}{ x^\top Q x} }^{2h} \leq \paren{c_h}^{2h} \Paren{\E_{x \sim \calD} \Paren{x^{\top}Qx - \expecf{x\sim \calD}{ x^\top Q x} }^2}^{h},
\]
for $h$ being an even integer.
A set of points $X \subseteq \R^d$ is said to be  $(c_h, h)$-certifiably hypercontractive if the uniform distribution on $X$ is  $h$-certifiably $c_h$-hypercontractive.
\end{definition}

\begin{definition}[Almost $k$-wise independent]
\label{def:k-wise-independent}
For an integer $k \ge 1$, define a distribution $\calD$ to be almost $k$-wise independent if for all subsets $S \subseteq [n]$ of size at most $k$, we have $\EE_{x \sim \calD} [\prod_{i \in S} x_i^2] = (1 \pm \gam_S) \prod_{i \in S} \EE_{x \sim \calD}[x_i^2]$ where $\gam_S = o_d(1)$.
\end{definition}

\begin{definition}[Strictly sub-exponential distribution]
\label{def:strictly-sub-exponential}
A distribution $\calD$ with mean $\mu$ and covariance $\Sigma$ is $\eps$-strictly sub-exponential if for all $v \in \R^d$, 
\begin{equation*}
    \Pr[ \abs{ \Iprod{x - \mu, v} } \geq t \sqrt{ v^\top \Sigma v} ] \leq \exp\Paren{ -t^{1+\eps} / c  },
\end{equation*}
for some constant $c$. 
\end{definition}

\begin{fact}[Sub-Exponential Distributions are bounded]
\label{fact:sub-exponential-dist-bounded}
Let $x\sim \calD$ be a sample from a $d$-dimensional sub-exponential distribution with mean $0$ and covariance $I$. Then, for any $0<\delta<1$, with probability at least $1-\delta$, for all $u \in \mathbb{R}^d$,  $\Iprod{x , u}^2 \leq \log(1/\delta) \norm{u}_2^2$.  
\end{fact}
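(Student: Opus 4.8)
The plan is to recognize that the ``for all $u$'' form of the statement is really a bound on $\norm{x}_2^2$ in disguise, and then to establish that bound by a standard $\eps$-net argument together with the sub-exponential tail hypothesis. First, by Cauchy--Schwarz, $\Iprod{x,u}^2 \le \norm{x}_2^2 \norm{u}_2^2$ for every $u$, with equality along $u = x$; hence ``$\Iprod{x,u}^2 \le C\norm{u}_2^2$ for all $u \in \R^d$'' is equivalent to the single scalar inequality $\norm{x}_2^2 \le C$. So it suffices to show $\norm{x}_2^2 \le O\Paren{d + \log(1/\delta)}$ with probability at least $1-\delta$, which is the claimed $O(\log(1/\delta))$ bound in the regime $\log(1/\delta) = \Omega(d)$ in which the fact is used. (If one instead only needs the inequality for a single direction $u$ fixed independently of $x$, no net is required and the tail bound of \cref{def:strictly-sub-exponential} applies directly.)

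To handle the supremum $\norm{x}_2 = \sup_{u \in S^{d-1}} \abs{\Iprod{x,u}}$ uniformly, I would reduce to a finite maximum via a net. Fix a $(1/2)$-net $N$ of $S^{d-1}$ with $\abs{N} \le 5^d$. Writing an arbitrary unit vector as $v + w$ with $v \in N$ and $\norm{w}_2 \le 1/2$ gives $\abs{\Iprod{x,u}} \le \max_{v \in N}\abs{\Iprod{x,v}} + \frac{1}{2}\norm{x}_2$; taking the supremum over $u$ and rearranging yields $\norm{x}_2 \le 2\max_{v \in N}\abs{\Iprod{x,v}}$. Thus it is enough to control the maximum of $\abs{\Iprod{x,v}}$ over the finite set $N$.

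For each fixed $v \in N$ we have $\sqrt{v^\top \Sigma v} = \norm{v}_2 = 1$ since $\mu = 0$ and $\Sigma = I$, so \cref{def:strictly-sub-exponential} gives $\Pr\Bracks{\abs{\Iprod{x,v}} \ge t} \le \exp\Paren{-t^{1+\eps}/c}$. A union bound over $N$ then gives $\Pr\Bracks{\max_{v \in N}\abs{\Iprod{x,v}} \ge t} \le 5^d \exp\Paren{-t^{1+\eps}/c}$, and choosing $t$ with $t^{1+\eps} = c\Paren{d\log 5 + \log(1/\delta)}$, i.e.\ $t = \Theta\Paren{\Paren{d+\log(1/\delta)}^{1/(1+\eps)}}$, makes the right-hand side at most $\delta$. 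Combining with the net inequality, $\norm{x}_2^2 \le 4t^2 = O\Paren{\Paren{d+\log(1/\delta)}^{2/(1+\eps)}}$; even the crude sub-exponential tail ($\eps = 0$) suffices here up to the precise exponent of the logarithm.

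The only point requiring care --- and hence the main thing to pin down --- is that the net has size $e^{\Theta(d)}$, which is what makes the resulting bound take the form $O\Paren{d + \log(1/\delta)}$ rather than $O(\log(1/\delta))$ outright; one should either read the statement in the regime $\log(1/\delta) \gtrsim d$, or carry the (harmless) additive $O(d)$ through the downstream applications. Everything else --- the net construction, the approximation step, and the choice of $t$ --- is entirely routine.
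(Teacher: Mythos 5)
The paper states this as a bare ``Fact'' with no proof, so there is nothing to compare your argument against; judged on its own, your net argument is correct and standard, and more importantly you have correctly diagnosed that the statement as literally written cannot be true. Since $\sup_{u\neq 0}\Iprod{x,u}^2/\norm{u}_2^2=\norm{x}_2^2$ (equality in Cauchy--Schwarz at $u=x$), the ``for all $u$'' form asserts $\norm{x}_2^2\leq\log(1/\delta)$ with probability $1-\delta$, which contradicts $\E\norm{x}_2^2=\tr(I)=d$ unless $\delta$ is exponentially small in $d$; your uniform bound $\norm{x}_2^2\lesssim\bigl(d+\log(1/\delta)\bigr)^{2/(1+\eps)}$ is the best one can say, and the additive $d$ from the $5^d$-sized net is unavoidable. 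Your parenthetical remark about the fixed-direction reading is also the right one: the only place the fact is invoked (the proof of \cref{lem:re-weighting-to-certificate}) applies it to the single fixed vector $u=u_\calS$, where no net or union bound is needed and the tail bound of \cref{def:strictly-sub-exponential} gives the claim per sample directly, followed by a Markov/Chernoff count over the $n$ samples.

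Two small points to tighten. First, even in the fixed-direction case the strictly sub-exponential tail $\Pr\bigl[\abs{\Iprod{x,u}}\geq t\norm{u}_2\bigr]\leq\exp\Paren{-t^{1+\eps}/c}$ yields $\Iprod{x,u}^2\leq\Paren{c\log(1/\delta)}^{2/(1+\eps)}\norm{u}_2^2$, which matches the stated $\log(1/\delta)\norm{u}_2^2$ only when $\eps\geq 1$ (the sub-Gaussian case); for $\eps\in(0,1)$ the exponent $2/(1+\eps)>1$ is a genuine, if benign, discrepancy that should be tracked in the degree bound $\bigO{\log(1/\zeta)}$ on the shifts in \cref{lem:case2-analytic-density}. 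Second, your closing remark that ``$\eps=0$ suffices up to the precise exponent'' is a bit glib: at $\eps=0$ the bound degrades to $\log^2(1/\delta)$, which is exactly why the paper insists on \emph{strictly} sub-exponential tails elsewhere. Neither issue affects the correctness of your derivation, only the constants and exponents one is entitled to quote.
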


The main definition we consider in this paper is as follows:

\begin{definition}[$(\delta,k, \eps)$-Reasonably anti-concentrated distributions]
\label{def:reasonably-anti-concentrated-dist}
A distribution $\calD$ with mean $0$ and covariance $\Sigma$ is $(\delta,k,\eps)$-reasonably anti-concentrated if it satisfies the following:
\begin{enumerate}
    \item $\calD$ is $\delta$-anti-concentrated
    \item $\calD$ is $\eps$-strictly sub-exponential.
    \item $\calD$ has $\exp\Paren{ (1/\delta^2)^{ \exp( 1/\delta^4) } }$-certifiably $O(1)$-hypercontractive linear forms. 
    \item $\calD$ is almost $k$-wise independence.
\end{enumerate}
\end{definition}


The following lemma shows that bounded product distributions, a natural class of non-spherical distributions are reasonably anti-concentrated.

\begin{lemma}[Product Distributions]
\label{thm:product-distributions-are}
A $\delta$-anti-concentrated product distribution $\calD$ with \textit{strictly sub-exponential} coordinate  marginals is \textit{reasonably anti-concentrated}.
\end{lemma}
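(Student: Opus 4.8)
The plan is to verify the four conditions of \cref{def:reasonably-anti-concentrated-dist} one at a time, where conditions (1), (2) are hypotheses (anti-concentration is assumed, and strictly sub-exponential marginals of a product distribution give strictly sub-exponential tails in every direction by a standard moment-generating-function tensorization argument), condition (4) is immediate since a product distribution is fully $k$-wise independent (so $\gamma_S = 0$) for every $k$, and the only real content is establishing condition (3): certifiable hypercontractivity of linear forms with the stated degree and constant.

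For condition (3), I would first reduce to the isotropic case: after the affine transformation that puts $\calD$ in isotropic position, $\iprod{x,v}^2 = \sum_i v_i^2 x_i^2 + \sum_{i\neq j} v_i v_j x_i x_j$ with $\E \iprod{x,v}^2 = \|v\|_2^2$ (the covariance is $I$), so the target inequality becomes $\E_{x\sim\calD}\iprod{x,v}^{2h} \le (c_h)^{2h}\|v\|_2^{2h}$ with an SoS proof in $v$. The key tool is that for a \emph{product} distribution, the linear form $\iprod{x,v} = \sum_i v_i x_i$ is a sum of independent mean-zero random variables, and its $2h$-th moment expands, after using independence to kill all terms with an odd exponent on some coordinate, into a sum over multi-indices $\alpha$ with all even entries of $\binom{2h}{\alpha}\prod_i v_i^{\alpha_i}\E[x_i^{\alpha_i}]$. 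Strict sub-exponentiality of the coordinate marginals bounds $\E[x_i^{2m}] \le (Cm)^{(1+o(1))m}$ — in particular $\E[x_i^{2m}] \le (C')^{2m} (2m)!$ or, more crudely, $\E[x_i^{2m}]\le (c_h)^{2m}$ for the relevant range $m \le h$ — and then the whole sum is dominated term-by-term by $(c_h)^{2h}\big(\sum_i v_i^2\big)^h$ expanded in the monomial basis. Crucially, every monomial appearing has all even degrees, so the domination is a comparison of two polynomials each of which is manifestly a positive combination of squared monomials $\big(\prod_i v_i^{\alpha_i/2}\big)^2$ — hence the inequality $\big((c_h)^{2h}\|v\|_2^{2h} - \E\iprod{x,v}^{2h}\big)$ is itself a nonnegative combination of such squares, giving a degree-$2h$ SoS proof directly. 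This is exactly the kind of ``diagonal'' SoS argument used for Gaussians, and the degree requirement $\exp((1/\delta^2)^{\exp(1/\delta^4)})$ in \cref{def:reasonably-anti-concentrated-dist} is comfortably large enough to accommodate the moment growth rate permitted by strict sub-exponentiality (the $(1+\eps)$ in the exponent only affects constants once $h$ is bounded).

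I expect the main obstacle to be bookkeeping rather than conceptual: making the term-by-term monomial domination fully rigorous requires care with the multinomial coefficients (one must check that $\binom{2h}{\alpha}$ on the left is bounded by the corresponding coefficient $\binom{h}{\alpha/2}\cdot(\text{multinomial of }2$'s$)$ times $(c_h)^{2h}$ on the right for \emph{every} even $\alpha$, using that $\E[x_i^{\alpha_i}]$ absorbs the slack), and with the precise moment bound extracted from \cref{def:strictly-sub-exponential} via \cref{fact:sub-exponential-dist-bounded}-style tail integration. A secondary subtlety is handling the affine transformation and the ``almost'' in \cref{def:k-wise-independent}: for a genuine product distribution $\gamma_S = 0$, so that part is clean, but after an affine map one should phrase hypercontractivity and sub-exponentiality in the affine-invariant form of \cref{def:certifiable-hypercontractivity-of-linear} and \cref{def:strictly-sub-exponential} (which is already done there, since both are stated with the covariance $\Sigma$ built in), so no extra work is needed. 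I would organize the write-up as: (i) state the reduction and fix the isotropizing map; (ii) a lemma bounding $\E[x_i^{2m}]$ from strict sub-exponentiality; (iii) the monomial-domination SoS lemma for $\E\iprod{x,v}^{2h}$; (iv) assemble conditions (1)–(4) and conclude.
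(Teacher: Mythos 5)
Your overall structure --- verify the four conditions of \cref{def:reasonably-anti-concentrated-dist} separately, with (1) holding by hypothesis, (2) by tensorization of the coordinate tails, and (4) trivially with $\gamma_S=0$ --- matches the paper exactly; the paper handles (2) by citing the sub-gaussian-norm tensorization in Vershynin and its sub-exponential analogue, which is the same argument you sketch. Where you diverge is condition (3): the paper does not prove certifiable hypercontractivity of linear forms at all, it simply cites \cite{kothari2018agnostic}, whereas you attempt a direct SoS proof via the multinomial expansion.

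That direct proof has a genuine gap. Independence plus mean-zero coordinates only kills the terms of the expansion in which some exponent $\alpha_i$ equals $1$; it does not kill terms in which some $\alpha_i$ is an odd integer $\ge 3$. For a product distribution with asymmetric marginals, a term such as $\binom{2h}{3,\,3,\,2h-6}\, v_1^3 v_2^3 v_3^{2h-6}\,\E[x_1^3]\,\E[x_2^3]\,\E[x_3^{2h-6}]$ survives, its monomial is not a square, and its coefficient can be negative. So the claim that $(c_h)^{2h}\|v\|_2^{2h} - \E\iprod{x,v}^{2h}$ is ``manifestly a nonnegative combination of squared monomials'' is false in general; note that the paper's own moment lemma (\cref{lem:lower-bound-k-th-moment}) explicitly invokes unconditionality of the distribution precisely in order to discard these odd-power terms. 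The gap is repairable --- either restrict to symmetric marginals, or absorb each odd cross term into neighbouring even terms via the SoS inequality $2ab \le a^2 + b^2$ applied at the level of monomials, which is essentially what the cited certifiable-hypercontractivity proofs for general product distributions do --- but as written the ``diagonal'' argument does not go through, and handling these terms is the real content of condition (3), not bookkeeping with multinomial coefficients.
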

\begin{proof}
The coordinates are $k$-wise independent by definition.
To show strict sub-exponentiality, we use an analogue of the results from \cite[Lemma 3.4.2]{vershynin2020high} which states that sub-gaussian norm of the vector is bounded by the maximum sub-gaussian norm of each coordinate (upto a constant). This argument directly extends to sub-exponential norm (for instance, see \cite[Exercise 2.7.3]{vershynin2020high}).
Certifiable hypercontractivity of linear forms of product distributions follows from~\cite{kothari2018agnostic}.
\end{proof}

\subsection{Polynomial approximation to the box indicator}
\label{subsec:boxpolynomial}


Our arguments hinge on polynomial approximations to the box indicator function $\mathbb{1}(x \in [\zeta - \delta, \zeta + \delta])$. Standard approximation theory provides us versatile tools to construct such polynomials and indeed, this has been used extensively in related prior works.

In this work, we work with the following explicit polynomial: 

\begin{definition}[Box Indicator Polynomial]
\label{def:box-indicator}
For $\eta,\zeta, \delta >0$ and $L \geq 1$, choose an even $d$ and let $p$ be a degree $O(d)$ polynomial such that 
\begin{equation*}
    p(x) = \eta  T_d\Paren{ \frac{ (x-\zeta)^2}{ L^2 } - 1 - \delta^2 },
\end{equation*}
where $T_d$ is the $d$-th Chebyshev polynomial of the first kind. We assume that the constants are chosen sufficiently small for the following facts to hold. 
\end{definition}

\begin{fact}
    If $1 \le x \le 1.4$,  and we write $x = 1 + \eps$, then we have $T_d(x) = e^{\Theta(d\sqrt{\eps})}$
\end{fact}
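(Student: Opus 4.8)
The plan is to use the standard expression for the Chebyshev polynomial on $[1,\infty)$, namely $T_d(x) = \cosh(d\,\mathrm{arccosh}\,x)$ for $x \ge 1$. This identity is checked by setting $f_d(x) := \cosh(d\,\mathrm{arccosh}\,x)$ and verifying $f_0 = 1$, $f_1(x) = x$, and, via $\cosh((d+1)t) + \cosh((d-1)t) = 2\cosh(dt)\cosh(t)$ with $t = \mathrm{arccosh}\,x$, the three-term recurrence $f_{d+1}(x) = 2x f_d(x) - f_{d-1}(x)$, which is exactly the Chebyshev recurrence. Since $\frac{1}{2}e^{t} \le \cosh(t) \le e^{t}$ for every $t \ge 0$, the entire claim reduces to the elementary estimate $\mathrm{arccosh}(1+\eps) = \Theta(\sqrt{\eps})$ for $\eps \in [0, 0.4]$.

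To establish that estimate, I would write $\mathrm{arccosh}(1+\eps) = \log\!\left( 1 + y \right)$ with $y := \eps + \sqrt{(1+\eps)^2 - 1} = \eps + \sqrt{2\eps + \eps^2}$, using $\mathrm{arccosh}(z) = \log(z + \sqrt{z^2-1})$. For $0 \le \eps \le 0.4$ one has $\sqrt{2\eps} \le \sqrt{2\eps+\eps^2} \le \sqrt{2.4\,\eps}$ and $\eps \le \sqrt{0.4}\cdot\sqrt{\eps}$, so $y = \Theta(\sqrt{\eps})$ with absolute constants and $y$ ranges over a bounded interval $[0, y_0]$. On that interval $\log(1+y) = \Theta(y)$, since $\log(1+y)/y$ is continuous and bounded away from $0$ and $\infty$ there, and hence $\mathrm{arccosh}(1+\eps) = \Theta(\sqrt{\eps})$.

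Combining the two steps gives
\[
T_d(1+\eps) \;=\; \cosh\!\left(d\,\mathrm{arccosh}(1+\eps)\right) \;=\; \cosh\!\left(\Theta(d\sqrt{\eps})\right) \;=\; e^{\Theta(d\sqrt{\eps})},
\]
where the last equality again uses $\frac{1}{2}e^{t} \le \cosh(t) \le e^{t}$. The only point that needs care is the regime $d\sqrt{\eps} = o(1)$: there $\cosh\!\left(d\,\mathrm{arccosh}(1+\eps)\right) = 1 + \Theta\!\left((d\sqrt{\eps})^2\right)$ rather than $1 + \Theta(d\sqrt{\eps})$, so the symbol $e^{\Theta(d\sqrt{\eps})}$ must be read in the usual asymptotic sense with the exponent assumed to be $\Omega(1)$ — equivalently, one restricts to $\eps \ge C/d^2$, which is the only regime in which this fact is used — while the upper bound $\log T_d(1+\eps) = O(d\sqrt{\eps})$ holds with no restriction at all. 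This boundary case is the only mild obstacle; everything else is a routine bound on $\mathrm{arccosh}$ near $1$, so I would simply record the convention at the point of use.
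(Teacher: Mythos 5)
Your proof is correct and follows essentially the same route as the paper's: express $T_d(x)=\cosh(d\,\mathrm{arccosh}\,x)$, show $\mathrm{arccosh}(1+\eps)=\Theta(\sqrt{\eps})$, and convert $\cosh$ to $e^{\Theta(\cdot)}$. You merely fill in more detail (the recurrence check, the explicit $\log$-formula for $\mathrm{arccosh}$ in place of the paper's $\cosh(y)=1+O(y^2)$ inversion) and correctly flag the $d\sqrt{\eps}=o(1)$ caveat, which the paper glosses over.
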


\begin{proof}
    Note that for $y = O(1)$, $\cosh(y) = 1 + O(y^2)$. Therefore, setting $\cosh(y) = x$ gives $\arccosh x = \Theta(\sqrt{\eps})$ which implies
	\[T_d(x) = \cosh(d\arccosh x) = \frac{e^{d\arccosh x} + e^{-d\arccosh x}}{2} = e^{\Theta(d\sqrt{\eps})}\]
\end{proof}

\begin{lemma}
    When $|x - \zeta| \le \Delta L < \delta L$, we have that $p(x) \ge \Omega(\eta e^{\Theta(d\sqrt{\delta^2 + \Delta^2})})$.
\end{lemma}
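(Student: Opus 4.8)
The statement asks for a lower bound on the box indicator polynomial $p(x) = \eta\, T_d\!\left(\frac{(x-\zeta)^2}{L^2} - 1 - \delta^2\right)$ when $|x-\zeta| \le \Delta L < \delta L$. The plan is to reduce this to the previous fact, which gives exponential growth of $T_d$ on the interval $[1, 1.4]$. First I would locate the argument of the Chebyshev polynomial: writing $r = (x-\zeta)/L$ so that $|r| \le \Delta < \delta < 1$, the argument becomes $r^2 - 1 - \delta^2$. This is negative, so I cannot directly apply the fact about $T_d$ on $[1,1.4]$; instead I would use the parity of $T_d$ (it has definite parity equal to the parity of $d$, and here $d$ is chosen even), so $T_d(-y) = T_d(y)$, and hence $p(x) = \eta\, T_d(1 + \delta^2 - r^2)$ with $1 + \delta^2 - r^2 \ge 1$.

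Next, I would check that the argument $1 + \delta^2 - r^2$ lies in the range $[1, 1.4]$ where the previous fact applies: since $r^2 \ge 0$ we have $1 + \delta^2 - r^2 \le 1 + \delta^2$, and since the constants ($\delta$ in particular, or rather the relevant smallness assumptions stated in Definition~\ref{def:box-indicator}) are taken sufficiently small, $\delta^2 \le 0.4$, giving the upper bound $1.4$; the lower bound $1$ holds because $r^2 \le \Delta^2 \le \delta^2$. So I can write $1 + \delta^2 - r^2 = 1 + \eps$ with $\eps = \delta^2 - r^2 \in [0, \delta^2]$, and the fact yields $T_d(1+\eps) = e^{\Theta(d\sqrt{\eps})} = e^{\Theta(d\sqrt{\delta^2 - \Delta^2})}$ at the worst case $r^2 = \Delta^2$ — but wait, the claimed bound involves $\sqrt{\delta^2 + \Delta^2}$, not $\sqrt{\delta^2 - \Delta^2}$. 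Here I would be more careful: we actually want a lower bound valid for \emph{all} $x$ with $|x-\zeta| \le \Delta L$, and $\eps = \delta^2 - r^2$ ranges over $[\delta^2 - \Delta^2, \delta^2]$; the minimum of $T_d(1+\eps)$ over this range is at $\eps = \delta^2 - \Delta^2$, giving $p(x) \ge \Omega(\eta\, e^{\Theta(d\sqrt{\delta^2 - \Delta^2})})$. Since $\Delta < \delta$, both $\delta^2 - \Delta^2$ and $\delta^2 + \Delta^2$ are $\Theta(\delta^2)$ up to the constant hidden in $\Theta$, so the two expressions are interchangeable inside $e^{\Theta(\cdot)}$; I would phrase the final bound using whichever matches the later usage, noting $\sqrt{\delta^2 \pm \Delta^2} = \Theta(\delta)$ when $\Delta \le \delta/2$ say, or more simply observing $\sqrt{\delta^2 - \Delta^2} \ge \sqrt{\delta^2 + \Delta^2}/\sqrt{3}$ is false in general but $\sqrt{\delta^2-\Delta^2}$ and $\sqrt{\delta^2+\Delta^2}$ differ by at most a constant factor precisely when $\Delta$ is bounded away from $\delta$.

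\textbf{Main obstacle.} The only subtle point is reconciling the $\pm$ inside the square root: the honest bound coming out of the computation is $e^{\Theta(d\sqrt{\delta^2 - \Delta^2})}$ (worst case over the allowed window), and matching this to the stated $e^{\Theta(d\sqrt{\delta^2 + \Delta^2})}$ requires that $\Delta$ be, say, at most $\delta/2$ (or some fixed fraction of $\delta$), which is consistent with the hypothesis $\Delta < \delta$ together with the "constants sufficiently small" convention; alternatively the $\Theta$ in the exponent absorbs the discrepancy directly since $\tfrac12\sqrt{\delta^2+\Delta^2} \le \sqrt{\delta^2 - \Delta^2}$ whenever $\Delta^2 \le \tfrac{3}{5}\delta^2$. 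Everything else — the parity reduction, the range check, and invoking the previous fact — is routine. I would therefore write the proof in three short lines: substitute and use evenness of $d$ to get argument $1 + (\delta^2 - r^2) \in [1, 1.4]$; apply the preceding fact; and absorb the $\delta^2 - \Delta^2$ versus $\delta^2 + \Delta^2$ discrepancy into the $\Theta(\cdot)$ using $\Delta \le \delta/2$.
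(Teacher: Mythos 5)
Your proof is correct and follows essentially the same route as the paper: both reduce to the value of $T_d$ at the worst-case argument $-1-\delta^2+\Delta^2$ (equivalently, by evenness of $d$, at $1+\delta^2-\Delta^2$) and invoke the growth estimate $T_d(1+\eps)=e^{\Theta(d\sqrt{\eps})}$. Your careful flag about the sign — that the computation honestly yields $e^{\Theta(d\sqrt{\delta^2-\Delta^2})}$ rather than the stated $e^{\Theta(d\sqrt{\delta^2+\Delta^2})}$ — is well taken: the paper's own proof derives $T_d(-1-\delta^2+\Delta^2)$ and then asserts the "$+$" form without comment, so the lemma as stated really does need either $\Delta$ bounded away from $\delta$ or a reading of $\Theta(\cdot)$ generous enough to absorb the discrepancy, exactly as you observe.
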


\begin{proof}
    When $x \in (\zeta-\Delta L, \zeta+ \Delta L)$, we have $p(x) = \eta T_d(y)$ where $y$ lies in the range $[-1-\delta^2, -1-\delta^2+\Delta^2]$ but $T_d$ monotonically decreases in this range, therefore $p(x) \ge \eta T_d(-1-\delta^2 + \Delta^2) \ge \eta e^{\Theta(d\sqrt{\delta^2 + \Delta^2})}$.
\end{proof}

\begin{lemma}\label{lem: cheby_coeff_bound}
    For constants $\eta, \zeta, \delta, L$, the coefficients of $p(x)$ are bounded by $2^{O(d)}$.
\end{lemma}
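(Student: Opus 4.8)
The plan is to bound the largest coefficient of $p$ by the $\ell_1$ norm of its coefficient vector (in the monomial basis), which dominates it, and to control that $\ell_1$ norm via its submultiplicativity under polynomial multiplication. Throughout, write $\|f\|_1$ for the sum of the absolute values of the coefficients of a polynomial $f$; the target is $\|p\|_1 = 2^{O(d)}$, from which the claim is immediate since every coefficient of $p$ is at most $\|p\|_1$ in absolute value.

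First I would bound $\|T_d\|_1$. The cleanest route is the Chebyshev recurrence $T_0 = 1$, $T_1(y) = y$, $T_{n+1}(y) = 2y\,T_n(y) - T_{n-1}(y)$, which yields $\|T_{n+1}\|_1 \le 2\|T_n\|_1 + \|T_{n-1}\|_1$ and hence, by induction, $\|T_n\|_1 \le 3^n = 2^{O(n)}$. (Alternatively, one can read coefficient bounds off the standard closed form for $T_d$; either way every coefficient is $2^{O(d)}$ and there are at most $d+1$ of them.)

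Next I would expand the quadratic argument $q(x) := \frac{(x-\zeta)^2}{L^2} - 1 - \delta^2 = \frac{1}{L^2}x^2 - \frac{2\zeta}{L^2}x + \left(\frac{\zeta^2}{L^2} - 1 - \delta^2\right)$ and observe that $\|q\|_1 = O(1)$, since $\zeta, \delta$ are constants and $L \ge 1$. The elementary fact I would invoke is that $\|\cdot\|_1$ is submultiplicative, $\|fg\|_1 \le \|f\|_1\|g\|_1$: each coefficient of $fg$ is a sum of products of a coefficient of $f$ with a coefficient of $g$, so expanding $\|fg\|_1$ and applying the triangle inequality termwise gives the bound. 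Iterating, $\|q^k\|_1 \le \|q\|_1^k \le \max(1,\|q\|_1)^d = 2^{O(d)}$ for every $0 \le k \le d$.

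Finally, writing $T_d(y) = \sum_{k=0}^d c_k y^k$ so that $p(x) = \eta\,T_d(q(x)) = \eta\sum_{k=0}^d c_k\,q(x)^k$, the triangle inequality together with the above bounds gives $\|p\|_1 \le \eta\sum_{k=0}^d |c_k|\,\|q^k\|_1 \le \eta\,\|T_d\|_1\cdot 2^{O(d)} = 2^{O(d)}$, using $\eta = O(1)$, which finishes the proof. The one step I would flag as the ``main obstacle'' — and it is a mild one — is the composition: plugging the quadratic $q$ into the degree-$d$ polynomial $T_d$ could a priori blow up coefficients, but submultiplicativity of $\|\cdot\|_1$ caps the blow-up at an extra $2^{O(d)}$ factor, which is harmless here.
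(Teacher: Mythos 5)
Your proof is correct and follows essentially the same route as the paper: bound the coefficients of $T_d$ by $2^{O(d)}$ and then argue that composing with the fixed quadratic costs at most another $2^{O(d)}$ factor. The only cosmetic difference is that you derive the Chebyshev coefficient bound from the recurrence and formalize the composition step via submultiplicativity of the coefficient $\ell_1$ norm, whereas the paper reads the bound off the closed form $T_d(y)=\sum_k \binom{d}{2k}(y^2-1)^k y^{d-2k}$ and expands with the binomial theorem; your treatment of the composition is arguably the cleaner of the two.
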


\begin{proof}
We have
\[T_d(y) = \sum_{k = 0}^{d/2} \binom{d}{2k} (y^2-1)^ky^{n-2k}\]
Note that we have each binomial coefficient above being at most $2^d$. Finally, we plug in $y = \frac{(x - \zeta)^2}{L^2} - 1 - \delta^2$ and expand further using the binomial theorem. Using the same bounds and putting them together implies the result.
\end{proof}

\begin{fact}
    If $x \le \zeta - \sqrt{2 + \delta^2}L$ or $x \ge \zeta + \sqrt{2 + \delta^2}L$, we have $p(x) \le \eta \left(2\frac{x - \zeta}{L}\right)^{2d}$
\end{fact}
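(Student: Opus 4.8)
The plan is to reduce the claim to the standard estimate $T_d(y) \le (2y)^d$ for Chebyshev polynomials evaluated at a point $y \ge 1$ outside the interval $[-1,1]$. First, I would observe that the hypothesis $x \le \zeta - \sqrt{2+\delta^2}\,L$ or $x \ge \zeta + \sqrt{2+\delta^2}\,L$ is exactly equivalent to $(x-\zeta)^2 \ge (2+\delta^2)L^2$, i.e. to the statement that the argument
\[
y \;:=\; \frac{(x-\zeta)^2}{L^2} - 1 - \delta^2
\]
of $T_d$ in \cref{def:box-indicator} satisfies $y \ge 1$. So it suffices to bound $T_d(y)$ for $y \ge 1$.

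Next, I would invoke the closed form $T_d(y) = \tfrac12\big((y+\sqrt{y^2-1})^d + (y-\sqrt{y^2-1})^d\big)$, valid for all $|y| \ge 1$ (equivalently, $T_d(\cosh t) = \cosh(dt)$ with $t = \operatorname{arccosh} y \ge 0$). For $y \ge 1$ we have $0 \le y - \sqrt{y^2-1} \le 1 \le y + \sqrt{y^2-1}$, so both summands are at most $(y+\sqrt{y^2-1})^d$, and since $\sqrt{y^2-1} \le y$ we get $y + \sqrt{y^2-1} \le 2y$. Hence $T_d(y) \le (2y)^d$.

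Finally, I would substitute back $y = (x-\zeta)^2/L^2 - 1 - \delta^2 \le (x-\zeta)^2/L^2$, which gives
\[
p(x) \;=\; \eta\, T_d(y) \;\le\; \eta\Paren{\frac{2(x-\zeta)^2}{L^2}}^{d} \;=\; \eta\, 2^d\, \frac{(x-\zeta)^{2d}}{L^{2d}} \;\le\; \eta\, 4^d\, \frac{(x-\zeta)^{2d}}{L^{2d}} \;=\; \eta\Paren{\frac{2(x-\zeta)}{L}}^{2d},
\]
the last inequality using that $(x-\zeta)^{2d} \ge 0$ (an even power) together with $2^d \le 4^d$; in fact this argument proves the stated bound with a factor $2^d$ to spare.

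There is essentially no serious obstacle. The only points deserving (minor) care are: checking that the hypothesis places $y$ in the regime $y \ge 1$ where $T_d$ grows monotonically and exponentially, rather than in the oscillatory regime $|y| \le 1$; and noting in the final line that the sign of $x-\zeta$ is irrelevant because it appears only through the even power $(x-\zeta)^{2d}$, so the comparison $2^d \le 4^d$ indeed yields the claimed inequality.
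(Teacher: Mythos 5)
Your proof is correct and follows essentially the same route as the paper: reduce to $y \ge 1$, use the closed form $T_d(y) = \tfrac12\big((y+\sqrt{y^2-1})^d + (y-\sqrt{y^2-1})^d\big)$ to get $T_d(y) \le (2y)^d$, and substitute back. You are in fact slightly more careful than the paper in the last step, where the paper silently passes from $(2y)^d$ (which carries a factor $2^d$) to $\big(2(x-\zeta)/L\big)^{2d}$ (which carries $4^d$); your explicit $2^d \le 4^d$ comparison closes that small gap.
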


\begin{proof}
    We have $p(x) = \eta T_d(y)$ where $y = (x - \zeta)^2/L^2 - 1 - \delta^2$. For the given range of $x$, we have $y \ge 1$ and therefore,
    \begin{align*}
    T_d(y) &= \frac{1}{2}((y - \sqrt{y^2 - 1})^d + (y + \sqrt{y^2 - 1})^d)\\
    &\le (y + \sqrt{y^2 - 1})^d\\
    &\le (2y)^d\le \left(2\frac{x - \zeta}{L}\right)^{2d}
\end{align*}
which implies the bound.
\end{proof}

\begin{lemma}[Gaussian anti-concentration with shifts]
\label{thm:anti-conc-with-shifts-gaussian}
Given $\delta, \zeta >0$, let $L = 1/\delta + \zeta, \eta = \delta / L$, and let  $p$ be as above. Then, 
\begin{equation*}
    \expecf{x\sim \calN(0,1)}{ p^2\Paren{ x-\zeta } } \leq \bigO{ \delta },
\end{equation*}
when $d = \Omega\Paren{ \log\Paren{ \frac{1}{\eta \delta} + \frac{\zeta}{\delta} }  \cdot \Paren{ \frac{1}{ 1/\delta^2  } + \zeta^2 } } $.
\end{lemma}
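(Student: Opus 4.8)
## Proof Proposal for Lemma (Gaussian anti-concentration with shifts)

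The plan is to decompose the Gaussian expectation $\expecf{x \sim \calN(0,1)}{p^2(x - \zeta)}$ according to how far $x$ lies from the center $\zeta$, using the three pointwise estimates on $p$ established above: the lower bound $p(x) \geq \Omega(\eta e^{\Theta(d\sqrt{\delta^2 + \Delta^2})})$ when $|x - \zeta| \leq \Delta L$, the far-away upper bound $p(x) \leq \eta (2(x-\zeta)/L)^{2d}$ when $|x - \zeta| \geq \sqrt{2 + \delta^2} L$, and the intermediate regime where $T_d$ is bounded by its value at the endpoints of $[-1-\delta^2, 1]$, i.e. $p(x) = O(\eta)$ (since $|T_d| \leq 1$ on $[-1,1]$ and $T_d(-1-\delta^2 + \Delta^2)$ is controlled for $\Delta$ up to roughly $\delta$). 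First I would split $\R$ into the ``core'' region $|x - \zeta| \leq \sqrt{2 + \delta^2}L$ and the ``tail'' region $|x - \zeta| > \sqrt{2 + \delta^2}L$.

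For the tail region, I would use the far-away bound to write
\[
\int_{|x-\zeta| > \sqrt{2+\delta^2}L} p^2(x-\zeta)\, \frac{e^{-x^2/2}}{\sqrt{2\pi}}\, dx
\;\leq\; \eta^2 \int_{|x-\zeta| > \sqrt{2+\delta^2}L} \Paren{\frac{2(x-\zeta)}{L}}^{4d} \frac{e^{-x^2/2}}{\sqrt{2\pi}}\, dx,
\]
and bound the right-hand side by a Gaussian moment computation: after the substitution $y = x - \zeta$ and crudely $e^{-x^2/2} \leq e^{-(y^2 - 2|y|\zeta)/2}$, the integral is at most (a constant times) $\eta^2 (C d / L^2)^{2d} e^{O(\zeta^2)}$ for an absolute constant $C$, using $\E|g|^{4d} \leq (4d)^{2d}$ up to constants. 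Since $\eta = \delta/L$ and $L = 1/\delta + \zeta$, the prefactor $\eta^2 \leq \delta^2$ already gives one factor of $\delta$; I then need the remaining factor $(Cd/L^2)^{2d} e^{O(\zeta^2)}$ to be $O(1)$, which holds precisely when $d = O(L^2) = O(1/\delta^2 + \zeta^2)$ and $\zeta^2 = O(d)$ — this is where the stated hypothesis $d = \Omega(\log(\tfrac{1}{\eta\delta} + \tfrac{\zeta}{\delta}) \cdot (1/\delta^{-2} + \zeta^2))$ (i.e. $d$ comparable to $L^2$ up to the log factor) will be invoked; in fact I expect the cleanest route is to take $d = \Theta(L^2)$ so that $\sqrt{2+\delta^2}L \approx \sqrt{2d}$, placing the tail cutoff right at the natural Gaussian scale $\sqrt{d}$ and making the tail integral genuinely small rather than merely $O(1)$.

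For the core region I split further into $|x - \zeta| \leq \Delta_0 L$ for a small constant $\Delta_0$ and the annulus $\Delta_0 L < |x - \zeta| \leq \sqrt{2+\delta^2}L$. On the annulus, $p = O(\eta)$ pointwise, so the contribution is at most $O(\eta^2) \cdot \Pr[|x - \zeta| > \Delta_0 L] \leq O(\eta^2)$; since $\eta = \delta/L \leq \delta^2$ (as $L \geq 1/\delta$), this is $O(\delta^2)$. On the innermost ball, $p^2 = O(\eta^2 e^{\Theta(d\sqrt{\delta^2 + \Delta_0^2})})$, which for $\Delta_0$ a small constant is $O(\eta^2 e^{\Theta(d)})$; multiplying by the Gaussian mass of the ball, $\Pr[|x-\zeta| \leq \Delta_0 L] \leq \min(1, \, 2\Delta_0 L/\sqrt{2\pi})$ but more usefully, when $\zeta$ is large this ball may sit in the tail of the Gaussian and contribute an extra $e^{-\zeta^2/2}$-type saving; in the worst case ($\zeta = O(1)$) I bound the Gaussian mass by $O(\Delta_0 L)$ and need $\eta^2 e^{\Theta(d)} \Delta_0 L = O(\delta)$, i.e. $\eta^2 e^{\Theta(d)} = O(\delta/L)=O(\eta)$, i.e. $\eta e^{\Theta(d)} = O(1)$ — this forces $d = O(\log(1/\eta))$, matching the logarithmic factor in the hypothesis.

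The main obstacle is the innermost-ball term: unlike in the Gaussian-only setting, $p$ is designed to be \emph{large} near $\zeta$ (so as to dominate the box indicator in the eventual application), so $p^2$ near $\zeta$ is exponentially large in $d$, and the only thing saving us is that $\eta = \delta/L$ is exponentially small and that the Gaussian assigns only $O(L)$ (or less) mass to a ball of radius $O(L)$. Balancing $\eta^2 e^{\Theta(d)} L$ against $\delta$ is exactly what pins down the degree $d = \Theta(\log(1/(\eta\delta)) \cdot L^2)$ — the $L^2$ from the tail/moment bound and the $\log(1/(\eta\delta))$ from the core — so I would carry the constant $\Delta_0$ as a free parameter, optimize it at the end, and track the $\zeta$-dependence carefully (replacing $e^{-x^2/2} \leq e^{O(\zeta^2)} e^{-y^2/4}$ wherever the shift is absorbed) to confirm the claimed range of $d$. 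I do not expect to need anything beyond elementary Gaussian moment estimates and the Chebyshev bounds already proved; the bookkeeping of which region contributes $O(\delta)$ versus $O(\delta^2)$ versus $O(1)$ is the only delicate part.
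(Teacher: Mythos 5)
Your overall strategy is the same as the paper's: partition $\R$ according to $|x-\zeta|$ at the scales $\delta L$ (resp.\ your $\Delta_0 L$) and $\sqrt{2+\delta^2}\,L$, apply the pointwise Chebyshev estimates on $p$ in each piece ($|T_d|\le 1$ on $[-1,1]$ for the annulus, the growth bound $T_d(y)\le (2y)^d$ plus Gaussian moment estimates for the tails, and the density bound $\mu(x)\le 1$ on the core), and balance the degree $d$ against $L^2$ and $\log(1/\eta)$. Your tail analysis, and your observation that the tail forces $d=O(L^2)$ so that the degree must really be $\Theta$ of the stated quantity rather than merely $\Omega$ of it, match what the paper does.

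The gap is in the innermost region. You place the inner cutoff at $\Delta_0 L$ for a constant $\Delta_0$ and assert $p^2=O\bigl(\eta^2 e^{\Theta(d\sqrt{\delta^2+\Delta_0^2})}\bigr)=O\bigl(\eta^2 e^{\Theta(d)}\bigr)$ there. But the estimate $p(x)\ge \Omega\bigl(\eta e^{\Theta(d\sqrt{\delta^2+\Delta^2})}\bigr)$ is (i) a \emph{lower} bound and (ii) proved only for $\Delta<\delta$: for $\delta L<|x-\zeta|\le \Delta_0 L$ the argument $y=(x-\zeta)^2/L^2-1-\delta^2$ lies in $[-1,1]$, where $|T_d(y)|\le 1$ and hence $|p(x)|\le\eta$. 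The only region where $p$ is exponentially large is $|x-\zeta|\le\delta L$ (the paper's $\calI_2$), and there the correct upper bound is $|p(x)|\le \eta\,T_d(1+\delta^2)=\eta e^{\Theta(d\delta)}$ --- exponent $d\delta$, not $d$. Consequently your balancing ``$\eta e^{\Theta(d)}=O(1)$, forcing $d=O(\log(1/\eta))$'' rests on the wrong estimate, and it is incompatible with the hypothesis: the last line of the paper's proof makes explicit that the intended condition is $d=\Omega(\log(1/\eta)\cdot L^2)$, which exceeds $\log(1/\eta)$ by a factor of roughly $1/\delta^2$. With the correct cutoff and exponent the core contribution is $\eta^2 e^{\Theta(d\delta)}\cdot\min(1,\,2\delta L)$ and the resulting constraint on $d$ involves $d\delta$ rather than $d$; as written, your bookkeeping cannot close against the stated range of $d$, so the argument needs to be redone with the cutoff at $\delta L$ and the annulus bound $|p|\le\eta$ extended all the way out to $\sqrt{2+\delta^2}\,L$.
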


\begin{proof}

The proof idea is as follows. We split $\mathbb{R}$ into the following intervals: $\calI_0 = (-\infty, \zeta-\sqrt{2 + \delta^2} L)$, $\calI_1 = [\zeta- \sqrt{2 + \delta^2} L, \zeta-\delta L]$,   $\calI_2 = (\zeta-\delta L, \zeta+ \delta L]$, $\calI_3 = (\zeta+ \delta L, \zeta + \sqrt{2 + \delta^2} L]$ and $\calI_4 = ( \zeta+ \sqrt{2 + \delta^2} L, \infty)$.
For intervals $\calI_1, \calI_2, \calI_3$, we use the bound $\mu(x)   \leq 1$, where $\mu(x)$ is the pdf at $x$.
For $\calI_0$ and $\calI_4$, we use the tail estimate. 

Firstly, if $\delta^2 \leq 0.1$,
\begin{equation*}
    p(\zeta) = \eta T_d \Paren{-1-\delta^2} \leq \eta e^{\Theta(d\delta ) } 
\end{equation*}

For the interval $\calI_2$, $p(x) = \eta T_d(y)$ where $y$ lies in the range $[-1-\delta^2, -1]$ but $T_d$ monotonically decreases in this range, therefore $p(x)$ 
 lies in the interval $[\eta, \eta e^{\Theta(d\delta )}]$, implying 
\begin{equation*}
    \int_{x \in \calI_2} p^2\Paren{x} \mu(x)dx  \leq \int_{\calI_2} 1 dx  \leq 2\delta L .
\end{equation*}
Next, observe that for $x \in \calI_1 \cup \calI_3$, we have $p(x) = \eta T_d(y)$ where $y \in [-1, 1]$ but for such a $y$, we have $T_d(y) \in [-1, 1]$ (simply because $T_d(\cos\theta) = \cos (d\theta)$), which means $p(x) \in [-\eta, \eta]$. Therefore,  
\begin{equation*}
    \int_{x \in \calI_1} p^2\Paren{x} \mu(x)dx + \int_{x \in \calI_3} p^2\Paren{x} \mu(x)dx \leq \int_{x \in \calI_1 \cup \calI_3} \eta^2  \mu(x) dx  \leq 4 \eta^2 (L + \zeta)  .
\end{equation*}

Finally, for $\calI_0 \cup \calI_4$, it follows from \cite[Fact A.3]{karmalkar2019list}, and that in this interval,  $T_d\Paren{ \Paren{ \frac{x-\zeta}{L}}^2 - 1 - \delta^2 } \leq \Paren{ 2 \frac{x-\zeta}{L} }^{2d}$, 
\begin{equation*}
\begin{split}
    \int_{x \in \calI_0 \cup \calI_4} p^2\Paren{x} \mu(x)dx & \leq  \int_{x \in \calI_0 \cup \calI_4 } \frac{ \eta^2 2^{4d} }{L^{4d}}  (x+\zeta)^{4d} \exp\Paren{-\abs{x}^{\alpha} }dx  \\
    & \leq \int_{x \in \calI_0 \cup \calI_4 } \frac{ \eta^2 2^{8d} }{L^{4d}}  \cdot \Paren{  \zeta^{4d} +  x^{4d} } \exp\Paren{- \abs{x}^{\alpha} }dx  \\
    & \leq  \frac{ \eta^2 2^{8d} \zeta^{4d} }{L^{4d}} \int_{x\in \calI_0} \exp\Paren{ - \abs{x}^a } dx  + \frac{ \eta^2 2^{2d} }{L^{4d}} \exp\Paren{ -L^{\alpha} } \Paren{ L^{8d} + \Paren{32 d/\alpha}^{4d/\alpha} }
\end{split}
\end{equation*}
Then, combining all bounds above, and setting $L =1/\delta + \zeta$, and $\eta = \delta/L$, we have
\begin{equation*}
    \begin{split}
        \expecf{x\sim \calD}{ p^2\Paren{x} } & \leq  p(0) + 2 \int_{x \in \calI_1} p^2\Paren{x} \mu(x)dx + 2\int_{x \in \calI_2} p^2\Paren{x} \mu(x)dx + \int_{x \in \calI_3} p^2\Paren{x} \mu(x)dx\\
        & \leq \eta e^{\Theta(d\delta ) }  + 2\delta L + \eta^2 (L+\zeta) + \frac{ \eta^2 2^{2d} }{L^{4d}} \exp\Paren{ -L^{\alpha} } \Paren{ L^{8d} + \Paren{32 d/\alpha}^{4d/\alpha} } + \frac{ \eta^2 2^{2d} \zeta^{4d} }{L^{4d}}\\
        & \leq O(\delta),
    \end{split}
\end{equation*}
when  $d = \Omega( \log(1/\eta) \Paren{ L^2 }) = \Omega\Paren{ \log\Paren{ \frac{1}{\eta \delta} + \frac{\zeta}{\delta} }  \cdot \Paren{ \frac{1}{ 1/\delta^2  } + \zeta^2 } } $.
\end{proof}

We also have the following fact, which can be shown using the same argument as \cite{karmalkar2019list}.

\begin{fact}[Anti-Concentration for Sub-exponential distributions]
\label{lemma:box-indicator-small-for-samples}
Given samples $\Set{x_i }_{i \in [n]}$ from a $\delta$ anti-concentrated strictly sub-exponential distribution $\calD$, it follows that
\begin{equation*}
    \frac{1}{n}\sum_{i \in [n]} p^2\Paren{\Iprod{x_i, v}^2 } \leq \delta.
\end{equation*}
\end{fact}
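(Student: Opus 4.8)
The plan is to reduce the statement to a one-dimensional estimate, prove that estimate at the population level using only $\delta$-anti-concentration together with the strictly sub-exponential tail, and then transfer it to the empirical average by a standard concentration argument. Fix a unit vector $v$ (the statement for a general direction follows after normalizing the marginal to unit variance, using the affine-invariant form of anti-concentration; the version uniform over all $v$ is addressed at the end). Taking $\calD$ isotropic, the univariate marginal $z = \Iprod{x,v}$ has mean $0$ and unit variance, is $\delta$-anti-concentrated (every interval of length $\delta$ has $\calD$-mass $O(\delta)$), and satisfies $\Pr[\,|z| \ge t\,] \le \exp(-t^{1+\eps}/c)$. It thus suffices to show $\expecf{x\sim\calD}{p^2(z^2)} \le O(\delta)$ and then that $\tfrac1n\sum_i p^2(\Iprod{x_i,v}^2)$ exceeds this by at most $O(\delta)$ with high probability; the stated inequality $\le \delta$ then follows after absorbing the absolute constants (e.g.\ by rescaling the parameter $\delta$ inside \cref{def:box-indicator}).

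For the population estimate I would mimic the proof of \cref{thm:anti-conc-with-shifts-gaussian} almost verbatim, partitioning $\R$ into the five intervals $\calI_0,\calI_1,\calI_2,\calI_3,\calI_4$ associated with the scale $L$ and shift $\zeta$ of \cref{def:box-indicator}. On $\calI_1$ and $\calI_3$ the Chebyshev oscillation bound gives $|p| \le \eta$, so these contribute at most $\eta^2$. On $\calI_2$ --- where the Gaussian proof uses a pointwise density bound that is unavailable here --- I would instead cover $\calI_2$ by $O(|\calI_2|/\delta)$ intervals of length $\delta$, bound the $\calD$-mass of each by $O(\delta)$ using $\delta$-anti-concentration, and multiply by $\max_{\calI_2} p^2$; since $p$ only overshoots very close to the center of the box and is otherwise controlled, this sum is $O(\delta)$ once the constants of \cref{def:box-indicator} are small enough. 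On $\calI_0$ and $\calI_4$ the polynomial grows only polynomially, $p^2(x) \le \eta^2(2(x-\zeta)/L)^{4d}$, while the strictly sub-exponential tail $\exp(-|x|^{1+\eps}/c)$ decays faster than any polynomial, so $\int_{\calI_0\cup\calI_4} p^2\,d\mu$ is negligible provided $d$ lies in the regime dictated by \cref{thm:anti-conc-with-shifts-gaussian} (the exponent $\eps$ affects only the constants). Summing the four contributions yields $\expecf{x\sim\calD}{p^2(\Iprod{x,v}^2)} \le O(\delta)$.

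For the empirical step, note that $Y_i := p^2(\Iprod{x_i,v}^2)$ are i.i.d., non-negative, have mean $O(\delta)$, and --- since by \cref{lem: cheby_coeff_bound} the coefficients of $p$ are at most $2^{O(d)}$ and $z$ is sub-exponential, so $\expecf{}{|z|^k} \le (Ck)^k$ --- satisfy $(\expecf{}{Y_i^k})^{1/k} \le k^{O(d)}\cdot\mathrm{poly}(1/\delta)$ for all $k$. A Bernstein/Rosenthal-type moment inequality then gives $\tfrac1n\sum_i Y_i \le \expecf{}{Y_i} + \delta$ with probability $1-\exp(-n^{\Omega(1)})$ once $n$ is at least a fixed polynomial in the parameters, which is comfortably met by the sample sizes $n = d^{\Omega(\log d)}$ used throughout. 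If the inequality is needed simultaneously for all $v \in S^{d-1}$ rather than a fixed (say sparse) direction, I would additionally take an $\eps$-net over the sphere of size $e^{O(d\log(1/\eps))}$ and use that $v\mapsto\tfrac1n\sum_i p^2(\Iprod{x_i,v}^2)$ is a degree-$O(d)$ polynomial, hence Lipschitz at the net scale once $\|x_i\|\le O(\sqrt{d\log n})$ (which holds w.h.p.\ for sub-exponential samples), so a union bound over the net suffices.

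The step I expect to be the main obstacle is the contribution of $\calI_2$ in the population bound: unlike the Gaussian case one only has the scale-$\delta$ averaged mass bound from $\delta$-anti-concentration, not a pointwise density bound, so one must carefully balance $|\calI_2|$ (equivalently the scale $L$) against the overshoot $\eta^2 e^{\Theta(d\delta)}$ of $p^2$ near the center of the box in order to keep the total at $O(\delta)$. The tail estimate on $\calI_0\cup\calI_4$ is the other delicate point, but it is essentially routine since super-linear exponential decay dominates polynomial growth.
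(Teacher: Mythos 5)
Your proposal is correct and follows essentially the route the paper intends: the paper states this as a fact with no written proof, deferring to the argument of \cite{karmalkar2019list}, and that argument is exactly your interval decomposition $\calI_0,\ldots,\calI_4$ with the Chebyshev oscillation bound on $\calI_1\cup\calI_3$, the anti-concentration hypothesis replacing the Gaussian density bound on the box, the sub-exponential tail dominating the polynomial growth of $p^2$ on $\calI_0\cup\calI_4$, and a standard moment/net argument to pass from the population to the empirical average. The delicate point you flag on $\calI_2$ (balancing the length of the region where $p^2=\Omega(1)$ against its overshoot) is a real bookkeeping issue in the choice of $\eta$, $L$, and the degree in \cref{def:box-indicator}, but it is resolved exactly as you describe and does not constitute a gap.
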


\section{Anti-concentration program}\label{sec: anticonc_program}

In this section, we introduce a polynomial system that certifies an upper bound on anti-concentration for a set of $n$ samples drawn from a \textit{reasonably anti-concentrated} distribution (\cref{def:reasonably-anti-concentrated-dist}).  
Formally, given $n$ i.i.d. samples $\Set{x_i}_{i \in [n]}$ drawn from a \textit{reasonably anti-concentrated} distribution $\calD$ with mean $0$ and covariance $I$,  and $\delta>0$, consider the following program:
\begin{equation}
\label{eqn:cons-system}
\begin{split}
    \calA_\delta = & \left \{\begin{aligned}
      &\forall i\in [n]
      & w_i^2
      & = w_i \\
      &\forall i\in [n] & w_i \Iprod{x_i, v}^2 & \leq w_i \delta^2  \\
      && \norm{v}_2^2 &=1 \\
    \end{aligned}\right\}
\end{split}
\end{equation}

Intuitively, $w_i$ are indicators indicating the concentrated points. When $w_i = 1$, we must have $\Iprod{x_i, v}^2 \leq \delta^2$ which ensures that $x_i$ is concentrated in the direction $v$, where we search over unit vectors $v$.
The main theorem we establish is that we can certify an upper bound of $O(\delta)$ on $\sum w_i$ over all feasible regions of the program above. Formally,

\begin{theorem}[Certifying Anti-Concentration]
\label{thm:main-anti-concentration-thm}
Given $\delta>0$, and $n \geq n_0$ samples $\Set{x_1, x_2, \ldots , x_n} \subseteq \mathbb{R}^d$ for some $n_0 \ge d^t$, sampled from a $(\delta, \exp\paren{1/\delta^2}, O(1) )$ \textit{reasonably anti-concentrated} distribution $\calD$ (\cref{def:reasonably-anti-concentrated-dist}),  let $t = \Paren{ \log(d) \cdot {\exp\Paren{1/\delta^4}^{ \exp\Paren{ 1/\delta^2} } } }$. Then, there is a degree-$t$ certificate of anti-concentration, i.e.  
\begin{equation*}
    \calA_\delta \sststile{t}{w,v} \Set{  \frac{1}{n} \sum_{i \in [n]} w_i  \leq \delta },
\end{equation*}
\end{theorem}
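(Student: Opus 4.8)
The plan is to argue \emph{indirectly}, through pseudo-distributions and SoS duality, exactly as indicated in the technical overview. Since $\calA_\delta$ is explicitly bounded (the constraints $\norm{v}_2^2=1$ and $w_i^2=w_i$ force $\sum_i w_i^2 = \sum_i w_i \le n$), \cref{fact:sos-completeness} tells us it suffices to show that \emph{every} degree-$t$ pseudo-distribution $\mu$ satisfying $\calA_\delta$ also satisfies $\tfrac1n\pexpecf{\mu}{\sum_i w_i}\le \delta$ up to negligible additive slack; then SoS completeness produces the degree-$t$ certificate $\calA_\delta\sststile{t}{w,v}\Set{\tfrac1n\sum_i w_i\le\delta}$ (running the internal argument with a constant-factor smaller anti-concentration parameter absorbs the $O(1)$ losses below). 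So fix such a $\mu$ and suppose toward a contradiction that $\tfrac1n\pexpecf{\mu}{\sum_i w_i} > C\delta$ for a large absolute constant $C$.

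First I would \emph{fix the objective}: reweight $\mu$ by the sum-of-squares polynomial $\bigl(\sum_i w_i\bigr)^{2t}$ to obtain a pseudo-distribution $\mu_1$. By the scalar fixing argument behind \cref{fact:scalar-reweighting-monotone} (applied to the scalar indeterminate $z=\tfrac1n\sum_i w_i$, which is bounded in $[0,1]$), the objective of $\mu_1$ is still $>\tfrac{C}{2}\delta$ and, crucially, it cannot be changed by more than an $o(\delta)$ amount under \emph{any} further reweighting by a sum-of-squares polynomial of degree $\le t$ in $(w,v)$. Next I apply the iterative polynomial-reweighting theorem (\cref{thm:key-re-weighting-theorem}, realized by \cref{algo:re-weighting}) to $\mu_1$: there is a reweighting $\mu'$ of $\mu_1$ by a polynomial of degree at most $\mathrm{poly}(t)$, together with a \emph{fixed} $O(1/\lambda)$-sparse vector $u_\calS$ containing no indeterminates, such that either (i) $\pexpecf{\mu'}{\norm{v-u_\calS}_4^4}\le\eta$, or (ii) $\pexpecf{\mu'}{\norm{v-u_\calS}_2^{2t}}\le\eta$, where $\eta$ can be made as small as desired at the cost of increasing $t$, and $t=\Omega(\log d)$ is forced. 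In case (i) one in fact obtains the \emph{weighted} bounds $\pexpecf{\mu'}{\norm{v-u_\calS}_2^{2\ell}\norm{v-u_\calS}_4^4}\le\eta$ for all relevant $\ell$ (\cref{lem:upgraded-muirhead-sos}), which is precisely the form in which the analytic dense-direction certificate consumes the $\ell_4^4$-smallness.

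The third step bounds $\tfrac1n\pexpecf{\mu'}{\sum_i w_i}$ in both cases, following \eqref{eqn:bounding-theobjective-intro}. Using that the box-indicator polynomial of \cref{def:box-indicator} satisfies $p^2(0)=1$ and that $w_i$ forces $\langle x_i,v\rangle^2\le\delta^2$ (so $p^2$ evaluated at the concentrated scale is $\ge 1$), we get, after dropping the indicators, $\tfrac1n\pexpecf{\mu'}{\sum_i w_i}\le\tfrac1n\pexpecf{\mu'}{\sum_i p^2\bigl(\langle x_i,u_\calS\rangle+\langle x_i,v-u_\calS\rangle\bigr)}$, and we treat the fixed scalar $\phi_i:=\langle x_i,u_\calS\rangle$ as a shift. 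In case (i) I invoke the analytic certificate for shifted anti-concentration along $v-u_\calS$ (\cref{thm:anti-concentration-certificates-with-shifts}); since $u_\calS$ is a fixed $O(1/\lambda)$-sparse vector and $\calD$ is strictly sub-exponential, $|\phi_i|$ is bounded for all but a negligible fraction of $i$ (\cref{fact:sub-exponential-dist-bounded}), so every good term is $O(\delta)$ and hence $\tfrac1n\pexpecf{\mu'}{\sum_i w_i}=O(\delta)$. In case (ii) I expand $p^2(\phi_i+\langle x_i,v-u_\calS\rangle)$ in the monomial basis, absorb every monomial carrying a positive power of $\langle x_i,v-u_\calS\rangle$ into an additive error controlled by $\pexpecf{\mu'}{\norm{v-u_\calS}_2^{2\ell}}\le\eta$ (\cref{lem:case1-decomposition}), and am left with $\tfrac1n\sum_i p^2(\phi_i)$; since each $\phi_i$ is a \emph{fixed} scalar (no indeterminates) and $u_\calS$ is sparse, univariate anti-concentration of $\calD$ (\cref{lemma:box-indicator-small-for-samples}, \cref{lem:re-weighting-to-certificate}) gives $\tfrac1n\sum_i p^2(\phi_i)=O(\delta)$. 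Either way $\tfrac1n\pexpecf{\mu'}{\sum_i w_i}<\delta/100$ (choosing $C$ large and $\eta$ small), contradicting the fact that the objective of $\mu_1$ was $>\tfrac{C}{2}\delta$ and immovable under degree-$\le t$ reweightings. This contradiction completes the argument; the parameters of \cref{thm:key-re-weighting-theorem} and the degree of $p$, which is $O(1/\delta^2+\max_i|\phi_i|)$, determine the stated value of $t$.

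The main obstacle — and the reason we cannot simply write down an explicit SoS proof — is establishing the iterative reweighting theorem \cref{thm:key-re-weighting-theorem}: one must design a potential on pseudo-distributions that decreases monotonically under the step ``find a coordinate $v_i$ with $\pexpecf{\mu}{v_i^{2t}}\ge\lambda^{2t}\pexpecf{\mu}{\norm{v}_2^{2t}}$, reweight by $v_i^{2t}$, and translate by $\pexpecf{\mu}{v_i}$'', so that the process terminates after few iterations while the conditioning achieved in one step is not undone by later reweightings; this is exactly where $t=\Omega(\log d)$ becomes necessary, as witnessed by the uniform pseudo-distribution over standard basis vectors. A secondary technical point is the bookkeeping of \cref{lem:upgraded-muirhead-sos} needed to pass from the pseudo-expectation bound on $\norm{v-u_\calS}_4^4$ to the weighted form the analytic certificate requires, together with tracking how $\max_i|\langle x_i,u_\calS\rangle|$ inflates the degree of the box polynomial.
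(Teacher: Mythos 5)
Your overall architecture is the same as the paper's: argue by contradiction over pseudo-distributions, fix the objective by reweighting with a large power of $\sum_i w_i$, run the iterative coordinate-reweighting to land in either the almost-sparse or the analytically dense case, bound the objective in each case exactly as in \cref{lem:case1-decomposition}, \cref{lem:case2-analytic-density} and \cref{lem:re-weighting-to-certificate}, and then invoke \cref{fact:sos-completeness}. However, there is one genuine gap, and it sits at precisely the step the paper flags as delicate. You assert that after the fixing step the objective of $\mu_1$ ``cannot be changed by more than an $o(\delta)$ amount under \emph{any} further reweighting by a sum-of-squares polynomial of degree $\le t$.'' This is false: a reweighting by an SoS polynomial $s$ moves $\pexpecf{\mu_1}{z}$ to $\pexpecf{\mu_1}{zs}/\pexpecf{\mu_1}{s}$, and if $\pexpecf{\mu_1}{s}$ is tiny relative to $\pexpecf{\mu_1}{s^2}^{1/2}$ the fixed moments $\pexpecf{\mu_1}{(z-m)^{2q}}\le\eps^{2q}m^{2q}$ give you no control (think of a reweighting that concentrates all mass on an exponentially small ``component'' of $\mu_1$ on which the objective is $0$). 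Were your claim true, the existence of \emph{any} objective-lowering reweighting would yield an SoS proof, which the paper explicitly warns is not the case.

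The missing idea is to exploit the specific structure of the reweighting polynomial produced by \cref{algo:re-weighting}: it is $z=\prod_{i\in\calS}v_{i_\ell}^{2t'}$, and the selection rule $\pexpecf{\mu_{\ell-1}}{v_{i_\ell}^{2t'}}\ge\zeta^{2t'}\pexpecf{\mu_{\ell-1}}{\norm{v}_2^{2t'}}$ together with the constraint $\norm{v}_2^2=1$ gives the two bounds $\pexpecf{\mu'}{z}\ge\zeta^{2t'}$ and $\pexpecf{\mu'}{z^2}\le 1$. Only with this lower bound on $\pexpecf{\mu'}{z}$ can one run Cauchy--Schwarz,
\begin{equation*}
\Paren{\pexpecf{\mu''}{\textstyle\sum_i w_i-\pexpecf{\mu}{\sum_i w_i}}}^{4q}\;\le\;\frac{n^{4q}\,\pexpecf{\mu'}{\bigl(\tfrac1n\sum_i w_i-\pexpecf{\mu}{\tfrac1n\sum_i w_i}\bigr)^{4q}}\,\pexpecf{\mu'}{z^2}}{\pexpecf{\mu'}{z}^2}\;\le\;\eps^{2q}n^{4q}/\zeta^{4t'},
\end{equation*}
take $4q$-th roots with $\eps=\zeta/(10\delta)$, and conclude that the objective moved by at most $\delta n/10$. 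Without tracking that $z$ has non-negligible pseudo-expectation under $\mu'$, the contradiction does not go through. The rest of your outline (the case analysis, the shift bookkeeping via \cref{fact:sub-exponential-dist-bounded}, and the passage from the $\ell_4^4$ bound to the weighted form via \cref{lem:upgraded-muirhead-sos}) matches the paper.
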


At a high level, we do not provide a direct sum-of-squares proof of this inequality. Instead, we reason about degree-$t$ pseudo-distributions to infer the existence of a sum-of-squares proof (see \cref{fact:sos-completeness}).
Given a degree-$t$  pseudo-distribution $\mu$ that is consistent with the constraint system $\calA_\delta$ in Equation \eqref{eqn:cons-system}, we perform several re-weightings that correspond to conditions on entries in $v$ such that either the resulting vector is analytically dense (i.e. the $q$-norm is bounded by a small constant times the $2$-norm ) or $\exp(1/\delta)$ moments of  $\ell_2$ norm are small. Then, we use the analytic properties of the re-weighted pseudo-distribution to certify an upper bound on the objective in \eqref{eqn:cons-system}.

More concretely, we abstract out three lemmas that help us complete the proof of Theorem~\ref{thm:main-anti-concentration-thm}. We defer the proofs of these lemmas to the subsequent subsections.

We begin by showing that if the re-weighting from Algorithm~\ref{algo:re-weighting} outputs a pseudo-distribution $\mu'$ and a constant $u_\calS$ such that the pseudo-moments of $\norm{v- u_\calS}_2$ are small, then we can simply decompose the polynomial approximation to the box indicator function and appeal to anti-concentration along the direction $u_\calS$. Formally, we have the following lemma: 


\begin{restatable}[Decomposing the box-indicator for \emph{almost-sparse} directions]{lemma}{AlmostSparse}
\label{lem:case1-decomposition}
Given $1> \gamma>0$, let  $t \in \mathbb{N}$, and let   $p$ be the degree-$2t$ polynomial from  \cref{def:box-indicator}. Let $\Set{x_i}_{i \in [n]}$ be $n$ i.i.d. samples from a \textit{reasonably anti-concentrated} distribution $\calD$ (see \cref{def:reasonably-anti-concentrated-dist}). 
Let $\mu'$ be a degree-$\phi$ pseudo-distribution such that $\phi\geq 4t$ and let $u_{\calS}$ be a fixed constant vector supported on $|\calS| \ll d$ coordinates.  If for all $\ell \in [t]$,  $\pexpecf{\mu'}{ \norm{v -  u_{\calS } }_2^{2\ell} } \leq \gamma^{2\ell}$, then 
\begin{equation*}
    \pexpecf{\mu'}{ \frac{1}{n} \sum_{i \in [n] }   p^2\Paren{\Iprod{x_i, v  }^2  } }  \leq   \frac{1}{n}\sum_{i \in [n] }  p^2\Paren{  \Iprod{x_i,  u_\calS }^2 }   + O\Paren{2^{O(t)} c_{2t} \gamma  },
\end{equation*}
where $c_{2t}$ is the hypercontractivity constant of $\calD$.
\end{restatable}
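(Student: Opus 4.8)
The plan is to expand $p^2\Paren{\Iprod{x_i,v}^2}$ around the ``sparse part'' $\Iprod{x_i,u_\calS}^2$ by writing $v = u_\calS + (v - u_\calS)$ and treating $r := \Iprod{x_i, v - u_\calS}$ as a small perturbation. Concretely, I would set $s_i := \Iprod{x_i, u_\calS}$ and $r_i := \Iprod{x_i, v - u_\calS}$, so that $\Iprod{x_i,v}^2 = s_i^2 + 2 s_i r_i + r_i^2$. Since $p$ has degree $2t$, the polynomial $p^2$ has degree $4t$, so $p^2\Paren{(s_i + r_i)^2}$ is a polynomial of degree $8t$ in $s_i, r_i$ jointly. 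Writing it in the monomial basis in $r_i$, we get $p^2\Paren{(s_i+r_i)^2} = p^2(s_i^2) + \sum_{j=1}^{8t} a_{i,j}(s_i)\, r_i^{j}$, where each coefficient $a_{i,j}(s_i)$ is a polynomial in $s_i$ whose magnitude is controlled: by \cref{lem: cheby_coeff_bound} the coefficients of $p$ are $2^{O(t)}$, and by \cref{fact:sub-exponential-dist-bounded} each $|s_i| \le \log(1/\delta') \norm{u_\calS}_2$ is a fixed bounded scalar (here one uses strict sub-exponentiality of $\calD$ together with $|\calS| \ll d$, losing only a negligible failure probability over the draw of $X$), so $|a_{i,j}(s_i)| \le 2^{O(t)}$.

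The main step is then to bound $\pexpecf{\mu'}{\frac1n \sum_i \sum_{j\ge 1} a_{i,j}(s_i) r_i^j}$ in absolute value by $O\Paren{2^{O(t)} c_{2t} \gamma}$. For each fixed $j$, I would pull out the scalar coefficient and bound $\Abs{\pexpecf{\mu'}{\frac1n\sum_i a_{i,j}(s_i) r_i^j}} \le 2^{O(t)} \cdot \frac1n\sum_i \pexpecf{\mu'}{|r_i|^j}$, handling odd $j$ via Cauchy--Schwarz for pseudo-distributions (\cref{fact:pseudo-expectation-cauchy-schwarz}) to pass to an even exponent, i.e. $\pexpecf{\mu'}{r_i^j} \le \sqrt{\pexpecf{\mu'}{r_i^{2\lceil j/2\rceil}} \cdot \pexpecf{\mu'}{r_i^{2\lfloor j/2\rfloor}}}$. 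Now $r_i^{2\ell} = \Iprod{x_i, v-u_\calS}^{2\ell}$, and averaging over $i$ and using certifiable hypercontractivity of linear forms of $\calD$ (\cref{def:certifiable-hypercontractivity-of-linear}, with constant $c_{2t}$), there is an SoS proof that $\frac1n\sum_i \Iprod{x_i,v-u_\calS}^{2\ell} \le c_{2t}^{2\ell}\Paren{\frac1n\sum_i \Iprod{x_i,v-u_\calS}^2}^{\ell} = c_{2t}^{2\ell}\, \norm{v-u_\calS}_{2,X}^{2\ell}$, where $\norm{\cdot}_{2,X}$ is the empirical norm; since $X$ is in (near-)isotropic position this empirical quadratic form is $(1\pm o(1))\norm{v-u_\calS}_2^2$ up to the usual matrix-concentration estimate on $\frac1n\sum x_ix_i^\top$. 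Taking pseudo-expectations and invoking the hypothesis $\pexpecf{\mu'}{\norm{v-u_\calS}_2^{2\ell}}\le \gamma^{2\ell}$ for all $\ell\in[t]$ (note $\ell$ ranges only up to $\lceil 8t/2 \rceil = 4t$, so I would either assume the hypothesis for $\ell \le 4t$ or, as stated, truncate the expansion — I'd double-check the degree bookkeeping here and, if needed, note that $p^2$ composed with a square has degree $\le 4t$ in $\Iprod{x_i,v}$, hence $\le 4t$ in $r_i$, keeping $\ell \le 2t \le t$ after the extra factor... this is precisely the bookkeeping point to nail down), we get $\frac1n\sum_i\pexpecf{\mu'}{|r_i|^j} \le (c_{2t}\gamma)^{j}\cdot(1+o(1))$. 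Since $\gamma < 1$ and $c_{2t}\gamma$ can be assumed $\le 1$ (otherwise the claimed bound is vacuous), the geometric sum over $j\in[8t]$ is dominated by its first term $O(c_{2t}\gamma)$, and multiplying by the $2^{O(t)}$ coefficient bound and the number $O(t)$ of terms yields the claimed error $O\Paren{2^{O(t)} c_{2t}\gamma}$. Finally, the $j=0$ term is exactly $\frac1n\sum_i p^2(s_i^2) = \frac1n\sum_i p^2\Paren{\Iprod{x_i,u_\calS}^2}$, which is the main term in the statement, completing the argument.

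\textbf{Main obstacle.} The delicate point is the interplay between (i) the degree of the pseudo-distribution $\mu'$ (only $\phi \ge 4t$ is assumed) and (ii) the degrees of the moments $\pexpecf{\mu'}{\norm{v-u_\calS}_2^{2\ell}}$ that we need — the Cauchy--Schwarz step and the hypercontractivity substitution rule each cost degree, and one must verify that all pseudo-expectations invoked stay within degree $\phi$. A secondary subtlety is that the hypothesis is phrased for the \emph{pure} $\ell_2$-moments of $v-u_\calS$, but hypercontractivity naturally produces the \emph{empirical} quadratic form $\frac1n\sum_i\Iprod{x_i,v-u_\calS}^2$; bridging these requires an SoS-friendly spectral bound $\norm{\frac1n\sum_i x_ix_i^\top - I}_{\mathrm{op}} = o(1)$, which holds with high probability for $n \ge d^t$ by standard matrix concentration and can be absorbed into the $o(1)$ slack. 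I expect the degree bookkeeping to be the real work; everything else is a routine expansion plus Cauchy--Schwarz plus the already-established hypercontractivity certificate.
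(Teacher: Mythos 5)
Your proposal follows essentially the same route as the paper: decompose $\Iprod{x_i,v}=\Iprod{x_i,u_\calS}+\Iprod{x_i,v-u_\calS}$, expand $p^2$ in the monomial basis so the zeroth-order term is the main term $\frac1n\sum_i p^2(\Iprod{x_i,u_\calS}^2)$, then control the cross terms by Cauchy--Schwarz for pseudo-distributions, certifiable hypercontractivity of linear forms over the empirical distribution, and the hypothesis on the pseudo-moments of $\norm{v-u_\calS}_2$. The one place you genuinely diverge is the treatment of the shift factors $s_i=\Iprod{x_i,u_\calS}$: you bound $\abs{s_i}$ \emph{pointwise} via the sub-exponential tail, which for it to hold simultaneously for all $n\geq d^t$ samples forces $\abs{s_i}\lesssim\log n$, so your coefficient bound becomes $\abs{a_{i,j}(s_i)}\leq 2^{O(t)}(\log n)^{O(t)}$ rather than $2^{O(t)}$ — strictly weaker than the stated constant (though absorbable in the downstream application, where $\gamma$ is taken tiny). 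The paper instead keeps the shifts in aggregate: after a pseudo-Cauchy--Schwarz split it bounds $\frac1n\sum_i s_i^{2\ell}\leq K_{2\ell}\bigl(u_\calS^\top(\frac1n\sum_i x_ix_i^\top)u_\calS\bigr)^{\ell}\leq 2K_{2\ell}$ via hypercontractivity of the sample distribution, which avoids any maximum over samples and yields the clean $2^{O(t)}c_{2t}\gamma$ error. Your flagged concern about which $\ell_2$-moments of $v-u_\calS$ are actually covered by the hypothesis ($\ell\in[t]$ versus the exponents up to roughly $4t$ produced after Cauchy--Schwarz) is legitimate, and the paper's own writeup glosses over the same bookkeeping; it is resolved by noting the hypothesis can be invoked at the larger range of exponents (or by adjusting the degree parameters), not by anything structural. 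So: same proof, with one implementation choice that costs you a $(\log n)^{O(t)}$ factor you should either absorb explicitly or replace with the averaged hypercontractivity bound.
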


%
Next, we show that if instead, the re-weighting from Algorithm~\ref{algo:re-weighting} outputs a pseudo-distribution such that the pseudo-expectation of $\norm{ v - u_{\calS} }^{2z}_{2z}$ is small, we can invoke our sum-of-squares certificate of anti-concentration with shifts from \cref{sec:dense-cert-anti-conc} to certify a bound on the box-indicator polynomial.


\begin{restatable}[Anti-Concentration with Shifts]{lemma}{Shifts}
\label{lem:case2-analytic-density}
Given $\zeta>0$, 
a pseudo-distribution $\mu$ of degree $t = \Omega\Paren{ \log(d) \cdot \exp\Paren{ 1/\zeta^4 }^{\exp\Paren{1/\zeta^4} }  }  $,  and a vector $u_{\calS}\in \mathbb{R}^d$ supported on $\abs{\calS}$ coordinates for some $\calS \subseteq [d]$, 
let $\Set{x_i}_{i \in [n]}$ be a set of $n$ points such that for all $i \in [n]$, $\Iprod{x_i, u_\calS }^2 \leq \bigO{\log(1/\zeta)}$. 
Let $q$ be the polynomial from Theorem~\ref{thm:anti-concentration-certificates-with-shifts} with degree $\bigO{ \exp\Paren{1/\zeta^2}}$. If $$\pexpecf{\mu}{\norm{ v - u_\calS }_{2z}^{2z} } \leq \zeta^{2z} \pexpecf{\mu}{\norm{ v - u_\calS }_2^{2z} },$$
for $z = \Omega\Paren{ \log(d) \exp\Paren{1/\zeta^2} }$,
then we have  
\begin{equation*}
   \pexpecf{\mu}{ \frac{1}{n} \sum_{i \in [n]} q^2 \Paren{ \Iprod{x_i , v} } } \leq \zeta  \cdot  \pexpecf{\mu}{\norm{v - u_\calS }^2_2 } + \frac{1}{\exp\Paren{1/\zeta^2 }} 
\end{equation*}
\end{restatable}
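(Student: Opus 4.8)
The plan is to recognize the statement as an instance of the \emph{shifted} anti-concentration certificate of \cref{thm:anti-concentration-certificates-with-shifts}, read along the residual direction $w := v - u_\calS$ with the fixed scalars $\phi_i := \Iprod{x_i, u_\calS}$ playing the role of per-sample shifts. First I would substitute $\Iprod{x_i,v} = \Iprod{x_i, w} + \phi_i$, so that $q^2\Paren{\Iprod{x_i,v}}$ becomes a polynomial of degree $2\deg(q) = O\Paren{\exp(1/\zeta^2)}$ in the single indeterminate $w$; by hypothesis $\Abs{\phi_i} \le O\Paren{\sqrt{\log(1/\zeta)}}$ for every $i$, which affects the degree of $q$ only mildly (the degree grows polynomially in the shift magnitude, while it grows like $\exp(1/\zeta^2)$ in the scale), so the stated $\deg(q)$ suffices. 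I would also record the cheap sum-of-squares bound $\Set{\norm{v}_2^2 = 1} \sststile{}{v} \Set{\norm{w}_2^2 \le 2 + 2\norm{u_\calS}_2^2 = O(1)}$, and hence $\norm{w}_2^{2\ell} \le O(1)^{\ell}$ for the exponents $\ell$ that appear below; these are needed to multiply the analytic-density inequality by bounded powers of $\norm{w}_2^2$ without losing control.

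Next I would invoke \cref{thm:anti-concentration-certificates-with-shifts} along $w$ with shifts $\phi_i$. This produces a bounded-degree sum-of-squares identity certifying $\tfrac1n\sum_{i} q^2\Paren{\Iprod{x_i,w} + \phi_i} \le \zeta\norm{w}_2^2 + \exp(-1/\zeta^2)$, in which the analytic-density hypothesis enters \emph{only} through the inequality $\exp(-1/\zeta^2)\norm{w}_2^{2z'} - \norm{w}_{2z'}^{2z'} \ge 0$, at a moment level $z' = O\Paren{\exp(1/\zeta^2)}$, multiplied by sum-of-squares polynomials that are themselves non-negative combinations of the powers $\norm{w}_2^{2\ell}$; this restricted, ``Muirhead-compatible'' way of consuming the density bound is exactly the content of \cref{lem:upgraded-muirhead-sos}. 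Taking pseudo-expectations under $\mu$, the genuine sum-of-squares part is non-negative by soundness (\cref{fact:sos-soundness}), so it remains to lower bound each contribution $\pexpecf{\mu}{\norm{w}_2^{2\ell}\Paren{\exp(-1/\zeta^2)\norm{w}_2^{2z'} - \norm{w}_{2z'}^{2z'}}}$, i.e.\ to upper bound $\pexpecf{\mu}{\norm{w}_2^{2\ell}\norm{w}_{2z'}^{2z'}}$.

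This last step is where the real work lies, and it is the main obstacle. Our hypothesis supplies only the single scalar inequality $\pexpecf{\mu}{\norm{w}_{2z}^{2z}} \le \zeta^{2z}\pexpecf{\mu}{\norm{w}_2^{2z}}$ at the \emph{very large} moment $z = \Omega\Paren{\log d \cdot \exp(1/\zeta^2)}$, whereas the certificate consumes the density bound at the lower moment $z'$ and with an extra $\norm{w}_2^{2\ell}$ factor. I would bridge this with a log-convexity (Lyapunov/Muirhead) interpolation in the moment parameter: since $2 \le 2z' \le 2z$, the quantity $\norm{w}_{2z'}^{2z'}$ is controlled by a product of an appropriate power of $\norm{w}_2^2$ and of $\norm{w}_{2z}^{2z}$ (the sum-of-squares form of this, compatible with the multipliers that occur, is again \cref{lem:upgraded-muirhead-sos}), and then H\"older for pseudo-distributions (\cref{fact:pseudo-expectation-holder}) together with the hypothesis and the cheap bound $\norm{w}_2^2 \le O(1)$ reduces $\pexpecf{\mu}{\norm{w}_2^{2\ell}\norm{w}_{2z'}^{2z'}}$ to something of order $\zeta^{\Omega(z')}$ up to bounded factors. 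Because $z$ is chosen so large that $\zeta^{2z}$ is astronomically below $\exp(-1/\zeta^2)$, and $z'$, $\ell$ and $\deg(q)$ sit in the precise hierarchy dictated by the statement, this quantity is $\ll \exp(-1/\zeta^2)$, so the analytic-density contribution is at worst $-\exp(-1/\zeta^2)$ after summing. Folding in the remaining negligible errors (from $\norm{w}_2^2$ not being exactly $1$, and from the inherent additive slack of \cref{thm:anti-concentration-certificates-with-shifts}) and reassembling the certificate then gives $\pexpecf{\mu}{\tfrac1n\sum_i q^2\Paren{\Iprod{x_i, v}}} \le \zeta\,\pexpecf{\mu}{\norm{v - u_\calS}_2^2} + \exp(-1/\zeta^2)$, as claimed. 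The difficulty throughout is purely quantitative: every passage between moment levels $2 \leftrightarrow 2z' \leftrightarrow 2z$ and every multiplication by a power of $\norm{w}_2^2$ costs a multiplicative factor that must be dominated by the enormous gap between $\zeta^{2z}$ and the target additive error, which is exactly why $z$, $z'$, and $\deg(q)$ are tied together in the way the lemma specifies.
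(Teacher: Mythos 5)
Your proposal follows essentially the same route as the paper's proof: decompose $\Iprod{x_i,v}$ into the residual direction plus the fixed shifts $\Iprod{x_i,u_\calS}$, invoke \cref{thm:anti-concentration-certificates-with-shifts} via its explicit identity \eqref{eqn:explicit-polynomial-identity-sos-cert}, exploit that the multiplier $z(v-u_\calS)$ is a bounded combination of powers of $\norm{v-u_\calS}_2^2$, and control the resulting error terms through the moment interpolation of \cref{lem:upgraded-muirhead-sos}. The only cosmetic discrepancy is that the paper's certificate consumes the density hypothesis through the $\ell_4$-versus-$\ell_2$ axiom (so your $z'$ is effectively $2$) rather than at a general moment level $2z'$, which if anything makes the interpolation from the $2z$-moment hypothesis easier.
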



Combining \cref{lem:case1-decomposition,lem:case2-analytic-density}, we can now obtain the following re-weighting statement: 


\begin{restatable}[Re-weighting certifies Anti-Concentration]{lemma}{Reweighting}
\label{lem:re-weighting-to-certificate}
Given $0<\delta<1$ and  a degree $t$ pseudo-distribution $\mu$ over a vector valued indeterminate $v \in \mathbb{R}^d$, there exists a re-weighting of $\mu$, denoted by $\mu'$ such that 
\begin{equation*}
    \pexpecf{\mu'}{\sum_{i \in [n]} w_i  } < \delta n, 
\end{equation*}
if $t = \Omega\Paren{ \log(d) \Paren{ \exp\Paren{ 1/\delta^4 }^{\exp\Paren{1/\delta^4}} } }$. 
\end{restatable}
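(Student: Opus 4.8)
\textbf{Proof proposal for Lemma~\ref{lem:re-weighting-to-certificate}.}

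The plan is to run the iterative re-weighting procedure of Algorithm~\ref{algo:re-weighting} on the given pseudo-distribution $\mu$, and feed its output into \cref{lem:case1-decomposition} or \cref{lem:case2-analytic-density} depending on which of the two termination cases occurs, and then combine with the anti-concentration facts for the sparse direction $u_\calS$. First I would apply Algorithm~\ref{algo:re-weighting}: each step locates a coordinate $v_i$ with $\pexpecf{\mu}{v_i^{2t}} \geq \lambda^{2t}\pexpecf{\mu}{\norm{v}_2^{2t}}$, re-weights $\mu$ by $v_i^{2t}$, and subtracts the (now essentially fixed) scalar $\pexpecf{\mu}{v_i}$ into the accumulated sparse vector $u_\calS$. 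By the key re-weighting theorem (\cref{thm:key-re-weighting-theorem}) this terminates after $O(1/\lambda^2)$ iterations with a re-weighted $\mu'$ and a vector $u_\calS$ supported on $|\calS| \leq O(1/\lambda^2) \ll d$ coordinates, such that \emph{either} $\pexpecf{\mu'}{\norm{v-u_\calS}_2^{2\ell}} \leq \gamma^{2\ell}$ for all $\ell \in [t]$ (the \emph{almost-sparse} case), \emph{or} $\pexpecf{\mu'}{\norm{v-u_\calS}_{2z}^{2z}} \leq \zeta^{2z}\pexpecf{\mu'}{\norm{v-u_\calS}_2^{2z}}$ (the \emph{analytically dense} case), where $\gamma,\zeta$ are set to small functions of $\delta$ and $z,t$ are $\Omega(\log(d)\cdot\mathrm{poly}(\exp(1/\delta)))$, which is what forces the stated lower bound on the degree. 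Note that since $\mu$ is consistent with $\calA_\delta$, the $x_i$ are from a strictly sub-exponential distribution, so by \cref{fact:sub-exponential-dist-bounded} all but a $\delta$-fraction of the samples satisfy $\Iprod{x_i, u_\calS}^2 \leq O(\log(1/\delta))\norm{u_\calS}_2^2 \leq O(\log(1/\delta))$, which lets us invoke \cref{lem:case2-analytic-density} and also control the sparse-direction term.

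Next I would bound the objective. Starting from the chain in \eqref{eqn:bounding-theobjective-intro}: since $p^2(0)=1$ and $\mu'$ still satisfies (after re-weighting by equality-preserving sos polynomials) the constraints $w_i^2=w_i$ and $w_i\Iprod{x_i,v}^2 \leq w_i\delta^2$, we get $\frac{1}{n}\pexpecf{\mu'}{\sum_i w_i} = \frac{1}{n}\pexpecf{\mu'}{\sum_i w_i p^2(w_i\Iprod{x_i,v}^2)} \leq \frac{1}{n}\pexpecf{\mu'}{\sum_i p^2(\Iprod{x_i, u_\calS}^2 + \text{(terms in } v-u_\calS))}$, after dropping the indicators (which are at most $1$) and writing $\Iprod{x_i,v} = \Iprod{x_i,u_\calS} + \Iprod{x_i, v-u_\calS}$. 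In the almost-sparse case, \cref{lem:case1-decomposition} bounds this by $\frac{1}{n}\sum_i p^2(\Iprod{x_i,u_\calS}^2) + O(2^{O(t)}c_{2t}\gamma)$; in the analytically dense case, \cref{lem:case2-analytic-density} (applied with $q = p$, treating $\Iprod{x_i,u_\calS}$ as the shift $\zeta$) bounds it by $\zeta\cdot\pexpecf{\mu'}{\norm{v-u_\calS}_2^2} + \exp(-1/\zeta^2) \leq O(\zeta) + \exp(-1/\zeta^2)$. In either case the residual error is made $\leq \delta/2$ by choosing $\gamma,\zeta$ small enough (which is exactly what inflates $t$). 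For the leftover sparse term $\frac{1}{n}\sum_i p^2(\Iprod{x_i,u_\calS}^2)$: since $u_\calS$ is a \emph{fixed} vector supported on $\ll d$ coordinates, $\Iprod{x_i,u_\calS}$ is a scalar quantity with no indeterminates, so we reduce to the univariate anti-concentration statement — invoking \cref{lemma:box-indicator-small-for-samples} (or \cref{thm:anti-conc-with-shifts-gaussian} combined with the strict sub-exponential tails of the marginal of $\calD$ along $u_\calS/\norm{u_\calS}$, plus almost $k$-wise independence to control the law of the projection) to get $\frac{1}{n}\sum_i p^2(\Iprod{x_i,u_\calS}^2) \leq O(\delta) \leq \delta/2$. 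Adding up yields $\pexpecf{\mu'}{\sum_i w_i} < \delta n$.

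The main obstacle I anticipate is the bookkeeping of parameters across the re-weighting: the re-weighting degree $t$, the threshold $\lambda$, the number of iterations $O(1/\lambda^2)$, the sparsity $|\calS|$, the box-polynomial degree ($O(1/\delta^2)$ plus a shift contribution of $O(\log(1/\delta))$ from $\Iprod{x_i,u_\calS}$), and the hypercontractivity constant $c_{2t} = O(1)$ all have to be chosen consistently so that (i) the residual errors from \cref{lem:case1-decomposition,lem:case2-analytic-density} are below $\delta$, (ii) the required pseudo-distribution degree $t$ does not blow up past $\Omega(\log(d)\cdot\exp(1/\delta^4)^{\exp(1/\delta^4)})$, and (iii) all re-weightings are by sos polynomials of degree $<$ the degree of $\mu$, so that $\mu'$ remains a valid pseudo-distribution satisfying the (equality and inequality) constraints of $\calA_\delta$ — this last point relies on the re-weighting facts in \cref{subsec:re-weighting-background} and on the constraints being preserved under multiplication by sos polynomials of bounded degree. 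A secondary subtlety is that \cref{lem:case1-decomposition} and \cref{lem:case2-analytic-density} require slightly different hypotheses on the moments of $v-u_\calS$ (all low moments small vs.\ a single high-degree $\ell_{2z}$-vs-$\ell_2$ comparison), so I must make sure Algorithm~\ref{algo:re-weighting}'s termination dichotomy (\cref{thm:key-re-weighting-theorem}) delivers exactly one of these two forms — this is precisely why the re-weighting theorem is phrased with that two-case conclusion, and the verification that the parameters match is the crux of the argument.
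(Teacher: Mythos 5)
Your proposal follows essentially the same route as the paper's proof: apply the re-weighting dichotomy of \cref{thm:key-re-weighting-theorem}, invoke \cref{lem:case1-decomposition} in the almost-sparse case and \cref{lem:case2-analytic-density} (with the shift controlled by \cref{fact:sub-exponential-dist-bounded}) in the analytically dense case, and reduce the residual sparse-direction term to the univariate statement \cref{lemma:box-indicator-small-for-samples}. The one imprecision is the claimed \emph{equality} $\pexpecf{\mu'}{\sum_i w_i} = \pexpecf{\mu'}{\sum_i w_i\, p^2(w_i\Iprod{x_i,v}^2)}$, which holds only under the exact constraint $w_i\Iprod{x_i,v}^2=0$; with the inequality constraint $w_i\Iprod{x_i,v}^2\le w_i\delta^2$ one instead uses \cref{fact:univariate-sos-proofs} to get $p^2\ge 1/2$ on the relevant interval and hence an upper bound with a factor of $2$, exactly as in the paper's Equations \eqref{eqn:univariate-fact-main-proof}--\eqref{eqn:expanding-sum-w_i-case1}.
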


We defer the proofs of \cref{lem:case1-decomposition,lem:case2-analytic-density,lem:re-weighting-to-certificate} to subsequent sections. 
We are now ready to complete the proof of our main Theorem~\ref{thm:main-anti-concentration-thm} by inferring the existence of a sum-of-squares proof from the existence of the re-weighting in \cref{lem:re-weighting-to-certificate}. We note that such an implication is not true in general, and we heavily exploit the structure of our re-weighting strategy in the following proof.

\begin{proof}[Proof of Theorem~\ref{thm:main-anti-concentration-thm}]

We proceed via contradiction. We begin by assuming there exists some pseudo-distribution $\mu$ that witnesses an objective value that is large. We then re-weight this pseudo-distribution to obtain $\mu'$ such that this re-weighting fixes the objective value (up to small perturbations). Crucially, the objective value is fixed, even when the pseudo-distribution is re-weighted by significantly lower degree polynomials. Then, we use the iterative re-weighting scheme from \cref{lem:re-weighting-to-certificate} on $\mu'$ to obtain a pseudo-distribution $\mu''$ which must have a low objective value. Note, by fiat, the $\mu' \to \mu''$ re-weighting cannot change the objective value too much and this presents a contradiction. 

Assume for the sake of contradiction that there exists some degree $$t = \Omega\Paren{ \log(d) \cdot \Paren{ \exp\Paren{ 1/\delta^4 }^{\exp\Paren{1/\delta^4}} } }$$ pseudo-distribution $\mu$ such that $\pexpecf{\mu}{ \sum_{i \in [n]} w_i } > \delta n$. We now re-weight $\mu$ by $\Paren{ \frac{ \sum_{i \in [n]} w_i}{n} }^q$ for some $q\geq \Omega(t)$ to obtain $\mu'$, which essentially fixes the value of $\sum_{i \in [n] } w_i$. It follows from \cref{lem:scalar-reweighting} that 
\begin{equation}
    \pexpecf{\mu' }{ \Paren{ \frac{ \sum_{i \in [n]} w_i }{n}  - \pexpecf{\mu}{ \frac{ \sum_{i \in [n]} w_i }{n} } }^{2q} } \leq \eps^{2q} \Paren{ \pexpecf{\mu}{ \frac{ \sum_{i \in [n]} w_i }{n}} }^{2q}. 
\end{equation}

for $\eps$ as per that lemma. Next, it follows from \cref{lem:re-weighting-to-certificate} that there exists a re-weighting $\mu''$ of $\mu'$ such that $\pexpecf{\mu''}{ \sum_{i \in [n]} w_i  } \leq \delta n/10$.  
Now, as we will see in \cref{sec:re-weighting-pseudo-dist}, $\mu'$ was constructed by re-weighting $\mu$ iteratively by large powers of coordinates of $v$, and this process repeats $\gamma = \bigO{\exp\Paren{1/\delta^4}^{\exp\Paren{1/\delta^2}} }$ times.  In particular, let $\mu \to \mu'\to  \mu_1 \to \mu_2 \to \ldots \to \mu_\gamma  = \mu''$ be the intermediate re-weightings, where $\mu_\ell$ is obtained by picking a coordinate $i_\ell \in [d]$ such that $\pexpecf{\mu_{\ell-1}}{ v_{i_\ell}^{2t'} } \geq \zeta^{t'} \pexpecf{\mu_{\ell-1} }{ \norm{v}^{2t'} }$ where $\zeta = 1/\exp\Paren{1/\delta^2}$. Let $\calS \subset [d]$ be the subset of indices that are selected in this process and let $z = \prod_{i \in \calS} v_i^{2t'}$. Therefore,
\begin{equation}
\begin{split}
    \pexpecf{\mu'}{z}  = \pexpecf{\mu'}{ \prod_{i \in \calS} v_i^{2t'} } & =  \pexpecf{\mu_{\gamma-1} }{ v_{i_{\gamma} }^{2t'} } \geq \zeta^{2t'}  .
\end{split}
\end{equation}
Next, since $\Set{ \norm{v}^2 = 1 } $ is a constraint in our constraint system, we have $\pexpecf{\mu'}{ z^2 } \leq 1$.

Combining these two observations, we note that
\begin{equation}
    \begin{split}
        \Paren{  \pexpecf{\mu'' }{ \sum_{i \in [n]} w_i - \pexpecf{\mu}{\sum_{i \in [n]} w_i } } }^{4q} & \leq \Paren{  \pexpecf{\mu'' }{ \Paren{ \sum_{i \in [n]} w_i - \pexpecf{\mu}{\sum_{i \in [n]} w_i }}^{2q} } }^{2} \\
        & = \frac{ \Paren{  \pexpecf{\mu' }{ \Paren{ \sum_{i \in [n]} w_i - \pexpecf{\mu}{\sum_{i \in [n]} w_i }}^{2q} z }  }^{2} }{ \Paren{ \pexpecf{\mu' }{ z } }^2 } \\
        & \leq \frac{ n^{4q} \Paren{  \pexpecf{\mu' }{ \Paren{ \sum_{i \in [n]} w_i/n - \pexpecf{\mu}{\sum_{i \in [n]} w_i/n }}^{4q}  }  } \Paren{ \pexpecf{\mu'}{ z^2 } } }{ \Paren{ \pexpecf{\mu' }{ z } }^2 } \\
        & \leq \eps^{2q} \cdot n^{4q}/\zeta^{4t'} 
    \end{split}
\end{equation}
Taking the $4q$-th root and setting $\eps = \zeta/(10\delta)$,  we have
\begin{equation}
    \pexpecf{\mu}{ \sum_{i \in [n]} w_i  } \leq \pexpecf{\mu''}{ \sum_{i \in [n]} w_i } + \delta n / 10, 
\end{equation}
which is a contradiction. This allows us to conclude that for any degree-$t$ pseudo-distribution $\pexpecf{\mu}{ \sum_{i \in [n]} w_i } \leq \delta n$, and therefore 

\begin{equation*}
    \calA_\delta \sststile{t}{v,w}\Set{ \delta n - \sum_{i \in [n]} w_i \geq 0 },
\end{equation*}
which concludes the proof.
\end{proof}

\subsection{Handling \emph{almost-sparse} directions}

In this subsection, we prove Lemma~\ref{lem:case1-decomposition}, restated for convenience.

\AlmostSparse*

\begin{proof}

We begin by recalling that by definition $p^2$ is an even polynomial, and therefore we have
\begin{equation}
\begin{split}
   \frac{1}{n} \sum_{i \in [n] } p^2\Paren{ \Iprod{x_i, v} } & = \frac{1}{n} \sum_{i \in [n] }  \sum_{j \in [t]}  c_j \Iprod{ x_i , v }^{2j} \\
   & =    \sum_{j \in [t]}  c_j \Paren{ \frac{1}{n} \sum_{i \in [n]}  \Iprod{ x_i , v }^{2j}  } 
\end{split}
\end{equation}
where for all $j$, $\abs{c_j} \leq 2^{O(t)}$ (Lemma~\ref{lem: cheby_coeff_bound}). For each sample $x_i$, consider the decomposition $\Iprod{x_i, v} = \Iprod{ x_i , v - u_{\calS} } + \Iprod{ x_i ,  u_{\calS} }$. 
Consider the case where $c_j \geq 0$. Now, using the binomial theorem, for $j \in [t]$, we have
\begin{equation*}
    \Iprod{x_i, v}^{2j} = \Iprod{ x_i ,  u_{\calS} }^{2j} + \sum_{\ell = 1}^{2j} \binom{2j}{ \ell } \Iprod{ x_i ,  u_{\calS} }^{\ell} \Iprod{ x_i , v - u_{\calS} }^{2j - \ell}
\end{equation*}
Then, summing over all $i \in [n]$ and taking pseudo-expectation, we have,
\begin{equation}
\label{eqn:binomial-expansion-1}
\begin{split}
    & \pexpecf{\mu' }{  \frac{ 1 }{n} \sum_{i \in [n] }  \Iprod{x_i, v}^{2j} } \\
    & =   \frac{ 1 }{n} \sum_{i \in [n] }  \Iprod{ x_i ,  u_{\calS} }^{2j} +  \underbrace{  \sum_{\ell = 1}^{2j} \binom{2j}{ \ell } \pexpecf{\mu'}{ \frac{1}{n} \sum_{i \in [n] }  \Iprod{ x_i ,  u_{\calS} }^{\ell}  \Iprod{ x_i , v - u_{\calS} }^{2j - \ell} } }_{\eqref{eqn:binomial-expansion-1}.1 }
\end{split}
\end{equation}

Observe, for a fixed $\ell$, the term in \eqref{eqn:binomial-expansion-1}.1 is a scalar quantity (since we take pseudo-expectation over $\mu'$) and therefore, it suffices to bound the absolute value of this scalar. Recall, by pseudo-Jensen's, for any indeterminate $x$, $\Paren{ \pexpecf{\mu}{ x} }^2 \leq \pexpecf{\mu}{ x^2 }$. Therefore, for a fixed $\ell$, we have

\begin{equation}
\label{eqn:pseudo-jensen-split}
    \begin{split}
        & \Paren{ \pexpecf{\mu'}{  \frac{1}{n} \sum_{i \in [n] }  \Iprod{ x_i ,  u_{\calS} }^{\ell}  \Iprod{ x_i , v - u_{\calS} }^{2j - \ell} }  }^2 \\
        & \hspace{0.4in}\leq \pexpecf{\mu'}{ \Paren{   \frac{1}{n} \sum_{i \in [n] }  \Iprod{ x_i ,  u_{\calS} }^{\ell}  \Iprod{ x_i , v - u_{\calS} }^{2j - \ell} }^2 }  \\
        &\hspace{0.4in} \leq \pexpecf{\mu'}{ \frac{1}{n} \sum_{i \in [n] }  \Iprod{ x_i ,  u_{\calS} }^{2\ell}   \cdot \frac{1}{n} \sum_{i \in [n]} \Iprod{x_i, v - u_{\calS} }^{4j-2\ell} }\\
        &\hspace{0.4in} = \underbrace{ \frac{1}{n} \sum_{i \in [n] }  \Iprod{ x_i ,  u_{\calS} }^{2\ell} }_{\eqref{eqn:pseudo-jensen-split}.(1)} \underbrace{ \pexpecf{\mu'}{\frac{1}{n} \sum_{i \in [n]} \Iprod{x_i, v - u_{\calS} }^{4j-2\ell}  } }_{\eqref{eqn:pseudo-jensen-split}.(2)}
    \end{split}
\end{equation}
To bound term \eqref{eqn:pseudo-jensen-split}.(1), we use true hypercontractivity of the samples:
\begin{equation}
\label{eqn:true-hypercontractivity-of-samples}
\begin{split}
    \frac{1}{n} \sum_{i \in [n] }  \Iprod{ x_i ,  u_{\calS} }^{2\ell} & \leq K_{2\ell} \Paren{  \frac{1}{n} \sum_{i \in [n]} \Iprod{ x_i, u_{\calS} }^2   }^{\ell} \\
    & = K_{2\ell} \Paren{  u_{\calS}^\top \Paren{ \frac{1}{n} \sum_{i \in [n]} x_i x_i^\top } u_{\calS}  }^{\ell}\\
    & \leq 2 K_{2\ell} \Norm{ u_{\calS} }^{2\ell}_2 \leq 2 K_{2\ell},
\end{split}
\end{equation}
where $K_{2\ell}$ is the hypercontractivity constant. 

To bound term \eqref{eqn:pseudo-jensen-split}.(2), we appeal to certifiable hypercontractivity of the uniform distribution over the samples: 
\begin{equation}
\label{eqn:certifiable-hypercontractivity-of-samples}
    \begin{split}
    & \pexpecf{\mu'}{\frac{1}{n} \sum_{i \in [n]} \Iprod{x_i, v - u_{\calS} }^{4j-2\ell}  } \\
    & \leq K_{2\ell} \pexpecf{\mu'}{  \Paren{  \frac{1}{n} \sum_{i \in [n]} \Iprod{ x_i,  v- u_{\calS} }^2   }^{2j - \ell } } \\
    & = K_{2\ell} \pexpecf{\mu'}{  \Paren{  \Paren{ v-  u_{\calS} }^\top \Paren{ \frac{1}{n} \sum_{i \in [n]} x_i x_i^\top } \Paren{v- u_{\calS} }  }^{2j -\ell} }\\
    & \leq 2 K_{2\ell} \pexpecf{\mu'}{  \Norm{ v-  u_{\calS} }^{4j -2\ell}_2 } \leq 2 K_{2\ell} \gamma^{4j - 2\ell} .
\end{split}
\end{equation}
Combining equations \eqref{eqn:true-hypercontractivity-of-samples} and \eqref{eqn:certifiable-hypercontractivity-of-samples}, we have
\begin{equation*}
    \abs{ \pexpecf{\mu'}{  \frac{1}{n} \sum_{i \in [n] }  \Iprod{ x_i ,  u_{\calS} }^{\ell}  \Iprod{ x_i , v - u_{\calS} }^{2j - \ell} }  } \leq 2 K_{2\ell} \gamma^{4j - 2\ell}
\end{equation*}
Now we
can bound the term  \eqref{eqn:binomial-expansion-1}.1 as follows: 
\begin{equation}
\begin{split}
     \sum_{\ell\in [2j]} \binom{2j}{\ell } \abs{ \pexpecf{\mu'}{  \frac{1}{n} \sum_{i \in [n] }  \Iprod{ x_i ,  u_{\calS} }^{\ell}  \Iprod{ x_i , v - u_{\calS} }^{2j - \ell} }  } &  \leq \sum_{\ell \in [2j]} \binom{2j}{\ell } 4 K_{2\ell} \gamma^{2j - \ell } \\
     & \leq 4 K_{2j} 2^{2j} \gamma  
\end{split}
\end{equation}
Therefore, for any fixed $j$, we have 
\begin{equation*}
    \pexpecf{\mu' }{  \frac{ c_{2j} }{n} \sum_{i \in [n] }  \Iprod{x_i, v}^{2j} }  =   \frac{ c_{2j} }{n} \sum_{i \in [n] }  \Iprod{ x_i ,  u_{\calS} }^{2j} + c_{2j} \eta_j,
\end{equation*}
where $\abs{ \eta_j } \leq 4 K_{2j} 2^{2j} \gamma$. Summing over all $j \in [t]$,

\begin{equation}
    \begin{split}
        \pexpecf{\mu'}{ \frac{1}{n} \sum_{i \in [n]} p^2_{\gamma} \Paren{ \Iprod{ x_i , v } } } & \leq \frac{1}{n} \sum_{i \in [n]} p^2_{\gamma} \Paren{ \Iprod{x_i,  u_{\calS} } }  + \sum_{j \in [t] } c_{2j} \eta_j \\ 
        & \leq \frac{1}{n} \sum_{i \in [n]} p^2_{\gamma} \Paren{ \Iprod{x_i,  u_{\calS} } }  + 2^{O(t)} K_{t} \gamma 
    \end{split}
\end{equation}

\end{proof}







\subsection{Handling analytically dense directions}


In this subsection, we prove Lemma~\ref{lem:case2-analytic-density}, restated below for convenience. 

\Shifts*

We begin by showing that if the pseudo-expectation of the $2z$-norm is bounded, then, the pseudo-expectation of the $4$ norm is bounded, even with additional vectors of the $\ell_2^2$ norm.

\begin{lemma}[$(2z,2)$-Hypercontractivity implies $(4, 2)$-Hypercontractivity]
\label{lem:upgraded-muirhead-sos}
Given a pseudo-distribution $\mu$ over a vector valued indeterminate $v$ such that $\norm{v}^2 \leq 1$, and an even integer $z \geq 2$ such that $\pexpecf{\mu}{ \norm{v}_{2z}^{2z} } \leq \lambda^{2z} \cdot \pexpecf{\mu}{ \norm{v}_2^{2z} }$, we have that for any integer $k \geq 0$, 
\begin{equation*}
    \pexpecf{\mu}{ \norm{v}_2^{2k} \cdot \norm{v}_4^{4} } \leq \lambda^2 \Paren{   \pexpecf{\mu }{  \norm{v}_{2}^{2z }  } .}^{1/z}
\end{equation*}
\end{lemma}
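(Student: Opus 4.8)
The plan is to eliminate the factor $\norm{v}_2^{2k}$ using the constraint $\norm{v}^2 \le 1$, reduce the claim to a single "$(z-1)$-th power" inequality proved by a Muirhead/Hölder-type SoS identity, take pseudo-expectations, and finally use pseudo-Jensen to descend from the $(z-1)$-th power of $\norm{v}_4^4$ back to its first power. For the first step, note that since $\mu$ satisfies $\Set{\norm{v}^2 \le 1}$ and $\norm{v}_4^4 = \sum_i v_i^4$ is a sum of squares, the identity
\[
\norm{v}_4^4 - \norm{v}_2^{2k}\norm{v}_4^4 \;=\; \norm{v}_4^4\cdot\Paren{1-\norm{v}_2^2}\cdot\Paren{\sum_{j=0}^{k-1}\norm{v}_2^{2j}}
\]
is a degree-$(2k+4)$ SoS proof that $\Set{\norm{v}^2\le1}\sststile{2k+4}{v}\Set{\norm{v}_2^{2k}\norm{v}_4^4 \le \norm{v}_4^4}$, hence $\pexpecf{\mu}{\norm{v}_2^{2k}\norm{v}_4^4}\le\pexpecf{\mu}{\norm{v}_4^4}$. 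The same kind of argument gives $\pexpecf{\mu}{\norm{v}_2^{2z}}\le 1$ and, since $z\ge 2$, $\Set{\norm{v}^2\le 1}\sststile{O(z)}{v}\Set{\norm{v}_{2z}^{2z}\norm{v}_2^{2(z-2)}\le\norm{v}_{2z}^{2z}}$. It therefore suffices to establish $\pexpecf{\mu}{\norm{v}_4^4}\le\lambda^2\Paren{\pexpecf{\mu}{\norm{v}_2^{2z}}}^{1/z}$, and we may assume $\lambda\le 1$: if $\lambda > 1$ then $\pexpecf{\mu}{\norm{v}_4^4}\le\pexpecf{\mu}{\norm{v}_2^4}\le\Paren{\pexpecf{\mu}{\norm{v}_2^{2z}}}^{2/z}\le\Paren{\pexpecf{\mu}{\norm{v}_2^{2z}}}^{1/z}\le\lambda^2\Paren{\pexpecf{\mu}{\norm{v}_2^{2z}}}^{1/z}$, using $\norm{v}_4^4\le\norm{v}_2^4$, pseudo-Jensen, and $\pexpecf{\mu}{\norm{v}_2^{2z}}\le 1$.

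The core of the argument is the coordinatewise Hölder inequality
\[
\sststile{O(z)}{v}\ \Set{\ \Paren{\norm{v}_4^4}^{z-1}\ \le\ \norm{v}_{2z}^{2z}\cdot\norm{v}_2^{2(z-2)}\ },
\]
that is, $\Paren{\sum_i v_i^4}^{z-1}\le\Paren{\sum_i v_i^{2z}}\Paren{\sum_i v_i^2}^{z-2}$, which is exactly the interpolation of $\ell_4$ between $\ell_2$ and $\ell_{2z}$ raised to the power $z-1$. Because $z$ is even all exponents appearing are even, so this admits a bounded-degree SoS proof: it is a Muirhead-type inequality corresponding to the majorization $(4,\dots,4)\preceq(2z,2,\dots,2)$ (both tuples having $z-1$ parts summing to $4(z-1)$), and can alternatively be obtained by iterated Cauchy--Schwarz combined with rearrangement identities such as $\sum_{i<j}\bigl(v_i^2-v_j^2\bigr)\bigl(v_i^{2(m-2)}-v_j^{2(m-2)}\bigr)v_i^2v_j^2\ge 0$, each of whose summands is manifestly a sum of squares. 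Taking pseudo-expectations of this inequality, bounding $\norm{v}_2^{2(z-2)}\le 1$ via $\norm{v}^2\le 1$, and then invoking the hypothesis $\pexpecf{\mu}{\norm{v}_{2z}^{2z}}\le\lambda^{2z}\pexpecf{\mu}{\norm{v}_2^{2z}}$ yields $\pexpecf{\mu}{\Paren{\norm{v}_4^4}^{z-1}}\le\lambda^{2z}\pexpecf{\mu}{\norm{v}_2^{2z}}$.

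Finally, since $\norm{v}_4^4$ is a sum of squares, \cref{fact:pseudo-expectation-holder} applied with $g\equiv 1$ gives $\pexpecf{\mu}{f^{t-1}}\le\Paren{\pexpecf{\mu}{f^t}}^{(t-1)/t}$ for $f=\norm{v}_4^4$ and every $2\le t\le z-1$; chaining these (equivalently, $t\mapsto\pexpecf{\mu}{f^t}^{1/t}$ is nondecreasing) gives $\pexpecf{\mu}{\norm{v}_4^4}\le\Paren{\pexpecf{\mu}{\Paren{\norm{v}_4^4}^{z-1}}}^{1/(z-1)}\le\lambda^{2z/(z-1)}\Paren{\pexpecf{\mu}{\norm{v}_2^{2z}}}^{1/(z-1)}$. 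Since $\lambda\le 1$ we have $\lambda^{2z/(z-1)}\le\lambda^2$, and since $\pexpecf{\mu}{\norm{v}_2^{2z}}\le 1$ and $\tfrac{1}{z-1}\ge\tfrac1z$ we have $\Paren{\pexpecf{\mu}{\norm{v}_2^{2z}}}^{1/(z-1)}\le\Paren{\pexpecf{\mu}{\norm{v}_2^{2z}}}^{1/z}$, which combined with the first paragraph's reduction finishes the proof.

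The main obstacle is the SoS step for the interpolation inequality $\Paren{\norm{v}_4^4}^{z-1}\le\norm{v}_{2z}^{2z}\norm{v}_2^{2(z-2)}$: although the inequality itself is elementary (plain Hölder), producing an explicit SoS certificate of bounded degree is what forces us to invoke the (known) SoS form of Muirhead's inequality, and this is precisely why the statement is restricted to even $z$. Everything else is routine pseudo-Hölder/pseudo-Jensen bookkeeping together with the $\norm{v}^2\le 1$ constraint, but one should track degrees: the manipulations above require $\mu$ to have degree at least roughly $\max\{2k+4,\ 4(z-1)\}$, and the final SoS certificate of the lemma inherits this degree.
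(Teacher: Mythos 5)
Your proof is correct and takes essentially the same route as the paper: both reduce to the SoS interpolation inequality $\norm{v}_4^{4(z-1)} \le \norm{v}_{2z}^{2z}\cdot\norm{v}_2^{2(z-2)}$, certified via the same $(v_i^2-v_j^2)^2$-type factorization (the paper establishes it by an explicit induction on the exponent rather than citing the SoS form of Muirhead), and then finish with pseudo-H\"older bookkeeping. The only cosmetic difference is in the bookkeeping: the paper keeps $\norm{v}_2^{2k}$ inside a single H\"older chain, uses $X^z \le X^{z-1}$ for $X \le 1$, and takes one $z$-th root at the end, which avoids your case split on $\lambda \le 1$ and the $1/(z-1)$-versus-$1/z$ exponent adjustment.
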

\begin{proof}
We first use induction on $z'$ to prove $ \Set{ \norm{v}_4^{4z'} \le \norm{v}_{2z' + 2}^{2z' + 2} \cdot \norm{v}_2^{2(z' - 1)} } $ for integers $z' \ge 1$. If $z' = 1$, then this is trivial and when $z'=2$, we have
\begin{equation*}
    \sststile{}{v} \Set{ \norm{v}_4^8 = \Paren{ \sum_{i \in [d]} v_i^4  }^2 = \Paren{ \sum_{i \in [d]} v_i^3 \cdot v_i   }^2 \leq \Paren{ \sum_{i \in [d]} v_i^6  } \cdot \Paren{ \sum_{i \in [d]} v_i^2} = \norm{v}_6^6 \cdot \norm{v}_2^2  },
\end{equation*}
where the inequality follows from Cauchy-Schwarz, and this completes the base case. 
For the induction step, assume $\Set{ \norm{v}_4^{4z'} \leq  \norm{v}_{2z' + 2}^{2z' + 2} \cdot \norm{v}_2^{2(z' - 1)}  } $. Then, 
using the inductive hypothesis
\begin{align*}
    \sststile{}{v} \Biggl\{ \norm{v}_4^{4z'+4} &= \norm{v}_4^{4z'} \cdot \norm{v}_4^{4} \le \norm{v}_{2z' + 2}^{2z' + 2} \cdot \norm{v}_2^{2(z' - 1)} \cdot \norm{v}_4^{4} \Biggr\} 
\end{align*}
We wish to prove this is at most $\norm{v}_{2z' + 4}^{2z' + 4} \cdot \norm{v}_2^{2z'}$ in SoS. For this, it suffices to prove
\begin{align*}
    \norm{v}_{2z' + 2}^{2z' + 2} \cdot \norm{v}_2^{2(z' - 1)} \cdot \norm{v}_4^{4} &\le \norm{v}_{2z' + 4}^{2z' + 4} \cdot \norm{v}_2^{2z'}\\
    \qquad
    \norm{v}_{2z' + 2}^{2z' + 2} \cdot \norm{v}_4^{4} &\le \norm{v}_{2z' + 4}^{2z' + 4} \cdot \norm{v}_2^{2}\\
     \qquad
    \sum_{i \neq j} v_i^{2z' + 2} \cdot v_j^{4} &\le \sum_{i \neq j} v_i^{2z' + 4} \cdot v_j^{2}\\
     \qquad
    0 &\le \sum_{i \neq j} (v_i^{2z' + 4} \cdot v_j^{2} - v_i^{2z' + 2} \cdot v_j^{4})\\
     \qquad
    0 &\le \sum_{i < j} (v_i^{2z' + 4} \cdot v_j^{2} + v_j^{2z' + 4} \cdot v_i^{2} - v_i^{2z' + 2} \cdot v_j^{4} - v_j^{2z' + 2} \cdot v_i^{4})
\end{align*}


In the first step, we dropped the power of $\norm{v}_2$, and in the second step, we just expanded the norms. The remaining steps are just rearranging the given expression.
Therefore, it suffices to prove that for each $i < j$, $(v_i^{2z' + 4} \cdot v_j^{2} + v_j^{2z' + 4} \cdot v_i^{2} - v_i^{2z' + 2} \cdot v_j^{4} - v_j^{2z' + 2} \cdot v_i^{4})$ is SoS. But this is true because
\begin{align*}
    v_i^{2z' + 4} &\cdot v_j^{2} + v_j^{2z' + 4} \cdot v_i^{2} - v_i^{2z' + 2} \cdot v_j^{4} - v_j^{2z' + 2} \cdot v_i^{4}\\
    &= (v_i^2 - v_j^2)^2 (v_i^{2z'}v_j^2 + v_i^{2z'-2}v_j^4 +\ldots + v_i^2v_j^{2z'})
\end{align*}
which is clearly a sum of squares.
We verify the last equality in the following display.
\begin{align*}
    (v_i^2 - v_j^2)^2 &(v_i^{2z'}v_j^2 + v_i^{2z'-2}v_j^4 +\ldots + v_i^2v_j^{2z'})\\
    &=(v_i^2 - v_j^2)(v_i^{2z'+2}v_j^2 + \ldots + v_i^4v_j^{2z'} - v_i^{2z'}v_j^4 - \ldots - v_i^2v_j^{2z'+2})\\
    &= (v_i^2 - v_j^2)(v_i^{2z'+2}v_j^2 - v_i^2v_j^{2z'+2})\\
    &= v_i^{2z' + 4} \cdot v_j^{2} + v_j^{2z' + 4} \cdot v_i^{2} - v_i^{2z' + 2} \cdot v_j^{4} - v_j^{2z' + 2} \cdot v_i^{4}
\end{align*}
as desired.

Now, we complete the proof.
Using \cref{fact:pseudo-expectation-holder}, we have
\begin{equation}
    \begin{split}
    \Paren{ \pexpecf{\mu}{ \norm{v}_2^{2k}  \cdot \norm{v}_4^4 } }^z &\leq 
        \Paren{ \pexpecf{\mu}{ \norm{v}_2^{2k} \cdot  \norm{v}_4^4 } }^{z - 1}\\
        & \leq \pexpecf{\mu}{ \norm{v}_2^{2k(z-1)}\cdot  \norm{v}_4^{4z - 4}  }  \\
        & \leq \pexpecf{\mu}{ \norm{v}_{2z}^{2z} \cdot \norm{v}_2^{2 (z-2) + 2k(z-1)} } \\
        & \leq  \pexpecf{\mu}{ \norm{v}_{2z}^{2z}}\\
        & \leq  \lda^{2z}\pexpecf{\mu}{ \norm{v}_{2}^{2z}}\\
    \end{split}
\end{equation}

Taking $z$th roots completes the proof.
\end{proof}

We can now complete the proof of our lemma, building upon Theorem~\ref{thm:anti-concentration-certificates-with-shifts}, which certifies anticoncentration of dense directions with shifts.

\begin{proof}[Proof of Lemma~\ref{lem:case2-analytic-density}]


Since for all $i \in [n]$, we know that $\Iprod{x_i, u_\calS} \leq O(\log(1/\zeta))$, it suffices to consider the box indicator polynomial $q^2$ corresponding to Theorem~\ref{thm:anti-concentration-certificates-with-shifts}, with $\Delta^2  =  \Theta(\log(1/\zeta ))$. Recall, we know that 
$$\pexpecf{\mu}{  \norm{v - u_\calS}_{2z}^{2z} } \leq \eta^{2z }  \pexpecf{\mu}{ \norm{v - u_\calS}_{2}^{2z} },$$ 
where $\eta  = 1/ \exp\Paren{1/\zeta^4}$. 
Let $k = \exp(1/\zeta^4)$. It follows from~\cref{lem:upgraded-muirhead-sos} that for any $\ell \in [k]$, 
\begin{equation}
\label{eqn:bounding-fourthmoment}
    \pexpecf{\mu}{ \norm{v - u_\calS}_2^{2\ell} \cdot  \norm{v-u_{\calS}}_4^{4} } \leq  \eta  . 
\end{equation}

However, observe such a statement is not sufficient to invoke the axiom $\Set{\norm{v}_4^4 \leq \lambda \norm{v}_2^4 }$ required to apply \cref{thm:anti-concentration-certificates-with-shifts}. Instead, writing out the guarantee of \cref{thm:anti-concentration-certificates-with-shifts} (from \cref{eqn:explicit-polynomial-identity-sos-cert} with the formal substitution $v = v - u_\calS$) for some $\lambda< 1/\exp\Paren{ 1/\zeta^4 }$, we have   
\begin{equation}
\label{eqn:main-anti-conc-certificate-shifts}
    \zeta \norm{v - u_\calS }^2  -\frac{1}{n} \sum_{i \in [n]} q^2\Paren{\Iprod{x_i, v - u_\calS} + \Iprod{x_i , u_\calS}   } = \textrm{sos}(v - u_\calS) + z(v - u_\calS)\Paren{ \lambda \norm{v - u_\calS}_2^4 - \norm{v - u_\calS}_4^4 }
\end{equation}

Recall, $z(v - u_{\calS}) = \sum_{a \in [\exp\Paren{1/\zeta^2}] } c_a \norm{ v - u_\calS }_2^{2a} $ (which we can see by tracking our proof of \cref{thm:anti-concentration-certificates-with-shifts}),
where $\abs{c_a} \leq \exp\Paren{1/\zeta^2}$, and thus using \cref{eqn:bounding-fourthmoment}, we have  
\begin{equation}
\label{eqn:re-weighted-by-z-bound}
\begin{split}
    \pexpecf{\mu'}{ z(v-u_\calS) \norm{v-u_{\calS}}_4^{4} } & = \pexpecf{\mu'}{ \Paren{ \sum_{a \in [\exp\Paren{1/\zeta^2}] }   c_a \norm{ v - u_\calS }_2^{2a}}  \norm{v-u_{\calS}}_4^{4} } \\
    & \leq \eta \cdot \exp\Paren{2/\zeta^2 } \leq 1/\exp\Paren{1/\zeta^3},
\end{split}
\end{equation}
Similarly, since $\lambda$ is small and $\norm{v - u_{\calS} }^2 \leq 1$, we can bound 
\begin{equation*}
    \left |  \pexpecf{\mu}{z\Paren{v - u_\calS } \lambda \norm{v -u_\calS}_2^4 } \right| \leq  1/\exp\Paren{1/\zeta^3}.  
\end{equation*}
Taking pseudo-expectations on both sides in \cref{eqn:main-anti-conc-certificate-shifts} and plugging in the above bounds we have
\begin{equation*}
    \pexpecf{\mu}{ \frac{1}{n} \sum_{i \in [n]} q^2 \Paren{ \Iprod{x_i , v} } } \leq \zeta  \cdot  \pexpecf{\mu}{\norm{v - u_\calS }^2 } + \frac{1}{\exp\Paren{1/\zeta^2 }},
\end{equation*}
as desired. 
\end{proof}

\subsection{Re-weightings to  Refutation}
In this subsection, we prove that given any pseudo-distribution $\mu$ we can find a re-weighting $\mu'$ such that $\pexpecf{\mu'}{ \sum_{i \in [n]} w_i  } \leq \delta n$, formally,

\Reweighting*

We proceed by case analysis, where we case on whether the re-weighting outputs a sparse vector $u_\calS$ and pseudo-distribution $\mu'$ for which pseudo-moment of $\norm{ v - u_\calS }_2^2$ are small, or $v - u_\calS$ is analytically dense, in the sense that $\pexpecf{\mu}{ \norm{ v - u_\calS }_{2z}^{2z} }$ is small. In the first case, we invoke \cref{lem:case1-decomposition} to appeal to anti-concentration of the sparse coordinates,  and in the second case we invoke the analytic sum-of-squares certificates from \cref{lem:case2-analytic-density}. We abstract out reweighting analyses to later sections and utilize \cref{thm:key-re-weighting-theorem} in our proof.

\begin{proof}[Proof of \cref{lem:re-weighting-to-certificate}]
 
We assume the hypercontractivity constant $K_{2\ell} \leq c^{2\ell}$ for a fixed constant $c$, where $\ell \leq \exp\Paren{1/\delta^2}$. 
Let $\mu$ be any degree-$\Omega\Paren{  \log(d) \cdot  \exp\Paren{1/\zeta^4 }^{\exp\Paren{1/\zeta^4} }   } $ pseudo-distribution that is consistent with $\calA_\delta$. Then, we can appeal to \cref{thm:key-re-weighting-theorem} with parameters $\delta = \delta/2 , k = 1/\exp\Paren{1/\delta^2}, z= \log(d)/\exp\Paren{1/\delta^2} , \eta = 1/\exp\Paren{1/\delta^2}$ to obtain a re-weighted pseudo-distribution $\mu'$ and a vector $u_\calS$ supported on at most $\abs{ \calS } = \bigO{1/\Paren{ \delta^{4k+2} \eta^2 } }$ coordinates. We first consider the case where the resulting pseudo-distribution $\mu'$ satisfies that for all $y\in[k]$, \begin{equation*}
    \pexpecf{\mu'}{ \Norm{v - u_\calS }_2^{2y} } \leq \eta. 
\end{equation*}
Next, observe that $p^2(z^2) \geq 1/2$ for all $z \in [-\delta, \delta]$ is a univariate inequality in a bounded interval. Further, it follows from the constraints $\calA_\delta$ that $w_i^2  \Iprod{x_i, v}^2  \leq w_i \delta^2 \leq \delta^2$ and therefore, invoking the univariate inequality with $z = w_i^2 \Iprod{x_i, v}^2$, and using \cref{fact:univariate-sos-proofs} we have 
\begin{equation}
\label{eqn:univariate-fact-main-proof}
    \Set{\calA_\delta} \sststile{}{ w_i^2 \Iprod{x_i, v}^2 } \Set{  \frac{1}{2}\leq p^2\Paren{w_i \Iprod{x_i , v}^2 } },
\end{equation}
and therefore,
\begin{equation}
\label{eqn:expanding-sum-w_i-case1}
    \begin{split}
        \pexpecf{\mu'}{ \frac{1}{n}\sum_{i \in [n]} w_i } 
 & \leq \pexpecf{\mu'}{\frac{2}{n}\sum_{i \in [n]} w_i \cdot p^2\Paren{  w_i \Iprod{x_i, v}^2 }}\\
 & \leq \underbrace{ \frac{1}{n} \sum_{i \in [n]} p^2\Paren{ \Iprod{x_i , u_\calS }^2 } }_{\eqref{eqn:expanding-sum-w_i-case1}.(1)}  + \delta,
\end{split}
\end{equation}
where the last inequality follows from \cref{lem:case1-decomposition} and our choice of $\eta \leq \delta/ \Paren{ 2^{O(\ell)} K_{2\ell}  }$  that makes the additive error small. Finally, we note that \eqref{eqn:expanding-sum-w_i-case1}.(1) is bounded by $\delta$ since \cref{lemma:box-indicator-small-for-samples} is a univariate inequality in $\Iprod{x, u_\calS}^2$ and thus admits a sum-of-squares proof.


Next, consider the alternative, where $\mu'$ satisfies
\begin{equation*}
    \pexpecf{\mu'}{ \Norm{ v - u_\calS}_{2z}^{2z} } \leq \Paren{ 2\delta}^{2z} \cdot \pexpecf{\mu'}{ \norm{v - u_\calS}_2^{2z} } \textrm{ and } \pexpecf{\mu'}{\Norm{v - u_\calS }  } \geq \eta. 
\end{equation*}
It follows from \cref{fact:sub-exponential-dist-bounded}  that with for all but a $\delta$-fraction of the samples $x_i$, $\Iprod{x_i, u_\calS}^2 \leq \log(1/\delta)$. Conditioned on this event, we can partition the set of points $x_i$ into $\calP \subset [n]$ such that $$ \calP= \Set{ i \in [n] \hspace{0.05in} \vert  \hspace{0.05in}  \Iprod{x_i, u_\calS }^2 \leq \log(1/\delta) }. $$
Then, 
\begin{equation}
\label{eqn:expanding-sum-w-i-case2}
    \pexpecf{\mu'}{ \frac{1}{n} \sum_{i \in [n]} w_i  } \leq 
 \pexpecf{\mu'}{ \frac{1}{n} \sum_{i \in \calP } w_i } + \frac{1}{n}\left|  [n] \setminus \calP  \right| \leq \underbrace{ \pexpecf{\mu'}{ \frac{1}{n} \sum_{i \in \calP } w_i }}_{\eqref{eqn:expanding-sum-w-i-case2}.(1)} + \delta ,
\end{equation}
and thus it suffices to bound term \eqref{eqn:expanding-sum-w-i-case2}.(1). Recall, $q$ is the polynomial from Theorem~\ref{thm:anti-concentration-certificates-with-shifts} with degree $\bigO{ \exp\Paren{1/\zeta^2}}$. 
Using \cref{eqn:univariate-fact-main-proof} again, we have 
\begin{equation}
    \pexpecf{\mu'}{\frac{1}{n} \sum_{i \in \calP } w_i  } \leq \pexpecf{\mu'}{ \frac{1}{n} \sum_{i \in \calP } w_i q^2\Paren{ \Iprod{ x_i, v }^2 } } \leq \bigO{\delta},
\end{equation}
which follows from \cref{lem:case2-analytic-density}, since each point in $\calP$ has a bounded shift. 
\end{proof}

Finally, in \cref{sec: bounding_coeffs}, we will show how to bound the bit complexity of our SoS proofs. The idea is to instantiate the identity for various carefully chosen $w, v$ and then extract bounds on the coefficients.

\subsection{Certifying Anti-Concentration around a shift}

We observe that we can execute the same proof as above to certify anti-concentration w.r.t. the following set of constraints:

\begin{equation}
\label{eqn:cons-system-shift}
\begin{split}
    \calA_{\delta, \phi} = & \left \{\begin{aligned}
      &\forall i\in [n]
      & w_i^2
      & = w_i \\
      &\forall i\in [n] & w_i \Paren{ \Iprod{x_i, v}^2  - \phi   } & \leq w_i \delta^2  \\
      && \norm{v}_2^2 &=1 \\
    \end{aligned}\right\}
\end{split}
\end{equation}

\begin{corollary}[Certifying Anti-Concentration around a shift]
\label{cor:anti-concentration-thm-shift}
Given $\delta>0$, $\phi \geq 1$, and $n \geq n_0$ samples $\Set{x_1, x_2, \ldots , x_n} \subseteq \mathbb{R}^d$ for some $n_0 \ge d^t$, sampled from a $(\delta, \exp\paren{1/\delta^2}, O(1) )$-\textit{reasonably anti-concentrated} distribution $\calD$ (\cref{def:reasonably-anti-concentrated-dist}),  let $t = \mathcal{O}\Paren{\frac{\log(d) \cdot \phi }{\exp\Paren{\delta^4}^{ \exp\Paren{ 1/\delta^2} } } }$. Then, there is a degree-$t$ certificate of anti-concentration, i.e.  
\begin{equation*}
    \calA_\delta \sststile{t}{w,v} \Set{  \frac{1}{n} \sum_{i \in [n]} w_i  \leq \delta },
\end{equation*}
\end{corollary}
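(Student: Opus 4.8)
\textbf{Proof proposal for \cref{cor:anti-concentration-thm-shift}.}
The plan is to re-run the proof of \cref{thm:main-anti-concentration-thm} essentially verbatim, changing only the box-indicator polynomial that is plugged in. In \cref{thm:main-anti-concentration-thm} we used the polynomial $p$ of \cref{def:box-indicator} tuned so that $p^2$ is bounded below by a constant on the interval where $\Iprod{x_i,v}^2 \le \delta^2$; here we instead instantiate \cref{def:box-indicator} with a center parameter of order $\sqrt{\phi}$, so that $p^2$ is at least a constant precisely when $\Iprod{x_i,v}^2$ lies in a $\delta^2$-window around $\phi$ (equivalently, when $\Iprod{x_i,v}\approx\pm\sqrt{\phi}$). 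By the degree accounting of \cref{thm:anti-conc-with-shifts-gaussian} applied with $\zeta^2 = \Theta(\phi)$ (and of \cref{thm:anti-concentration-certificates-with-shifts}), the Chebyshev scaling must be chosen with $L^2 = \Theta(1/\delta^2 + \phi)$, so the degree of this polynomial — and hence the degree $t$ of the resulting certificate — grows by a multiplicative factor $O(\phi)$ relative to the unshifted case, which is exactly the dependence claimed in the statement.

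With this substitution, the three structural lemmas all carry over with no change to their proofs. \cref{lem:case2-analytic-density} is already stated with an arbitrary shift vector $u_\calS$ and with the shifted certificate $q$ of \cref{thm:anti-concentration-certificates-with-shifts}, so it applies directly once the center is re-tuned. \cref{lem:case1-decomposition} only uses the coefficient bound of \cref{lem: cheby_coeff_bound} (which remains $2^{O(t)}$ for the shifted polynomial) together with true and certifiable hypercontractivity of the samples; none of its steps reference the center of the box, so it holds for the shifted $p$ as well. Consequently the re-weighting statement \cref{lem:re-weighting-to-certificate} goes through with $\calA_\delta$ replaced by $\calA_{\delta,\phi}$: given any degree-$t$ pseudo-distribution consistent with $\calA_{\delta,\phi}$, \cref{thm:key-re-weighting-theorem} produces $\mu'$ and a sparse $u_\calS$ landing either in the almost-sparse case or the analytically dense case, and in each case the shifted analogues of \cref{lem:case1-decomposition,lem:case2-analytic-density} bound $\pexpecf{\mu'}{\tfrac1n\sum_i w_i}$ by $O(\delta)$. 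Here the constraint $w_i(\Iprod{x_i,v}^2-\phi)\le w_i\delta^2$ plays exactly the role that $w_i\Iprod{x_i,v}^2\le w_i\delta^2$ did before, via the univariate SoS fact (\cref{fact:univariate-sos-proofs}) that $p^2$ of the relevant shifted argument is at least $1/2$ on the feasible set.

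Finally, the passage from ``every low-degree pseudo-distribution satisfies the inequality'' to ``there is a low-degree SoS certificate'' is identical to the argument in the proof of \cref{thm:main-anti-concentration-thm}: assume for contradiction a pseudo-distribution $\mu$ with $\pexpecf{\mu}{\sum_i w_i}>\delta n$, re-weight by $(\sum_i w_i/n)^q$ to fix the objective value (using \cref{lem:scalar-reweighting}), apply the shifted version of \cref{lem:re-weighting-to-certificate} to obtain a further re-weighting driving the objective below $\delta n/10$, and observe that since this last re-weighting is a product of bounded powers of coordinates of a unit vector it cannot have moved the objective by more than $\delta n/10$, a contradiction. The step I expect to require the most care is the degree bookkeeping in the first paragraph: one must verify that inserting the shift into \cref{def:box-indicator} costs only a factor $O(\phi)$ (not $O(\phi^2)$) in the degree, i.e. that the tail estimates on the shifted intervals $\calI_0,\dots,\calI_4$ still close with the choice $L^2=\Theta(1/\delta^2+\phi)$, $\eta=\delta/L$. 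The remainder of the argument is genuinely inert with respect to the shift, since the re-weighting machinery only ever sees the norm geometry of the indeterminate $v$.
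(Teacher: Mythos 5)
Your proposal is correct and follows essentially the same route as the paper: the paper's own proof is a one-line remark that the argument for \cref{thm:main-anti-concentration-thm} goes through verbatim with the shifted version of $p^2$ (shift $\phi$), at the cost of raising the degree of $p^2$ to $O(1/\delta^2+\phi)$ via \cref{thm:anti-concentration-certificates-with-shifts}. Your more detailed verification that \cref{lem:case1-decomposition,lem:case2-analytic-density,lem:re-weighting-to-certificate} and the duality step are all inert with respect to the shift is exactly the checking the paper leaves implicit.
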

\begin{proof}
The proof is exactly the same as that of \cref{thm:main-anti-concentration-thm}, with the modification that we use the shifted version of $p^2$, with the shift being $\phi$. It follows from \cref{thm:anti-concentration-certificates-with-shifts} that $\expecf{i \in [n]}{p^2\Paren{ \Iprod{x_i ,v}^2 - \phi  } } \leq \delta$ as long as $p^2$ has degree $\bigO{ 1/\delta^2 + \phi}$.
\end{proof}
\section{Certifiable Anti-Concentration along analytically dense directions}
\label{sec:dense-cert-anti-conc}



We show that for reasonably anticoncentrated distributions, we can obtain a sum-of-squares certificate of anti-concentration along \textit{analytically-dense} directions, not necessarily centered around the origin:

\begin{restatable}[Anti-concentration certificates with bounded shifts]{theorem}{AntiConcBoundedShifts}
\label{thm:anti-concentration-certificates-with-shifts}
Let $v$ be a vector such that $0< \norm{v}_2^2 \leq 1$. Let $\Set{x_i}_{i \in [n]}$  be $n$ iid samples from a distribution $\calD$ which is reasonably anticoncentrated. Let $p^2$ be a degree $t = O\Paren{ \log\Paren{ \frac{1}{ \delta} + \frac{\zeta}{\delta} }  \cdot \Paren{ \frac{1}{ 1/\delta^2  } + \Delta^2 } } $ polynomial such that for all $\zeta^2 \leq \Delta^2$ and 
 $\lambda \leq \frac{\delta}{ (2t)^{4t} \Delta^t }$, 
\begin{equation*}
    \Set{ \norm{v}_4^4 \leq \lambda \norm{v}_2^4 } \sststile{}{v} \Set{ \expecf{ }{p^2\Paren{  \Iprod{x_i,v} - \zeta } } \leq O\Paren{\delta} \norm{v}_2^2 }, 
\end{equation*}
Further, let $\calD'$ be a distribution such that $\norm{y}_2 \leq \Delta$ almost surely for all $y\sim \calD'$ and let $\Set{y_i}_{i \in [n]}$ be $n$ iid samples from $\calD'$. Then,
\begin{equation*}
    \Set{ \norm{v}_4^4 \leq \lambda \norm{v}_2^4 } \sststile{}{} \Set{  \frac{1}{n}\sum_{i \in [n]} {p^2\Paren{ \Iprod{x_i,v} - y_i } } \leq O\Paren{\delta} \norm{v}_2^2 }.
\end{equation*}
\end{restatable}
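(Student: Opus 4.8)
The hypothesis of \cref{thm:anti-concentration-certificates-with-shifts} supplies, for every fixed shift $\zeta$ with $\zeta^2 \le \Delta^2$, an SoS proof that the box polynomial is small along analytically dense directions; the plan is to bootstrap this fixed-shift certificate into the per-sample-shift certificate by passing through the monomial expansion of $p^2$ and an SoS-friendly concentration of weighted empirical moment tensors. Writing $p^2$ in the monomial basis, $p^2(s) = \sum_{m=0}^{2t} a_m s^m$, we get for each $i$ that
\begin{equation*}
    p^2\Paren{\Iprod{x_i,v} - y_i} \;=\; \sum_{m=0}^{2t} b_m(y_i)\,\Iprod{x_i,v}^m, \qquad b_m(y) \;=\; \frac{1}{m!}(p^2)^{(m)}(-y).
\end{equation*}
Since $|y_i|\le\Delta$ and the coefficients of $p^2$ are at most $2^{O(t)}$ by \cref{lem: cheby_coeff_bound}, each $b_m(y_i)$ is a fixed real with $|b_m(y_i)| \le B := 2^{O(t)}\max(1,\Delta)^{O(t)}$, so $\frac{1}{n}\sum_i p^2(\Iprod{x_i,v}-y_i) = \sum_{m=0}^{2t} \langle \hat T_m, v^{\otimes m}\rangle$ where $\hat T_m := \frac{1}{n}\sum_i b_m(y_i)\,x_i^{\otimes m}$. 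The goal is to certify $\sum_m\langle\hat T_m,v^{\otimes m}\rangle \le O(\delta)\norm{v}_2^2$ under the axiom $\Set{\norm{v}_4^4\le\lambda\norm{v}_2^4}$.

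The comparison object is $R(v) := \frac{1}{n}\sum_i\bigl[\frac{1}{n}\sum_j p^2(\Iprod{x_j,v}-y_i)\bigr]$, the fixed-shift empirical average averaged over the $n$ observed shift values. For each fixed $i$, invoking the hypothesis with $\zeta = y_i$ (legitimate since $y_i^2\le\Delta^2$ and $\lambda$ lies in the allowed range) yields a degree-$O(t)$ SoS proof of $\Set{\norm{v}_4^4\le\lambda\norm{v}_2^4}\sststile{}{v}\Set{\frac{1}{n}\sum_j p^2(\Iprod{x_j,v}-y_i)\le O(\delta)\norm{v}_2^2}$; averaging these $n$ proofs over $i$, which is again a valid degree-$O(t)$ SoS proof (an average of sum-of-squares multipliers is sum-of-squares), gives $\Set{\norm{v}_4^4\le\lambda\norm{v}_2^4}\sststile{O(t)}{v}\Set{R(v)\le O(\delta)\norm{v}_2^2}$. (If one reads the hypothesis with $\E_{x\sim\calD}$ in place of the empirical average, the same conclusion holds up to an $o(\delta)\norm{v}_2^2$ correction coming from concentration of the empirical moments, which is absorbed below.) It remains to certify that the discrepancy $\frac{1}{n}\sum_i p^2(\Iprod{x_i,v}-y_i) - R(v)$ is at most $O(\delta)\norm{v}_2^2$ in SoS.

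Setting $\hat M_m := \frac{1}{n}\sum_j x_j^{\otimes m}$, this discrepancy equals $\sum_{m=0}^{2t}\bigl\langle \hat T_m - \bigl(\frac{1}{n}\sum_i b_m(y_i)\bigr)\hat M_m,\ v^{\otimes m}\bigr\rangle$, and the inner tensor decomposes as $\frac{1}{n}\sum_i b_m(y_i)\bigl(x_i^{\otimes m}-M_m\bigr) - \bigl(\frac{1}{n}\sum_i b_m(y_i)\bigr)\bigl(\hat M_m-M_m\bigr)$ with $M_m = \E_{x\sim\calD}[x^{\otimes m}]$. In the first term the summands are independent with mean zero (the $y_i$ being fixed reals, or independent of the $x_i$ as the statement allows), and in the second the scalar prefactor is bounded by $B$ while $\hat M_m\to M_m$. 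Truncating each $x_i$ using the almost-sure bound $\Iprod{x_i,e_j}^2 = O(\log n)$ from \cref{fact:sub-exponential-dist-bounded} and applying a matrix Bernstein bound to the symmetric flattenings of these order-$m$ tensors on $\R^d$ shows that, with high probability and for $n\ge d^{O(t)}$, each flattening has operator norm at most $\varepsilon_m$ with $\sum_{m\le 2t}\varepsilon_m$ as small as desired relative to $\delta$. Then \cref{fact:operator_norm} converts each spectral bound into the SoS inequality $\sststile{O(m)}{v}\Set{\bigl|\langle \hat T_m - (\frac{1}{n}\sum_i b_m(y_i))\hat M_m,\, v^{\otimes m}\rangle\bigr|\le\varepsilon_m\norm{v}_2^{m}}$, and since $\Set{\norm{v}_2^2\le1}\sststile{}{v}\Set{\norm{v}_2^m\le\norm{v}_2^2}$ for $m\ge2$, summing over $m$ bounds the discrepancy by $O(\delta)\norm{v}_2^2$. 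Combining with the bound on $R(v)$ via the addition rule gives $\Set{\norm{v}_4^4\le\lambda\norm{v}_2^4}\sststile{O(t)}{v}\Set{\frac{1}{n}\sum_i p^2(\Iprod{x_i,v}-y_i)\le O(\delta)\norm{v}_2^2}$; the total degree remains $O(t) = O\big(\log(1/\delta+\Delta/\delta)\cdot(1/\delta^2+\Delta^2)\big)$ since the box polynomial already has degree of this order and each step above inflates the degree only by a constant factor.

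The crux is the tensor concentration step: one must show that the weighted empirical moment tensors concentrate in the operator/injective norm needed to manufacture SoS inequalities $\langle\cdot,v^{\otimes m}\rangle \ge -\varepsilon_m\norm{v}_2^m$ simultaneously for all $m\le2t$, at sample complexity $n = d^{O(t)}$ and in the presence of the $2^{O(t)}$-sized coefficients of $p^2$ — this is exactly where strict sub-exponentiality of $\calD$ is used (to truncate and invoke matrix Bernstein), and, if the $y_i$ are taken genuinely random, their independence from the $x_i$. Relatedly, the bit-complexity of the assembled certificate must be kept at $n^{O(t)}$ so that \cref{fact:sos-soundness} still applies to pseudo-distributions that satisfy the constraint system only approximately.
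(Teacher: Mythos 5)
Your route is genuinely different from the paper's: the paper establishes anti-concentration \emph{uniformly over all shifts} $\zeta \in [-\Delta,\Delta]$ (for a Gaussian surrogate, then transfers to $\calD$ via the moment-matching certificates of \cref{lem:lower-bound-k-th-moment,cor:upper-bound-k-th-moment} under the $\ell_4$ axiom) and then dominates the per-sample-shift sum by an integral over the shift range using non-negativity of $p^2$; you instead invoke the fixed-shift hypothesis once per observed value $y_i$, average the resulting certificates to control the double average $R(v)$, and try to close the gap between $R(v)$ and the "diagonal" quantity $\frac{1}{n}\sum_i p^2(\Iprod{x_i,v}-y_i)$ by operator-norm concentration of the weighted moment tensors. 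The first half of your plan is fine and arguably cleaner than re-deriving Gaussian comparisons.

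The gap is in the discrepancy step, and it is exactly at the point you flag as an assumption: independence of the $y_i$ from the $x_i$. The theorem is used in \cref{lem:case2-analytic-density} with $y_i = -\Iprod{x_i, u_{\calS}}$, a deterministic function of $x_i$. In that regime the tensors $\frac{1}{n}\sum_i \bigl(b_m(y_i)-\bar b_m\bigr)x_i^{\otimes m}$ are not mean-zero fluctuations: they converge to the covariance tensors $\mathrm{Cov}\bigl(b_m(\Iprod{x,u_{\calS}}),\, x^{\otimes m}\bigr)$, which are generically nonzero with entries (supported on the $\calS$ coordinates) as large as the coefficients of $b_m$, i.e.\ up to $2^{O(t)}\Delta^{O(t)}$. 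Matrix Bernstein therefore cannot give $\varepsilon_m \ll \delta$ for the operator norms of the flattenings, and \cref{fact:operator_norm} yields nothing useful; any fix would have to bound the contraction of these covariance tensors against $v^{\otimes m}$ \emph{using the axiom} $\Set{\norm{v}_4^4\le\lambda\norm{v}_2^4}$ (which suppresses the $\calS$-coordinates of $v$), a substantially different and more delicate argument than the axiom-free spectral bound you propose. Even granting independence, note that the $\varepsilon_m$ must beat the $2^{O(t)}\Delta^{O(t)}$ coefficient sizes relative to $\delta$, so the sample-complexity claim $n\ge d^{O(t)}$ needs the constants tracked. The paper's uniform-in-$\zeta$ formulation avoids all of this because it never decouples the shift from the sample it is attached to.
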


The rest of the section is devoted to proving this theorem and relies on the following strategy.
\begin{itemize}
    \item In 
Lemmas~\ref{lem:lower-bound-k-th-moment}-\ref{cor:upper-bound-k-th-moment}, we first obtain a SoS certificate of anti-concentration along a dense direction $v$ that satisfies $\Norm{v}_{4}^4 \leq \lambda \Norm{v}_{2}^4$. The analysis proceeds by utilizing an explicit polynomial $p^2$ that approximates the box-indicator, and is inspired by \cite{kindler2010ugc}. 
    \item In \cref{thm:certificates-with-shifts}, we show that given a sufficiently $\ell_4$-dense direction, we can obtain a certificate of anti-concentration around a point that need not be the origin.
\end{itemize}

We can re-write the final conclusion above as the following polynomial identity:
\begin{equation}
    \label{eqn:explicit-polynomial-identity-sos-cert}
    \delta \norm{v}^2  -\frac{1}{n} \sum_{i \in [n]} p\Paren{\Iprod{x_i, v}^2-y_i  } = \textrm{sos}(v) + z(v)\Paren{ \lambda \norm{v}_2^4 - \norm{v}_4^4 }
\end{equation}
where $\textrm{sos}(v), z(v)$ are sum-of-squares polynomials, $p^2$ has degree $t = O\Paren{ \log\Paren{ \frac{1}{ \delta} + \frac{\zeta}{\delta} }  \cdot \Paren{ \frac{1}{ 1/\delta^2  } + \Delta^2 } } $ and  $\lambda\leq \frac{\delta}{ (2t)^{4t} \Delta^t }$.
Further, $z(v)$ is explicit in our proof of \cref{thm:anti-concentration-certificates-with-shifts}.

\subsection{Anti-concentration certificates along dense directions}

The goal of this section is to obtain a SoS certificate of anti-concentration along the direction $v$ that satisfies $\Norm{v}_{4}^4 \leq \lambda \Norm{v}_{2}^4$.
Our approach is to get a lower and upper bound on all 'monomials', i.e. $\expecf{}{\Iprod{x,v}^{2k}}$ and then compute an explicit polynomial $p^2$ that approximates the box-indicator. 


\begin{restatable}[Lower Bound on $2k$-th Moment]{lemma}{lowerBoundMoment}
\label{lem:lower-bound-k-th-moment}
Let $d/2 \ge k\geq2$ and let $x \sim \calD$, where  $\calD$ is 
a reasonably anticoncentrated distribution.
Further, for any set $S \subset [d]$ such that $|S|=k$, denote $\expecf{}{\prod_{i \in S } x_i^2 }= \Phi_{k}$. Let $v \in \R^d$ be such that $\Norm{v}_4^4 \leq \lambda \Norm{v}_2^4$ for some $\lambda \in (0, 2/k^2)$. Then, 
\begin{equation*}
    \Set{ \Norm{v}_4^4\leq \lambda \Norm{v}_2^4 } \sststile{4k}{v} \Set{ \expecf{}{\Iprod{x,v}^{2k}} \geq \Paren{\frac{ (2k)! }{2^{k} k!}}\cdot \Phi_k  \Paren{1 - \frac{(k-1) \lambda}{2}} \Norm{v}_2^{2k} }.
\end{equation*}
\end{restatable}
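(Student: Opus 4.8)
The plan is to compute $\E_{x\sim\calD}\Iprod{x,v}^{2k}$ explicitly as a degree-$2k$ polynomial in $v$ and lower bound it monomial by monomial. By the multinomial theorem $\Iprod{x,v}^{2k}=\sum_{\alpha:\,|\alpha|=2k}\binom{2k}{\alpha}v^\alpha x^\alpha$, so $\E\Iprod{x,v}^{2k}=\sum_{\alpha}\binom{2k}{\alpha}\E[x^\alpha]\,v^\alpha$ with $v^\alpha=\prod_i v_i^{\alpha_i}$ and $\binom{2k}{\alpha}=(2k)!/\prod_i\alpha_i!$. I would sort the multi-indices of total degree $2k$ into: (i) $\alpha=2\cdot\mathbf{1}_S$ for a $k$-set $S$ (the leading monomials, with coefficient $\frac{(2k)!}{2^k}$ and $v^\alpha=\prod_{i\in S}v_i^2$); (ii) all $\alpha_i$ even but $\alpha\neq 2\cdot\mathbf{1}_S$, which forces support size $<k$ and some $\alpha_i\ge 4$; and (iii) $\alpha$ with at least one odd entry.

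For groups (i) and (ii), $v^\alpha$ is an even monomial, hence a square, and $\E[x^\alpha]\ge 0$; moreover an even $\alpha$ with $|\alpha|=2k$ is always supported on at most $k$ coordinates, so almost $k$-wise independence (\cref{def:k-wise-independent}) together with Jensen's inequality $\E[x_i^{2m}]\ge(\E[x_i^2])^m$ gives $\E[x^\alpha]\ge(1-o_d(1))\Phi_k$. Discarding group (iii) for the moment, $\E\Iprod{x,v}^{2k}\ge(1-o_d(1))\Phi_k\cdot\sum_{\alpha\text{ even}}\binom{2k}{\alpha}v^\alpha$, an SoS polynomial in $v$, and the problem reduces to a symmetric-function estimate in the variables $y_i=v_i^2$. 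Writing $e_j$ for the $j$-th elementary symmetric polynomial and $p_j(y)=\sum_i y_i^j$ for the power sums (so $p_1(y)=\norm{v}_2^2$ and $p_2(y)=\norm{v}_4^4$), the leading part of the all-even sum equals $\frac{(2k)!}{2^k}e_k(y)$, and I would use the elementary identity $p_1(y)^k - k!\,e_k(y)=\sum_{(i_1,\dots,i_k)\text{ not all distinct}}\prod_j y_{i_j}$ — a polynomial with nonnegative coefficients in $y$, hence SoS in $v$ — together with the bound $\sum_{\text{not all distinct}}\prod_j y_{i_j}\le\binom{k}{2}p_2(y)\,p_1(y)^{k-2}$ and the axiom $\Set{p_2(y)\le\lambda\,p_1(y)^2}$ to certify, in SoS, the inequality $k!\,e_k(y)\ge\big(1-\tfrac{(k-1)\lambda}{2}\big)\norm{v}_2^{2k}$ claimed in the lemma (the hypothesis $\lambda<2/k^2$ keeps the correction below $1$; my quick accounting actually gives a version lossy by a factor of $k$, so pinning down the stated constant is where care is needed). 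The remaining all-even monomials only add nonnegative SoS terms, so the all-even sum is at least $\frac{(2k)!}{2^k k!}\big(1-\tfrac{(k-1)\lambda}{2}\big)\norm{v}_2^{2k}$ in SoS, and multiplying through by $\Phi_k$ gives the claim — with the constant $\frac{(2k)!}{2^k k!}=(2k-1)!!$ being precisely the Gaussian $2k$-th moment, reflecting that a spread direction behaves Gaussianly.

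It remains to dispose of group (iii): monomials $v^\alpha$ with an odd exponent. These vanish whenever $\calD$ is invariant under sign flips of individual coordinates — which covers all the concrete families of interest (products of symmetric marginals, the solid hypercube, uniform measures on $\ell_p$ balls) — and in general their (small) contribution is absorbed into the $o_d(1)$ slack of almost $k$-wise independence; either way they do not affect the stated inequality. For the degree count, the whole certificate is an SoS polynomial of degree $2k$ plus a degree-$(2k-4)$ SoS multiple of the degree-$4$ axiom, so degree $4k$ is comfortably sufficient (the slack also absorbs SoS AM--GM proofs used to control cross terms, which for degree-$2k$ monomials have degree $2k$). I expect the main obstacle to be the symmetric-function step — getting the leading constant in $k!\,e_k(y)\ge(1-\tfrac{(k-1)\lambda}{2})\norm{v}_2^{2k}$ sharp rather than lossy by a factor of $k$, while keeping every intermediate inequality SoS; the nonnegative-coefficient structure in $y_i=v_i^2$ is exactly what makes an SoS proof available, but squeezing out the right constant (and, separately, justifying negligibility of odd-exponent moments beyond the symmetric examples) is the delicate part.
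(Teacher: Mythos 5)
Your proposal follows essentially the same route as the paper: multinomial expansion, discarding odd-exponent terms by coordinate sign-symmetry, retaining only the square-free even terms $\frac{(2k)!}{2^k}\,e_k(v_1^2,\dots,v_d^2)$, and certifying $k!\,e_k \ge \bigl(1-\tfrac{k(k-1)\lambda}{2}\bigr)\|v\|_2^{2k}$ by expanding $(\sum_i v_i^2)^k$ and controlling the repeated-index terms with the $\ell_4^4$ axiom. The factor-of-$k$ discrepancy you flag is present in the paper itself --- its proof also derives the correction $\tfrac{k(k-1)\lambda}{2}$ (consistent with the hypothesis $\lambda<2/k^2$) even though the lemma statement reads $\tfrac{(k-1)\lambda}{2}$ --- so your accounting matches the actual argument.
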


Next, we consider the upper bound certificate. We show that such a certificate holds whenever the distribution we consider is 
reasonably anticoncentrated.

\begin{restatable}[Upper Bound on $2k$-th Moment]{lemma}{upperBoundMoment}
\label{cor:upper-bound-k-th-moment}
Let $k\geq2$ and let $x \sim \calD$ where $\calD$ is a
reasonably anticoncentrated distribution.
Let $v \in \R^d$ be such that $\Norm{v}_4^4 \leq \lambda \Norm{v}_2^4$ for some $\lambda \in (0, 1/(k^k c_k^k) )$. Then, 
\begin{equation*}
    \Set{ \Norm{v}_4^4\leq \lambda \Norm{v}_2^4 } \sststile{4k}{v} \Set{ \expecf{}{\Iprod{x,v}^{2k}} \leq \Paren{\frac{ (2k)! }{2^{k} k!}} \Paren{1 +  \Paren{ \lambda c_k^{k-2}k^k }} \Norm{v}_2^{2k} }.
\end{equation*}
\end{restatable}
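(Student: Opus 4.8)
The plan is to compute $\expecf{x \sim \calD}{\Iprod{x,v}^{2k}}$ directly from the multinomial expansion $\Iprod{x,v}^{2k} = \sum_{\alpha \in \N^d,\, |\alpha| = 2k} \binom{2k}{\alpha} v^{\alpha} x^{\alpha}$, so that $\expecf{}{\Iprod{x,v}^{2k}} = \sum_{\alpha} \binom{2k}{\alpha}\, \expecf{}{x^{\alpha}}\, v^{\alpha}$ is a polynomial in the indeterminate $v$ with fixed scalar coefficients, and then to split the multi-indices into a ``main'' family and a ``degenerate'' family. Call $\alpha$ \emph{paired} if every coordinate of $\alpha$ is $0$ or $2$, i.e. $\alpha$ is supported on exactly $k$ indices, each with multiplicity $2$; these are the only terms that survive for a standard Gaussian, and they should produce the leading constant $\frac{(2k)!}{2^k k!}$. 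Everything else is degenerate and must be shown to contribute only $\bigO{\lambda\, c_k^{k-2} k^k}$ times the main term.

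For the main family: if $\alpha$ is paired with support $S$, $|S|=k$, then $\binom{2k}{\alpha} = (2k)!/2^k$, $v^{\alpha} = \prod_{i \in S} v_i^2$, and by isotropy together with almost $k$-wise independence $\expecf{}{x^{\alpha}} = \expecf{}{\prod_{i\in S}x_i^2} = 1 \pm \gamma_S$ with $\gamma_S = o_d(1)$. Summing over $S$ and invoking the elementary SoS inequality $\sum_{|S|=k}\prod_{i\in S} v_i^2 \le \tfrac{1}{k!}\Norm{v}_2^{2k}$ (valid because $\Norm{v}_2^{2k} - k!\sum_{|S|=k}\prod_{i\in S}v_i^2$ expands into a nonnegative combination of squared monomials) shows that the paired part is at most $\frac{(2k)!}{2^k k!}(1 + o_d(1))\Norm{v}_2^{2k}$ with a degree-$2k$ SoS proof.

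The crux is the degenerate family. Every degenerate $\alpha$ either has some coordinate of multiplicity $\ge 4$ --- in which case $v^{\alpha}$ is divisible by $v_i^4$ and the corresponding partial sum is at most $\Norm{v}_4^4 \cdot \Norm{v}_2^{2k-4} \le \lambda \Norm{v}_2^{2k}$ in SoS --- or some coordinate of odd multiplicity, in which case a coordinate-power sum such as $\sum_i v_i^3$ or $\sum_i v_i^{2m+1}$ appears, and an SoS Cauchy--Schwarz step like $\Paren{\sum_i v_i^3}^2 \le \Norm{v}_2^2 \Norm{v}_4^4 \le \lambda \Norm{v}_2^6$ again extracts a factor of $\lambda$; the multiplicity-exactly-one terms have zero coefficient since $\calD$ has mean zero (for the product and $\ell_p$-ball instances this handles all odd coordinates by symmetry, and is the only place mean-zero is used beyond isotropy). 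The scalar coefficients $\expecf{}{x^{\alpha}}$ on the surviving degenerate $\alpha$ are bounded by a product of marginal moments $\expecf{}{\abs{x_i}^{\alpha_i}}$, each of which is at most a fixed power of $c_k$ by applying certifiable hypercontractivity of linear forms (\cref{def:certifiable-hypercontractivity-of-linear}) to the standard basis vectors $e_i$; bounding the number of degenerate ``shapes'' by $\bigO{k^k}$ and assembling these estimates yields the error bound $\frac{(2k)!}{2^k k!}\,\bigO{\lambda\, c_k^{k-2} k^k}\,\Norm{v}_2^{2k}$, which is smaller than the main term precisely under the hypothesis $\lambda < 1/(k^k c_k^k)$. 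Since every manipulation is a product of the degree-$4$ axiom $\Set{\Norm{v}_4^4 \le \lambda\Norm{v}_2^4}$ with SoS polynomials of degree at most $4k-4$ in $v$, the whole argument is a degree-$4k$ SoS proof.

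I expect the main obstacle to be exactly this degenerate-term bookkeeping: organizing multi-indices by collision pattern, checking that every non-paired pattern carries a genuinely SoS-extractable $v_i^4$-type factor (the odd-multiplicity patterns in particular require the Cauchy--Schwarz-against-$\Norm{v}_4^4$ device rather than a naive power-mean bound), and tracking constants carefully enough that the total correction is $O(\lambda c_k^{k-2}k^k)$. This is the ``upper'' mirror of \cref{lem:lower-bound-k-th-moment} and follows the same strategy of comparing moments to the Gaussian; the asymmetry is only that here each correction is bounded from above, whereas there one also needs a matching lower bound on the paired contribution.
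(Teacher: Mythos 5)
Your proposal follows essentially the same route as the paper's: expand $\Iprod{x,v}^{2k}$ multinomially, isolate the paired multi-indices to obtain the leading $\frac{(2k)!}{2^k k!}\Norm{v}_2^{2k}$ term via $\sum_{|S|=k}\prod_{i\in S}v_i^2 \le \frac{1}{k!}\Norm{v}_2^{2k}$, and charge every non-paired multi-index to the axiom $\Norm{v}_4^4 \le \lambda\Norm{v}_2^4$ with the scalar coefficients controlled by hypercontractivity of the marginals (the paper likewise reduces $\expecf{}{x^\alpha}$ to a product of marginal moments via negative association and then reuses the independent-coordinate argument). The only minor divergence is your treatment of odd-multiplicity terms: the paper discards them wholesale by sign-symmetry (as justified in the lower-bound lemma), whereas you retain them and dispatch them with an extra Cauchy--Schwarz against $\Norm{v}_4^4$, which is slightly more general but does not change the argument.
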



We prove these lemmas in the subsequent sections.

\subsubsection{Warmup: The case of independent coordinates}

In this section, as a warmup, we consider the case when the $x_i$s are independent with mean $0$ and variance $1$ and prove the bounds from the previous section. 
To this end, we obtain lower and upper bounds on the $2k$th moments in \cref{lem: lower_bound_moment_independent} and \cref{lem: upper_bound_moment_independent} respectively.

We first obtain a lower bound on $\expecf{}{\Iprod{x,v}^{2k}}$ that is captured by sos. The proof is inspired by Lemma 2.6 in Kindler, Naor and Schechtman \cite{kindler2010ugc}.

\begin{lemma}[Lower Bound on $2k$-th moment]\label{lem: lower_bound_moment_independent}
Let $d/2\geq k\geq2$ and let $\{ x_i \}_{i \in [d]}$ be $d$ iid random variables with mean $0$ and variance $1$. Let $v \in \R^d$ be such that $\Norm{v}_4^4 \leq \lambda \Norm{v}_2^4$ for some $\lambda \in (0, 2/k^2)$. Then, 
\begin{equation*}
    \Set{ \Norm{v}_4^4\leq \lambda \Norm{v}_2^4 } \sststile{4k}{v} \Set{ \expecf{}{\Iprod{x,v}^{2k}} \geq \Paren{\frac{ (2k)! }{2^{k} k!}} \Paren{1 - \frac{k(k-1) \lambda}{2}} \Norm{v}_2^{2k} }
\end{equation*}
\end{lemma}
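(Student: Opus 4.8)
The plan is to expand $\iprod{x,v}^{2k} = \bigl(\sum_i v_i x_i\bigr)^{2k}$ via the multinomial theorem and take the expectation over the independent $x_i$'s coordinate-wise. Since each $x_i$ has mean $0$ and variance $1$, any monomial $\prod_i v_i^{a_i}$ whose exponent vector $(a_1,\dots,a_d)$ has an odd entry contributes $0$, and the monomials with all even entries $a_i = 2b_i$, $\sum_i b_i = k$, survive with a nonnegative coefficient (a multinomial coefficient times $\prod_i \E[x_i^{2b_i}]$, all of which are $\ge 0$). The dominant contribution comes from the ``square-free'' terms where every $b_i \in \{0,1\}$: these give exactly $\frac{(2k)!}{2^k k!}\, e_k(v_1^2,\dots,v_d^2)$ where $e_k$ is the elementary symmetric polynomial, since $\E[x_i^2]=1$ and the number of ways to assign is captured by that combinatorial factor. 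The remaining terms (where some $b_i \ge 2$) are nonnegative but involve higher even powers $v_i^4, v_i^6,\dots$, which is exactly where the $\ell_4^4$-density hypothesis will be used to show they cannot be too large relative to $\norm{v}_2^{2k}$ — though here, since we only want a \emph{lower} bound, I can simply drop all of them (they are manifestly SoS once one tracks that $\E[x_i^{2b_i}]\ge 0$ and the multinomial coefficient is positive).

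So the core of the argument reduces to a purely algebraic SoS fact: under $\{\norm{v}_4^4 \le \lambda \norm{v}_2^4\}$, one has $e_k(v_1^2,\dots,v_d^2) \ge \tfrac{1}{k!}\bigl(1 - \tfrac{k(k-1)\lambda}{2}\bigr)\norm{v}_2^{2k}$ at SoS degree $4k$. The natural route is Newton's identity / the standard relation $k!\, e_k(y) = \sum$ over set partitions, or more concretely the inclusion–exclusion-type identity expressing $\bigl(\sum_i y_i\bigr)^k$ in terms of $e_k(y)$ plus lower elementary symmetric polynomials multiplied by power sums $p_2(y) = \sum_i y_i^2$, $p_3(y),\dots$. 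With $y_i = v_i^2$ we have $p_1(y) = \norm{v}_2^2$, $p_2(y) = \norm{v}_4^4 \le \lambda \norm{v}_2^4$, and higher $p_j(y) \le p_2(y)\cdot p_1(y)^{j-2}$ in SoS (each $v_i^{2j} \le v_i^4 \cdot \norm{v}_2^{2(j-2)}$, summed). Plugging these bounds into the identity $k!\,e_k(y) = p_1(y)^k - \binom{k}{2}p_1(y)^{k-2}p_2(y) + (\text{terms with an extra factor of }p_2 \text{ or higher})$ and crudely bounding the error terms by $O(\lambda^2)\norm{v}_2^{2k}$ gives the claimed $\bigl(1 - \tfrac{k(k-1)\lambda}{2} - O(k^4\lambda^2)\bigr)$ factor; since $\lambda < 2/k^2$ the $\lambda^2$ correction is absorbed into the constant, or one argues slightly more carefully to get it clean. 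Each of these steps is a univariate/multivariate polynomial inequality that holds pointwise for reals and hence (via \cref{fact:univariate-sos-proofs}, \cref{fact:nonnegative-quadratic}, and the substitution rule with $y_i = v_i^2$) lifts to an SoS proof of degree $O(k)$ in $v$.

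The step I expect to be the main obstacle is controlling the combinatorics of the error terms in the elementary-symmetric-to-power-sum conversion tightly enough to land the exact coefficient $\tfrac{k(k-1)\lambda}{2}$ rather than something weaker, while simultaneously keeping everything inside degree-$4k$ SoS. In particular, one must carefully enumerate which partitions of $[k]$ (equivalently, which assignments of the $2k$ indices into groups) produce a single ``doubled'' pair versus more degenerate collisions, verify that the single-pair terms contribute precisely $-\binom{k}{2}\lambda$ (after using $\sum_i v_i^4 \le \lambda\norm{v}_2^4$) and that all higher-order collision terms are either SoS-nonnegative (and thus droppable for a lower bound) or bounded by higher powers of $\lambda$. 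The bookkeeping is routine but error-prone; the conceptual content is just that $e_k(v_1^2,\dots)$ is close to $\tfrac{1}{k!}\norm{v}_2^{2k}$ when $v$ is $\ell_4$-spread, and that all the arithmetic is low-degree polynomial identities in $v$ that SoS can verify via \cref{fact:univariate-sos-proofs} applied coordinatewise together with the multiplication and substitution rules.
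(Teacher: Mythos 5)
Your proposal is correct and follows essentially the same route as the paper: expand $\E\langle x,v\rangle^{2k}$ by the multinomial theorem, drop the (nonnegative) non-square-free terms to retain $\frac{(2k)!}{2^k}e_k(v_1^2,\dots,v_d^2)$, and then lower-bound $e_k(v^2)$ against $\|v\|_2^{2k}$ using $\|v\|_4^4\le\lambda\|v\|_2^4$. The only difference is cosmetic — the paper carries out the $e_k$-versus-$\|v\|_2^{2k}$ comparison by directly splitting the multinomial expansion of $(\sum_i v_i^2)^k$ into square-free and collision terms (which lands the coefficient $\frac{k(k-1)}{2}\lambda$ exactly), rather than going through Newton's identities and absorbing an $O(\lambda^2)$ error as you suggest; either way the constant is immaterial for the downstream applications where $\lambda$ is taken tiny.
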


\begin{proof}
For ease of notation in this proof, denote by $w_i := x_iv_i$ for $i \le d$. Using the substitution rule, we have 
\begin{equation}
\label{eqn:lower_bound_main1}
    \begin{split}
    \sststile{4k}{v}\Biggl\{ \expecf{}{\Iprod{x,v}^{2k}} & =  \sum_{ \substack{\ell_1, \ldots \ell_d \in \Z_{\geq 0} \\\ell_1+ \ldots +\ell_d=k }} \frac{(2k)!}{ \prod_{j \in [d]} (2\ell_j)! } \expecf{}{\prod_{i \in [d]}  w_j^{2\ell_j}} \geq \Paren{\frac{(2k)!}{2^k}} \Paren{ \sum_{\substack{ S\subset [n]\\ |S|=k}} \prod_{j \in S} \expecf{}{ w_j^{2}} }  \Biggr\},
    \end{split}
\end{equation}
where the inequality follows from only summing over terms for which $\ell_j \le 1$ for all $j \in [d]$.
On the other hand, using the substitution rule and $\EE[w_j^2] = \EE[x_j^2v_j^2] = v_j^2$, we have 
\begin{equation}
\label{eqn:upper_bound_norm_v}
\begin{split}
    \sststile{4k}{v} \Biggl\{\Norm{v}_2^{2k} &= \Paren{ \sum_{j\in [d]} \expecf{}{ w_j^2 }  }^{k} =  \sum_{\substack{ \ell_1, \ldots, \ell_d \in \Z_{\geq0}\\ \ell_1 + \ldots+\ell_d=k }}\frac{k!}{\prod_{j \in [d]} \ell_j !} \prod_{j \in [d]} (\expecf{}{w_j^2})^{\ell_j}   \Biggr\} .
\end{split}
\end{equation}
We split the term above into terms where all the $\ell_j$'s are at most $1$ and the terms where at least one $\ell_j$ is at least $2$, to get
\begin{equation}
    \begin{split}
    \Norm{v}_2^{2k} &\leq  k! \sum_{\substack{S \subset [n]\\ |S|=k}} \prod_{j \in S} \expecf{}{w_j^2 }  + \frac{k}{2} \sum_{j \in [d]} \expecf{}{w_j^2} \sum_{\substack{r_1, \ldots r_d \in \Z_{\geq0}\\ r_1+\ldots+r_d =k-1 \\r_j\geq1 }} \frac{(k-1)!}{\prod_{i \in[d]} r_i! } \prod_{i \in [d]} (\expecf{}{w_i^2})^{r_i} \\
    & = k! \sum_{\substack{S \subset [n]\\ |S|=k}} \prod_{j \in S} \expecf{}{w_j^2 }  + \frac{k}{2} \sum_{j \in [d]} (\expecf{}{w_j^2})^2  \sum_{\substack{r_1, \ldots r_d \in \Z_{\geq0}\\ r_1+\ldots+r_d =k-2 }} \frac{(k-1)!}{\prod_{i \in[d]} r_i! } \prod_{i \in [d]} (\expecf{}{w_i^2})^{r_i}  \Biggr\} .
\end{split}
\end{equation}
Let us focus on upper bounding the second term above. Observe, the inner summation is independent of $j$ and thus

\begin{equation}
    \begin{split}
    \sststile{4k}{v} \Biggl\{ &  \sum_{j \in [d]} (\expecf{}{w_j^2})^2  \sum_{\substack{r_1, \ldots r_d \in \Z_{\geq0}\\ r_1+\ldots+r_d =k- 2  }} \frac{(k-1)!}{\prod_{i \in[d]} r_i! } \prod_{i \in [d]} (\expecf{}{w_i^2})^{r_i}   \\
    & = \Paren{ \sum_{j \in [d]}  (\expecf{}{w_j^2})^2 } \cdot \Paren{k-1} \sum_{\substack{r_1, \ldots r_d \in \Z_{\geq0}\\ r_1+\ldots+r_d =k-2  }} \frac{(k-2)!}{\prod_{i \in[d]} r_i! } \prod_{i \in [d]} (\expecf{}{w_i^2})^{r_i} \Biggr\} .
\end{split}
\end{equation}
We now bound each of these terms above as follows: using the bound on the $\ell_4^4$ norm, along with $\expecf{}{x_j^2}=1$, 
\begin{equation}
    \Set{\Norm{v}_4^4 \leq \lambda \Norm{v}_2^4 } \sststile{4}{v} \Set{ \sum_{j \in [d]} ( \expecf{}{w_j^2} )^2  = \sum_{j \in [d]}  v_j^4 \leq \lambda \Norm{v}_2^4 },
\end{equation}
\begin{equation}
    \begin{split}
    \sststile{4k}{v} \Biggl\{   \sum_{\substack{r_1, \ldots r_d \in \Z_{\geq0}\\ r_1+\ldots+r_d =k-2}} \frac{(k-2)!}{\prod_{i \in[d]} r_i! } \prod_{i \in [d]} \expecf{}{w_i^2}^{r_i} 
    &  = \Norm{v}^{2k-4}_2\Biggr\} .
\end{split}
\end{equation}
Combining the above, we conclude
\begin{equation}
     \begin{split}
     \sststile{4k}{v} \Biggl\{\Norm{v}_2^{2k} \leq  k! \sum_{\substack{S \subset [n]\\ |S|=k}} \prod_{j \in S} \expecf{}{w_j^2 }  + \frac{k(k-1)}{2} \Paren{ \lambda \Norm{v}_2^{2k}} \Biggr\}
    \end{split}
\end{equation}
Together with \eqref{eqn:lower_bound_main1}, we obtain the result.
\end{proof}

A similar proof technique can be used for the upper bound.

\begin{lemma}[Upper Bound on $2k$-th Moment]\label{lem: upper_bound_moment_independent}
\label{lem:upper-bound-independent}
Let $k\geq2$ and let $\{ x_i \}_{i \in [d]}$ be $d$ iid random variables drawn from a distribution $\calD$ with mean $0$ and variance $1$ that is $(c_k, k)$-certifiably hypercontractive. Let $v \in \R^d$ be such that $\Norm{v}_4^4 \leq \lambda \Norm{v}_2^4$ for some $\lambda \in (0, 1/(k^k c_k^k) )$. Then, 
\begin{equation*}
    \Set{ \Norm{v}_4^4\leq \lambda \Norm{v}_2^4 } \sststile{4k}{v} \Set{ \expecf{}{\Iprod{x,v}^{2k}} \leq \Paren{\frac{ (2k)! }{2^{k} k!} +   \lambda c_k^kk^k } \Norm{v}_2^{2k} }.
\end{equation*}
\end{lemma}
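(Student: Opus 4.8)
I would start from the multinomial expansion of $\Iprod{x,v}^{2k}$: using independence and $\expecf{}{x_i} = 0$, $\expecf{}{x_i^2} = 1$, every odd single-coordinate moment drops out, so, writing $\mu_{2m} := \expecf{}{x_1^{2m}}$ (hence $\mu_0 = \mu_2 = 1$), one has the polynomial identity in the indeterminate $v$
\begin{equation*}
\expecf{}{\Iprod{x,v}^{2k}} \;=\; \sum_{\substack{\ell_1,\dots,\ell_d\in\Z_{\geq 0}\\ \ell_1+\dots+\ell_d = k}} \frac{(2k)!}{\prod_{j\in[d]}(2\ell_j)!}\,\Paren{\prod_{j\in[d]}\mu_{2\ell_j}}\,\prod_{j\in[d]} v_j^{2\ell_j},
\end{equation*}
a trivial degree-$2k$ sum-of-squares identity needing no axioms. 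The plan is to split the right-hand side into the \emph{collision-free} part $D$ (multi-indices with all $\ell_j\le 1$) and the \emph{collision} part $E$ (the rest). Since $\mu_0 = \mu_2 = 1$, we have $D = \tfrac{(2k)!}{2^k}\,e_k\Paren{v_1^2,\dots,v_d^2}$, a multiple of an elementary symmetric polynomial; and every monomial in $E$ carries a factor $v_j^4$ for some $j$ (because some $\ell_j\geq 2$), which is exactly what the hypothesis $\Norm{v}_4^4\leq\lambda\Norm{v}_2^4$ will convert into a power of $\lambda$.

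\textbf{The collision-free part.}
I would prove the Maclaurin-type inequality $k!\,e_k\Paren{v_1^2,\dots,v_d^2}\leq\Norm{v}_2^{2k}$ simply by noting that $\Norm{v}_2^{2k} - k!\,e_k\Paren{v_1^2,\dots,v_d^2}$ equals the sum of $\binom{k}{\ell_1,\dots,\ell_d}\prod_j v_j^{2\ell_j}$ over multi-indices $\ell$ with $\sum_j\ell_j = k$ having some $\ell_j\geq 2$ — a nonnegative combination of squared monomials, and hence a degree-$2k$ sum of squares. Thus $D\leq\tfrac{(2k)!}{2^k k!}\Norm{v}_2^{2k}$, which is exactly the leading term of the statement, derived with no axiom.

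\textbf{The collision part.}
I would group the terms of $E$ by the number $r\in\{1,\dots,\lfloor k/2\rfloor\}$ of coordinates with $\ell_j\geq 2$, and from each such monomial factor out $\prod_{j:\,\ell_j\geq 2} v_j^4$. Summing over the heavy coordinate set and then over the residual multi-index, one gets (as a nonnegative-coefficient, hence sum-of-squares, domination) $E\leq\sum_{r=1}^{\lfloor k/2\rfloor} e_r\Paren{v_1^4,\dots,v_d^4}\cdot R_r(v)$, where $R_r$ is a sum of squared monomials of degree $2(k-2r)$ dominating the residual coefficients. Here hypercontractivity enters to control the stray moments: evaluating \cref{def:certifiable-hypercontractivity-of-linear} at the basis vector $e_1$ gives $\mu_{2k}\leq c_k^{2k}$, and the power-mean inequality upgrades this to $\mu_{2m} = \expecf{}{(x_1^2)^m}\leq\mu_{2k}^{m/k}\leq c_k^{2m}$ for all $m\leq k$; combined with crude Stirling bounds on $(2k)!/\prod(2\ell_j)!$ and on $\prod\ell_j!/(k-2r)!$, this yields a monomial-by-monomial comparison $R_r(v)\leq C_r\,\Norm{v}_2^{2k-4r}$ (again sum-of-squares, no axiom) with $C_r\leq(c_k k)^{O(k)}$. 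Using $e_r\Paren{v_1^4,\dots,v_d^4}\leq\tfrac{1}{r!}\Norm{v}_4^{4r}$ (Maclaurin once more) and $\Norm{v}_4^{4r}\leq\lambda^r\Norm{v}_2^{4r}$ (obtained by iterating the axiom, i.e. an induction on $r$ multiplying the axiom by a sum-of-squares polynomial at each step), one assembles
\begin{equation*}
E \;\leq\; \sum_{r=1}^{\lfloor k/2\rfloor}\frac{C_r}{r!}\,\Norm{v}_4^{4r}\,\Norm{v}_2^{2k-4r} \;\leq\; \Paren{\sum_{r=1}^{\lfloor k/2\rfloor}\frac{\lambda^{r-1}C_r}{r!}}\,\lambda\,\Norm{v}_2^{2k} \;\leq\; \lambda\,c_k^{k} k^{k}\,\Norm{v}_2^{2k},
\end{equation*}
where the $r$-sum is dominated by its $r=1$ term since $\lambda < 1/(k^k c_k^k)$ kills the $r\geq 2$ tail, and the $r=1$ coefficient is already below $c_k^k k^k$ because $(2k)!/\bigl(2^k(k-2)!\bigr)\leq(2k)^{k+2}/2^k$. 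Adding $D$ and $E$ gives the bound, and all steps are sum-of-squares derivations in $v$ of degree at most $4k$, the sole axiom used being $\Norm{v}_4^4\leq\lambda\Norm{v}_2^4$.

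\textbf{Where the difficulty lies.}
Conceptually the proof is short — the collision-free part yields the Gaussian moment $(2k-1)!! = \tfrac{(2k)!}{2^k k!}$, and every off-diagonal term carries a $v_j^4$ that the $\ell_4$-density hypothesis trades for a $\lambda$, with extra collisions contributing only extra (harmless) powers of $\lambda$. The real work is the combinatorial bookkeeping: organizing the collision terms by the number of collisions so the constant does not explode, and grinding through the factorial/Stirling estimates to land on $c_k^k k^k$ (the exact constant being immaterial downstream, where $\lambda$ is anyway exponentially small). One must also verify that every step stays inside the sum-of-squares proof system, but the residual comparisons are pure monomial dominations (sum-of-squares with no axioms), and raising $\Norm{v}_4^4\leq\lambda\Norm{v}_2^4$ to a power is legal because a product of sum-of-squares-nonnegative polynomials is sum-of-squares-nonnegative.
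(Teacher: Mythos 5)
Your proposal is correct and follows essentially the same route as the paper's proof: the same multinomial expansion, the same split into the collision-free part (which yields the Gaussian-moment coefficient $\tfrac{(2k)!}{2^k k!}$ via the elementary-symmetric / Maclaurin bound) and the collision part (where each surviving monomial carries a $v_j^4$ that the axiom $\Norm{v}_4^4\leq\lambda\Norm{v}_2^4$ converts into a factor of $\lambda$, with hypercontractivity at the coordinate vectors controlling the stray moments). Your grouping of the collision terms by the number $r$ of repeated coordinates is a slightly more careful bookkeeping than the paper's single-$v_j^4$ extraction, but it is a refinement of the same argument rather than a different one.
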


\begin{proof}
    The proof is similar to the proof of \cref{lem: lower_bound_moment_independent}. We have
    \begin{equation}
\label{eqn:upper_bound_main}
    \begin{split}
    \sststile{4k}{v}\Biggl\{ \expecf{}{\Iprod{x,v}^{2k}} & =  \sum_{ \substack{\ell_1, \ldots \ell_d \in \Z_{\geq 0} \\\ell_1+ \ldots +\ell_d=k }} \frac{(2k)!}{ \prod_{j \in [d]} (2\ell_j)! } \prod_{i \in [d]} \expecf{}{ x_j^{2\ell_j}v_j^{2\ell_j}}\\
    & \leq \Paren{\frac{(2k)!}{2^k}}  \sum_{\substack{ S\subset [d]\\ |S|=k}} \prod_{j \in S} \expecf{}{ x_j^{2}v_j^2} + \sum_{j \in [d]}v_j^{4}\sum_{\substack{r_1, \ldots r_d \in \Z_{\geq0}\\ r_1+\ldots+r_d =k - 2}} \frac{(2k)!}{\prod_{i \in[d]} (2r_i)! }  \expecf{}{x_j^4\prod_{i \in [d]}x_i^{2r_i}v_i^{2r_i}} \Biggr\},
    \end{split}
\end{equation}

To bound the first term, we use
$\expecf{}{x_j^2 v_j^2} = \expecf{}{x_j^2} v_j^2= v_j^2$ to get

\begin{equation}
    \sststile{4k}{v} \Set{ \sum_{\substack{ S\subset [n]\\ \abs{S}=k}} \prod_{j \in S} \expecf{}{ x_j^{2}v_j^2}  \leq \norm{v}_2^{2k} },
\end{equation}
and therefore the first term is at most $\frac{ (2k)! }{2^{k} k!} \Norm{v}_2^{2k}$.
As for the second term, we will use the following argument. By  hypercontractivity of the true samples (see Definition~\ref{def:certifiable-hypercontractivity-of-linear}), we have
\begin{equation}
    \sststile{4r_i}{v}\Set{ \expecf{}{x_i^{2r}v_i^{2r}} = \expecf{}{x_i ^{2r}} v_i^{2r}\leq c_{k}^{r}\cdot  \expecf{}{x_i^2}^{r} v_i^{2r} \le c_{k}^{r}v_i^{2r}}
\end{equation}
and thus 
\begin{equation}
\begin{split}
    \sststile{4k}{v} \Biggl\{ \sum_{\substack{r_1, \ldots r_d \in \Z_{\geq0}\\ r_1+\ldots+r_d =k-2 }} \frac{(2k)!}{\prod_{i \in[d]} (2r_i)! } \prod_{i \in [d]} \expecf{}{x_j^4(x_iv_i)^{2r_i}} &  \leq \sum_{\substack{r_1, \ldots r_d \in \Z_{\geq0}\\ r_1+\ldots+r_d =k-2 }} \frac{(2k)! \cdot  c_k^k }{\prod_{i \in[d]} (2r_i)! } \prod_{i \in [d]} v_i^{2r_i} \\
    & \le \frac{c_k^k \prod_{i\in [k+1] }(2k-i) }{2^{k-2}}   \sum_{\substack{r_1, \ldots r_d \in \Z_{\geq0}\\ r_1+\ldots+r_d =k-2 }} \frac{(k-2)!  }{\prod_{i \in[d]} r_i! } \prod_{i \in [d]} v_i^{2r_i}\\
    & \leq \Paren{c_k^k k^k } \norm{v}_2^{2k-4} \Biggr\}.
\end{split}
\end{equation}
Finally, we also use $\sum_{j \in [d]} v_j^4 \le \lda \norm{v}_2^4$.
Combining the equations above,
\begin{equation}
\label{eqn:upper_bound_main-2}
    \begin{split}
    \sststile{4k}{v}\Biggl\{ \expecf{}{\Iprod{x,v}^{2k}} 
    & \leq \Paren{\frac{(2k)!}{2^k}}  \norm{v}_2^{2k} +  \Paren{\lambda c_k^k k^k } \norm{v}_2^{2k} \Biggr\},
    \end{split}
\end{equation}
which concludes the proof.
\end{proof}

\subsubsection{Proofs of \cref{lem:lower-bound-k-th-moment} and \cref{cor:upper-bound-k-th-moment}}

Using the same techniques from the previous section, we can complete the proof of \cref{lem:lower-bound-k-th-moment} and \cref{cor:upper-bound-k-th-moment}.

\lowerBoundMoment*

\begin{proof}[Proof of \cref{lem:lower-bound-k-th-moment}]
Using the multinomial theorem and substitution rule, we have 
\begin{equation}
\label{eqn:lower_bound_main-log-conc}
    \begin{split}
    \sststile{4k}{v}\Biggl\{ \expecf{}{\Iprod{x,v}^{2k}} & =  \sum_{ \substack{r_1, \ldots r_d \in \Z_{\geq 0} \\\ r_1+ \ldots +r_d=2k }} \frac{ (2k)!}{ \prod_{j \in [d]} r_j! } \expecf{}{\prod_{i \in [d]}  (v_j x_j)^{r_j}}    \\
    & =  \sum_{ \substack{r_1, \ldots r_d \in \Z_{\geq 0} \\\ r_1+ \ldots +r_d=2k }} \frac{ (2k)!}{ \prod_{j \in [d]} r_j! }  \Paren{ \prod_{i \in [d]} v_j^{r_j}} \cdot \Paren{ \expecf{}{ \prod_{j \in [d]}   x_j^{r_j}} }   \Biggr\},
    \end{split}
\end{equation}

Now, consider the monomials that appear with at least $1$ odd term above. We can represent them as a product of squares as and a product of literals as follows: $\prod_{i\in \calT} x_{i} \prod_{j\in[d]} x_{j}^{2\ell_j} $, where $T \subseteq [d]$ such that $|T|\leq k$, and $\ell_j \in \Z_{\geq 0}$. Since $\calK$ is unconditional, we know the expectation of this monomial under any signing of the coordinates remains the same. Let $\zeta \in \Set{-1,1}^d$ be such that for some $i^* \in \calT$, $\zeta_{i^*} = - \textrm{sign}(x_{i^*})$ and for all $i \neq i^* \in \calT$,   $\zeta_{i^*} =  \textrm{sign}(x_{i^*})$. Then, 
\begin{equation*}
\begin{split}
     \expecf{x \sim \calD }{ \prod_{j \in [d]}   x_j^{r_j}}   =\expecf{}{ \prod_{i\in \calT} x_{i} \cdot  \prod_{j\in[d]} x_{j}^{2\ell_j} } 
     & = \expecf{ x \sim \calD }{ \prod_{i\in \calT} \zeta_i x_{i} \cdot  \prod_{j\in[d]} \Paren{ \zeta_j x_{j} }^{2\ell_j} } \\
     & = \expecf{ x \sim \calD }{ -x_{i^*} \prod_{i \neq i^* \in \calT}  x_{i} \cdot  \prod_{j\in[d]} \Paren{  x_{j} }^{2\ell_j} } \\
     & = - \expecf{}{ \prod_{i\in \calT} x_{i} \cdot  \prod_{j\in[d]} x_{j}^{2\ell_j} }
\end{split}
\end{equation*}
Therefore, we can conclude that all terms appearing with at least one odd power must be $0$. Now, 
\begin{equation}
\label{eqn:lower_bound_main3}
    \begin{split}
    \sststile{4k}{v}\Biggl\{ \expecf{}{\Iprod{x,v}^{2k}} &     =  \sum_{ \substack{\ell_1, \ldots \ell_d \in \Z_{\geq 0} \\\ell_1+ \ldots +\ell_d=k }} \frac{(2k)!}{ \prod_{j \in [d]} (2\ell_j)! } \Paren{ \prod_{i \in [d]} v_j^{2\ell_j}} \cdot \Paren{ \expecf{}{ \prod_{j \in [d]}   x_j^{2\ell_j}} }   \\
    & \geq \Paren{\frac{(2k)!}{2^k}} \Paren{ \sum_{\substack{ S\subset [n]\\ |S|=k}} \Paren{\prod_{j \in S} v_j^2  } \Paren{  \expecf{}{ \prod_{j \in S}  x_j^{2}}} } \Biggr\},
    \end{split}
\end{equation}
Again, using the substitution rule, we have 
\begin{equation}
\label{eqn:upper_bound_norm_v-2}
\begin{split}
    \sststile{4k}{v} \Biggl\{\Norm{v}_2^{2k} &= \Paren{ \sum_{i\in [d]}   v^2_i   }^{k}=  \sum_{\substack{ \ell_1, \ldots, \ell_d \in \Z_{\geq0}\\  \ell_1 + \ldots+\ell_d=k }}\frac{k!}{\prod_{j \in [d]} \ell_j !} \prod_{j \in [d]}  v_j^{2\ell_j}    \Biggr\} .
\end{split}
\end{equation}
We again split the term above into terms where all the $\ell_j$'s are at most $1$ and the terms where at one $\ell_j$ is at least $2$ to get,

\begin{equation}
    \begin{split}
    \sststile{4k}{v} \Biggl\{  \sum_{\substack{ \ell_1, \ldots, \ell_d \in \Z_{\geq0}\\ \ell_1 + \ldots+\ell_d=k }}\frac{k!}{\prod_{j \in [d]} \ell_j !} \prod_{j \in [d]} v_j^{2\ell_j} &\leq \Paren{ k! \sum_{\substack{S \subset [n]\\ |S|=k}} \prod_{j \in S} v_j^2  } + \frac{k}{2} \sum_{j \in [d]}v_j^2  \sum_{\substack{r_1, \ldots r_d \in \Z_{\geq0}\\ r_1+\ldots+r_d =k-1 \\r_j\geq1 }} \frac{(k-1)!}{\prod_{i \in[d]} r_i! } \prod_{i \in [d]} v_i^{2r_i}  \\
    &= \Paren{ k! \sum_{\substack{S \subset [n]\\ |S|=k}} \prod_{j \in S} v_j^2  }   + \frac{k}{2} \sum_{j \in [d]} v_j^4   \sum_{\substack{r_1, \ldots r_d \in \Z_{\geq0}\\ r_1+\ldots+r_d =k-2 }} \frac{(k-1)!}{\prod_{i \in[d]} r_i! } \prod_{i \in [d]} v_i^{2r_i}  \Biggr\} .
\end{split}
\end{equation}
In the second term, the inner summation is independent of $j$, so
\begin{equation}
    \begin{split}
    \sststile{4k}{v} \Biggl\{ &  \sum_{j \in [d]} v_j^4  \sum_{\substack{r_1, \ldots r_d \in \Z_{\geq0}\\ r_1+\ldots+r_d =k- 2  }} \frac{(k-1)!}{\prod_{i \in[d]} r_i! } \prod_{i \in [d]} v_i^{2r_i}    
     = \Paren{   \Norm{v}_4^4  } \cdot \Paren{k-1} \sum_{\substack{r_1, \ldots r_d \in \Z_{\geq0}\\ r_1+\ldots+r_d =k-2  }} \frac{(k-2)!}{\prod_{i \in[d]} r_i! } \prod_{i \in [d]} v_i^{2r_i} \Biggr\}
\end{split}
\end{equation}
To bound the second term above, we use
\begin{equation}
    \begin{split}
    \sststile{4k}{v} \Biggl\{   \sum_{\substack{r_1, \ldots r_d \in \Z_{\geq0}\\ r_1+\ldots+r_d =k-2}} \frac{(k-2)!}{\prod_{i \in[d]} r_i! } \prod_{i \in [d]} v_i^{2r_i}  
    &  = \Norm{v}^{2k-4}_2\Biggr\} .
\end{split}
\end{equation}
to conclude
\begin{equation}
     \begin{split}
     \Set{\Norm{v}_4^4 \leq \lambda \Norm{v}_2^4 } \sststile{4k}{v} \Biggl\{ &  \sum_{\substack{ \ell_1, \ldots, \ell_d \in \Z_{\geq0}\\ \ell_1 + \ldots+\ell_d=k }}\frac{k!}{\prod_{j \in [d]} \ell_j !} \prod_{j \in [d]} v_j^{2\ell_j} \leq  k! \sum_{\substack{S \subset [n]\\ |S|=k}} \prod_{j \in S} v_j^2 + \frac{k(k-1)}{2} \Paren{ \lambda \Norm{v}_2^{2k}} \Biggr\} .
    \end{split}
\end{equation}
Therefore,
\begin{equation*}
  \Set{\Norm{v}_4^4 \leq \lambda \Norm{v}_2^4  } \sststile{}{}\Set{     \sum_{\substack{S \subset [n] \\ \abs{S}=k} } \prod_{j \in S} v_j^2   \geq \frac{1}{k!} \Paren{ 1- \frac{k(k-1)\lambda}{2}} \Norm{v}_2^{2k} }.
\end{equation*}

Using the notation $\Phi_k = \expecf{}{\prod_{j \in S} x^{2}_j}$ for any set $S \subset [d]$ of size $k$ and rearranging, 
\begin{equation}
     \begin{split}
     \Set{\Norm{v}_4^4 \leq \lambda \Norm{v}_2^4 } \sststile{4k}{v} \Set{ \expecf{}{\Iprod{x,v}^{2k} } \geq  \Paren{\frac{(2k)!}{2^k k! }} \cdot \Phi_k \cdot  \Paren{ 1- \frac{k(k-1)\lambda}{2}} \Norm{v}_2^{2k}   } .
    \end{split}
\end{equation}
\end{proof}


\upperBoundMoment*

\begin{proof}[Proof of \cref{cor:upper-bound-k-th-moment}]
Let ${i_1, i_2, \ldots, i_t} \subset [d]$ and let $\ell_1,\ell_2, \ldots, \ell_t \in \Z_{\geq 1}$ such that $\sum_{j \in [t]}\ell_j = k$. 
Setting $f(x_{i_1}) = x_{i_1}^{2\ell_1}$ and $g\Paren{x_{i_j}} = x_{i_{j}}^{2\ell_j} $ for $j \in [2,t]$, and observing that $x^2 = |x|^2$,
\begin{equation*}
    \expecf{}{ \prod_{j \in [t]} x_{i_j}^{2\ell_j} } \leq \expecf{}{ x_{i_1}^{2\ell_1}} \cdot  \expecf{}{ \prod_{j \in [2, t]} x_{i_j}^{2\ell_j} }.
\end{equation*}
Repeating the argument above $t$ times, we can conclude  $\expecf{}{  \prod_{j \in [t]} x_{i_j}^{2\ell_j} }  \leq \prod_{j \in [t]} \expecf{}{ x_{i_j}^{2\ell_j}} $.

Then, 
\begin{equation}
    \begin{split}
    \sststile{4k}{v}\Biggl\{ \expecf{}{\Iprod{x,v}^{2k}} & =  \sum_{ \substack{\ell_1, \ldots \ell_d \in \Z_{\geq 0} \\\ell_1+ \ldots +\ell_d=k }} \frac{(2k)!}{ \prod_{j \in [d]} (2\ell_j)! } \expecf{} {\prod_{i \in [d]}  (v_j x_j)^{2\ell_j}} \leq \sum_{ \substack{\ell_1, \ldots \ell_d \in \Z_{\geq 0} \\\ell_1+ \ldots +\ell_d=k }} \frac{(2k)!}{ \prod_{j \in [d]} (2\ell_j)! } \prod_{i \in [d]} \expecf{}{ (v_j x_j)^{2\ell_j}} \Biggr\}
    \end{split}
\end{equation}
The rest of the argument is identical to Lemma \ref{lem:upper-bound-independent}.
\end{proof}

\subsection{Anti-Concentration Certificates with shifts.}

Next, we show that given a sufficiently $\ell_4$-dense direction, we can obtain a certificate of anti-concentration around a point that need not be the origin. Here, we use an explicit construction for the box indicator polynomial, as defined in \cref{subsec:boxpolynomial}.

\begin{theorem}[Certificates with Shifts]
\label{thm:certificates-with-shifts}
Let $v$ be a vector -valued indeterminate.  Given a distribution $\calD$ which is reasonably anticoncentrated and given a shift $\zeta >  0$,  if $\lambda \leq \frac{\delta}{ (2d)^{4d} \zeta^d }$, 
\begin{equation*}
    \Set{ \norm{v}_4^4 \leq \lambda \norm{v}_2^4 } \sststile{}{} \Set{ \expecf{ }{p^2\Paren{ \Iprod{x,v} - \zeta } } \leq O\Paren{\delta} \norm{v}_2^2 }, 
\end{equation*}
where $p^2\Paren{z}$ is a degree $t = O\Paren{ \log\Paren{ \frac{1}{ \delta} + \frac{\zeta}{\delta} }  \cdot \Paren{ \frac{1}{ 1/\delta^2  } + \zeta^2 } } $ polynomial.
\end{theorem}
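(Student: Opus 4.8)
The plan is to prove the bound by comparison with the one-dimensional Gaussian, using the sum-of-squares moment estimates of \cref{lem:lower-bound-k-th-moment} and \cref{cor:upper-bound-k-th-moment} to control the discrepancy. It suffices to argue on the slice $\norm{v}_2^2 = 1$ (the regime in which \cref{thm:anti-concentration-certificates-with-shifts} invokes this statement), re-homogenizing the resulting polynomial identity by the appropriate power of $\norm{v}_2^2$ at the end. First I would expand the box polynomial in the monomial basis: writing $p^2(y - \zeta) = \sum_{j=0}^{t} a_j y^j$ as a polynomial in the scalar variable $y$, \cref{lem: cheby_coeff_bound} together with the choices $L = 1/\delta + \zeta$, $\eta = \delta/L$ and the binomial expansion of the shift gives $|a_j| \le 2^{O(t)}$. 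Substituting $y = \Iprod{x,v}$ and taking the expectation over $x \sim \calD$ reduces the target to controlling, in sum-of-squares, the form $\sum_{j=0}^{t} a_j \expecf{}{\Iprod{x,v}^{j}}$ in the indeterminate $v$.

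Next I would match each moment $\expecf{}{\Iprod{x,v}^{j}}$ against the corresponding Gaussian moment $g_j := \expecf{g \sim \calN(0,1)}{g^j}$. For even $j = 2k$, \cref{lem:lower-bound-k-th-moment} and \cref{cor:upper-bound-k-th-moment}, combined with almost $k$-wise independence (which, since $\calD$ is isotropic, forces $\Phi_k = 1 \pm \gamma$ with $\gamma = o_d(1)$) and the identity $g_{2k} = (2k)!/(2^k k!)$, give the two-sided sum-of-squares bound
\[
\Set{ \norm{v}_4^4 \le \lambda\norm{v}_2^4 } \; \sststile{4k}{v} \; \Set{ \Abs{ \expecf{}{\Iprod{x,v}^{2k}} - g_{2k}\,\norm{v}_2^{2k} } \le \rho_k\,\norm{v}_2^{2k} }, \qquad \rho_k = O\!\Paren{\lambda\, c_k^{k} k^{k}} + g_{2k}\,\gamma .
\]
For odd $j = 2k+1$, the sign symmetry of $\calD$ (the unconditionality already used inside the proof of \cref{lem:lower-bound-k-th-moment}) makes every monomial in the multinomial expansion of $\expecf{}{\Iprod{x,v}^{2k+1}}$ carry some coordinate to an odd power, and hence vanish, so $\expecf{}{\Iprod{x,v}^{2k+1}} = 0 = g_{2k+1}$ identically. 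These per-degree statements combine, via the addition and multiplication rules, into a single sum-of-squares derivation.

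Then I would assemble the estimate: on $\{\norm{v}_2^2 = 1\}$,
\[
\expecf{x\sim\calD}{ p^2(\Iprod{x,v} - \zeta) } \; = \; \underbrace{ \sum_{j} a_j g_j }_{=\; \expecf{g\sim\calN(0,1)}{p^2(g-\zeta)}} \; + \; \sum_{j} a_j \Paren{ \expecf{}{\Iprod{x,v}^{j}} - g_j },
\]
where the first term is $O(\delta)$ by \cref{thm:anti-conc-with-shifts-gaussian} and the second has magnitude at most $\sum_{k} |a_{2k}|\, \rho_k \le 2^{O(t)} t^{O(t)} (\lambda + \gamma)$. Since the degree $t$ depends only on $\delta$ and $\zeta$, and the hypercontractivity constant of $\calD$ is $O(1)$, every factor appearing here other than $\lambda$ and $\gamma$ is $O_{\delta,\zeta}(1)$, so the hypothesis $\lambda \le \delta/((2d)^{4d}\zeta^d)$ (with $d = \Theta(t)$ the underlying polynomial degree) together with $\gamma = o_d(1)$ are exactly what is needed to make the whole error $O(\delta)$; this gives $\expecf{x\sim\calD}{p^2(\Iprod{x,v}-\zeta)} \le O(\delta)$ on the slice, and re-homogenizing recovers the stated $O(\delta)\norm{v}_2^2$. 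Tracking the multipliers produced along the way — the sum-of-squares terms supplied by \cref{lem:lower-bound-k-th-moment} and \cref{cor:upper-bound-k-th-moment}, and the scalars $a_{2k}$ — yields the explicit form $z(v) = \sum_a c_a \norm{v}_2^{2a}$ appearing in \eqref{eqn:explicit-polynomial-identity-sos-cert}.

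I expect the main obstacle to lie in the error bookkeeping of the last step rather than in anything conceptual: the box polynomial has coefficients as large as $2^{\Theta(t)}$ and the moment-comparison errors carry combinatorial and hypercontractivity factors as large as $t^{\Theta(t)}$, so one must verify that the $\lambda$-savings in \cref{lem:lower-bound-k-th-moment} and \cref{cor:upper-bound-k-th-moment} genuinely dominate all of them — which is precisely the role of the quantitatively strong hypothesis on $\lambda$. A secondary point requiring care is that the nonzero shift $\zeta$ forces odd powers of $\Iprod{x,v}$ into the expansion, so the argument genuinely relies on the sign symmetry of $\calD$ to annihilate those moments (matching the Gaussian, whose odd moments vanish as well).
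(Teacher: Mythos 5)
Your proposal is correct and follows essentially the same route as the paper's proof of \cref{thm:certificates-with-shifts}: reduce to the Gaussian case via \cref{thm:anti-conc-with-shifts-gaussian}, expand the box polynomial in the monomial basis with $2^{O(t)}$ coefficient bounds, and control each moment of $\Iprod{x,v}$ using \cref{lem:lower-bound-k-th-moment} and \cref{cor:upper-bound-k-th-moment} under the $\ell_4$ axiom, absorbing the discrepancy into the hypothesis on $\lambda$. Your explicit treatment of the odd moments (via unconditionality) and of the $\Phi_k \approx 1$ step from almost $k$-wise independence is, if anything, slightly more careful than the paper's write-up, which leaves these implicit.
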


In order to prove this theorem, we first show that $p^2$ is anticoncentrated around the shift when the input distribution is Gaussian. Then we show that sufficiently $\ell_4$ dense directions have low-degree moments that concentrate around those of a Gaussian, which suffice to prove the aforementioned theorem. 

Now, we are ready to provide a proof of the multi-variate inequality. At a high level, we reduce to the univariate inequality and then appeal to the fact that every univariate inequality has a sum-of-squares proof.

\begin{proof}[Proof of Theorem \ref{thm:certificates-with-shifts}]

Let $q\Paren{ \Iprod{\Sigma^{\dagger/2} x, v} - \zeta  } = \norm{v}^{t} p\Paren{ \Iprod{\Sigma^{\dagger/2} x, v} / \norm{v} - \zeta }  $. Since $p^2$ is an even polynomial, we have 
\begin{equation*}
    q^2\Paren{ \Iprod{\Sigma^{\dagger/2} x, v} - \zeta  }  = \norm{v}^{2t}_2 \cdot \Paren{ \sum_{j \in [t] } c_{2j} \Paren{ \frac{ \Iprod{ \Sigma^{\dagger/2} x, v}  }{ \norm{v}  } - \zeta }^{2j} }^2 
\end{equation*}
Observe, when $x \sim \calN(0,\Sigma)$, $\Sigma^{\dagger/2} x$ has variance $1$ in every direction that lies in the span of $\Sigma$, and further by rotational invariance of Gaussians, 
\begin{equation}
\begin{split}
    \expecf{x \sim \calN(0,\Sigma) }{ q^2\Paren{ \Iprod{\Sigma^{\dagger/2} x, v} - \zeta  } }  & = \norm{v}_2^{2t} \expecf{x \sim \calN(0,\Sigma) }{ \Paren{ \sum_{j \in [t]} c_{2j}\Paren{ \frac{ \Iprod{ \Sigma^{\dagger/2} x, v}  }{ \norm{v}  } - \zeta }^{2j}  }^2  }  \\
    & = \norm{v}^{2t} \expecf{g \sim \calN(0,1) }{ \Paren{ \sum_{j \in [t]}  c_{2j} \Paren{ g - \zeta } ^{2j}  }^2}.
\end{split}
\end{equation}
Therefore, $\expecf{x \sim \calN(0,\Sigma) }{ q\Paren{ \Iprod{ \Sigma^{\dagger/2} x, v } -\zeta } } $ is a polynomial in the formal variable $\norm{v}^2$ and using Theorem \ref{thm:anti-conc-with-shifts-gaussian}, we have,
\begin{equation}
    \expecf{x \sim \calN(0, \Sigma) }{q^2\Paren{ \Iprod{ \Sigma^{\dagger/2} x, v } -\zeta }  } \leq O(\delta) \norm{v}^{2t}_2
\end{equation}
Since the aforementioned inequality is a univariate polynomial in $\norm{v}^2$, invoking Fact \ref{fact:univariate-sos-proofs}, we have 
\begin{equation*}
    \sststile{\norm{v}^2}{ 4t } \Set{  \expecf{x \sim \calN(0, \Sigma) }{q^2\Paren{ \Iprod{ \Sigma^{\dagger/2} x, v } -\zeta }  } \leq O(\delta) \norm{v}^{2t}_2  }
\end{equation*}

Now, we reduce to the Gaussian certificate by bounding each low-degree moment, when the samples are drawn from a distribution $\calD$.  We begin with invoking Lemma \ref{lem:lower-bound-k-th-moment}, for all $k \in [t]$,

\begin{equation*}
    \Set{ \Norm{v}_4^4\leq \lambda \Norm{v}_2^4 } \sststile{4k}{v} \Set{ \expecf{x \sim \calD }{\Iprod{\Sigma^{\dagger/2} x,v}^{2k}} \geq \Paren{\frac{ (2k)! }{2^{k} k!}}   \Paren{1 - \frac{(k-1) \lambda}{2}} \Norm{v}_2^{2k} }.
\end{equation*}
Similarly, invoking Lemma \ref{cor:upper-bound-k-th-moment}, for all $k \in [t]$, 

\begin{equation*}
    \Set{ \Norm{v}_4^4\leq \lambda \Norm{v}_2^4 } \sststile{4k}{v} \Set{ \expecf{x\sim \calD }{\Iprod{\Sigma^{\dagger/2} x,v}^{2k}} \leq \Paren{\frac{ (2k)! }{2^{k} k!}} \Paren{1 +  \Paren{ \lambda c_k^{k-2}k^k }} \Norm{v}_2^{2k} }.
\end{equation*}

Since each moment is tightly concentrated around the $k$-th moment of a Gaussian, we expand $q^2$ in it's monomial basis and bound each term :
\begin{equation}
\begin{split}
    \sststile{}{}  \expecf{x\sim \calD }{q^2\Paren{ \Iprod{ \Sigma^{\dagger/2} x,v} - \zeta  } } =    \sum_{i \in [t] }\sum_{j \in [t]} \hat{c}_{i,j} \expecf{x\sim \calD }{ \Iprod{\Sigma^{\dagger/2} x , v }^{i}    } \zeta^j 
\end{split}
\end{equation}
where each $\abs{\hat{c}_{i,j}} \leq 2^{O(t)}$. Note that we know the coefficients $c_{i,j}$ apriori and can invoke the appropriate upper bound from above for each of the terms in the expansion. 
Therefore, we have 
\begin{equation}
    \Set{ \Norm{v}_4^4\leq \lambda \Norm{v}_2^4 } \sststile{4d }{ v} \Set{\expecf{x \sim \calD }{p^2\Paren{ \Iprod{x,v} - \zeta  } } \leq   \expecf{z \sim \calN(0,1) }{ p^2\Paren{ z - \zeta } } \norm{v}_2^2 + (c_d d)^{4d} \zeta^d \lambda \norm{v}^{2}_2   }
\end{equation}
Recall, $\lambda \leq \frac{\delta}{ (c_d d)^{4d}\zeta^d   }$, and thus both terms above can be upper bounded by $O(\delta)$, completing the proof.
\end{proof}

\subsection{Proof of \cref{thm:anti-concentration-certificates-with-shifts}}

Now, we have all the tools we need to finish the proof of \cref{thm:anti-concentration-certificates-with-shifts}, restated for convenience.

\AntiConcBoundedShifts*

\begin{proof}
We know that with probability at least $99/100$, for all $\zeta \in [-\Delta, \Delta]$, we have 
\begin{equation}
\label{eqn:anti-conc-gaussian-all-shifts}
   \frac{1}{n}\sum_{i \in [n]}  p^2\paren{ g_i  - \zeta  }   \leq O\Paren{\delta} 
\end{equation}
Observe, 
\begin{equation*}
\begin{split}
    \frac{1}{n}\sum_{i \in [n]}  p^2\paren{ g_i  - y_i   } & = \frac{1}{n}\sum_{i \in [n]}  \int_{y \in [-\Delta, \Delta] } p^2\paren{ g_i  -  y  } \indic\Paren{y= y_i } \mu(y) dy \\
    & \leq \frac{1}{n} \sum_{i \in [n]} \int_{y \in [-\Delta, \Delta]} p^2\Paren{ g_i - y } \mu(y) dy \\
    & = \int_{y \in [-\Delta, \Delta]}  \frac{1}{n} \sum_{i \in [n]}  p^2\Paren{ g_i - y } \mu(y) dy \\
    & \leq   O(\delta) \int_{y \in [-\Delta, \Delta]}   \mu(y) dy,
\end{split}
\end{equation*}
where the last inequality follows from invoking Equation \eqref{eqn:anti-conc-gaussian-all-shifts} for $\zeta = y$. To complete the proof, observe, we can repeat the analysis in the proof of Theorem \ref{thm:certificates-with-shifts}, where we showed that the precondition $\norm{v}_4^4\leq \lambda \norm{v}_2^2$ implies the low-degree moments of $\Iprod{x_i , v}$ are within tiny additive error of those of a Gaussian. 
\end{proof}
\section{Reweighting Pseudo-distributions over the sphere}
\label{sec:re-weighting-pseudo-dist}


In this section, we show that given a pseudo-distribution $\mu$ over a vector valued indeterminate $v \in \mathbb{R}^d$, there exists a re-weighted pseudo-distribution $\mu'$ and a fixed vector $u_{\calS}$ supported on a small subset of coordinates $\calS \subset [d]$ such that either the vector $v- u_\calS$ is \textit{analytically dense}
or all the low-degree moments of $\norm{v-u_\calS}$ are bounded in pseudo-expectation.   

\begin{restatable}[Re-weighting pseudo-distributions over the sphere]{theorem}{Reweighting}
\label{thm:key-re-weighting-theorem}
Let $k,d \in \mathbb{N}$ and $ \delta, \eta >0$, let $z \in \mathbb{N} \geq \log(d)$ and $t = \bigO{ z/(\eta^2  \delta^{4k+2} ) } $. Let $\mu$ be a degree-$t$ pseudo-distribution over the vector valued indeterminate $v \in \mathbb{R}^d$ satisfying $\norm{v}_2^2 =1$. Then, Algorithm~\ref{algo:re-weighting}  outputs  a  re-weighted pseudo-distribution $\mu'$ and a fixed constant $u_{\calS}$, supported on a subset $\calS \subset [d]$ of size at most $\bigO{ 1/(\delta^{4k+2}\eta^2 ) }$ and  satisfies one of the following:
\begin{enumerate}
    \item Hypercontractivity at non-trivial scale: \begin{equation*}
    \pexpecf{\mu'}{ \Norm{v -  u_{\calS}  }_{2z}^{2z} } \leq (2\delta)^{2z} \cdot  \pexpecf{\mu'}{ \Norm{v -  u_{\calS}  }_{2}^{2z} } \textrm{ and } \pexpecf{\mu'}{\Norm{v -u_{\calS}  }_2^{2 k}}  >  \eta,
\end{equation*}
    \item Bounded pseudo-moments of the norm: $\forall y  \in [k]$, \begin{equation}
      \pexpecf{\mu'}{\Norm{v -u_{\calS}  }_2^{2 y}} \leq \eta.
    \end{equation}
\end{enumerate}
\end{restatable}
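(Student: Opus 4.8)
The plan is to analyze the iterative re-weighting procedure of \cref{algo:re-weighting}, which maintains a re-weighted pseudo-distribution $\mu_j$ and a growing set $\calS_j \subseteq [d]$ of ``fixed'' coordinates, together with the fixed vector $u_{\calS_j}$ whose $i$-th coordinate equals $\pexpecf{\mu_j}{v_i}$ for $i \in \calS_j$ and $0$ otherwise. Starting from $\mu_0 = \mu$ and $\calS_0 = \emptyset$, each iteration does the following: if $\pexpecf{\mu_j}{\Norm{v - u_{\calS_j}}_2^{2k}} \le \eta$, halt and return Case~$2$; else if every coordinate $i \notin \calS_j$ satisfies $\pexpecf{\mu_j}{v_i^{2z}} < \delta^{2z}\pexpecf{\mu_j}{\Norm{v - u_{\calS_j}}_2^{2z}}$, halt and return Case~$1$; otherwise pick a ``heavy'' coordinate $i_j \notin \calS_j$ violating this bound, apply the scalar re-weighting of \cref{lem:scalar-reweighting} to the indeterminate $v_{i_j}$ to obtain $\mu_{j+1}$ in which $v_{i_j}$ is concentrated around its new pseudo-mean, and add $i_j$ to $\calS$.

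I would first verify that the halting outputs are correct. Case~$2$ is immediate from the stopping test together with H\"older's inequality for pseudo-distributions (\cref{fact:pseudo-expectation-holder}), which upgrades the bound on the $2k$-th pseudo-moment to all $y \in [k]$. For Case~$1$, expand inside the pseudo-expectation $\Norm{v - u_{\calS_j}}_{2z}^{2z} = \sum_{i \notin \calS_j} v_i^{2z} + \sum_{i \in \calS_j} (v_i - (u_{\calS_j})_i)^{2z}$: the first sum has at most $d \le 2^{2z-1}$ terms (this is exactly where $z \ge \log d$ is used), each below $\delta^{2z}\pexpecf{\mu_j}{\Norm{v - u_{\calS_j}}_2^{2z}}$ by the stopping test, so it contributes at most $\frac12 (2\delta)^{2z} \pexpecf{\mu_j}{\Norm{v - u_{\calS_j}}_2^{2z}}$; the second sum ranges over coordinates whose central pseudo-moment the procedure has driven below a tiny threshold $\rho$, and since $\pexpecf{\mu_j}{\Norm{v - u_{\calS_j}}_2^{2z}} \ge \pexpecf{\mu_j}{\Norm{v - u_{\calS_j}}_2^{2k}}^{z/k} > \eta^{z/k}$, it too is absorbed into the remaining $\frac12 (2\delta)^{2z}\pexpecf{\mu_j}{\Norm{v - u_{\calS_j}}_2^{2z}}$ once $\rho$ is chosen small enough.

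The heart of the argument is bounding the number of iterations through the potential $\Phi_j := \pexpecf{\mu_j}{\Norm{v - u_{\calS_j}}_2^{2}}$, which equals $1 - \sum_{i \in \calS_j}\pexpecf{\mu_j}{v_i}^2 \in [0,1]$ since $\Norm{v}_2^2 = 1$, with $\Phi_0 = 1$. When a heavy coordinate $i_j$ is processed, the selection rule, H\"older, and the fact that the Case~$2$ test failed give $\pexpecf{\mu_j}{v_{i_j}^{2z}} \ge \delta^{2z}\pexpecf{\mu_j}{\Norm{v - u_{\calS_j}}_2^{2k}}^{z/k} > \delta^{2z}\eta^{z/k}$, hence $\pexpecf{\mu_j}{v_{i_j}^{2}} > \delta^2\eta^{1/k}$; monotonicity of scalar re-weightings (\cref{fact:scalar-reweighting-monotone}) preserves this lower bound through the fixing step, and the concentration of $v_{i_j}$ then gives $\pexpecf{\mu_{j+1}}{v_{i_j}}^2 \ge \frac12\delta^2\eta^{1/k}$ --- the ``new mass'' that pushes $\Phi$ down. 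The step I expect to be the main obstacle is controlling the side effect that re-weighting to fix $v_{i_j}$ also shifts the pseudo-means of the coordinates already in $\calS_j$, which must remain essentially pinned so that the Case~$1$ bound above holds at termination; the key tool is the sum-of-squares consequence $v_{i_j}^{2z} \le \Norm{v}_2^{2z} = 1$ of the sphere constraint, which shows that a single re-weighting by $v_{i_j}^{2z}$ can inflate the central pseudo-moment $\pexpecf{\cdot}{(v_i - \widetilde{v}_i)^2}$ of any fixed coordinate $i$ by at most the normalizing factor $\pexpecf{\mu_j}{v_{i_j}^{2z}}^{-1} \le (\delta^{2z}\eta^{z/k})^{-1}$, so taking the error parameter of \cref{lem:scalar-reweighting} sufficiently small keeps the accumulated drift of all old means below $\frac1{100}\delta^2\eta^{1/k}$.

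Putting these together, $\Phi_j - \Phi_{j+1} \ge \frac12\delta^2\eta^{1/k} - \frac1{100}\delta^2\eta^{1/k} \ge \frac14\delta^2\eta^{1/k}$ each iteration, so the procedure halts within $O(1/(\delta^2\eta^{1/k})) = O(1/(\delta^{4k+2}\eta^2))$ steps, which bounds $|\calS|$. The apparent circularity --- the required error parameter depends on $|\calS|$, which depends on the error parameter --- is resolved by first fixing an a priori bound $N = O(1/(\delta^2\eta^{1/k}))$ on the iteration count, choosing the error parameter (and hence $\rho$) as a function of $N$, and deriving a contradiction with $\Phi_N \ge 0$ should the procedure ever run longer than $N$ steps. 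Finally, each iteration's re-weighting has degree controlled by \cref{lem:scalar-reweighting}, and summing over the $O(1/(\delta^{4k+2}\eta^2))$ iterations shows that a degree-$t$ input pseudo-distribution with $t = O(z / (\eta^2\delta^{4k+2}))$ suffices.
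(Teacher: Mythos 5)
Your overall architecture (iteratively fix a heavy coordinate, control the drift of previously fixed coordinates, and run a potential argument) matches the paper's, but the per-iteration progress bound has a genuine gap. From the selection rule you have $\pexpecf{\mu_j}{v_{i_j}^{2z}} > \delta^{2z}\,\pexpecf{\mu_j}{\Norm{v-u_{\calS_j}}_2^{2z}} > \delta^{2z}\eta^{z/k}$, and you then write ``hence $\pexpecf{\mu_j}{v_{i_j}^{2}} > \delta^{2}\eta^{1/k}$.'' This implication is false: H\"older/Jensen gives $\pexpecf{\mu}{v_i^2} \le \pexpecf{\mu}{v_i^{2z}}^{1/z}$, an upper bound, and the only lower bound available from the sphere constraint is $\pexpecf{\mu}{v_i^2} \ge \pexpecf{\mu}{v_i^{2z}} > \delta^{2z}\eta^{z/k}$ (take $v_{i_j}$ equal to $1$ with probability $2\delta^{2z}\eta^{z/k}$ and $0$ otherwise to see the gap is real). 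Since your ``new mass'' per iteration is at best $\delta^{2z}\eta^{z/k}$ with $z \ge \log d$, the potential argument as written only bounds the number of iterations by roughly $\delta^{-2z}$, which is super-polynomial in $d$ rather than the claimed $O(1/(\delta^{4k+2}\eta^2))$. The paper's fix is exactly the missing ingredient: \cref{lem:key-coordinate-lemma} does not apply the scalar re-weighting once to $v_{i_j}$, but re-weights by $v_{i_j}^{2z}$, observes that $\pexpecf{}{v_{i_j}^{z}}$ is now large, re-weights by $v_{i_j}^{z}$, and so on for $O(\log z)$ rounds of square-rooting; it is this cascade that manufactures $\pexpecf{\mu_{j+1}}{v_{i_j}}^2 \gtrsim \delta\,\pexpecf{\mu}{\Norm{v}_2^{2z}}^{1/z} \gtrsim \delta^2\eta^{1/k}$ under the \emph{new} measure. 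A single application of \cref{lem:scalar-reweighting} to $v_{i_j}$ cannot do this, because that lemma requires the second moment to already be non-trivially large.

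A secondary error: in Case~2 you claim H\"older upgrades $\pexpecf{\mu'}{\Norm{v-u_{\calS}}_2^{2k}} \le \eta$ to $\pexpecf{\mu'}{\Norm{v-u_{\calS}}_2^{2y}} \le \eta$ for all $y \in [k]$. H\"older gives $\pexpecf{\mu'}{\Norm{v-u_{\calS}}_2^{2y}} \le \eta^{y/k} \ge \eta$, i.e.\ the inequality points the wrong way for $y<k$ and $\eta<1$. This one is repairable (stop when the $2k$-th moment is below $\eta^k$, or check all $y\in[k]$ directly as the paper's Algorithm~\ref{algo:re-weighting} does and track all of $\pexpecf{\mu_\ell}{\Norm{v^{(\ell)}}_2^{2y}}$, $y\in[k]$, as potentials), but as written the claim is incorrect. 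Your treatment of the drift of previously fixed coordinates and of the Case~1 decomposition (using $z \ge \log d$ to absorb the factor $d$) is in the right spirit and essentially matches \cref{lem:fixing-thevariance} and \cref{lem:bounded-coordiates-to-hypercontractivity}.
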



    


\begin{mdframed}
  \begin{algorithm}[Re-Weighting Pseudo-Distributions]
    \label{algo:re-weighting}\mbox{}
    \begin{description}
    \item[Input:] Integers $z, k, \in\mathbb{N}$ and $0< \eta, \delta <1$. A degree-$\phi$ pseudo-distribution over a vector valued indeterminate $v\in\mathbb{R}^d$ satisfying the constraint $\norm{v}^2=1$, such that $z\geq \log(d)$ and $\phi= \Omega\Paren{z/\delta^{4k+2} \eta^2 }$.
    
    \item[Operation:]\mbox{}
    \begin{enumerate}
    \item Let $u_S=0$, $v^{(0)} = v$ and $\mu_0 = \mu$. Let $L = \bigO{ 1/(\delta^{4k+2}\eta^2 ) }$.   
    \item For $\ell \in [L]$, 
    \begin{enumerate}
        \item If $\forall y \in [k]$,  $\pexpecf{\mu}{ \norm{v^{(\ell-1)} }^{2y}_2 } \leq \eta$ or  $\pexpecf{\mu}{ \norm{v^{(\ell-1)} }^{2z}_{2z} } \leq (2 \delta)^{2z} \pexpecf{\mu}{ \norm{v^{(\ell-1)} }_2^{2z} }$, stop.
        \item Else, find a coordinate $i_\ell \in [d]$ such that $$\pexpecf{\mu_{\ell-1} }{ 
 \Paren{ v_{i_\ell}^{(\ell-1)} }^{2z}  } > \delta^{2z} \cdot  \pexpecf{\mu_{\ell-1} }{ \norm{v^{(\ell-1)}}_2^{2z} }.$$ 
        \item Compute a re-weighted pseudo-distribution $\mu_\ell$ such that \cref{lem:key-coordinate-lemma} is satisfied. 
        \item Set $v^{(\ell)} =v^{(\ell)}  - \pexpecf{\mu_\ell }{ v_{i_{\ell}} } $ and $u_\calS = u_{\calS} + \pexpecf{\mu_\ell }{ v_{i_{\ell}} }$. 
    \end{enumerate}
    \end{enumerate}
    \item[Output:] A re-weighted pseudo-distribution $\mu'$ and a fixed constant $u_{\calS}  \in \mathbb{R}^d$ supported on at most $\bigO{ 1/\delta^{4k+2}\eta^2 }$ coordinates such that the conclusion of Theorem~\ref{thm:key-re-weighting-theorem} holds. 
    \end{description}
  \end{algorithm}
\end{mdframed}

We will prove this theorem in the rest of the section. But first, we need a few technical ingredients, starting with the following re-weighting lemma from Barak, Kothari and Steurer~\cite{barak2017quantum}:


\begin{lemma}[Scalar Reweighting, Lemmas 6.1, 6.3~\cite{barak2017quantum}]
\label{lem:scalar-reweighting}
Given $t\in \mathbb{N}$ let $\mu$ be a pseudo-distribution over $\mathbb{R}$ and $z$ be an indeterminate such that $\pexpecf{\mu}{z^2 }\geq1$ and $\Set{ z^2 \leq \Delta^2 }$, for some $\Delta \geq 1$.  Given $\eps>0$, there is a degree-$\Paren{ct \log(\Delta)/\eps^2 }$ re-weighting of $\mu$, for a fixed constant $c\ge1$, denoted by $\mu'$ such that 
\begin{enumerate}
    \item Fixing the moments:  For some $m\in \mathbb{R}$ be such that $\abs{m} \geq \sqrt{ \pexpecf{\mu'}{z^2} }$,
    \begin{equation*}
    \pexpecf{\mu' }{ \Paren{  z -  m }^{2t}  } \leq \eps^{2t} m^{2t},
    \end{equation*}
    \item Concentration : 
    \begin{equation*}
        \Paren{1+\eps} m  \geq \pexpecf{\mu'}{z} \geq \Paren{1-\eps} m .
    \end{equation*}
\end{enumerate}
\end{lemma}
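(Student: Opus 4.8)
This is the scalar reweighting lemma of Barak, Kothari and Steurer~\cite{barak2017quantum}, and I would establish it by their one-dimensional moment-sequence argument. The plan is: use log-convexity of the pseudo-moments of $z$ to find a power $s$ at which reweighting by $z^{2s}$ lands at a ``flat scale'' where all low-degree moments of $z$ are nearly geometric; deduce concentration of $z^2$; then apply one further sign-breaking reweighting to upgrade this to concentration of $z$ and to pin down $\pexpecf{\mu'}{z}$. Concretely, set $c_j := \pexpecf{\mu}{z^{2j}}$ for $0 \le j \le N$, where $N$ (the eventual reweighting degree) will be $O(t\log(\Delta)/\eps^2)$. By \cref{fact:pseudo-expectation-cauchy-schwarz} with $f=z^{j-1},\,g=z^{j+1}$ the sequence $(c_j)$ is log-convex, so the ratios $\rho_j := c_{j+1}/c_j$ are non-decreasing; moreover $\rho_j \le \Delta^2$ since $\{z^2 \le \Delta^2\}$, and $\rho_0 = \pexpecf{\mu}{z^2}\ge 1$, so each $\rho_j \in [1,\Delta^2]$. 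Telescoping gives $\prod_{j<N}\rho_j = c_N \le \Delta^{2N}$, so the non-decreasing sequence $\log\rho_0 \le \dots \le \log\rho_{N-1}$ has total increase at most $2\log\Delta$; partitioning its range into $O(\log(\Delta)/\theta)$ intervals of length $\theta$ (a flatness parameter fixed below) and using monotonicity, for $N \ge \Omega(t\log(\Delta)/\theta)$ some interval is met by a block of $\ge 2t$ consecutive indices $s,\dots,s+2t-1$, whence $\rho_{s+i}/\rho_s \le e^\theta$ for all $0\le i<2t$.

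Now reweight $\mu$ by $z^{2s}$ to obtain $\mu_1$ (a legitimate degree-$2s$ SoS reweighting in the sense of \cref{subsec:re-weighting-background}, which also preserves any equality constraint of degree $\le N-2s$). Since $\pexpecf{\mu_1}{z^{2j}} = c_{s+j}/c_s = \prod_{i<j}\rho_{s+i}$, setting $m^2 := \pexpecf{\mu_1}{z^2}=\rho_s$ yields $m^{2j} \le \pexpecf{\mu_1}{z^{2j}} \le m^{2j}e^{2t\theta}$ for all $j \le 2t$; that is, all low even moments of $z$ agree with those of the constant $m$ up to a factor $1+O(t\theta)$. Expanding $(z^2-m^2)^{2t}$ in the monomial basis, the terms that would survive if the moments were exactly $m^{2j}$ sum to zero, and the residual pseudo-expectation is at most $O(t\theta)\cdot 4^{t} m^{4t}$; taking $\theta = \eps^{2t}/2^{\Theta(t)}$ gives $\pexpecf{\mu_1}{(z^2-m^2)^{2t}} \le \frac{1}{2}\eps^{2t}m^{4t}$, and iterating \cref{fact:pseudo-expectation-holder} also forces $\pexpecf{\mu_1}{(z^2-m^2)^2}\le \eps^2 m^4$, so all lower central moments of $z^2$ are correspondingly small.

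Since $\mu_1$ may still place pseudo-mass near both $z=m$ and $z=-m$, I would reweight $\mu_1$ once more by $(z+m)^{2s'}$ --- an SoS polynomial --- to select the branch near $+m$, obtaining $\mu'$. Using $(z-m)^2(z+m)^2 = (z^2-m^2)^2$ together with a lower bound on $\pexpecf{\mu'}{(z+m)^2}$ (which only grows under these reweightings, by the argument of \cref{fact:scalar-reweighting-monotone} applied to the indeterminate $z+m$), one deduces $\pexpecf{\mu'}{(z-m')^{2t}} \le \eps^{2t}m'^{2t}$ for a slight adjustment $m'$ of $m$ chosen so that $|m'|\ge \sqrt{\pexpecf{\mu'}{z^2}}$, and the second-moment estimate gives $\pexpecf{\mu'}{z} = (1\pm\eps)m'$ --- the two stated conclusions.

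The main obstacle is the quantitative step in the second paragraph: the alternating binomial sum controlling $\pexpecf{\mu_1}{(z^2-m^2)^{2t}}$ loses a factor $2^{\Theta(t)}$ against the per-block flatness $\theta$, so $\theta$ must be taken as small as $\eps^{2t}/2^{\Theta(t)}$, which inflates the reweighting degree $N \sim t\log(\Delta)/\theta$ accordingly; squeezing this back to a degree that is only near-linear in $t$, as in the statement, requires exploiting the log-convexity of $\bigl(\pexpecf{\mu_1}{z^{2j}}\bigr)_j$ more carefully rather than treating the multiplicative errors $\eta_j := \pexpecf{\mu_1}{z^{2j}}/m^{2j}-1$ as worst-case in that sum. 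A secondary subtlety, needed only to produce the statement about $z$ (as opposed to the cleaner statement about $z^2$), is verifying that the sign-breaking reweighting of the third paragraph does not disturb the normalization $|m'|\ge \sqrt{\pexpecf{\mu'}{z^2}}$.
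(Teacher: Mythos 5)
The paper does not actually prove this lemma: it imports items (1)--(2) wholesale from Lemmas 6.1 and 6.3 of Barak--Kothari--Steurer, and the only argument it supplies is the short observation immediately after the statement that item (2) follows from item (1) by H\"older's inequality for pseudo-distributions (which your final step reproduces). So you are attempting something strictly harder than what the paper does, namely a from-scratch proof of item (1).

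Your strategy is the right one --- it is essentially the BKS argument: log-convexity of $c_j = \pexpecf{\mu}{z^{2j}}$ via Cauchy--Schwarz, boundedness of the ratios $\rho_j \in [1,\Delta^2]$, and a pigeonhole to find a scale $s$ at which reweighting by $z^{2s}$ flattens the moment sequence. But the gap you flag in your last paragraph is real and is not just an inefficiency of the write-up: if you require all $2t$ consecutive ratios to agree to within $e^{\theta}$ and then expand $\pexpecf{\mu_1}{(z^2-m^2)^{2t}}$ in the monomial basis, the alternating binomial sum has $2t+1$ terms each of magnitude about $m^{4t}$, so the cancellation leaves a residual of order $t\theta\cdot 4^{t}m^{4t}$ and you are forced to take $\theta = \eps^{2t}2^{-\Theta(t)}$; the resulting degree $N = \Theta(t\log(\Delta)/\theta)$ is then exponential in $t$, not the claimed $O(t\log(\Delta)/\eps^2)$. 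The way to avoid this (and the way BKS do it) is to pigeonhole on a \emph{single} Cauchy--Schwarz defect at the top scale rather than on flatness of an entire block: among $N/t$ candidate scales one finds $s$ with $\pexpecf{\mu_1}{z^{4t}} \leq (1+\delta)\pexpecf{\mu_1}{z^{2t}}^2$ for $\delta = \poly(\eps)$ already at degree $N = O(t\log(\Delta)/\delta)$. This says directly that the pseudo-variance of the single scalar $z^{2t}$ is at most $\delta\,\pexpecf{\mu_1}{z^{2t}}^2$, and one then passes from concentration of $z^{2t}$ down to concentration of $z$ (and resolves the $\pm m$ sign ambiguity, for which your $(z+m)^{2s'}$ reweighting is a reasonable device) using H\"older, with no $2^{\Theta(t)}$ loss. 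If you intend this as a proof rather than a citation, that single-defect pigeonhole is the step to substitute for your binomial expansion.
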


We note that the last inequality above  follows from observing that by Holder's inequality for pseudo-distributions,
\begin{equation*}
    \Paren{\pexpecf{\mu'}{ z -  m }}^{2t} \leq  \pexpecf{\mu'}{ \Paren{ z -  m }^{2t} } \leq \eps^{2t} m^{2t},
\end{equation*}
where the last inequality follows from property (2) in 
 \cref{lem:scalar-reweighting}. Taking the $2t$-th root implies $\pexpecf{\mu'}{z}  = \Paren{1\pm\eps}  m$. 
Next, we prove the following lemma that shows that we can fix each coordinate, given that a large enough pseudo-moment of the coordinate is non-trivially large. 


\begin{lemma}[Key Coordinate Re-weighting Lemma]
\label{lem:key-coordinate-lemma}
Let $\mu$ be a degree-$\phi$ pseudo-distribution over $v \in \mathbb{R}^d$ satisfying $\Set{\norm{v}^2  \leq 1}$ and let $t \in \mathbb{N}$. Further, assume $\pexpecf{\mu}{ \norm{v}_2^{2k} } \geq \Delta$ and that there exists an $i \in [d]$ and $k \in \mathbb{N}$ such that  $\pexpecf{\mu}{v_i^{2k}} \geq \delta^{k} \pexpecf{\mu}{ \norm{v}_2^{2k}  }$. Given $\eps>0$, there exists a re-weighting of $\mu$, denoted by $\mu'$ such that 
\begin{enumerate}
    \item $\Paren{ \pexpecf{\mu'}{ v_i }}^2 \geq \Paren{1-\eps}^{8}    \delta    \Paren{ \pexpecf{\mu}{\norm{v}_2^{2k} } }^{1/k} $, and,
    \item $ \pexpecf{\mu'}{ \Paren{ v_i 
 -  \pexpecf{\mu'}{v_i}  }^{2t} } \leq  \eps^{2t} \Paren{\pexpecf{\mu'}{v_i} }^{2t}$,
    \item $\forall \ell \in  [k]$, let $q$ be the smallest power of $2$ larger than $2\ell$. Then, $\pexpecf{\mu'}{ v_i^{2\ell} } \geq \Paren{1-\eps}^8 \delta^{ q } \Delta$. 
\end{enumerate}
assuming  $\phi \geq \Paren{ c k +   \Paren{ ct \log\Paren{\frac{1}{\delta \Delta  }} } / \Paren{\eps^2\delta^2} }$, for a large enough fixed constant $c$.
\end{lemma}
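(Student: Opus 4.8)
The plan is to reduce the lemma to the scalar reweighting lemma of Barak--Kothari--Steurer (\cref{lem:scalar-reweighting}), applied to the single indeterminate $v_i$ after a deterministic rescaling, and then translate the one--dimensional guarantees back to $v_i$ by multiplying through by powers of the rescaling factor. Write $M := \pexpecf{\mu}{\norm{v}_2^{2k}}$; then $\Delta \le M \le 1$, the upper bound because $\Set{\norm{v}_2^2 \le 1} \sststile{}{} \Set{\norm{v}_2^{2k} \le 1}$ and so $\mu$ also satisfies $\Set{\norm{v}_2^{2k}\le 1}$. We want to apply \cref{lem:scalar-reweighting} to $z := v_i/\gamma$ with $\gamma^2 := \delta\,M^{1/k}$, but that lemma's hypothesis requires $\pexpecf{\mu}{z^2}\ge 1$, whereas we are only promised a large \emph{$2k$-th} pseudo-moment of $v_i$. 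The fix is to condition first: let $\mu_0$ be the reweighting of $\mu$ by the square $v_i^{2(k-1)}$. The power-mean form of Hölder's inequality for pseudo-distributions (\cref{fact:pseudo-expectation-holder}, with $f=v_i^2$) gives $\pexpecf{\mu}{v_i^{2(k-1)}} \le \pexpecf{\mu}{v_i^{2k}}^{(k-1)/k}$, so $\pexpecf{\mu_0}{z^2} = \pexpecf{\mu}{v_i^{2k}}/\Paren{\gamma^2\pexpecf{\mu}{v_i^{2(k-1)}}} \ge \pexpecf{\mu}{v_i^{2k}}^{1/k}/\gamma^2 \ge (\delta^k M)^{1/k}/(\delta M^{1/k}) = 1$, where the last step uses the hypothesis $\pexpecf{\mu}{v_i^{2k}}\ge\delta^k M$. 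Moreover $z^2 = v_i^2/\gamma^2 \le \norm{v}_2^2/\gamma^2 \le 1/\gamma^2$, so $\mu_0$ satisfies the boundedness constraint of \cref{lem:scalar-reweighting} with parameter $1/\gamma \ge 1$.

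Next I would invoke \cref{lem:scalar-reweighting} on $\mu_0$, the indeterminate $z$, boundedness $1/\gamma$, and an accuracy parameter $\eps' := c_1\eps/(1+k)$ for a small constant $c_1$. It returns a reweighting $\mu'$ of $\mu_0$ (hence of $\mu$) and a real $m$ with $m^2 \ge \pexpecf{\mu'}{z^2}$, $\pexpecf{\mu'}{(z-m)^{2t}} \le \eps'^{2t}m^{2t}$, and $\pexpecf{\mu'}{z} = (1\pm\eps')m$. Since the reweighting produced is by an even power of $z$, scalar reweightings are monotone in the second pseudo-moment (\cref{fact:scalar-reweighting-monotone}), so $m^2 \ge \pexpecf{\mu'}{z^2} \ge \pexpecf{\mu_0}{z^2} \ge 1$. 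All three conclusions then follow by writing $v_i = \gamma z$. For Conclusion~1, $\Paren{\pexpecf{\mu'}{v_i}}^2 = \gamma^2\Paren{\pexpecf{\mu'}{z}}^2 \ge \gamma^2(1-\eps')^2 m^2 \ge (1-\eps')^2\delta M^{1/k}$, which is at least $(1-\eps)^8\delta\,\Paren{\pexpecf{\mu}{\norm{v}_2^{2k}}}^{1/k}$ since $\eps' \le \eps$. For Conclusion~2, $\pexpecf{\mu'}{\Paren{v_i-\pexpecf{\mu'}{v_i}}^{2t}} = \gamma^{2t}\pexpecf{\mu'}{\Paren{z-\pexpecf{\mu'}{z}}^{2t}}$, and since $\abs{m-\pexpecf{\mu'}{z}}\le\eps'\abs{m}$ the SoS almost-triangle inequality (\cref{fact:sos-almost-triangle}) bounds the right-hand side by $2^{2t}\gamma^{2t}\eps'^{2t}m^{2t}$; together with $\Paren{\pexpecf{\mu'}{v_i}}^{2t} \ge (1-\eps')^{2t}\gamma^{2t}m^{2t}$ this makes the ratio at most $\eps^{2t}$ once $\eps'$ is a small enough fraction of $\eps$.

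For Conclusion~3, fix $\ell\in[k]$ and let $q$ be the smallest power of $2$ exceeding $2\ell$. The power-mean inequality (\cref{fact:pseudo-expectation-holder}) gives $\pexpecf{\mu'}{v_i^{2\ell}} \ge \Paren{\pexpecf{\mu'}{v_i^2}}^{\ell} \ge (1-\eps')^{2\ell}\gamma^{2\ell} = (1-\eps')^{2\ell}\delta^{\ell}M^{\ell/k}$; since $\Delta\le M\le 1$ and $\ell\le k$ we have $M^{\ell/k}\ge\Delta$, since $\ell<q$ we have $\delta^\ell\ge\delta^q$, and since $\eps' = \Theta(\eps/k) = O(\eps/\ell)$ we have $(1-\eps')^{2\ell} \ge (1-\eps)^8$ (Bernoulli), so $\pexpecf{\mu'}{v_i^{2\ell}} \ge (1-\eps)^8\delta^q\Delta$. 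Finally, the degree bookkeeping: the conditioning step costs degree $2(k-1)$ and \cref{lem:scalar-reweighting} costs degree $O\Paren{t\log(1/\gamma)/\eps'^2}$ with $\log(1/\gamma) = \Paren{\log(1/\delta)+(1/k)\log(1/M)}/2 \le \log(1/(\delta\Delta))/2$, so the total degree used is $O\Paren{k + t\log(1/(\delta\Delta))/\eps'^2}$, which fits inside the stated budget $\phi$ after absorbing the $\mathrm{poly}(k,1/\delta)$ overhead hidden in $\eps'$. I expect the main obstacle to be exactly the precondition mismatch handled above — turning a bound on the $2k$-th pseudo-moment of $v_i$ into one that \cref{lem:scalar-reweighting} (whose hypothesis is stated via the second moment) can consume — and, more quantitatively, choosing $\gamma$ and $\eps'$ so that all three conclusions come out simultaneously with the asserted constants; everything downstream is routine manipulation of pseudo-expectations with Hölder, Cauchy--Schwarz, and the almost-triangle inequality.
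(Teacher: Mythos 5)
Your proof is correct, but it takes a genuinely different route from the paper's. The paper handles $k>1$ by an iterative ``square-root descent'': it applies the Barak--Kothari--Steurer scalar lemma to $v_i^{k}/(\delta^{k/2}\sqrt{\Delta})$, then to $v_i^{k/2}$, and so on, halving the exponent $O(\log k)$ times and using the monotonicity fact between successive reweightings until it has boosted $\pexpecf{\mu_L}{v_i^2}$ up to $(1-\eps)^4\delta\Delta^{1/k}$, at which point it falls back on the $k=1$ argument; conclusion~3 is obtained by tracking the intermediate moments through these iterations. You instead collapse the whole descent into a single preliminary reweighting by $v_i^{2(k-1)}$, and the one-line Hölder computation $\pexpecf{\mu_0}{v_i^2}=\pexpecf{\mu}{v_i^{2k}}/\pexpecf{\mu}{v_i^{2(k-1)}}\ge\pexpecf{\mu}{v_i^{2k}}^{1/k}$ does in one shot what the paper's $O(\log k)$ rounds do; your conclusion~3 then comes for free from the power-mean inequality and is in fact slightly stronger ($\delta^{\ell}$ in place of $\delta^{q}$ with $q>2\ell$). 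This is cleaner and I would regard it as an improvement. Two caveats. First, your degree accounting with $\eps'=\Theta(\eps/k)$ yields $O\Paren{k+tk^2\log(1/(\delta\Delta))/\eps^2}$, which fits the stated budget $ck+ct\log(1/(\delta\Delta))/(\eps^2\delta^2)$ only when $k\lesssim 1/\delta$; you can avoid shrinking $\eps'$ with $k$ altogether by spending the slack you already identified in conclusion~3 (you prove a $\delta^{\ell}$ bound but only need $\delta^{q}$ with $q-\ell>\ell$), after which $\eps'=\eps/4$ suffices and the degree matches the paper's. Second, your monotonicity step $\pexpecf{\mu'}{z^2}\ge\pexpecf{\mu_0}{z^2}$ silently assumes the BKS reweighting is by an even power of $z$ so that \cref{fact:scalar-reweighting-monotone} applies -- but the paper's own proof makes exactly the same assumption, so this is not a gap relative to the paper.
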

\begin{proof}

First, we consider the case where $k =1$, i.e. $\pexpecf{\mu}{ v_i^2 } \geq \delta \pexpecf{\mu}{ \norm{v}_2^{2} } \geq \delta \Delta$. Then, consider the indeterminate $z_i = \frac{ v_i }{ \sqrt{ \delta   \Delta} }$. We have 
\begin{equation}
\label{eqn:variance-lower-bound-k=2}
   \pexpecf{\mu }{z_i^2 } = \frac{\pexpecf{\mu}{v_i^2} }{\delta\Delta } \geq 1 ,
\end{equation}
where the inequality follows from our assumption.
Further,  since $\Set{\norm{v}^2 \leq 1} \sststile{4}{v} \Set{v_i^2 \leq 1 }$, for all $i \in [d]$, we obtain the following: 
\begin{equation}
    \sststile{}{} \Set{ z_i^2 \leq \frac{ v_i^2}{ \delta  \Delta  } \leq   \frac{ 1 }{ \delta \Delta } }.
\end{equation}
Invoking Property (1) in \cref{lem:scalar-reweighting}, since $\mu$ has degree at least $\Omega( t\log(1/\delta\Delta)/\eps^2)$,  we have 
\begin{equation}
    \pexpecf{\mu'}{ \Paren{ v_i  -   \pexpecf{\mu}{v_i}   }^{2t} } \leq \eps^{2t} \pexpecf{}{v_i}^{2t} , 
\end{equation}
as desired. 
Further, using Property (2) in \cref{lem:scalar-reweighting}, we have
\begin{equation}
\begin{split}
    \Paren{ \pexpecf{\mu'}{ v_i  } }^2  & \geq \Paren{1-\eps}^2    \pexpecf{\mu'}{ v_i^2 }   \geq  \Paren{1-\eps}^2   \pexpecf{\mu}{ v_i^2 }  \geq \Paren{1-\eps}^2   \delta    \pexpecf{\mu}{ \norm{v}_2^{2} } ,
\end{split}
\end{equation}
where the second inequality follows from Fact~\ref{fact:scalar-reweighting-monotone}  and the last inequality follows from the definition of the $i$-th coordinate.

Now, consider the case where $k > 1$ and the indeterminate $\frac{ v_i^{k} }{ \delta^{k/2}  \sqrt{ \Delta } } $. Again, by assumption, we have 
$\frac{\pexpecf{\mu}{ v_i^{2k} } }{ \delta^{k} \Delta } \geq 1$. 
Further, $\sststile{}{} \Set{   \frac{v_i^{2k}}{\delta^{k} \Delta }  \leq \frac{1}{\delta^{k} \Delta }}$. 
Using Property (2) in \cref{lem:scalar-reweighting}, and recalling that $k$ is even, we can set $m = \sqrt{ \pexpecf{\mu_1}{v_i^{2k}} }$ and obtain a re-weighting $\mu_1$ of $\mu$ such that 
\begin{equation}
\label{eqn:lower-bound-pE-v_i^k}
\begin{split}
       \pexpecf{\mu_1}{   v_i^{k}   }   & \geq \Paren{1-\eps} \sqrt{ \pexpecf{\mu_1}{  v_i^{2k} }    }  \geq \Paren{1-\eps} \sqrt{ \pexpecf{\mu}{  v_i^{2k} }    } \geq \Paren{1-\eps} \Paren{ \delta^{k/2} \sqrt{\Delta }},
\end{split}
\end{equation}
where the second inequality follows from invoking Fact~\ref{fact:scalar-reweighting-monotone}  and the last inequality follows from our assumption about coordinate $i$.
Further,  for all even $\ell\in [k , 2k-2]$, we have
\begin{equation}
    \pexpecf{\mu_1}{ v_i^{\ell } } \geq   \pexpecf{\mu_1}{ v_i^{2k } } \geq \pexpecf{\mu}{ v_i^{2k } }  \geq \delta^{k} \Delta,
\end{equation}
where the first inequality follows from the fact that $\Set{\norm{v}^2 \leq 1 } \sststile{2}{v} \Set{ v_i^2 \leq 1 }$, the second follows from Fact~\ref{fact:scalar-reweighting-monotone} and the last one follows from our assumption on coordinate $i$.

We then repeat this argument by considering the indeterminate $  \frac{ v_i^{k/2} }{ \Paren{ \delta^{k}\Delta }^{1/4} }$, and obtain a pseudo-distribution  $\mu_2$ which is a re-weighting of $\mu_1$ such that 
\begin{equation}
    \pexpecf{\mu_2}{ v_i^{k/2} } \geq \Paren{1-\eps} \sqrt{ \pexpecf{\mu_1}{ v_i^{k} }  } \geq \Paren{1-\eps}^{1.5} \delta^{k/4 }  \Delta^{1/4},
\end{equation}
where the last inequality follows from plugging in the lower bound in Equation~\eqref{eqn:lower-bound-pE-v_i^k}. Further, for all even $\ell \in [k/2, k]$
\begin{equation}
    \pexpecf{\mu_2}{ v_i^{\ell}  } \geq  \pexpecf{\mu_2}{ v_i^{k}  } \geq \pexpecf{\mu_1}{ v_i^{k}  } \geq \Paren{1-\eps}\Paren{ \delta^{k/2}  \sqrt{\Delta} },
\end{equation}
where the first inequality follows from $\Set{\norm{v}^2 \leq 1}$, the second follows from Fact~\ref{fact:scalar-reweighting-monotone} and the last follows from Equation \eqref{eqn:lower-bound-pE-v_i^k}.

Repeating this argument $L = O(\log(k))$ times, we obtain a pseudo-distribution $\mu_{L}$ such that 
\begin{equation}
   \pexpecf{\mu_{ L } }{v_i^2 } \geq  \Paren{1-\eps}^{4} \delta   \Delta^{1/k}.
\end{equation}
Since we can set $\eps < 0.01$,  we can invoke the argument for the case of $k=1$, and conclude 
\begin{equation*}
    \pexpecf{\mu_L}{ \Paren{v_i - \pexpecf{\mu}{v_i } }^{2t} } \leq \eps^{2t} \Paren{ \pexpecf{\mu}{ v_i  } }^{2t},
\end{equation*}
and 
\begin{equation}
    \Paren{ \pexpecf{\mu_{L }}{v_i  } }^2 \geq \Paren{1-\eps}^{8}  \delta   \Paren{ \pexpecf{\mu}{ \norm{v}_2^{2k} }}^{1/k} .
\end{equation}
Finally, for all $\ell \in [k]$, let $q$ be the next largest power of two. Then, 
\begin{equation}
    \pexpecf{\mu_L}{ v_i^{\ell }} \geq \Paren{1-\eps}^4 \delta^{q/2} \Delta.
\end{equation}
Setting $\mu'=\mu_L$ concludes the proof.
\end{proof}

Next, we also need the following lemma to show that if the $2z$-th pseudo-moment of each coordinate is bounded, then the $\ell_{2z}^{2z}$-norm of the resulting vector must be bounded, as long as $z\geq \log(d)$.

\begin{lemma}[Bounded Coordinates Implies $z$ to $2$ Hypercontractivity]
\label{lem:bounded-coordiates-to-hypercontractivity}
Given $d\in\mathbb{N}$,   let $z \geq \log(d)$ and let $\mu$ be a degree $2z$  pseudo-distribution on a $d$-dimensional vector valued indeterminate $v$. If for all $i \in [d]$,  $\pexpecf{\mu}{v_i^{2z}} \leq \delta^{2z} 
\pexpecf{\mu}{ \norm{v}^{2z}_2 }$, then $\pexpecf{\mu}{ \norm{v}_{2z}^{2z} } \leq (2\delta)^{2z}\pexpecf{\mu}{ \Norm{v}_2^{2z}} $.
\end{lemma}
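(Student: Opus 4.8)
The plan is to expand the $\ell_{2z}^{2z}$ norm coordinate-wise and push the pseudo-expectation through the sum by linearity. Writing $\norm{v}_{2z}^{2z} = \sum_{i \in [d]} v_i^{2z}$, this is an honest polynomial of degree exactly $2z$ in $v$, so for a degree-$2z$ pseudo-distribution $\mu$ the quantity $\pexpecf{\mu}{\norm{v}_{2z}^{2z}}$ is well-defined and, since pseudo-expectation is a linear functional on polynomials of degree at most $2z$, equals $\sum_{i\in[d]}\pexpecf{\mu}{v_i^{2z}}$. Likewise $\norm{v}_2^{2z} = (\sum_i v_i^2)^z$ is a polynomial of degree $2z$, so $\pexpecf{\mu}{\norm{v}_2^{2z}}$ makes sense as well, and no square-root or positivity manipulation of $\mu$ is needed.

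Next I would apply the hypothesis $\pexpecf{\mu}{v_i^{2z}} \le \delta^{2z}\,\pexpecf{\mu}{\norm{v}_2^{2z}}$ to each coordinate $i \in [d]$ and sum the $d$ resulting inequalities, obtaining
$\pexpecf{\mu}{\norm{v}_{2z}^{2z}} \le d\,\delta^{2z}\,\pexpecf{\mu}{\norm{v}_2^{2z}}$.
Finally I would absorb the leading factor $d$ into the exponential using the assumption $z \ge \log d$: this gives $d \le e^{z} \le 4^{z} = 2^{2z}$ (as $e < 4$), hence
$\pexpecf{\mu}{\norm{v}_{2z}^{2z}} \le 2^{2z}\delta^{2z}\,\pexpecf{\mu}{\norm{v}_2^{2z}} = (2\delta)^{2z}\,\pexpecf{\mu}{\norm{v}_2^{2z}}$,
which is exactly the claimed bound.

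I do not expect a genuine obstacle here: the argument is a purely linear union-bound-style estimate, and the only point worth stating carefully is that both norms appearing are actual degree-$2z$ polynomials (so the degree-$2z$ pseudo-distribution assigns them well-defined pseudo-expectations) and that the hypothesis $z\ge\log d$ is precisely what makes the loss of a factor $d$ affordable within the target constant $2$. If one wishes to state the lemma as a sum-of-squares implication rather than a pseudo-expectation inequality, the same computation works verbatim, since $\norm{v}_{2z}^{2z} = \sum_i v_i^{2z}$ is an identity of polynomials and the per-coordinate hypotheses are simply added.
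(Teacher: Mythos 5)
Your proposal is correct and is essentially identical to the paper's own proof: both expand $\norm{v}_{2z}^{2z}=\sum_i v_i^{2z}$, apply the coordinate-wise hypothesis by linearity of pseudo-expectation, and absorb the resulting factor of $d$ into $2^{2z}$ using $z\geq\log(d)$. No differences worth noting.
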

\begin{proof}

Summing over all $i\in[d]$, using the hypothesis, we have
\begin{equation*}
\begin{split}
    \pexpecf{\mu}{\norm{v}_{2k}^{2k}} = \sum_{i\in [d]} \pexpecf{\mu}{ v_i^{2k}} & \leq d \delta^{2z}  \pexpecf{\mu}{ \norm{v}_2^{2z}}   \leq (\delta)^{2z} \frac{d}{2^{2z}} \pexpecf{\mu}{ \norm{v}_2^{2z}}  \leq \Paren{2\delta}^{2z} \pexpecf{\mu}{ \norm{v}_2^{2z}},
\end{split}
\end{equation*}
which concludes the proof.
\end{proof}



Next, we show that a re-weighting that fixes the variance of a coordinate continues to remain fixed under further re-weightings, as long as the the resulting re-weighting polynomial has non-trivial pseudo-expectation and is mildly hypercontractive. 

\begin{lemma}[Variance remains small in subsequent re-weightings]
\label{lem:fixing-thevariance}
Let $\mu$ be a pseudo-distribution of degree $\geq t$ a $d$-dimensional vector valued indeterminate $v$ satisfying $\Set{\norm{v}_2^2 \leq  1}$ and for some $i\in [d]$ let $\pexpecf{\mu}{ \Paren{ v_i - \pexpecf{\mu}{v_i} }^{2t} } \leq \eps^{2t}$ for all $t\in [k]$. Suppose that for some $\delta>0$, there exists a  coordinate, $j \in [d]$ such that $\pexpecf{\mu}{ v_j^{2k}} \geq \delta^{2k} \pexpecf{\mu}{ \norm{v}_2^{2k}}$ and let $\mu'$ be a re-weighting of $\mu$ satisfying the guarantees of Lemma~\ref{lem:key-coordinate-lemma}. Then, for $t = \Omega\Paren{ k \cdot \log\Paren{  \pexpecf{\mu}{ \norm{v}_2^{2k} } } }$, we have
\begin{equation*}
    \pexpecf{\mu'}{ \Paren{ v_i - \pexpecf{\mu}{v_i} } }^2 \leq \eps^{2}/\delta^2. 
\end{equation*}
\end{lemma}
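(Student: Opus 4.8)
The plan is to exploit the fact that the re-weighting $\mu \to \mu'$ produced by Lemma~\ref{lem:key-coordinate-lemma} is, by its construction, a re-weighting by a sum-of-squares polynomial that is simply a power of $v_j^2$: unwinding the $O(\log k)$ scalar re-weightings of Lemma~\ref{lem:scalar-reweighting} used there, we have $\mu'(v) = \mu(v)\cdot v_j^{2M}/\pexpecf{\mu}{v_j^{2M}}$ for a suitable integer $M$ equal to the sum of the exponents used, whose magnitude — and hence the degree of $\mu$ one needs — is governed by the $t\log(\cdot)/\eps^2$ bound in Lemma~\ref{lem:scalar-reweighting}; this is the source of the hypothesis $t=\Omega\paren{k\log\pexpecf{\mu}{\norm{v}_2^{2k}}}$. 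Writing $c=\pexpecf{\mu}{v_i}$, we then have the exact identity $\pexpecf{\mu'}{(v_i-c)^2}=\pexpecf{\mu}{(v_i-c)^2 v_j^{2M}}/\pexpecf{\mu}{v_j^{2M}}$, and the whole argument reduces to bounding this ratio by $\eps^2/\delta^2$.

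First I would bound the numerator with H\"older's inequality for pseudo-distributions (Fact~\ref{fact:pseudo-expectation-holder}): splitting $(v_i-c)^2\cdot v_j^{2M}$ so that the factor $(v_i-c)^2$ is raised to the full power $k$ gives $\pexpecf{\mu}{(v_i-c)^2 v_j^{2M}}\le \pexpecf{\mu}{(v_i-c)^{2k}}^{1/k}\cdot \pexpecf{\mu}{v_j^{2Mk/(k-1)}}^{(k-1)/k}\le \eps^2\cdot \pexpecf{\mu}{v_j^{2Mk/(k-1)}}^{(k-1)/k}$, using the moment hypothesis $\pexpecf{\mu}{(v_i-c)^{2k}}\le\eps^{2k}$ in the last step. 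It is essential to use this high-moment bound rather than just $\pexpecf{\mu}{(v_i-c)^2}\le\eps^2$: re-weighting by $v_j^{2M}$ can shift mass onto the tail of $v_i$, and only the $2k$-th moment bound controls that tail. Dividing by $\pexpecf{\mu}{v_j^{2M}}$, it then suffices to prove the scale-free moment-ratio bound $\pexpecf{\mu}{v_j^{2Mk/(k-1)}}^{(k-1)/k}\le \delta^{-2}\cdot\pexpecf{\mu}{v_j^{2M}}$, which I would establish by interpolating (again through Fact~\ref{fact:pseudo-expectation-holder}, together with the axiom $\Set{v_j^2\le 1}$ which follows from $\norm{v}_2^2\le 1$) between $v_j^{2Mk/(k-1)}$, $v_j^{2M}$ and $v_j^{2k}$, and invoking the lower bound $\pexpecf{\mu}{v_j^{2k}}\ge \delta^{2k}\pexpecf{\mu}{\norm{v}_2^{2k}}$ from the hypothesis. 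Here Fact~\ref{fact:scalar-reweighting-monotone} is convenient for ensuring the relevant $v_j$-moments do not shrink across the intermediate re-weightings that build $\mu'$.

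The main obstacle is precisely this last step. Because the re-weighting power $M$ can be considerably larger than $k$, crude inequalities such as $v_j^{2M}\le v_j^{2k}$ cost a factor of $\delta^{-\Theta(M/k)}$ rather than the desired $\delta^{-2}$; one must instead argue that the hypothesis $\pexpecf{\mu}{v_j^{2k}}\ge\delta^{2k}\pexpecf{\mu}{\norm{v}_2^{2k}}$ bounds the lower tail — the ``effective support'' — of $v_j^2$ under $\mu$ in a way that is insensitive to which power $v_j^{2M}$ is used for re-weighting, so that the ratio $\pexpecf{\mu}{v_j^{2Mk/(k-1)}}^{(k-1)/k}/\pexpecf{\mu}{v_j^{2M}}$ never exceeds $\delta^{-2}$. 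Making this quantitative, and checking that the degree of $\mu$ dictated by $t=\Omega\paren{k\log\pexpecf{\mu}{\norm{v}_2^{2k}}}$ is large enough both for all the H\"older applications above and for Lemma~\ref{lem:key-coordinate-lemma} to supply $\mu'$ in the first place, is the technical heart of the proof; the remaining manipulations of pseudo-expectations are routine.
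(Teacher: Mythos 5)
Your skeleton — realize $\mu'$ as a reweighting by a power of $v_j$, split the deviation from the reweighting polynomial via pseudo-H\"older, and invoke the lower bound on the $v_j$-moments — is the paper's strategy, and your observation that one must use the high moment bound $\pexpecf{\mu}{(v_i-c)^{2t}}\leq \eps^{2t}$ rather than just the variance is also correct. But the proposal has a genuine gap exactly where you flag one: the inequality $\pexpecf{\mu}{v_j^{2Mk/(k-1)}}^{(k-1)/k}\leq \delta^{-2}\,\pexpecf{\mu}{v_j^{2M}}$ is never established, and it is in fact false under the stated hypotheses. After using $\Set{v_j^2\leq 1}$, the best your H\"older split gives is a residual factor $\pexpecf{\mu}{v_j^{2M}}^{-1/k}$, and the hypothesis $\pexpecf{\mu}{v_j^{2k}}\geq \delta^{2k}\pexpecf{\mu}{\norm{v}_2^{2k}}$ only yields $\pexpecf{\mu}{v_j^{2M}}^{-1/k}\leq \delta^{-2}\cdot\pexpecf{\mu}{\norm{v}_2^{2k}}^{-1/k}\cdot(\text{extra powers of }\delta^{-1}\text{ when }M\gg k)$; a two-point distribution for $v_j^2$ shows this extra factor is unavoidable. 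No ``effective support'' or interpolation argument among $v_j$-moments can remove it, because your exponent on the $(v_i-c)^2$ factor is locked at $k$ by your decision to spend the entire $2k$-th moment bound there.

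The paper's resolution is different in two ways. First, it only bounds $\bigl(\pexpecf{\mu'}{v_i-\pexpecf{\mu}{v_i}}\bigr)^2$ — the square of the first pseudo-moment, which is all the lemma claims — not the second moment $\pexpecf{\mu'}{(v_i-c)^2}$ you target. Second, and crucially, it applies H\"older to the \emph{linear} factor $(v_i-c)$ with a \emph{free, large} exponent $t$ (the same $t$ appearing in the lemma's hypothesis), so the numerator becomes $\pexpecf{\mu}{(v_i-c)^{2t}}^{1/t}\cdot \pexpecf{\mu}{v_j^{2kt/(t-1)}}^{2-2/t}$ and the entire $v_j$-moment-ratio loss collapses to $\pexpecf{\mu}{v_j^{2k}}^{-2/t}\leq \Paren{\delta^{2k}\pexpecf{\mu}{\norm{v}_2^{2k}}}^{-2/t}$, which is $O(1)$ precisely when $t=\Omega\Paren{k\log\Paren{1/\pexpecf{\mu}{\norm{v}_2^{2k}}}}$ — this is the whole point of that hypothesis, which your argument never uses. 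This trick is also robust to your (legitimate) concern that the reweighting power $M$ may exceed $k$: with exponent $t$ the loss is $\pexpecf{\mu}{v_j^{2M}}^{-2/t}$, and the monotonicity/lower bounds from Lemma~\ref{lem:key-coordinate-lemma} on the intermediate $v_j$-moments keep this $O(1)$. To repair your proof, replace the exponent-$k$ H\"older split by the exponent-$t$ split on the linear deviation (or, if you insist on the second moment, use exponent $t/2$ on $(v_i-c)^2$, which requires moments of order $2t$ rather than $2k$).
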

\begin{proof}
Recall, by definition, the re-weighting polynomial can be assumed to be $v_j^{2k}$, i.e. 
\begin{equation}
\begin{split}
    \pexpecf{\mu'}{ \Paren{ v_i - \pexpecf{\mu}{v_i} } }^2 & = \Paren{ \frac{ \pexpecf{\mu}{  \Paren{ v_i - \pexpecf{\mu}{v_i} } v_{j}^{2k} } }{ \pexpecf{\mu}{v_j^{2k} } } }^2 \\
    & \leq  \frac{ \pexpecf{\mu}{  \Paren{ v_i - \pexpecf{\mu}{v_i} }^{2t}   }^{\frac{1}{t}}  \cdot \pexpecf{\mu}{ v_j^{ \frac{2k t}{t-1} } }^{2- \frac{2}{t}} }{ \pexpecf{\mu}{v_j^{2k} } }  \\
    & \leq \eps^2 \cdot \pexpecf{\mu}{ v_i }^2   \cdot  \pexpecf{\mu }{ v_j^{2k} }^{- \frac{2}{t}}\\
    & \leq \eps^2 \Paren{  \delta^{2k}  \pexpecf{\mu}{ \norm{v}_2^{2k} } }^{-2/t}  \pexpecf{\mu}{ v_i}^2 ,
\end{split}
\end{equation}
where the first inequality follows from applying H\'older's inequality for psuedo-distributions, and the last inequality follows from out assumption on $v_j$. Plugging in $t = \Omega\Paren{ k \pexpecf{\mu}{\norm{v}_2^{2k} } }$ yields the claim.
\end{proof}

Using these lemmas, we are ready to prove our main theorem on reweighting, restated for convenience.

\Reweighting*

\begin{proof}
We iteratively re-weight the pseudo-distribution till either the norm of the resulting vector is tiny, or we have \textit{fixed} all the large coordinates in $v$ and the resulting vector is $\ell_{2z \to 2}$-hypercontractive. 
Let $\mu_0=\mu$ be a pseudo-distribution that is consistent with the program defined in Equation \eqref{eqn:cons-system}. Let $z = 2\log(d)$, and consider the case where for all $i \in [d]$,  $\pexpecf{\mu_0}{ v_i^{2z} } \leq \delta^{2z} \pexpecf{\mu_0}{ \norm{v}_2^{2z}}$. Then, it follows from Lemma~\ref{lem:bounded-coordiates-to-hypercontractivity}, that
\begin{equation}
    \pexpecf{\mu_0}{ \norm{v}_{2z}^{2z} } \leq \Paren{2\delta}^{2z} \pexpecf{\mu_0}{ \norm{v}_2^{2z} },
\end{equation}
and therefore the claim follows from setting $\calS= \Set{\emptyset}$, and $v_{\calS} =0$.

Otherwise, we know there exists some coordinate $i_1 \in [d]$ such that $\pexpecf{\mu_0}{ v_{i_1}^{2z} } >  \delta^{2z} \pexpecf{\mu_0}{ \norm{v}^{2z}_2}$. We can then apply Lemma~\ref{lem:key-coordinate-lemma} to obtain a re-weighted pseudo-distribution $\mu_1$ such that 
\begin{equation}
\label{eqn:expectation-v-i1-large}
    \pexpecf{\mu_1}{ v_{i_1}  }^2 \geq  \Paren{1-\eps}^{4} \delta \Paren{\pexpecf{\mu_1}{ \norm{v}^{2z}}  }^{1/z} \geq \delta^2 \pexpecf{\mu_1}{ \norm{v}^2_2 }
\end{equation}
and the last inequality follows from setting $\eps$ to be appropriately small and invoking  Fact \ref{fact:pseudo-expectation-holder}. 
 

Now, consider the vector of indeterminates $v^{(1)} =v-\pexpecf{\mu_1}{v_{i_1} }$. More generally, we use the notation $v^{(\ell)} = v - \pexpecf{\mu_1}{v_{i_1} } -  \pexpecf{\mu_2}{v^{(1)}_{i_2} } - \ldots - \pexpecf{\mu_{\ell}}{ v^{(\ell-1)}_{i_\ell} }$ to denote the residual indeterminate after $\ell$ iterations. 

Throughout the re-weighting process, we assume that for some $y \in [k]$,  $v^{(\ell)}$ satisfies $\pexpecf{\mu_{\ell}}{ \Norm{  v^{(\ell)} }_2^{2y} } \geq \eta$, otherwise we simply stop since condition (2) holds. We show that this process must terminate since all the pseudo-moments of the $\ell_2$ norm are decreasing monotonically.

Consider the first step of this process. For any $y \in [k]$,
\begin{equation}
\label{eqn:expanding-v-1}
\begin{split}
    & \pexpecf{\mu_1}{ \Norm{  v^{(1)} }_2^{2y} } \\
    & =  \pexpecf{\mu_1}{ \Paren{ \sum_{ i \neq i_1 } v_i^2  +   \Paren{ v_{i_1} - \pexpecf{\mu_1}{v_{i_1} } }^2 }^{y} }  \\
    &  = \pexpecf{\mu_1}{ \Paren{   \norm{ v }^2  +   \Paren{ v_{i_1} - \pexpecf{\mu_1}{v_{i_1} } }^2 - v_{i_1}^2 }^{y} }    \\
    & \leq    \pexpecf{\mu_1}{ \Paren{   \norm{ v }^2   - v_{i_1}^2 }^{y} }  +  y \underbrace{ \pexpecf{\mu_1}{ \Paren{ v_{i_1} - \pexpecf{\mu_1}{v_{i_1} } }^2  \Paren{\norm{v}^2 - v_{i_1}^2 }^{2y-2} } }_{\eqref{eqn:expanding-v-1}.1} + \underbrace{ \pexpecf{\mu_1}{ \Paren{v_{i_1} - \pexpecf{\mu_1}{v_{i_1}} }^{2y} } }_{\eqref{eqn:expanding-v-1}.2}
    \\
    & \leq  \pexpecf{\mu_1}{ \Paren{   \norm{ v }^2   - v_{i_1}^2 }^{y} } + y  \pexpecf{\mu_1}{ \Paren{v_{i_1} - \pexpecf{\mu_1}{v_{i_1}}}^{4}  }^{1/2}\pexpecf{\mu_1}{\norm{v}^2 } + \eps^{2y} \pexpecf{\mu_1}{v_i}^{2y} \\
    & \leq \pexpecf{\mu_1}{ \norm{v}^{2y}  }  - \pexpecf{\mu_1}{ v_{i_1}^{2y} } + y \eps^2 \pexpecf{\mu_1}{v_{i_1}}^{2} + \eps^{2y}\pexpecf{\mu_1}{v_{i_1}}^{2y}.
\end{split}
\end{equation}
where the first inequality follow from the fact that $\sststile{}{}\Set{ (a+b)^{2t} \leq a^{2t} + 2t b a^{2t-1} + b^{2t} }$, the second inequality follows from applying SoS Cauchy-Schwarz to term \eqref{eqn:expanding-v-1}.1 and using property (2) from Lemma~\ref{lem:key-coordinate-lemma} to bound term \eqref{eqn:expanding-v-1}.2. The last inequality follows from using property (2) again and invoking Fact~\ref{fact:moment-inequality-sos}.

It follows from the constraint that $\norm{v}_2^2 =1$  and H\"older's inequality for pseudo-distributions that $\pexpecf{\mu_1}{ v_{i_1} }^2 \leq  \pexpecf{\mu_1}{ v_{i_1}^2 } \leq 1$ and similarly $\pexpecf{\mu_1}{ v_{i_1} }^{2y} \leq  \pexpecf{\mu_1}{ v_{i_1}^{2y} } \leq \pexpecf{\mu_1}{  \norm{v}^{2y}_2 } $. Further, by Lemma \ref{lem:key-coordinate-lemma}, 

\begin{equation*}
    \pexpecf{\mu_1}{  v_{i_1}^{2y} } \geq  (1-\eps)^4 \delta^{2y} \Paren{ \pexpecf{\mu}{ \norm{v}_2^{2y} } }.
\end{equation*}
Combining these bounds and recalling the constraint that $\norm{v}_2^2=1$, we have
\begin{equation}
\begin{split}
    \pexpecf{\mu_1}{ \Norm{  v^{(1)} }_2^{2y} } &\leq \pexpecf{\mu}{ \norm{v}_2^{2y}  } - ( 0.99 \delta^{2y} - \eps^{2y}) \pexpecf{\mu_1}{ \norm{v}^{2y}_2 } + \eps y  \\
    & \leq 1 - 0.9 \delta^{2y}. 
\end{split}
\end{equation}
where the last inequality follows from setting $\eps \ll \delta^{2y} /y$. Therefore, in a single iteration of the re-weighting we have made progress.

Assuming the iterative process does not halt due to either the norm becoming small or the vector satisfying $\ell_{2z \to 2}$-hypercontractivity, we show that in each subsequent iteration, the norm of the residual vector decreases additively. Consider the $\ell$-th iteration: repeating a similar argument to Equation \eqref{eqn:expanding-v-1}, we have 
\begin{equation}
\label{eqn:expanding-v-ell-iteratively}
    \begin{split}
        \pexpecf{\mu_\ell }{ \norm{v^{(\ell)}}^{2y} } & =   \pexpecf{\mu_\ell }{ \Paren{ \norm{v^{(\ell-1)}}^2 +  \Paren{ v_{i_\ell} - \pexpecf{\mu_\ell }{v_{i_\ell}}}^2 -  v_{i_\ell}^2 }^{2y} } \\
 & \leq \underbrace{ \pexpecf{\mu_\ell }{ \norm{ v^{(\ell-1)} }_2^{2y} } }_{\eqref{eqn:expanding-v-ell-iteratively}.(1) } - \pexpecf{\mu_\ell }{ v_{i_\ell}^2 }^{2y} + y\eps^2 \pexpecf{\mu_1}{v_{i_1}}^2 + \eps^{2y} \pexpecf{\mu_\ell }{v_{i_\ell}}^{2y} \\
    \end{split}
\end{equation}

We now unroll the indeterminate $v^{(\ell-1)}$, and show that at each step, the variance (and higher moments) of all the coordinates fixed in previous steps remains bounded.  
Focusing on Term \eqref{eqn:expanding-v-ell-iteratively}.(1),  it suffices to show that for any $m \in [1,\ell]$, we have: 
\begin{equation}
\label{eqn:unrolling-general-m}
\begin{split}
    \pexpecf{\mu_\ell}{ \Norm{ v^{(\ell-m)} }^{2y} } & =  \pexpecf{\mu_\ell }{ \Paren{ \norm{v^{(\ell-m-1)}}^2 +  \Paren{ v_{i_{\ell-m} } - \pexpecf{\mu_{\ell-m} }{v_{i_{\ell-m} }}}^2 -  v_{i_{\ell-m} }^2 }^{y} }   \\
    & \leq  \pexpecf{\mu_\ell }{   \norm{ v^{(\ell-m-1)} }^{2y} }   - \pexpecf{\mu_{\ell}}{ v_{i_{\ell-m} }^{2y}  }  + \underbrace{ \pexpecf{\mu_\ell }{ \Paren{v_{i_{\ell-m} } - \pexpecf{\mu_1}{v_{i_{\ell-m} }} }^{2y} } }_{\eqref{eqn:unrolling-general-m}.1}   \\
    & \hspace{0.2in}+   y \underbrace{ \pexpecf{\mu_\ell }{ \Paren{ v_{i_{\ell-m} } - \pexpecf{\mu_\ell }{v_{i_{\ell -m} } } }^2  \Paren{\norm{v^{(\ell-m)} }^2 - v_{i_{\ell-m}}^2 }^{2y-2} } }_{\eqref{eqn:unrolling-general-m}.2}   \\
\end{split}
\end{equation}
where the inequality follows from the fact that $\sststile{}{}\Set{ (a+b)^{2t} \leq a^{2t} + 2t b a^{2t-1} + b^{2t} }$ and Fact~\ref{fact:moment-inequality-sos}. We can now bound terms \eqref{eqn:unrolling-general-m}.1 and \eqref{eqn:unrolling-general-m}.2 by repeatedly invoking Lemma~\ref{lem:fixing-thevariance} as follows: 
\begin{equation}
\begin{split}
     \pexpecf{\mu_\ell }{ \Paren{v_{i_{\ell-m} } - \pexpecf{\mu_1}{v_{i_{\ell-m} }} }^{2y} } & \leq \Paren{0.1\delta^2} \pexpecf{\mu_{\ell-1} }{ \Paren{v_{i_{\ell-m} } - \pexpecf{\mu_1}{v_{i_{\ell-m} }} }^{2yt}  }^{1/t}  \\
     & \leq \Paren{0.1\delta^2}^2 \pexpecf{\mu_{\ell-2} }{ \Paren{v_{i_{\ell-m} } - \pexpecf{\mu_1}{v_{i_{\ell-m} }} }^{4yt}  }^{1/2t} \\
     & \qquad \vdots \\
     & \leq \Paren{0.1\delta^2}^m \pexpecf{\mu_{\ell-m} }{ \Paren{v_{i_{\ell-m} } - \pexpecf{\mu_1}{v_{i_{\ell-m} }} }^{2^m y t }  }^{2/ (2^m t) } \\
     & \leq \Paren{0.1\delta^2}^m \eps^{2y} \pexpecf{\mu_{\ell-m}}{ v_{i_{\ell-m}} }^{2y}
\end{split}
\end{equation}
and similarly, 
\begin{equation}
\begin{split}
    & \pexpecf{\mu_\ell }{ \Paren{ v_{i_{\ell-m} } - \pexpecf{\mu_\ell }{v_{i_{\ell -m} } } }^2  \Paren{\norm{v^{(\ell-m)} }^2 - v_{i_{\ell-m}}^2 }^{2y-1}}  \\
    & \leq \pexpecf{\mu_{\ell} }{ \Paren{ v_{i_{\ell-m} } - \pexpecf{\mu_\ell }{v_{i_{\ell -m} } } }^4  }^{1/2} \cdot \pexpecf{\mu_\ell }{ \Paren{ \norm{v^{(\ell-m)} }^2 - v_{i_{\ell-m}}^2 }^{4y-4} }^{1/2} \\
    &  \leq \pexpecf{\mu_{\ell-1} }{ \Paren{ v_{i_{\ell-m} } - \pexpecf{\mu_\ell }{v_{i_{\ell -m} } } }^{4t}  }^{1/2t}\\
    & \leq (0.01\delta^2 )^m \eps^{4} \pexpecf{\mu_{\ell-m} }{v_{i_{\ell-m} }}^4
\end{split}
\end{equation}
Combining the two equations above, and substituting back into Equation \eqref{eqn:unrolling-general-m} we have
\begin{equation}
\label{eqn:unrolling-general-m-2}
\begin{split}
    & \pexpecf{\mu_\ell}{ \Norm{ v^{(\ell-m)} }^{2y} } \\
    & \leq  \pexpecf{\mu_\ell }{   \norm{ v^{(\ell-m-1)} }^{2y} }   - \pexpecf{\mu_\ell}{ v_{i_{\ell-m} }^{2y}  }  +  \Paren{0.1\delta^2}^m \eps^{2y} \pexpecf{\mu_{\ell-m}}{ v_{i_{\ell-m}} }^{2y} + (0.01\delta^2 )^m \eps^{4} \pexpecf{\mu_{\ell-m} }{v_{i_{\ell-m} }}^4 \\
    & \leq \pexpecf{\mu_\ell }{   \norm{ v^{(\ell-m-1)} }^{2y} } - \Paren{ 0.99 \delta^{2y}  - \Paren{0.1\delta^2}^m \eps^{2y} 
  } \pexpecf{\mu_{\ell-m} }{ \norm{v^{(\ell-m-1)}}^{2y} } + (0.01\delta^2 )^m \eps^{4}  \\
  & \leq \pexpecf{\mu_\ell }{   \norm{ v^{(\ell-m-1)} }^{2y} } - 0.9 \delta^{2y} \eta
\end{split}
\end{equation} 
where the last inequality follows from recalling that $\eps\ll \delta^{2k} \eta / k$. 
Substituting the above back into Equation \eqref{eqn:expanding-v-ell-iteratively}, we have
\begin{equation}
\begin{split}
    \pexpecf{\mu_\ell}{ \Norm{v^{(\ell)} }_2^{2y}  } & \leq  \Paren{ \pexpecf{\mu_\ell}{ \Norm{v^{(\ell-1)} }^{2y}_2 }  -  0.9 \delta^{2y}  \eta } \\
    & \leq  \Paren{  \pexpecf{\mu_\ell}{ \Norm{v^{(\ell-2)} }^{2y}_2 } - 2 (0.9) \delta^{2y} \eta }  \\
    & \qquad \vdots \\
    & \leq  \Paren{  \pexpecf{\mu_\ell}{ \Norm{v }^{2y}_2 } -  \ell (0.9) \delta^{2y} \eta } \\
    & \leq  \Paren{ 1 -  0.9 \ell \delta^{2y}  \eta  }
\end{split}
\end{equation}
where the subsequent inequalities follow from repeatedly invoking Equation \eqref{eqn:unrolling-general-m}, and the last inequality uses that $\pexpecf{\mu_\ell}{\norm{v}} = 1$, since re-weightings satisfy equality constraints. Therefore, setting $\ell = \Omega\Paren{ k/(\delta^{4k+2} \eta^2) }$, either the aforementioned process stops and the resulting vector is $\ell_{2z\to 2}$-hypercontractive or for all $y \in [k]$ , $\pexpecf{\mu_\ell}{\norm{v^{(\ell)}}^{2y}  } \leq \eta $, as desired.
\end{proof}
\section{Clustering a Spectrally Separated Mixture}
\label{sec:clustering-warmup}

In this section, we demonstrate the effectiveness of our anti-concentration program to the problem of clustering a mixture of two 
reasonably anti-concentrated distributions with mean $0$ and arbitrary covariances. In the next section, we will show how to apply our techniques to $k$-clustering in general whenever they are clusterable.

In particular, let $\calM = \frac{1}{2}\calD_1( 0, \Sigma_1)  + \frac{1}{2} \calD_2( 0, \Sigma_2)$, where $\calD_1(0, I)$ and $\calD_2(0, I)$ satisfy \cref{def:reasonably-anti-concentrated-dist}. Let $X = \Set{ x_1, x_2, \ldots , x_n}$ be a set of $n$ i.i.d samples from $\calM$.

\begin{restatable}[Clustering Mixtures with Spectral Separation]{theorem}{TwoCluster}
\label{thm:clusterin-2-spectrally-separated}
Given $0< \delta <1$, and $n \geq n_0$ samples from an isotropic mixture of two  $(\delta, \exp\paren{1/\delta^2}, O(1) )$-reasonably anti-concentrated distributions $\calD_1(0, \Sigma_1)$ and $\calD_2(0, \Sigma_2)$, let $u$ be a direction  such that $ u^\top \Sigma_1 u \leq \delta^3 u^\top \Sigma_2u$. Then, there exists an algorithm that runs in $\bigO{n^{ \Paren{  \log^2(d) \exp\Paren{1/\delta^4}^{\exp \Paren{1/\delta^2} }  }
}}$ time and with probability at least $99/100$ outputs a direction $z$ such that $z^\top \Sigma_1 z \leq \bigO{\delta } \cdot z^\top \Sigma_2 z$, whenever $n_0\geq d^{\log(d)\exp\Paren{1/\delta^4}^{\exp \Paren{1/\delta^2} }  }$.
\end{restatable}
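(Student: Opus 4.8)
The plan is to run the sum-of-squares relaxation of the canonical program \eqref{eqn:cons-system-clustering-intro} --- namely $\calC_\delta$ with concentration scale $c\delta$ for a suitable absolute constant $c$ --- extract a separating direction from a \emph{re-weighted} pseudo-distribution, and round it by Gaussian sampling. First I would check feasibility: since $\calM$ is isotropic, $\tfrac12\Sigma_1+\tfrac12\Sigma_2 = I$, so $u^\top\Sigma_1 u \le \delta^3 u^\top\Sigma_2 u \le 2\delta^3\norm{u}^2$; by the strictly sub-exponential tails of $\calD_1$ (\cref{fact:sub-exponential-dist-bounded}), all but an $o(1)$-fraction of the samples drawn from $\calD_1(0,\Sigma_1)$ satisfy $\Iprod{x_i,u}^2 \le O(\log(1/\delta))\,u^\top\Sigma_1 u \le c\delta\norm{u}^2$, so setting $v = u/\norm{u}$ and letting $w$ indicate (most of) the $\calD_1$-samples gives a point of $\calC_\delta$ with $\sum_i w_i \ge (1-o(1))\tfrac n2$; a standard padding argument (adding a few $\calD_2$-samples) makes $\sum_i w_i = \tfrac n2$ exactly. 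Then I would compute, in time $d^{O(t)}$, a degree-$t$ pseudo-distribution $\mu$ consistent with $\calC_\delta$, for $t = \Theta\!\left( \log^2(d)\cdot \exp(1/\delta^4)^{\exp(1/\delta^2)}\right)$.

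The heart of the argument is the SoS identity \eqref{eqn:key-sos-identity}. Applying the anti-concentration certificate of \cref{thm:main-anti-concentration-thm} to the samples of $\calD_2$ whitened by $\Sigma_2^{-1/2}$ (which form a set of samples from an isotropic reasonably anti-concentrated distribution) in its parametric form \eqref{eqn:parametric-form-cert-intro}, one bounds the contribution of the $\calD_2$-samples to $\sum_i w_i$; combining this with the constraint $\sum_i w_i = n/2$, with certifiable hypercontractivity of the $\calD_1$-samples, and with $\Iprod{\Sigma_1, vv^\top} = v^\top\Sigma_1 v \ge 0$ (\cref{fact:nonnegative-quadratic}), one derives in SoS
\[
\calC_\delta \sststile{t}{} \Set{ \Iprod{\Sigma_1, vv^\top}^2 \le O(1)\Paren{ \delta^2\norm{v}^4 - \sum_{x_i \sim \calD_2} q_i(w,v)\, w_i \Paren{ \delta\, v^\top\Sigma_2 v - \tfrac{\delta}{100}\norm{v}^2 } } },
\]
where the $q_i$ are sum-of-squares multipliers of degree $O(t)$ coming from the certificate. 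Taking pseudo-expectations and using $\norm{v}^2 = 1$, it then suffices to make the residual sum non-negative (so that it is \emph{subtracted}), i.e.\ to force $\pexpecf{\mu}{q_i w_i\, v^\top\Sigma_2 v}\ge \tfrac{1}{100}\pexpecf{\mu}{q_i w_i}$ for the $\calD_2$-samples; that would yield $\Iprod{\Sigma_1,\pexpecf{\mu}{vv^\top}}^2 \le \pexpecf{\mu}{\Iprod{\Sigma_1,vv^\top}^2} \le O(\delta^2)$.

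To enforce this I would re-weight: pick an index $j\in[n]$ uniformly at random and re-weight $\mu$ by $\Iprod{x_j,v}^{2t'}$ with $t' = \Theta(\log d\cdot \exp(1/\delta^2))$, obtaining $\mu'$. With probability $\tfrac12$, $x_j\sim\calD_2$, and then $\E_{x_j\sim\calD_2}\Iprod{x_j,v}^{2t'} = (1\pm o(1))\,C_{t'}\,(v^\top\Sigma_2 v)^{t'}$ by almost $k$-wise independence and certifiable hypercontractivity of $\calD_2$, so in expectation this simulates re-weighting by $(v^\top\Sigma_2 v)^{t'}$. Applying the scalar-fixing lemma \cref{lem:scalar-reweighting} to the bounded indeterminate $v^\top\Sigma_2 v$ (bounded since the empirical covariance of $X$ is $O(I)$) keeps $\pexpecf{\mu'}{v^\top\Sigma_2 v}$ large even after multiplication by the low-degree SoS polynomials $q_i w_i$, using monotonicity of scalar re-weightings (\cref{fact:scalar-reweighting-monotone}); one deduces $\pexpecf{\mu'}{q_i w_i v^\top\Sigma_2 v}\ge \tfrac{1}{100}\pexpecf{\mu'}{q_i w_i}$, hence $\Iprod{\Sigma_1,\pexpecf{\mu'}{vv^\top}} \le O(\delta)$. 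Since $\mathrm{tr}(\pexpecf{\mu'}{vv^\top}) = \pexpecf{\mu'}{\norm{v}^2} = 1$ and $\tfrac12\Sigma_1+\tfrac12\Sigma_2 = I$, we get $\Iprod{\Sigma_2,\pexpecf{\mu'}{vv^\top}} = 2 - \Iprod{\Sigma_1,\pexpecf{\mu'}{vv^\top}} \ge 1$. Finally, set $M = \pexpecf{\mu'}{vv^\top}\succeq 0$ and draw $z\sim\calN(0,M)$: then $\E[z^\top\Sigma_1 z] = \Iprod{\Sigma_1,M}\le O(\delta)$, $\E[z^\top\Sigma_2 z]=\Iprod{\Sigma_2,M}\ge 1$, and $\E[(z^\top\Sigma_2 z)^2]\le 3(\E[z^\top\Sigma_2 z])^2$ for PSD $\Sigma_2,M$; Markov on the former and Paley--Zygmund on the latter give constant-probability success ($z^\top\Sigma_1 z \le O(\delta)\, z^\top\Sigma_2 z$), and iterating over all $j\in[n]$ and repeating the rounding $O(\log d)$ times, outputting the best $z$, boosts the success probability to $99/100$. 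The degree $t$ --- one factor $\log d$ from the anti-concentration certificate and one from the quad-form-fixing re-weighting --- gives an SDP of size $d^{O(t)} = n^{O(\log^2 d\,\exp(1/\delta^4)^{\exp(1/\delta^2)})}$ under $n_0 \ge d^{\log d\,\exp(1/\delta^4)^{\exp(1/\delta^2)}}$.

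The main obstacle is the key identity \eqref{eqn:key-sos-identity} together with the simulated quad-form fixing: one must track the degrees of the auxiliary multipliers $q_i$ produced by the (indirect, duality-based) anti-concentration certificate so that re-weighting by the $\Theta(\log d)$-th power $\Iprod{x_j,v}^{2t'}$ does not destroy the bound, and show that $v^\top\Sigma_2 v$ does not degrade through the many nested re-weightings hidden inside the certificate of \cref{thm:main-anti-concentration-thm} --- this is exactly where the quantitative forms of \cref{lem:scalar-reweighting} and \cref{fact:scalar-reweighting-monotone} do the work, and where one has to be careful that ``simulating'' $(v^\top\Sigma_2 v)^{t'}$ by a single random sample only loses the affordable $\tfrac12$ probability and an $o(1)$ multiplicative factor.
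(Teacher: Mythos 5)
Your proposal is correct and follows essentially the same route as the paper: feasibility of $\calC_\delta$ via the constraint on the $\calD_1$-samples, the key SoS identity combining the anti-concentration certificate on $\calC_2$ with certifiable hypercontractivity of $\calC_1$ (the paper's \cref{lem:key-sos-identity}), simulating the re-weighting by $(v^\top\Sigma_2 v)^{2t}$ with a uniformly random sample (\cref{lem:simulating-fixing-the-quad-form}), and Gaussian rounding of $\pexpecf{\mu'}{vv^\top}$. Your rounding step is in fact slightly more complete than the paper's \cref{lem:rounding-the-pd}, since you explicitly establish $\Iprod{\Sigma_2,\pexpecf{\mu'}{vv^\top}}\ge 1$ from isotropy and apply Paley--Zygmund to lower-bound $z^\top\Sigma_2 z$, which is needed for the multiplicative conclusion $z^\top\Sigma_1 z\le O(\delta)\,z^\top\Sigma_2 z$.
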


We will prove this theorem in the rest of this section.

\begin{mdframed}
  \begin{algorithm}[Clustering Spectrally Separated Mixtures ]
    \label{algo:clustering-2-mixtures}\mbox{}
    \begin{description}
    \item[Input:] Samples $\Set{x_i}_{i \in [n]}$ from an isotropic mixture of distributions and $0< \delta < 1$. Let $r = \Omega\Paren{ \log(d)/\exp\Paren{\delta^4}^{\exp\Paren{1/\delta^2} } }$. 
    \item[Operation:]\mbox{}
    \begin{enumerate}
    \item Let $t= \bigO{r \log(d)} $. Compute a degree-$t$ pseudo-distribution $\mu$ consistent with the following constraints:
\begin{equation}
\label{eqn:cons-system-clustering-algo}
\begin{split}
    \calC_\delta = & \left \{\begin{aligned}
      &\forall i\in [n]
      & w_i^2
      & = w_i \\
      && \sum_{i \in [n]} w_i & = (1-\delta^2)n/2\\
      & \forall i\in [n] & w_i \Iprod{x_i, v}^2 & \leq c  \delta  w_i \norm{v}^2    \\
      && \norm{v}_2^2 &=1 \\
    \end{aligned}\right\}
\end{split}
\end{equation}
    \item Pick $x_j$ uniformly at random from $\Set{x_i}_{i \in [n]}$. Re-weight $\mu$ by $\Iprod{x_j , v }^{2t}$. 
    \item Let $\widetilde{\Sigma } = \pexpecf{\mu}{ vv^\top  }$. Sample $g \sim \calN\Paren{0, \Sigma}$. 
    \end{enumerate}
    \item[Output:]  $z=g/\norm{g}$ such that $ z^\top \Sigma_1 z  < \bigO{\delta} \Paren{ z^\top \Sigma_2 z  }   $. 
    \end{description}
  \end{algorithm}
\end{mdframed}

\begin{lemma}[Feasibility]
With at least constant probability, the constraint system $\calC_{\delta}$ is feasible. 
\end{lemma}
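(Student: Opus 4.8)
The plan is to exhibit an explicit $\{0,1\}$-valued assignment to $w$ together with a fixed unit vector $v$ that jointly satisfy every constraint of $\calC_\delta$; since a single such real point induces a pseudo-distribution of every degree (the Dirac mass at that point), producing it establishes feasibility. First I would take $v$ to be the unit vector along the direction $u$ guaranteed by \cref{thm:clusterin-2-spectrally-separated}, so that $\norm{v}_2^2 = 1$ and $v^\top\Sigma_1 v \le \delta^3\, v^\top\Sigma_2 v$. Because the mixture is isotropic, $\tfrac12\Sigma_1+\tfrac12\Sigma_2 = I$ gives $v^\top\Sigma_1 v + v^\top\Sigma_2 v = 2$, and combined with the spectral separation this forces $v^\top\Sigma_1 v \le 2\delta^3/(1+\delta^3) \le 2\delta^3$. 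Hence for $x \sim \calD_1(0,\Sigma_1)$ the linear form $\Iprod{x,v}$ has mean $0$ and variance at most $2\delta^3$, and Markov's inequality yields $\Pr_{x\sim\calD_1}\Bracks{\Iprod{x,v}^2 > c\delta} \le 2\delta^2/c \le \delta^2/2$ once $c$ is fixed to be a sufficiently large absolute constant (say $c \ge 4$).

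Next I would invoke concentration of the sampling process. Let $I_1 \subseteq [n]$ index the samples drawn from the first component (so $|I_1|$ is binomial with mean $n/2$), and let $G = \Set{ i \in I_1 : \Iprod{x_i,v}^2 \le c\delta }$; conditioned on $I_1$, each $i \in I_1$ lies in $G$ independently with probability at least $1-\delta^2/2$ by the bound above. A Chernoff/Hoeffding estimate for $|I_1|$, a second one for $|G|$ conditioned on $I_1$, and a union bound show that with at least a fixed constant probability (say $0.9$) one has simultaneously $|I_1| \ge n/2 - O(\sqrt n)$ and $|G| \ge |I_1|(1-\delta^2/2) - O(\sqrt n) \ge (n/2)(1-\delta^2/2) - O(\sqrt n)$. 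Since $n \ge n_0$ is enormous (it is enough that $n \gg 1/\delta^4$), this is at least $(1-\delta^2)n/2$. On this event I would pick an arbitrary subset $S \subseteq G$ with $|S| = (1-\delta^2)n/2$ and set $w_i = 1$ for $i\in S$ and $w_i = 0$ otherwise: then $w_i^2 = w_i$ for all $i$, $\sum_i w_i = (1-\delta^2)n/2$, $\norm{v}_2^2 = 1$, and the constraint $w_i\Iprod{x_i,v}^2 \le c\delta\, w_i\norm{v}_2^2$ holds trivially when $w_i = 0$ and by the definition of $G$ when $w_i = 1$. Hence $\calC_\delta$ is feasible with at least constant probability.

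The argument is essentially routine; the only mildly delicate point is the counting with the $\delta^2$ slack — one must verify that after discarding the $O(\delta^2)$-fraction of $\calD_1$-samples whose projection onto $v$ is too large and absorbing the $O(\sqrt n)$ fluctuation in $|I_1|$, at least $(1-\delta^2)n/2$ usable samples remain. This is precisely why the cardinality constraint is stated as $(1-\delta^2)n/2$ rather than $n/2$, and why $c$ is taken to be a sufficiently large absolute constant rather than, say, $1$.
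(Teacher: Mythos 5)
Your proposal is correct and follows essentially the same route as the paper: take $v$ to be the separating direction $u$, use isotropy and the spectral gap to bound $v^\top\Sigma_1 v$, apply Markov to show all but a $\delta^2$-fraction of the $\calD_1$-samples have small projection, and set the $w_i$'s to indicate those samples. If anything, you are slightly more careful than the paper, which glosses over the binomial fluctuation in $|I_1|$ and the exact-cardinality constraint $\sum_i w_i = (1-\delta^2)n/2$ that your subset-selection step handles explicitly.
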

\begin{proof}
Since the mixture is equi-weighted, there is a constant probability that there are at least $n/2$ samples from $\calD_1(0,\Sigma_1)$, the cluster with smaller variance along the separating direction $u$. Let $w_i$'s denote the indicators for the points sampled from $\calD_1(0,\Sigma_1)$, denoted by $\calC_1$. By Markov, for any $x_i \in \calC_1$,
\begin{equation*}
    \Pr\left[\iprod{ x_i , u }^2 > \frac{ 1 } {\delta^2} \Paren{  u^\top \Sigma_1 u } \right]  \leq \delta^2,     
\end{equation*}
and thus for all but $1- \delta^2$ fraction of points in $\calC_1$, $\Iprod{x_i, u}^2 \leq \frac{ 1 } {\delta^2} \Paren{  u^\top \Sigma_1 u }  \leq 2 \delta \norm{u}^2 $.
\end{proof}

Next, we prove the following key sum-of-squares identity. 

\begin{lemma}[Key SoS Identity]
\label{lem:key-sos-identity}
Let $r$ be the degree of the certificate required in \cref{thm:main-anti-concentration-thm}. Then, 
\begin{equation*}
\begin{split}
\calC_\delta \sststile{r}{w,  v} \Biggl\{ & \Iprod{ \Sigma_1 , vv^\top }^2  \\
& \leq \mathcal{O}\Paren{   c_4^4    \Paren{ \delta v^\top \Sigma_2 v   - \sum_{i  \in \calC_2} q^2_i\Paren{ v,w }\cdot   w_i  \Paren{ v^\top \Sigma_2 v  - \frac{\delta \norm{v}^2 }{100  } } } +  \delta^2   }     \Biggr\} ,    
\end{split}
\end{equation*}
where $c_4^4$ is the certifiable hyper-contractivity constant. 
\end{lemma}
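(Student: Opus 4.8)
The plan is to obtain the identity by feeding three ingredients into the SoS calculus: the anti-concentration certificate of \cref{thm:main-anti-concentration-thm} applied to the inlier points from $\calD_2(0,\Sigma_2)$, the certifiable degree-$4$ hypercontractivity of the inlier points from $\calD_1(0,\Sigma_1)$, and the isotropy relation $\tfrac12\Sigma_1+\tfrac12\Sigma_2=I$. Throughout I condition on the high-probability events (valid for $n\ge n_0$) that $\big||\calC_j|-n/2\big|=o(n)$, that $\Norm{\tfrac{1}{|\calC_1|}\sum_{i\in\calC_1}x_ix_i^\top-\Sigma_1}\le\eps$ for some $\eps\ll\delta$, and that both inlier sub-samples inherit the properties of \cref{def:reasonably-anti-concentrated-dist} and \cref{def:certifiable-hypercontractivity-of-linear} from their generating distributions.

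\textbf{Step 1: anti-concentration on $\calC_2$.} Whitening $\{x_i\}_{i\in\calC_2}$ by $\Sigma_2$ turns them into i.i.d.\ samples of an isotropic reasonably anti-concentrated distribution, so \cref{thm:main-anti-concentration-thm}, written in the parametric form \eqref{eqn:parametric-form-cert-intro} and pulled back through $v\mapsto\Sigma_2^{1/2}v$, supplies sum-of-squares polynomials $q_i(v,w),\mathrm{sos}(v,w)$ and polynomials $r_i$ with
\[
\delta\,|\calC_2|-\sum_{i\in\calC_2}w_i \;=\; \mathrm{sos}(v,w)+\sum_{i\in\calC_2}q_i^2(v,w)\,w_i\big(\delta\,v^\top\Sigma_2 v-\Iprod{x_i,v}^2\big)+\sum_{i\in\calC_2}r_i(v,w)(w_i^2-w_i).
\]
I now multiply the third axiom of $\calC_\delta$, i.e.\ $c\delta\,w_i\Norm{v}^2-w_i\Iprod{x_i,v}^2\ge0$, by $q_i^2$ and add it in: each term $q_i^2w_i(\delta v^\top\Sigma_2 v-\Iprod{x_i,v}^2)$ becomes $\delta\,q_i^2w_i\big(v^\top\Sigma_2 v-\tfrac{\Norm{v}^2}{100}\big)$ up to a manifestly non-negative error (absorbing $c$). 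Discarding the non-negative pieces and the $w_i^2=w_i$ pieces gives, as a consequence of $\calC_\delta$,
\[
\sum_{i\in\calC_2}w_i+\delta\,\Xi\;\le\;\delta\,|\calC_2|,\qquad \Xi:=\sum_{i\in\calC_2}q_i^2(v,w)\,w_i\Big(v^\top\Sigma_2 v-\tfrac{\delta\Norm{v}^2}{100}\Big),
\]
where $\Xi$ is \emph{not} axiomatically sign-definite (the axioms control $\Iprod{x_i,v}^2$, not the sign of $v^\top\Sigma_2 v-\tfrac{\delta}{100}$) — which is precisely why it appears in the final statement.

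\textbf{Steps 2--4: controlling $v^\top\Sigma_1 v$ and assembling.} Combining Step 1 with $\sum_{i\in[n]}w_i=(1-\delta^2)n/2$, $|\calC_1|+|\calC_2|=n$ and $\big||\calC_j|-n/2\big|=o(n)$, the unselected $\calC_1$-mass $U:=\sum_{i\in\calC_1}(1-w_i)$ obeys $0\le U\le O(\delta)n-\delta\Xi$ in SoS modulo $\calC_\delta$ (the lower bound because $1-w_i=(1-w_i)^2$ modulo $w_i^2=w_i$, which in particular makes $O(\delta)n-\delta\Xi$ a derivably non-negative upper bound for $U$). Next, by \cref{fact:operator_norm} and the empirical-covariance bound, $v^\top\Sigma_1 v\le\tfrac{1}{|\calC_1|}\sum_{i\in\calC_1}\Iprod{x_i,v}^2+\eps$; splitting by $w_i$, the selected part is $\le c\delta$ by the $\calC_\delta$ axiom together with $\Norm{v}^2=1$ and $w_i\le1$, while for the unselected part SoS Cauchy--Schwarz (using $(1-w_i)^2=1-w_i$) bounds its square by $\tfrac{U}{|\calC_1|}\cdot\tfrac{1}{|\calC_1|}\sum_{i\in\calC_1}\Iprod{x_i,v}^4$, and certifiable degree-$4$ hypercontractivity of $\calC_1$ bounds $\tfrac{1}{|\calC_1|}\sum_{i\in\calC_1}\Iprod{x_i,v}^4\le c_4^4\,(v^\top\Sigma_1 v+\eps)^2\le 9c_4^4$, the last step using the a priori bound $v^\top\Sigma_1 v\le2\Norm{v}^2$ (isotropy, \cref{fact:operator_norm}). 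Hence $v^\top\Sigma_1 v\le c\delta+\eps+E$ with $E\ge0$, $E^2\le 9c_4^4\,U/|\calC_1|$, so (squaring and using $v^\top\Sigma_1 v\ge0$) $\Iprod{\Sigma_1,vv^\top}^2\le O(\delta^2)+O(c_4^4)\,\tfrac{U}{|\calC_1|}$. Plugging in $\tfrac{U}{|\calC_1|}\le O(\delta)-\tfrac{\delta}{|\calC_1|}\Xi$ and using $|\calC_1|=\Theta(n)$, $v^\top\Sigma_2 v\le2\Norm{v}^2$, $\eps\ll\delta$, together with the isotropy identity $v^\top\Sigma_1 v=2\Norm{v}^2-v^\top\Sigma_2 v$ to rewrite the residual $O(c_4^4\delta)$ term in terms of $v^\top\Sigma_2 v$, yields
\[
\Iprod{\Sigma_1,vv^\top}^2\;\le\;O\!\Big(c_4^4\Big(\delta\,v^\top\Sigma_2 v-\sum_{i\in\calC_2}q_i^2(v,w)\,w_i\big(v^\top\Sigma_2 v-\tfrac{\delta\Norm{v}^2}{100}\big)\Big)+\delta^2\Big),
\]
which is the asserted identity; the degree of the SoS proof is dominated by $r=\deg(q_i)$ from \cref{thm:main-anti-concentration-thm}.

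The hardest part is the hypercontractivity step inside Step~3: bounding $\tfrac{1}{|\calC_1|}\sum_{i\in\calC_1}\Iprod{x_i,v}^4$ by $(v^\top\Sigma_1 v)^2$ while $v^\top\Sigma_1 v$ is the quantity being bounded is circular, and is broken only by first invoking the unconditional bound $v^\top\Sigma_1 v\le2$ (SoS-valid by isotropy), which is what forces the parameter regime $\delta<1/\mathrm{poly}(c_4)$ implicit in the theorem. A second delicate point is that $\Sigma_2$ may be (near-)singular — indeed the subspace-clustering case has $\Sigma_2$ rank-deficient — so the whitening/homogenization in Step~1 must be carried out inside $\mathrm{range}(\Sigma_2)$ and the certificate pulled back so that it degenerates (becomes vacuous) on $\ker(\Sigma_2)$; this is exactly the reason the natural anti-concentration scale, and hence the term surviving into the statement, is $v^\top\Sigma_2 v$ rather than $\Norm{v}^2$. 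The remaining work — getting the constants and the powers of $\Norm{v}^2$ versus $v^\top\Sigma_2 v$ to line up, and checking that every step above is a legal addition/multiplication of the axioms of $\calC_\delta$ — is routine SoS bookkeeping.
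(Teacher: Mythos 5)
Your proposal is correct and follows essentially the same route as the paper: expand $\Iprod{\Sigma_1, vv^\top}$ via the empirical covariance of $\calC_1$, split into the $w_i$-selected part (bounded directly by the third axiom of $\calC_\delta$) and the unselected part (bounded via SoS Cauchy--Schwarz plus certifiable degree-$4$ hypercontractivity), and then control the unselected mass $\sum_{i\in\calC_1}(1-w_i)\approx\sum_{i\in\calC_2}w_i$ by invoking the parametric anti-concentration certificate on $\calC_2$ and weakening $\Iprod{x_i,v}^2$ to $\tfrac{\delta\norm{v}^2}{100}$ using the same axiom. The only cosmetic difference is that you split before squaring while the paper squares first and applies the almost-triangle inequality; the substance is identical.
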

\begin{proof}
\begin{equation}
\label{eqn:expand-mat-inner-prod}
\begin{split}
    \calC_\delta \sststile{}{ } \Biggl\{ \Iprod{ \Sigma_1, vv^\top  }^2   
    & =  \Iprod{ 
 \frac{1}{\abs{\calC_1 }} \sum_{i \in \calC_1 } x_i x_i^\top  , vv^\top  }^2    \\
 & \leq  \underbrace{  2 \Paren{   \frac{1 }{\abs{\calC_1} } \sum_{i \in \calC_1 } w_i \Iprod{x_i , v}^2  }^2  }_{\eqref{eqn:expand-mat-inner-prod}.(1)} +2   \underbrace{ \Paren{ \frac{1}{\abs{\calC_1} } \sum_{i \in \calC_1 } (1-w_i) \Iprod{x_i , v}^2 }^2  }_{\eqref{eqn:expand-mat-inner-prod}.(2
 )} \Biggr\} 
\end{split}
\end{equation}
Using the constraint that $w_i \Iprod{x_i, v}^2 \leq \frac{ c w_i \norm{v}^2 }{\Delta} $, we can bound \eqref{eqn:expand-mat-inner-prod}.(1) as follows: 
\begin{equation}
\label{eqn:bounding-var-smaller-component}
    \calC_\delta \sststile{}{} \Set{\frac{1}{\abs{\calC_1} } \sum_{i \in \calC_1 } w_i \Iprod{x_i , v}^2 \leq  \frac{c \norm{v}^2 }{ \Delta } } 
\end{equation}
Using certifiable hyper-contractivity, we can bound \eqref{eqn:expand-mat-inner-prod}.(2) as follows: 
\begin{equation}
\label{eqn:invoking-hypercontractivity}
\begin{split}
    \calC_\delta \sststile{}{} \Biggl\{  \Paren{    \frac{1}{\abs{\calC_1} } \sum_{i \in \calC_1 } (1-w_i) \Iprod{x_i , v}^2 }^2  & \leq   \Paren{ \frac{1}{\abs{\calC_1} } \sum_{i \in \calC_1 } (1-w_i)^2 } \cdot \Paren{ \frac{1}{\abs{\calC_1} } \sum_{i \in \calC_1 }  \Iprod{x_i , v}^4 }   \\
    & \leq c_4^4 \cdot \underbrace{ \Paren{ \frac{2}{ n  }  \sum_{i \in \calC_2} w_i } }_{\eqref{eqn:invoking-hypercontractivity}.(1)} \cdot  \Paren{ v^\top \Sigma_1 v}^2   \Biggr\} ,
\end{split}
\end{equation}
where the last inequality follows from recalling our constraint that $\sum_{i \in \calC_1 } w_i +\sum_{i \in \calC_2 } w_i = n/2$. 
We focus on bounding term \eqref{eqn:invoking-hypercontractivity}.(1) using our anti-concentration certification on the cluster $\calC_2$. Invoking the certificate from  \cref{thm:main-anti-concentration-thm} along the direction $v$, 
\begin{equation}
     \sststile{}{w,v} \Set{ \delta -  \frac{2}{n} \sum_{i \in \calC_2 } w_i -  \sum_{i \in \calC_2 } q^2_i\Paren{ v,w }  
 w_i \Paren{ \delta v^\top \Sigma_2 v -  \Iprod{ x_i , v }^2  } \geq 0 }
\end{equation}
Further, the constraint system implies $$\calC_\Delta \sststile{}{w}  \Set {  w_i \Iprod{ x_i, v }^2 \leq \frac{w_i }{\Delta} \norm{v}^2 \leq  \frac{w_i \delta}{100}  \norm{ v}^2 }$$
Therefore, 
\begin{equation}
    \calC_\Delta \sststile{}{w} \Set{ \frac{2}{n}\sum_{i \in \calC_2} w_i \leq \delta  - \sum_{ \in \calC_2} q^2_i\Paren{ v,w }\cdot   w_i  \Paren{ v^\top \Sigma_2 v  - \frac{\delta \norm{v}^2 }{100  } }   }
\end{equation}
Plugging this back in \cref{eqn:invoking-hypercontractivity} and combining it with \cref{eqn:bounding-var-smaller-component,eqn:expand-mat-inner-prod}, we can conclude that 
\begin{equation*}
\begin{split}
\calC_\Delta  \sststile{}{w,  v} \Biggl\{ & \Iprod{ \Sigma_1 , vv^\top }^2  \\
& \leq \mathcal{O}\Paren{   c_4^4 v^\top \Sigma_1 v  \Paren{ \delta v^\top \Sigma_2 v   - \sum_{ i \in \calC_2} q^2_i\Paren{ v,w }\cdot   w_i  \Paren{ v^\top \Sigma_2 v  - \frac{\delta \norm{v}^2 }{100  } } } +  \delta \norm{v}^4   }     \Biggr\}   
\end{split}
\end{equation*}
The claim follows from recalling that $v^\top \Sigma_1 v \leq 2 \norm{v}^2_2$ and $\norm{v}_2^2  =1$. 
\end{proof}

As a thought experiment, assume we know $v^\top \Sigma_2 v$. Then we re-weight the pseudo-distribution $\mu$ with $(v^\top \Sigma_2 v)^{2t}$ for large $t$. Let the resulting pseudo-distribution be $\mu'$.  We show that under the re-weighting, we can lower bound expressions of the form $\pexpecf{\mu'}{  q_i^2 \Paren{ w,v } v^\top \Sigma_2 v }$. 

\begin{lemma}["Fixing" the quadratic form]
\label{lem:fixing-the-quad-form}
Let $q^2_i(v,w)$ be are degree-$r$ sum-of-squares polynomials such that $\norm{q_i^2 }^2_2 \leq d^{\mathcal{O}(r)}$.
Let $\mu$ be a degree-$\Omega\Paren{ r \log(d) }$ distribution consistent with $\calC_\Delta$. Let $\mu'$ be the pseudo-distribution obtained by re-weighting $\mu$ by $(v^\top \Sigma_2 v)^{2t}$, for $t= \bigO{r\log(d)} $. Then,
\begin{equation*}
    \expecf{\mu' }{ q_i^2\Paren{v,w} v^\top \Sigma_2 v } \geq 0.9 \expecf{\mu'}{ q_i^2\Paren{v,w} \norm{v}^2  }.
\end{equation*}
\end{lemma}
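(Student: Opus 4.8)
The plan is to exploit the scalar reweighting machinery from \cref{lem:scalar-reweighting} applied to the indeterminate $z = v^\top \Sigma_2 v$, and then combine it with H\"older's inequality for pseudo-distributions to transfer the ``fixing'' of $z$ to the joint pseudo-moment $\pexpecf{\mu'}{q_i^2(v,w)\, v^\top\Sigma_2 v}$. First, I would observe that $z = v^\top\Sigma_2 v$ is a bounded indeterminate under $\calC_\delta$: since $\norm{v}_2^2 = 1$ is a constraint and $\Sigma_2 \preceq O(1)\cdot I$ (the mixture is isotropic, so $\tfrac12\Sigma_2 \preceq I$), \cref{fact:operator_norm} gives $\calC_\delta \sststile{2}{v} \Set{z \le O(1)}$, and also $\pexpecf{\mu}{z^2}$ is lower bounded by a constant (since $z \ge 0$ and $\pexpecf{\mu}{z} = \pexpecf{\mu}{v^\top\Sigma_2 v}$ is at least a constant on the feasible region — here we may use $\pexpecf{\mu}{v^\top(\Sigma_1+\Sigma_2)v} = 2\pexpecf{\mu}{\norm{v}^2} = 2$ by isotropy, so at least one of the two quadratic forms has pseudo-expectation $\ge 1/2$; after an initial reweighting of $\mu$ by $z^{2t}$ we may assume it is $z$, i.e.\ $\pexpecf{\mu}{z^2} \ge \Omega(1)$ — this is morally where we use that we chose to reweight by $\Sigma_2$ rather than $\Sigma_1$). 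Then \cref{lem:scalar-reweighting} with $\Delta = O(1)$ and a sufficiently small $\eps$ (say $\eps = 1/100$) produces $\mu'$, a degree-$O(t\log(1/\eps)/\eps^2)$ reweighting of $\mu$ by a power of $z$, such that $\pexpecf{\mu'}{(z - m)^{2t}} \le \eps^{2t} m^{2t}$ where $m = \sqrt{\pexpecf{\mu'}{z^2}} \ge \Omega(1)$, and $\pexpecf{\mu'}{z} \ge (1-\eps)m$.

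Next, the key step is to deduce from the concentration of $z$ around $m$ that $\pexpecf{\mu'}{q_i^2(v,w)\,z} \ge 0.9\,\pexpecf{\mu'}{q_i^2(v,w)}$ — note the statement as written has $\norm{v}^2$ on the right, which equals $1$ under the constraint, so it is equivalent to show $\pexpecf{\mu'}{q_i^2(v,w)(z - 0.9)} \ge 0$, or more cleanly $\pexpecf{\mu'}{q_i^2(v,w)(m - z)} $ is small relative to $m \cdot \pexpecf{\mu'}{q_i^2(v,w)}$. Since $q_i^2$ is a sum of squares, write
\begin{equation*}
\pexpecf{\mu'}{q_i^2(v,w)\,(m - z)} \le \sqrt{\pexpecf{\mu'}{q_i^4(v,w)}}\cdot\sqrt{\pexpecf{\mu'}{(m-z)^2}}
\end{equation*}
by \cref{fact:pseudo-expectation-cauchy-schwarz}, and then bound $\pexpecf{\mu'}{(m-z)^2} \le \big(\pexpecf{\mu'}{(m-z)^{2t}}\big)^{1/t} \le \eps^2 m^2$ by H\"older (\cref{fact:pseudo-expectation-holder}). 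The remaining task is to control $\pexpecf{\mu'}{q_i^4(v,w)}$ by a bounded multiple of $\big(\pexpecf{\mu'}{q_i^2(v,w)}\big)^2$ — this does not hold for arbitrary SoS polynomials, so here I would use the coefficient bound $\norm{q_i^2}_2^2 \le d^{O(r)}$ from the hypothesis together with the fact that all monomials in $v,w$ have pseudo-expectation bounded by a constant (as $\norm{v}^2 = 1$ and $w_i^2 = w_i$), giving $\pexpecf{\mu'}{q_i^4} \le d^{O(r)}$; combined with a lower bound $\pexpecf{\mu'}{q_i^2} \ge $ some quantity, we get the ratio under control. Actually the cleaner route, which I expect to be the intended one, is to apply the scalar reweighting guarantee \emph{directly} with the test polynomial being $q_i^2$: since reweighting $\mu$ by $z^{2t}$ fixes $z$ not just in isolation but against any bounded-degree SoS polynomial, one shows
\begin{equation*}
\pexpecf{\mu'}{q_i^2(v,w)\,z} = \frac{\pexpecf{\mu}{q_i^2(v,w)\,z \cdot z^{2t}}}{\pexpecf{\mu}{z^{2t}}} \ge (1 - O(\eps))\, m \cdot \pexpecf{\mu'}{q_i^2(v,w)},
\end{equation*}
by expanding $z = m - (m-z)$, applying Cauchy-Schwarz to the cross term $\pexpecf{\mu'}{q_i^2(v,w)(m-z)}$ as above, and using that the negative contribution is $\eps m$-small; choosing $\eps$ so that $(1-O(\eps))m \ge 0.9$ (recall $m \ge 1$ after the normalization) finishes the proof.

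The main obstacle I anticipate is precisely the step bounding the cross term: one needs $\pexpecf{\mu'}{q_i^4(v,w)}$ (equivalently $\pexpecf{\mu'}{q_i^2(v,w)^2}$, which is \emph{not} the square of $\pexpecf{\mu'}{q_i^2}$) to be comparable to $\big(\pexpecf{\mu'}{q_i^2(v,w)}\big)^2$, and controlling the fourth power of an SoS polynomial against the square of its pseudo-expectation requires either (a) a priori boundedness coming from the coefficient bound $\norm{q_i^2}_2^2 \le d^{O(r)}$ and the $\norm{v}^2=1$, $w_i^2=w_i$ constraints forcing all pseudo-moments to be bounded — which only gives $\pexpecf{\mu'}{q_i^4} \le d^{O(r)}$, necessitating that $t = \Omega(r\log d)$ so that $\eps^2$ can be taken as small as $d^{-\Theta(r)}$ to absorb this factor — or (b) pushing the scalar reweighting to degree $t = \Omega(r\log d)$ from the start precisely so that $\eps^{2} m^{2}$ beats the $d^{O(r)}$ blow-up. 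This is consistent with the lemma's hypothesis $t = O(r\log d)$ and degree $\Omega(r\log d)$ for $\mu$, so I am fairly confident this is how the bookkeeping is meant to go; the calculation is routine once the degree budget is set, but getting the dependence of $\eps$ on $d$ and $r$ exactly right is the delicate part.
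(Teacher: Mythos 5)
Your proposal is correct and follows essentially the same route as the paper: apply the scalar fixing lemma to $z=v^\top\Sigma_2 v$, bound the cross term $\pexpecf{\mu'}{q_i^2(v,w)\,(z-m)}$ via Cauchy--Schwarz, control $\pexpecf{\mu'}{q_i^4}$ against $\pexpecf{\mu'}{q_i^2}^2$ using the coefficient bound $\norm{q_i^2}_2^2\le d^{O(r)}$, and take the moment order $\ell=\Omega(r\log d)$ so that $\eps^{O(\ell)}$ absorbs the $d^{O(r)}$ factor. The paper packages the cross-term estimate as a reweighting of $\mu'$ by $q_i^2$ (an auxiliary pseudo-distribution $\mu''$), but the computation is identical to yours.
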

\begin{proof}
Using the scalar fixing lemma (\cref{lem:scalar-reweighting}), we have that for $\ell \in [t/100]$
\begin{equation}
    \pexpecf{\mu'}{  \Paren{ v^\top \Sigma_2 v  - \pexpecf{\mu}{ v^\top \Sigma_2 v }  }^{2\ell}  } \leq \eps^{2\ell}   \Paren{ \pexpecf{\mu}{ v^\top \Sigma_2 v }  }^{2\ell}.
\end{equation}

Let $\mu''$ be the pseudo-distribution obtained by re-weighting $\mu'$ by the unknown sum-of-squares polynomial $q_i^2(v,w)$. 

Then, applying Cauchy-Schwarz for pseudo-distributions, 
\begin{equation}
\begin{split}
     \pexpecf{\mu''}{    \Paren{ v^\top \Sigma_2 v  - \pexpecf{\mu}{ v^\top \Sigma_2 v }  }  }^{4\ell} 
    &  \leq \Paren{ \pexpecf{\mu'' }{  \Paren{ v^\top \Sigma_2 v  - \pexpecf{\mu}{ v^\top \Sigma_2 v }  }^{2\ell}    }     }^{2 } \\
    & = \Paren{ \frac{ \pexpecf{\mu' }{  q_i^2\Paren{w,v} \Paren{ v^\top \Sigma_2 v  - \pexpecf{\mu}{ v^\top \Sigma_2 v }^{2\ell} }
    } } {\pexpecf{\mu'}{ q_i^2\Paren{ w, v }  } }    }^{2 } \\
    & \leq \frac{ \pexpecf{\mu'}{  q_i^4\Paren{ w, v } }   \cdot \pexpecf{\mu'}{  \Paren{ v^\top \Sigma_2 v  - \pexpecf{\mu}{ v^\top \Sigma_2 v }  }^{4\ell} } }{  \pexpecf{\mu'}{ q_i^2\Paren{ w, v }  }^{2} }    \\
    & \leq \frac{\pexpecf{\mu'}{  q_i^4\Paren{ w, v } } }{ \pexpecf{\mu'}{ q_i^2\Paren{ w, v }  }^{2}  }    \cdot  \eps^{4\ell} \cdot  \pexpecf{\mu}{ v^\top \Sigma_2 v }^{4\ell} \\
    & \leq d^{\mathcal{O}(r)} \cdot 
    \eps^{4\ell} \cdot  \pexpecf{\mu}{ v^\top \Sigma_2 v }^{4\ell} ,
\end{split}
\end{equation}
where the last inequality follows from observing that $\pexpecf{\mu'}{ q_i^4(w,v) } \leq \norm{q_i}_2^2 \pexpecf{\mu'}{ q_i^2(w,v) }^2 \leq  d^{\mathcal{O}(r)} \pexpecf{\mu'}{ q_i^2(w,v) }^2 $.  Setting $\eps = 0.01$ and $\ell = \Omega( r \log(d) )$, and taking the $4\ell$-th root,  we get that 
\begin{equation*}
    \Abs{ \pexpecf{\mu''}{    \Paren{ v^\top \Sigma_2 v  - \pexpecf{\mu}{ v^\top \Sigma_2 v }  }  }  }  \leq  0.1  \pexpecf{}{ v^\top \Sigma_2 v }
\end{equation*}
Plugging in the definition of $\mu''$, we have 
\begin{equation*}
\begin{split}
    \pexpecf{\mu'}{ q_i^2 (w,v)    \Paren{ v^\top \Sigma_2 v  }  }   & \geq \pexpecf{\mu'}{  q_i^2 (w,v)} \pexpecf{\mu}{ v^\top \Sigma_2 v } -      0.1  \pexpecf{\mu'}{ q_i^2(w,v) } \pexpecf{\mu}{ v^\top \Sigma_2 v } \\
    & \geq 0.9 \cdot \pexpecf{\mu'}{  q_i^2 (w,v)} \pexpecf{\mu}{ v^\top \Sigma_2 v }.
\end{split}
\end{equation*}
\end{proof}

Next, we show that picking a random sample $x_j$ and re-weighting with $\Iprod{x_j,v}^{2t}$ instead, we obtain can obtain the same conclusion as \cref{lem:fixing-the-quad-form}.

\begin{lemma}[Simulating "fixing" the quadratic form]
\label{lem:simulating-fixing-the-quad-form}
Let $q^2_i(v,w)$ be are degree-$r$ sum-of-squares polynomials such that $\norm{q_i^2 }^2_2 \leq d^{\mathcal{O}(r)}$. Let $\mu$ be a degree-$\Omega\Paren{ r \log(d) }$ distribution consistent with $\calC_\Delta$. Let $j \in [n]$ be picked uniformly at random and let $\mu'$ be the pseudo-distribution obtained by re-weighting $\mu$ by $\Iprod{ x_j , v}^{2t}$, for $t = \bigO{ r\log(d) }$. Then with probability at least $1/100$, we have
\begin{equation*}
    \expecf{\mu' }{ q_i^2\Paren{v,w} v^\top \Sigma_2 v } \geq 0.1 \expecf{\mu'}{ q_i^2\Paren{v,w} \norm{v}^2  }.
\end{equation*}
\end{lemma}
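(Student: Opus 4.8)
The plan is to reduce this to \cref{lem:fixing-the-quad-form}: re-weighting $\mu$ by $\Iprod{x_j,v}^{2t}$ for a \emph{random} $x_j$ drawn from the higher-variance cluster $\calC_2$ is, on average over the choice of $x_j$, the same as re-weighting by the empirical $2t$-th moment polynomial of $\calC_2$, and the latter behaves in SoS like $(v^\top\Sigma_2 v)^{t}$; the randomness in $j$ only costs constant factors, which is why the constant degrades from the $0.9$ of \cref{lem:fixing-the-quad-form} to $0.1$. Concretely, since $\calM$ is equi-weighted, with probability at least $1/4$ over the sample at least $n/3$ of the points lie in $\calC_2$; I would condition on this event together with the routine high-probability events for $\calD_2$, namely operator-norm closeness of the empirical covariance $\widehat{\Sigma}_2$ to $\Sigma_2$ and $(c_{2t},2t)$-certifiable hypercontractivity of the uniform distribution on $\calC_2$ (both hold with probability $1-o(1)$ using $n\geq d^{\Omega(r\log d)}$). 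It then suffices to show that a constant fraction of the indices $j\in\calC_2$ yield a re-weighting $\mu'_j$ of $\mu$ by $\Iprod{x_j,v}^{2t}$ satisfying the claimed inequality.

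\textbf{The averaged re-weighting.} Let $\widehat{M}_{2t}(v)=\tfrac{1}{\abs{\calC_2}}\sum_{j\in\calC_2}\Iprod{x_j,v}^{2t}$ and let $\bar\mu'$ be the re-weighting of $\mu$ by $\widehat{M}_{2t}$. Since $\mu(v)\,\widehat{M}_{2t}(v)=\tfrac{1}{\abs{\calC_2}}\sum_{j\in\calC_2}\pexpecf{\mu}{\Iprod{x_j,v}^{2t}}\cdot\mu'_j(v)$, we obtain $\bar\mu'=\sum_{j\in\calC_2}\lambda_j\mu'_j$ with $\lambda_j\propto\pexpecf{\mu}{\Iprod{x_j,v}^{2t}}$, $\sum_j\lambda_j=1$, and hence $\pexpecf{\bar\mu'}{f}=\sum_j\lambda_j\pexpecf{\mu'_j}{f}$ for every $f$. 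By \cref{fact:sos-holder} with all weights equal to one, $\sststile{}{v}\Set{\widehat{M}_{2t}(v)\geq (v^\top\widehat{\Sigma}_2 v)^{t}}$, and by certifiable hypercontractivity of $\calC_2$, $\sststile{}{v}\Set{\widehat{M}_{2t}(v)\leq c_{2t}^{2t}(v^\top\widehat{\Sigma}_2 v)^{t}}$; together with $\widehat{\Sigma}_2=(1\pm o(1))\Sigma_2$ this sandwiches $\widehat{M}_{2t}$, in SoS, between constant multiples of $(v^\top\Sigma_2 v)^{t}$. Re-running the argument of \cref{lem:fixing-the-quad-form} with $(v^\top\Sigma_2 v)^{2t}$ replaced by $\widehat{M}_{2t}$ — the only features used are that the re-weighting polynomial has bounded degree and coefficients and that it concentrates the scalar $v^\top\Sigma_2 v$ around its re-weighted mean, both of which survive the sandwich up to a $c_{2t}^{2t}$ factor that is absorbed by taking $\eps$ small enough — yields $\pexpecf{\bar\mu'}{q^2_i(v,w)\,v^\top\Sigma_2 v}\geq 0.9\,\pexpecf{\bar\mu'}{q^2_i(v,w)\norm{v}^2}$.

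\textbf{From the average to a single sample.} Writing $A_j=\pexpecf{\mu'_j}{q^2_i\,v^\top\Sigma_2 v}\geq 0$ and $B_j=\pexpecf{\mu'_j}{q^2_i\norm{v}^2}\geq 0$, the previous step gives $\sum_j\lambda_j(A_j-0.1B_j)\geq 0.8\sum_j\lambda_j B_j\geq 0$. Using $A_j\leq\norm{\Sigma_2}_2 B_j=O(1)\cdot B_j$ (so no single index contributes too much to the positive side), the bound $\norm{q^2_i}_2\leq d^{O(r)}$, and the near-uniformity of the weights $\lambda_j$ over $\calC_2$ — which follows from the sandwich above together with standard upper-tail control on $\max_{j\in\calC_2}\Iprod{x_j,v}^{2t}$ for a reasonably anti-concentrated distribution — a Markov argument upgrades this $\lambda$-weighted inequality to the conclusion that at least an $\Omega(1)$ fraction of $j\in\calC_2$ satisfy $A_j\geq 0.1B_j$. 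Combined with the first step, a uniformly random $j\in[n]$ is good with probability at least $1/100$.

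The main obstacle is this last step: passing from a statement about the mixture $\bar\mu'$ to one that holds for a genuine constant fraction of the individual indices $j$, since both the re-weighting masses $\pexpecf{\mu}{\Iprod{x_j,v}^{2t}}$ and the normalizers $\pexpecf{\mu'_j}{q^2_i}$ vary with $j$. Controlling their spread is exactly where the sub-exponential tails, almost $k$-wise independence, and certifiable hypercontractivity of $\calD_2$ are all used, and it is this bookkeeping that governs how the constant $0.9$ of \cref{lem:fixing-the-quad-form} degrades to the $0.1$ claimed here.
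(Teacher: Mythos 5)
Your proposal follows the same core strategy as the paper's proof: average the sample re-weighting over $j \in \calC_2$, observe that the averaged re-weighting polynomial is the empirical moment $\widehat M_{2t}(v)$ which dominates $(v^\top\Sigma_2 v)^{2t}$ in SoS, invoke \cref{lem:fixing-the-quad-form}, and then run a reverse-Markov argument to pass to a constant fraction of individual indices. The execution differs in one respect that matters for the bookkeeping you flag as the main obstacle. You normalize early, writing $\bar\mu' = \sum_j \lambda_j \mu'_j$ and then worrying about the spread of the weights $\lambda_j = \pexpecf{\mu}{\Iprod{x_j,v}^{2t}}/(\sum_{j'}\pexpecf{\mu}{\Iprod{x_{j'},v}^{2t}})$ and of the normalizers of each $\mu'_j$. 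The paper never normalizes until the last line: it lower-bounds the \emph{unnormalized} average $\expecf{x_j\in\calC_2}{\pexpecf{\mu}{q_i^2\, (v^\top\Sigma_2 v)\, \Iprod{x_j,v}^{2t}}} \geq 0.9\,\pexpecf{\mu}{q_i^2\,\norm{v}^2\,(v^\top\Sigma_2 v)^{2t}}$ (this is exactly \cref{lem:fixing-the-quad-form} with denominators cleared, needing only the one-sided SoS inequality $\widehat M_{2t}(v)\geq (v^\top\Sigma_2 v)^{2t}$, not the hypercontractive upper bound and not a re-derivation of the lemma for $\bar\mu'$), applies the averaging argument to these unnormalized scalars to find $0.05n$ good indices, and only then divides numerator and denominator by the same quantity $\pexpecf{\mu}{\Iprod{x_j,v}^{2t}}$. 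Your obstacle largely dissolves under this reformulation: note that $\lambda_j B_j$ is proportional to $\pexpecf{\mu}{q_i^2\,\norm{v}^2\,\Iprod{x_j,v}^{2t}}$, so the per-index normalizers cancel against the mixture weights and the only quantity whose spread must be controlled is the unnormalized term itself (an upper bound on its maximum relative to its average over $\calC_2$, for the reverse Markov). That residual condition is present in both arguments — the paper leaves it implicit — so your explicit identification of where sub-exponential tails and hypercontractivity enter is a fair point, but the extra machinery you invoke (the two-sided sandwich with the $c_{2t}^{2t}$ loss, near-uniformity of the $\lambda_j$) is not needed if you delay normalization.
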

\begin{proof}
Consider the following expression: 
\begin{equation*}
    \begin{split}
        \expecf{ x_j  \in \calC_2 }{  \pexpecf{\mu }{ q_i^2(v,w)  \cdot v^\top \Sigma_2 v \cdot \Iprod{x_j , v}^{2t}   }     } & = \pexpecf{\mu}{ q_i^2(v,w)  \cdot v^\top \Sigma_2 v \cdot \expecf{x_j\in\calC_2 }{ \Iprod{ x_j x_j^\top , vv^\top}^{2t}   }  } \\
        & \geq  \pexpecf{\mu}{ q_i^2(v,w)  \cdot v^\top \Sigma_2 v \cdot  (v^\top \Sigma_2 v)^{2t}  }   \\
        & \geq 0.9 \pexpecf{\mu}{  q_i^2(w,v) \cdot \norm{v}^2_2 \cdot  (v^\top \Sigma_2 v)^{2t}   },
    \end{split}
\end{equation*}
where the last inequality follows from clearing the denominator in \cref{lem:fixing-the-quad-form}.
Then, by Markov's inequality for at least $0.05n$ points $x_j \in \calC_2$,  it follows that 
\begin{equation*}
       \pexpecf{\mu }{ q_i^2(v,w)  \cdot v^\top \Sigma_2 v \cdot \Iprod{x_j , v}^{2t}   }  \geq 0.1 \pexpecf{\mu}{  q_i^2(w,v) \cdot \norm{v}^2_2 \cdot  (v^\top \Sigma_2 v)^{2t}   }.
\end{equation*}
Therefore, with probability at least $1/100$, we sample a good $x_j$, and therefore, 
\begin{equation*}
\begin{split}
    \pexpecf{\mu' }{ q_i^2(v,w)  \cdot v^\top \Sigma_2 v    }  & = \frac{ \pexpecf{\mu }{ q_i^2(v,w)  \cdot v^\top \Sigma_2 v \cdot \Iprod{x_j , v}^{2t}   }  } {  \pexpecf{\mu}{ \Iprod{x_j , v}^{2t}   }  } \\
    & \geq 0.1 \cdot \frac{  \pexpecf{\mu}{  q_i^2(w,v) \cdot \norm{v}^2_2 \cdot  (v^\top \Sigma_2 v)^{2t}   }  } {  \pexpecf{\mu}{ \Iprod{x_j , v}^{2t}   }  }  \\
    & = 0.1 \pexpecf{\mu'}{ q_i^2(v,w) \cdot \norm{v}^2  }.
\end{split}
\end{equation*}

\end{proof}

We can now show that can recover a nearly-separating direction via Gaussian rounding.

\begin{lemma}[Rounding]
\label{lem:rounding-the-pd}
\cref{algo:clustering-2-mixtures} outputs a direction $z$ such that $z^\top \Sigma_1 z \leq \sqrt{\delta}$. 
\end{lemma}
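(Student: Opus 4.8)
The plan is to chain the algebraic identity of \cref{lem:key-sos-identity} with the ``simulated fixing'' of \cref{lem:simulating-fixing-the-quad-form}, and then carry out a short Gaussian-rounding calculation. Write $\mu$ for the degree-$t$ pseudo-distribution consistent with $\calC_\delta$ computed in Step~1, which exists by the preceding feasibility lemma, and write $\mu'$ for the re-weighting of $\mu$ by $\Iprod{x_j,v}^{2t}$ produced in Step~2. Since $\Iprod{x_j,v}^{2t} = \Paren{\Iprod{x_j,v}^{t}}^2$ is a square, $\mu'$ is again a pseudo-distribution, of degree smaller by $2t$, and it remains consistent with $\calC_\delta$ (re-weighting by a square preserves every equality and inequality constraint). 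In particular, every sum-of-squares consequence of $\calC_\delta$ holds in pseudo-expectation under $\mu'$, and we choose the degree parameters of \cref{algo:clustering-2-mixtures} so that $\mu'$ still has degree at least the degree $r$ of the certificate in \cref{thm:main-anti-concentration-thm}.

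Next I would take pseudo-expectations of the identity in \cref{lem:key-sos-identity} under $\mu'$. Writing $Q(v,w) := \sum_{i\in\calC_2} q_i^2(v,w)\,w_i$ --- a sum-of-squares polynomial of degree $O(r)$ with coefficients of magnitude $d^{O(r)}$, using $w_i^2 = w_i$ together with the bit-complexity bounds for our certificates --- this gives
\[
\pexpecf{\mu'}{\Iprod{\Sigma_1,vv^\top}^2} \;\le\; O\Paren{ c_4^4\Paren{ \delta\,\pexpecf{\mu'}{v^\top\Sigma_2 v} - \pexpecf{\mu'}{Q\Paren{v^\top\Sigma_2 v - (\delta/100)\norm{v}_2^2}} } + \delta^2 }.
\]
The heart of the argument is to force the subtracted term to be nonnegative. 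Applying \cref{lem:simulating-fixing-the-quad-form} to the single polynomial $Q$ shows that, with probability at least $1/100$ over the random $x_j$ drawn in Step~2, the re-weighting by $\Iprod{x_j,v}^{2t}$ satisfies $\pexpecf{\mu'}{Q\cdot v^\top\Sigma_2 v} \ge 0.1\,\pexpecf{\mu'}{Q\cdot\norm{v}_2^2}$, and hence $\pexpecf{\mu'}{Q\Paren{v^\top\Sigma_2 v - (\delta/100)\norm{v}_2^2}} \ge \Paren{0.1 - \delta/100}\pexpecf{\mu'}{Q\norm{v}_2^2} \ge 0$. Conditioning on this event and using that the mixture is isotropic with zero means (so $\Sigma_1 + \Sigma_2 = 2I$, whence $v^\top\Sigma_2 v \le 2\norm{v}_2^2$ and $\pexpecf{\mu'}{v^\top\Sigma_2 v}\le 2$ since $\norm{v}_2^2 = 1$ is a constraint), the display reduces to $\pexpecf{\mu'}{\Iprod{\Sigma_1,vv^\top}^2} \le O(c_4^4\delta)$. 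Cauchy--Schwarz for pseudo-distributions (\cref{fact:pseudo-expectation-cauchy-schwarz}) then yields $\Iprod{\Sigma_1,\widetilde\Sigma} = \pexpecf{\mu'}{v^\top\Sigma_1 v} \le \sqrt{\pexpecf{\mu'}{\Iprod{\Sigma_1,vv^\top}^2}} \le O\Paren{c_4^2\sqrt\delta}$, where $\widetilde\Sigma := \pexpecf{\mu'}{vv^\top}$ is the matrix rounded in Step~3.

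To finish, I would analyze the rounded vector $z = g/\norm{g}_2$ with $g\sim\calN(0,\widetilde\Sigma)$. Since $\widetilde\Sigma\succeq 0$ with $\mathrm{tr}(\widetilde\Sigma) = \pexpecf{\mu'}{\norm{v}_2^2} = 1$, its eigenvalues lie in $[0,1]$, so $\mathbb{E}[\norm{g}_2^2] = \mathrm{tr}(\widetilde\Sigma) = 1$ and $\mathbb{E}[\norm{g}_2^4] = \mathrm{tr}(\widetilde\Sigma)^2 + 2\,\mathrm{tr}(\widetilde\Sigma^2) \le 3$; by Paley--Zygmund, $\Pr\Bracks{\norm{g}_2^2 \ge 1/2} \ge 1/12$. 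Also $\mathbb{E}[g^\top\Sigma_1 g] = \Iprod{\Sigma_1,\widetilde\Sigma} \le O(c_4^2\sqrt\delta)$, so Markov gives $\Pr\Bracks{g^\top\Sigma_1 g > 100\,\Iprod{\Sigma_1,\widetilde\Sigma}} \le 1/100$. A union bound over these two events, combined with the $1/100$ probability over $x_j$, shows that with probability $\Omega(1)$ we obtain $z^\top\Sigma_1 z = g^\top\Sigma_1 g/\norm{g}_2^2 \le O(c_4^2\sqrt\delta)$. As $c_4 = O(1)$ for reasonably anti-concentrated distributions, running the algorithm with $\delta$ replaced by a sufficiently small constant multiple of itself --- which only strengthens the $\delta^3$-separation hypothesis --- gives $z^\top\Sigma_1 z \le \sqrt\delta$, and repeating $O(1)$ times boosts the success probability as needed.

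The step I expect to be the main obstacle is the invocation of \cref{lem:simulating-fixing-the-quad-form} with the aggregate polynomial $Q$: one must control its degree and coefficient size, and, more importantly, argue that re-weighting by $\Iprod{x_j,v}^{2t}$ for a ``typical'' sample $x_j$ from the high-variance component $\calC_2$ genuinely concentrates the pseudo-distribution on directions along which $\Sigma_2$ has large variance, so that the term in \cref{lem:key-sos-identity} that is not axiomatically nonnegative becomes nonnegative in pseudo-expectation --- this is exactly where the ``thought experiment'' of re-weighting by the unknown quantity $v^\top\Sigma_2 v$ (\cref{lem:fixing-the-quad-form}) must be traded for a re-weighting we can actually implement.
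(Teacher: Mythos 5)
Your proposal matches the paper's proof: both take pseudo-expectations of the identity from \cref{lem:key-sos-identity} under the re-weighted $\mu'$, use \cref{lem:simulating-fixing-the-quad-form} to make the non-axiomatically-nonnegative term nonnegative with probability at least $1/100$ over the choice of $x_j$, apply pseudo-Cauchy--Schwarz/H\"older to conclude $\Iprod{\Sigma_1,\pexpecf{\mu'}{vv^\top}} \le O(\sqrt{\delta})$, and finish by Gaussian rounding with Markov. Your write-up is in fact slightly more careful than the paper's, which elides both the normalization of $g$ (your Paley--Zygmund step for $\norm{g}_2^2$) and the aggregation of the individual $q_i^2 w_i$ terms into a single sum-of-squares polynomial $Q$ when invoking the fixing lemma.
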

\begin{proof}
We round to the pseudo-expected density matrix $\pexpecf{\mu'}{ vv^\top}$. It is easy to check that $\tr\Paren{ \pexpecf{\mu'}{ vv^\top} } = 1 $ and   $\pexpecf{\mu'}{ vv^\top} \succeq 0$. Further, it follows from H\"older's inequality for pseudo-distributions and \cref{eqn:expand-mat-inner-prod} that with probability at least $1/100$ over the choice of $x_j$, 
\begin{equation}
\begin{split}
    \Iprod{ \Sigma_1 , \pexpecf{\mu'}{v  v^\top  }   }^2 & \leq \pexpecf{\mu'}{ \Iprod{ \Sigma_1 , v  v^\top  }^2   } \\
    & \leq \bigO{ 1 } \pexpecf{\mu'}{   \Paren{ \delta v^\top \Sigma_2 v   - \sum_{ i \in \calC_2} q^2_i\Paren{ v,w }\cdot   w_i  \Paren{ v^\top \Sigma_2 v  - \frac{\delta \norm{v}^2 }{100  } } } +   \delta  } \\
    & \leq \bigO{ \delta + \delta } .  
\end{split}
\end{equation}
Taking square-roots,  $\Iprod{ \Sigma_1 , \pexpecf{\mu}{v  v^\top  }   } \leq \bigO{ \sqrt{\delta} }$.  Let $\widetilde{\Sigma}  \sim \calN\Paren{ 0,\pexpecf{\mu}{v  v^\top  }   } $, and observe that $\expecf{}{gg^\top} = \pexpecf{\mu}{v  v^\top  }$. Therefore, 
\begin{equation}
    \expecf{g}{ g^\top \Sigma_1 g  } = \Iprod{ \Sigma_1, \expecf{}{gg^\top} } = \Iprod{ \Sigma_1, \pexpecf{}{ vv ^\top} } \leq \bigO{ \sqrt{\delta} }. 
\end{equation}
The claim follows from applying Markov's inequality. 
\end{proof}

We finally complete the proof of \cref{thm:clusterin-2-spectrally-separated}, restated for convenience.

\TwoCluster*

\begin{proof}
    We use \cref{algo:clustering-2-mixtures}.
    We compute the degree $t$ pseudo-distribution satisfying $C_{\delta}$, which by \cref{lem:key-sos-identity} is feasible. The reweighting in step $2$ simulates fixing the quadratic form, as per \cref{lem:simulating-fixing-the-quad-form}. This gives rise to the lower bounds used in the analysis of \cref{lem:rounding-the-pd}, which will show that Gaussian rounding outputs a valid separating direction (where we use the fact that the distribution is isotropic).
\end{proof}

\section{Clustering Mixtures of Reasonably Anti-Concentrated distributions}
\label{sec:clustering-with-all-separation}

In this section, we extend our techniques to cluster a mixture of $k$ distributions, whenever they are \textit{clusterable} according to the following notion: 

\begin{definition}[$\Delta$- Clusterable Mixture]
\label{def:clusterable-mixture}
Given $\Delta \gg 1$ and an equi-weighted mixture of $k$ \textit{reasonably anti-concentrated} distributions, denoted by $\calM = \sum_{ i \in [k]} \calD\paren{\mu_i , \Sigma_i}$, we call the mixture \textit{clusterable} if for every distinct pair $i, j \in [k]$, at least one of the following holds:
\begin{enumerate}
    \item Mean Separation: $\exists v$ such that $\Iprod{\mu_i - \mu_j, v  }^2 \geq \Delta v^\top \Paren{ \Sigma_1 + \Sigma_2 } v$. 
    \item Spectral Separation: $\exists v$  such that $\Delta v^\top \Sigma_i v \leq v^\top \Sigma_j v$. 
    \item Frobenius Separation: $\norm{ \Sigma_i^{\dagger/2} \Sigma_j \Sigma_i^{\dagger/2} - I }_F^2 \geq \Delta \norm{\Sigma_i^{\dagger/2 } \Sigma_j^{1/2}  }^4_{\textrm{op}}$, where $M^{\dagger}$ denotes the Moore-Penrose psuedo-inverse.
\end{enumerate}
\end{definition}

The main result we prove in this section is as follows: 

\begin{restatable}[Clustering Mixtures of Reasonably Anti-Concentrated distributions]{theorem}{KCluster}
\label{thm:main-clusterin-theorem}
Given $0< \eps< 1$ and  $n \geq n_0$ samples from a mixture of $k$ $(\mathcal{O}_{k,\eps}(1), \mathcal{O}_{k,\eps}(1),\mathcal{O}(1) )$-reasonably anti-concentrated distributions that also has $\mathcal{O}_{k,\eps}(1)$-certifiably hypercontractive quadratic forms (see~\cref{def:certifiable-hypercontractivity}) denoted by $\calM= \sum_{i \in k} \calD(\mu_i,\Sigma_i)$, such that $\calM$ is $\mathcal{O}_{k,\eps}(1)$-clusterable, there exists an algorithm that runs in time $n^{ \mathcal{O}_{k,\eps}(\log^2(d) ) }$ and outputs a clustering $\Set{ \hat{ \calC }_i }_{i \in [k]}$ such that with probability at least $99/100$, for all $i \in [k]$, $\abs{ \hat{ \calC }_i \cap \calC_i } \geq (1-  \eps )n/k $, where $\calC_i$ is the set of points from component $\calD(\mu_i, \Sigma_i)$, whenever $n_0 = d^{ \Omega_{k,\eps}\Paren{\log(d)} }$. 
\end{restatable}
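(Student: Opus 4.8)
}
The plan is to run the sum-of-squares clustering paradigm of prior work (e.g.\ \cite{bakshi2020outlier,kothari2017outlier,hopkins2018mixture}), replacing the Gaussian-specific anti-concentration certificate used there by our \cref{thm:main-anti-concentration-thm}; this is exactly the ingredient that forced earlier arguments to assume spherical symmetry. First I would set up a polynomial constraint system $\calC$ over Boolean indicators $w \in \{0,1\}^n$, with $\sum_{i} w_i = n/k$, asserting that the sub-sample selected by $w$ ``behaves like one genuine cluster'': namely that its empirical mean and whitened empirical covariance satisfy the certifiable-hypercontractivity axioms for linear and quadratic forms with parameters $\mathcal{O}_{k,\eps}(1)$ (this is where the extra hypothesis on quadratic forms is used), and that along every direction the anti-concentration constraints of \cref{eqn:cons-system} hold after whitening by the empirical covariance. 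Each true cluster $\calC_\ell$ satisfies all of these with high probability once $n \geq d^{\Omega_{k,\eps}(\log d)}$, since reasonably anti-concentrated distributions and their affine images are anti-concentrated, hypercontractive in linear and quadratic forms, and almost $k$-wise independent, and these properties transfer to a sample of this size. Hence $\calC$ is feasible and, by \cref{fact:eff-pseudo-distribution}, one can compute a degree-$t$ pseudo-distribution $\mu$ satisfying $\calC$ with $t = \mathcal{O}_{k,\eps}(\log^2 d)$.

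The core step is a pairwise intersection dichotomy. Writing $a^{(\ell)}\in\{0,1\}^n$ for the indicator of the true cluster $\calC_\ell$ and $\alpha_\ell := \frac{k}{n}\sum_{i} w_i a^{(\ell)}_i$ for the overlap fraction, I would establish the sum-of-squares consequence
\[
\calC \sststile{t}{w} \Set{ \alpha_\ell \Paren{1 - \alpha_\ell} \leq \eps' },
\]
so that the selected set overlaps each true cluster in either a negligible or an almost-complete fraction. This is proved pair by pair, splitting on which clause of \cref{def:clusterable-mixture} separates $\calC_\ell$ from some other cluster $\calC_{\ell'}$. In the mean-separation case, if $w$ selected a constant fraction of both $\calC_\ell$ and $\calC_{\ell'}$ then certifiable hypercontractivity of linear forms would force the empirical means of the two subsets to be close, contradicting $\Iprod{\mu_\ell-\mu_{\ell'},v}^2 \geq \Delta\, v^\top(\Sigma_\ell+\Sigma_{\ell'})v$ --- the standard argument of \cite{bakshi2020outlier,hopkins2018mixture}. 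In the Frobenius-separation case the same conclusion follows from certifiable hypercontractivity of \emph{quadratic} forms, comparing whitened empirical covariances to a common reference, as in \cite{bakshi2020outlier}. The spectral-separation case is the genuinely new one: along the spectrally separating direction $v$ one cluster is concentrated at scale $\delta$ while the other has variance at least $\Delta$ times larger, and our certificate --- packaged as the key identity \cref{lem:key-sos-identity}, generalized from two clusters to a sub-sample of size $n/k$ --- certifies that $w$ cannot select more than an $\mathcal{O}(\delta)$ fraction of the high-variance cluster while obeying the anti-concentration constraints. Combining over all $\ell'\neq\ell$ and using $\sum_i w_i = n/k$ yields the dichotomy for each $\ell$.

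Next I would round $\mu$ to an actual clustering, one cluster at a time, exactly as in \cref{sec:clustering-warmup}. Following \cref{lem:simulating-fixing-the-quad-form}, pick a uniformly random sample $x_j$ and reweight $\mu$ by a suitable low-degree polynomial in $x_j$ (for instance $\Iprod{x_j,v}^{2t}$, which in expectation over $x_j$ drawn from a fixed cluster ``fixes'' that cluster's whitened quadratic form); with constant probability the reweighted $\mu'$ concentrates on $w$'s that isolate a single true cluster $\calC_{\ell(j)}$, and then $\widehat\calC = \Set{ i : \pexpecf{\mu'}{w_i} \geq 1/2 }$ agrees with $\calC_{\ell(j)}$ on a $(1-\eps)$ fraction. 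Remove the recovered points, recurse on the remaining $k-1$ clusters, and boost the constant success probability by repetition; a union bound over the $\mathcal{O}(k)$ rounds gives the stated guarantee. The sample bound $d^{\Omega_{k,\eps}(\log d)}$ is what is needed for all the empirical certifiable-hypercontractivity and anti-concentration statements above to hold simultaneously with high probability, and the running time $n^{\mathcal{O}_{k,\eps}(\log^2 d)}$ comes from the pseudo-distribution degree $t$ --- one factor of $\log d$ from the anti-concentration certificate of \cref{thm:main-anti-concentration-thm} and another from the reweighting/rounding.

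The hard part will be engineering a \emph{single} constraint system $\calC$ that is simultaneously feasible for every true cluster yet strong enough to yield the dichotomy in all three separation regimes, and --- since the algorithm does not know in advance which separation type relates a given pair --- arguing that one oblivious random reweighting in the rounding step lands on a bona fide cluster with constant probability regardless of the regime (rather than needing a regime-specific reweighting). Within that, the spectral-separation case, where our anti-concentration certificate must be threaded through both the dichotomy and the rounding as in \cref{lem:key-sos-identity,lem:simulating-fixing-the-quad-form}, is the technically novel part; the mean- and Frobenius-separation cases essentially replay the Gaussian arguments of prior work using certifiable hypercontractivity of linear and quadratic forms.
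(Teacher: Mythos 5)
Your architecture differs from the paper's in a way that exposes a genuine gap. The paper does \emph{not} build a single constraint system with an intersection dichotomy $\alpha_\ell(1-\alpha_\ell)\le\eps'$; it performs recursive \emph{partial} clustering: mean- and Frobenius-separated pairs are handled entirely black-box by the partial-clustering subroutine of \cite{bakshi2022robustly} (\cref{lem:partial-clustering-mean-frob}), and only the spectral case goes through the new machinery (\cref{lem:anti-conc-k-clusters}), via a program that carries an explicit direction indeterminate $v$ with the constraint $w_i\Iprod{x_i,v}^2\le c\,\delta_k w_i\norm{v}^2$, a guessed threshold $\delta_k$ and fraction $\eta_k$ (needed because several low-variance components may have to be grouped into one ``effective'' cluster, so your rigid $\sum_i w_i=n/k$ constraint is not the right normalization), followed by reweighting by $\Iprod{x_j,v}^{2t}$ and Gaussian rounding of $\pexpecf{\mu'}{vv^\top}$ to extract a separating \emph{direction} --- not thresholding of $\pexpecf{\mu'}{w_i}$.

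The gap is in your core step. You propose to impose, as an axiom of $\calC$, that the $w$-selected subsample is (certifiably) anti-concentrated ``along every direction,'' and then to derive the dichotomy in the spectral regime from that axiom. But the certificate this paper constructs is not of that form: it is obtained indirectly, via duality and reweighting of pseudo-distributions, as an SoS inequality in $(w,v)$ about the \emph{given i.i.d.\ sample from one component} (the parametric form \cref{eqn:parametric-form-cert-intro}). It is not an explicit polynomial inequality (like \cref{eqn:box-indicator-formulation-intro}) that one can assert as a constraint about an unknown $w$-weighted subsample inside a larger program --- the paper states explicitly that its certificates do not imply the prior algorithmic applications in a black-box manner for exactly this reason. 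Consequently the paper threads the certificate through the clustering analysis ``from the outside'': \cref{lem:key-sos-identity} instantiates the certificate on the true high-variance cluster's points, producing correction terms $q_i^2(v,w)\,w_i(\delta v^\top\Sigma_2 v-\tfrac{\delta}{100}\norm{v}^2)$ that are \emph{not} axiomatically nonnegative, and these must then be controlled by the quadratic-form--fixing reweighting of \cref{lem:fixing-the-quad-form,lem:simulating-fixing-the-quad-form} before rounding. Your plan skips this entire mechanism, so the spectral-separation branch of your dichotomy, and the claim that an oblivious random reweighting then isolates a cluster, are not justified by the tools the paper actually provides. The mean/Frobenius branches of your argument would work (and are essentially what \cite{bakshi2022robustly} proves), but re-deriving them inside a unified program is unnecessary given that the paper imports them black-box.
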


We begin by recalling the notion of a partial clustering, where if any pair of components are mean, spectral or Frobenius separated, we can partition the mixture into two sets such that the partition respects the true clustering and is non-trivial.

\begin{definition}[Partial Clustering]
\label{def:partitial-clustering}
Given $n$ samples from a mixture of $k$ distributions, denoted by $\calM= \frac{1}{k} \sum_{i \in [k]} \calD(\mu_i, \Sigma_i)$, let $\Set{ \calC_i }_{i\in[k]}$ denote the true clustering, i.e. $\calC_i$ corresponds to the samples generated from $\calD(\mu_i, \Sigma_i)$. A bi-partition $\calP_1, \calP_2$ of the points is $\eps$-approximate partial clustering if 
\begin{enumerate}
    \item Partition respects clustering: for all $i \in [k]$
    \begin{equation*}
          \max \Set{  \frac{k}{n} \abs{ \calC_i \cap \calP_1  } ,   \frac{k}{n} \abs{ \calC_i \cap \calP_2  }    }  \geq 1- \eps,
    \end{equation*}
    \item The partition is non-trivial: 
    \begin{equation*}
        \max_{i \in [k]} \frac{k}{n} \abs{ \calC_i \cap \calP_1  } , \max_{i \in [k]} \frac{k}{n} \abs{ \calC_i \cap \calP_2  } \geq 1- \eta. 
    \end{equation*}
\end{enumerate}
\end{definition}

\begin{mdframed}
  \begin{algorithm}[Clustering Spectrally Separated Mixtures ]
    \label{algo:clustering-k-mixtures}\mbox{}
    \begin{description}
    \item[Input:] Target accuracy  $0< \eps < 1$. Samples $\Set{x_i}_{i \in [n]}$ from an isotropic mixture of $k$ distributions $\calM = \sum_{i \in [k]} \calD(\mu_i, \Sigma_i)$ such that there exists a pair $i \neq j \in [k]$ and a direction $u$ with $ \Delta u^\top \Sigma_i u < u^\top \Sigma_j u$, where $\Delta = 1/\eps^4$. 
    Let $r = \Omega_{k,\eps}\Paren{ \log(d) }$. 
    \item[Operation:]\mbox{}
    \begin{enumerate}
    \item Run the clustering sub-routines corresponding to Theorem 4.2 in~\cite{bakshi2022robustly}. 
    \item Guess the threshold at which the resulting mixture is concentrated along the spectrally separating direction, and let this be $\delta_k$. Guess the fraction of points that are concentrated, i.e. $\Iprod{x_i, v}^2 \leq \delta_k \norm{v}^2$ for the separating direction $v$ and denote this by $\eta_k$.
    \item Let $t= \bigO{r \log(d)} $. Compute a degree-$t$ pseudo-distribution $\mu$ consistent with the following constraints:
\begin{equation}
\label{eqn:cons-system-clustering}
\begin{split}
    \calC_{\delta_k, \eta_k} = & \left \{\begin{aligned}
      &\forall i\in [n]
      & w_i^2
      & = w_i \\
      && \sum_{i \in [n]} w_i & = \eta_k  n\\
      & \forall i\in [n] & w_i \Iprod{x_i, v}^2 & \leq c  \delta_k  w_i \norm{v}^2    \\
      && \norm{v}_2^2 &=1 \\
    \end{aligned}\right\}
\end{split}
\end{equation}
    \item Terminate if the system is infeasible and return the current set as the clustering.
    \item Pick $x_j$ uniformly at random from $\Set{x_i}_{i \in [n]}$. Re-weight $\mu$ by $\Iprod{x_j , v }^{2t}$. 
    \item Let $\widetilde{\Sigma } = \pexpecf{\mu}{ vv^\top  }$. Sample $g \sim \calN\Paren{0, \Sigma}$. Consider the projected instance  $\Iprod{  x_i,  g/ \norm{g} }$, and let $\calP_1$ be the set of points such that $\Iprod{ x_i , g/\norm{g}} \leq c \delta_k$. 
    \item Recurse on $\calP_1$ and $\Set{x_i} \setminus \calP_1$.  
    \end{enumerate}
    \item[Output:]  A clustering $\Set{ \hat{C}_{i} }_{i \in [k]}$. 
    \end{description}
  \end{algorithm}
\end{mdframed}

Next, we recall results from \cite{bakshi2022robustly} that demonstrate how to obtain a partial clustering if the mixture is mean or Frobenius separated. These results do not rely on certificates of anti-concentration, and therefore can be used black box. Further, in our setting, we do consider corruptions, so the sample complexity and running time only depend on the desired accuracy.

\begin{lemma}[Partial Clustering under Mean/ Frobenius Separation,Thm 4.2~\cite{bakshi2022robustly}]
\label{lem:partial-clustering-mean-frob}
Given $0<\eps<1$ and $n \geq n_0$ samples $\Set{x_i}_{i \in [n]}$ from a mixture of $k$ \textit{reasonably anti-concentrated} distributions, $\calM = \frac{1}{k} \calD(\mu_i, \Sigma_i) $ , let $i \neq j \in [k]$ be such that $\calD(\mu_i, \Sigma_i)$ and $\calD( \mu_j, \Sigma_j )$ are either mean-separated or Frobenius separated with $\Delta = (k/\eps)^{\mathcal{O}(k)}$. Let $\Set{ \calC_i }_{i \in [k]}$ be the true clustering.
Then, there exists an algorithm that runs in $\bigO{n^{(k/\eps)^{\mathcal{O}(k)}}}$ time and with probability at least $99/100$, outputs an $\eps$-partial clustering (see \cref{def:partitial-clustering}) whenever $n_0 = \Omega\Paren{d^{(k/\eps) ^{\mathcal{O}(k)}}}$. 
\end{lemma}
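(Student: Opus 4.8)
The plan is to derive Lemma~\ref{lem:partial-clustering-mean-frob} directly from the partial-clustering machinery of \cite{bakshi2022robustly}, observing that the two cases in the hypothesis (mean separation and Frobenius separation) are precisely the cases in their argument that do \emph{not} rely on any certificate of anti-concentration. Concretely, \cite[Theorem 4.2]{bakshi2022robustly} constructs, for each guessed separation type, an SoS relaxation over indicator variables $w_1,\dots,w_n$ picking out a candidate cluster of size $\approx n/k$, together with low-degree moment constraints, and then rounds the resulting pseudo-distribution to a bipartition. I would instantiate this relaxation with the contamination parameter set to $0$ (our inliers are the entire sample), so that the only source of error is finite-sample fluctuation and all sample-complexity and running-time bounds collapse to functions of $k$, $\eps$, and $d$ alone, matching the claimed $n^{(k/\eps)^{\mathcal{O}(k)}}$ time and $d^{(k/\eps)^{\mathcal{O}(k)}}$ sample bounds.

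In the mean-separation case, the relaxation enforces that the empirical covariance of the selected $w$-cluster is bounded in SoS by $\bigO{1}$ times the mixture covariance — true for the indicator of any single component by certifiable hypercontractivity of linear forms — so if two components $i,j$ satisfy $\Iprod{\mu_i-\mu_j,v}^2 \ge \Delta\, v^\top(\Sigma_i+\Sigma_j)v$, then any such cluster is (approximately) contained in one component, and rounding $\pexpecf{\mu}{vv^\top}$ along the mean-difference direction yields a bipartition meeting both conditions of \cref{def:partitial-clustering}. In the Frobenius-separation case one additionally feeds in degree-$4$ moment constraints and uses certifiable hypercontractivity of quadratic forms (\cref{def:certifiable-hypercontractivity}) to certify that the fourth-moment operator of each selected cluster is close to its ``expected'' value; the separation $\norm{\Sigma_i^{\dagger/2}\Sigma_j\Sigma_i^{\dagger/2}-I}_F^2 \ge \Delta \norm{\Sigma_i^{\dagger/2}\Sigma_j^{1/2}}_{\mathrm{op}}^4$ then forces some quadratic form $x \mapsto x^\top Q x$ to separate the two components after whitening by $\Sigma_i$, and thresholding on this quadratic form gives the partition. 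In both cases the threshold $\Delta = (k/\eps)^{\mathcal{O}(k)}$ is exactly what the SoS degree needed for the moment certificates dictates.

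The one place where care is required is auditing which distributional hypotheses the \cite{bakshi2022robustly} argument actually consumes: it uses (a) sub-exponential concentration of linear forms, which we get from strict sub-exponentiality (\cref{fact:sub-exponential-dist-bounded}); (b) certifiable hypercontractivity of linear forms of degree $\mathcal{O}_{k,\eps}(1)$, which is part of \cref{def:reasonably-anti-concentrated-dist}; and (c) for the Frobenius case, certifiable hypercontractivity of quadratic forms, which is why \cref{thm:main-clusterin-theorem} lists this as an extra hypothesis and which we should likewise assume here. Crucially, anti-concentration is never invoked in either of these two cases — it is needed only for the spectral-separation case, handled separately via \cref{thm:clusterin-2-spectrally-separated} — so restricting \cite[Theorem 4.2]{bakshi2022robustly} to the mean/Frobenius cases carries over once Gaussianity is replaced by these three properties.

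The main obstacle I anticipate is precisely this audit: \cite{bakshi2022robustly} is written for Gaussians and at several points — notably the fourth-moment computations in the Frobenius case — uses closed-form Gaussian moment identities rather than just the abstract hypercontractivity inequality. I would need to replace each such step by the corresponding SoS consequence of \cref{def:certifiable-hypercontractivity} and \cref{def:certifiable-hypercontractivity-of-linear}, tracking that the additive slack introduced is $o_d(1)$ (using almost $k$-wise independence wherever moment factorization is needed) and hence absorbable into the $\eps$ error budget. A secondary technical point is ensuring the rounded bipartition is \emph{non-trivial} in the sense of \cref{def:partitial-clustering}(2): this follows because the indicator of a genuine component is feasible (so the pseudo-distribution has objective value $\ge (1-o(1))n/k$), but making this quantitative requires the standard argument that SoS rounding preserves the mass of at least one side of the partition.
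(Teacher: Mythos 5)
The paper offers no proof of this lemma at all: it is imported verbatim as a black-box citation of \cite[Theorem 4.2]{bakshi2022robustly}, justified only by the one-sentence remark that the mean- and Frobenius-separation subroutines of that work do not invoke certificates of anti-concentration. Your proposal takes exactly the same route — restrict their argument to those two cases and port it from Gaussians to reasonably anti-concentrated distributions — but you carry out the audit the paper skips, and your accounting is sound: the relevant steps consume sub-exponential tails, certifiable hypercontractivity of linear forms, and (in the Frobenius case) certifiable hypercontractivity of quadratic forms. Your observation that the Frobenius case requires quadratic-form hypercontractivity, which \cref{thm:main-clusterin-theorem} assumes explicitly but the statement of \cref{lem:partial-clustering-mean-frob} omits, is a legitimate gap in the lemma as written rather than in your argument; as stated, the lemma's hypotheses do not suffice for the Frobenius branch. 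The remaining work you flag — replacing closed-form Gaussian moment identities in \cite{bakshi2022robustly} with their SoS hypercontractivity surrogates and tracking the $o_d(1)$ slack via almost $k$-wise independence — is real and nontrivial, but it is exactly the work the paper is implicitly deferring to the citation, so your proposal is, if anything, more honest about what a complete proof requires.
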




Next, we show that the algorithm from \cref{sec:clustering-warmup} can be adapted to work whenever there is a multiplicative gap in the second moments, for an equi-weighted mixture of $k$ reasonably anti-concentrated distributions, where the mixture is in isotropic position.

\begin{lemma}[Anti-Concentration program]
\label{lem:anti-conc-k-clusters}
Given $n \geq n_0$ samples from a mixture of $k$ reasonably anti-concentrated distributions, $\calM = \frac{1}{k} \sum_{i \in [k]} \calD(\mu_i, \Sigma_i)$, in isotropic position, such that $\norm{\mu_i}^2 \leq \poly(k)$, let $v$ be a direction satisfying $ ( v^\top \Sigma_1 v) \leq \eps k^{-\poly(k)}  v^\top \Sigma_j v$ for some $j \in [k]$. Then, there exists an algorithm that runs in time $n^{ \mathcal{O}_{k,\eps}(\log^2(d))}$ and with probability at least $99/100$, outputs an $\eps$-partial clustering of the samples.  
\end{lemma}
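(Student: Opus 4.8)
The plan is to run \cref{algo:clustering-k-mixtures} and lift, essentially step for step, the chain behind the two‑cluster result \cref{thm:clusterin-2-spectrally-separated}, while accommodating three features absent there: nonzero but $\poly(k)$‑bounded component means, the $k-2$ ``bystander'' components, and the requirement of an honest $\eps$‑partial clustering (\cref{def:partitial-clustering}) rather than a single separating direction. After isotropizing, I would guess the anti‑concentration scale $\delta_k$ (small enough that the hypothesized gap gives $v^\top\Sigma_1 v \ll \delta_k\norm{v}^2$ — this is what the factor $\eps k^{-\poly(k)}$ in the hypothesis is calibrated for), guess the concentrated fraction $\eta_k$ on a $1/k$‑granular grid, and guess the mean shift $\hat\phi$ along $v$ on an $\mathcal{O}_{k,\eps}(1)$‑size grid, then compute a degree‑$t$ pseudo‑distribution $\mu$ consistent with the (shifted) constraint system $\calC_{\delta_k,\eta_k}$ of \cref{eqn:cons-system-clustering} with $t=\mathcal{O}_{k,\eps}(\log^2 d)$; this costs $n^{\mathcal{O}_{k,\eps}(\log^2 d)}$ time by \cref{fact:eff-pseudo-distribution}. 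Feasibility for the correct guess is as in the two‑cluster case: the $w_i$ indicating the low‑variance component $\calD(\mu_1,\Sigma_1)$ work because its mass along $v$ concentrates around the fixed value $\Iprod{\mu_1,v}$ within a window of size $\mathcal{O}_k(\sqrt{\delta_k})\norm{v}$, and $|\Iprod{\mu_1,v}|\le\norm{\mu_1}\le\poly(k)$ is a bounded shift. From here, bounded means only ever enter as bounded shifts, and the shifted certificate \cref{cor:anti-concentration-thm-shift} (equivalently \cref{thm:anti-concentration-certificates-with-shifts}) covers them: anti‑concentration around a point $\phi=\mathcal{O}_{k,\eps}(1)$ at scale $\delta_k$ requires degree only $\mathcal{O}_{k,\eps}(1/\delta_k^2+\phi)$, which after the quasi‑polynomial blow‑up stays within the budget $\mathcal{O}_{k,\eps}(\log^2 d)$.

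Second, I would prove the $k$‑cluster analogue of the key SoS identity \cref{lem:key-sos-identity}. Splitting the sample into $\calC_1$ and the union of the far clusters and using SoS Cauchy–Schwarz, $\Iprod{\Sigma_1,vv^\top}^2$ is at most $\mathcal{O}_k(\delta_k^2)$ (from the program constraint on $w_i\Iprod{x_i,v}^2$) plus a term that, by certifiable hypercontractivity of quadratic forms (\cref{def:certifiable-hypercontractivity}, which is hypothesized), equals $(v^\top\Sigma_1 v)^2$ times the fraction of the $w_i$ placed on far clusters; the latter is in turn bounded by applying the (shifted) anti‑concentration certificate \cref{thm:main-anti-concentration-thm} along $v$ to a far cluster $\calC_j$ with $\Delta v^\top\Sigma_1 v < v^\top\Sigma_j v$, which produces explicit SoS multipliers $q_i^2(v,w)$ of degree $\mathcal{O}_{k,\eps}(\log d)$ and polynomially bounded bit complexity giving
\begin{equation*}
\calC_{\delta_k,\eta_k}\sststile{}{w,v}\Set{\Iprod{\Sigma_1,vv^\top}^2 \;\le\; \mathcal{O}_k\Paren{\delta_k\, v^\top\Sigma_j v \;-\; \sum_{i\in\calC_j}q_i^2(v,w)\,w_i\Paren{v^\top\Sigma_j v - \frac{\delta_k\norm{v}^2}{100}} + \delta_k^2}}.
\end{equation*}
The potentially negative middle term is neutralized exactly as in \cref{lem:fixing-the-quad-form,lem:simulating-fixing-the-quad-form}: re‑weight $\mu$ by $\Iprod{x_j,v}^{2t}$ for a uniformly random sample $x_j$; with probability $\Omega(1/k)$ we hit $x_j\in\calC_j$, and then the re‑weighting simulates re‑weighting by $(v^\top\Sigma_j v)^{2t}$, yielding $\mu'$ with $\pexpecf{\mu'}{q_i^2(v,w)\,v^\top\Sigma_j v}\ge 0.1\,\pexpecf{\mu'}{q_i^2(v,w)\norm{v}^2}$. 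Taking pseudo‑expectations and applying Hölder as in \cref{lem:rounding-the-pd} gives $\Iprod{\Sigma_1,\pexpecf{\mu'}{vv^\top}}\le\mathcal{O}_k(\sqrt{\delta_k})$, while the same conditioning forces $v^\top\Sigma_j v$ to dominate in $\mu'$; combined with isotropy (which bounds the trace and forbids all cluster variances along a unit direction from being simultaneously tiny), Gaussian rounding $z=g/\norm{g}$, $g\sim\calN(0,\pexpecf{\mu'}{vv^\top})$, plus Markov, produces with probability $\Omega(1/k)$ a unit vector $z$ with $z^\top\Sigma_1 z\le\mathcal{O}_k(\sqrt{\delta_k})\cdot z^\top\Sigma_j z$ and $z^\top\Sigma_j z=\Omega_k(1)$; this is amplified to $99/100$ by independent repetitions.

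Third, I would convert $z$ into the partial clustering: estimate a shift $\hat\phi_z$ from the samples and set $\calP_1=\Set{\,i : |\Iprod{x_i,z}-\hat\phi_z|\le c'\sqrt{\delta_k}\,}$. Every cluster with small variance along $z$ (including $\calC_1$) lands almost entirely in $\calP_1$ by Markov, after centering by its own $\poly(k)$‑bounded mean projection, and every cluster with large variance along $z$ (including $\calC_j$) lands almost entirely in $\calP_2=[n]\setminus\calP_1$ by anti‑concentration at scale $\sqrt{\delta_k}$; thus $\calC_1$ witnesses that $\calP_1$ is non‑trivial and $\calC_j$ that $\calP_2$ is, and recursing on $\calP_1$ and $\calP_2$ — interleaved at other recursion levels with the mean/Frobenius subroutine of \cref{lem:partial-clustering-mean-frob} — yields the final clustering. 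The main obstacle, and where the most care is needed, is the ``partition respects clustering'' requirement: a cluster whose variance along $z$ happens to fall in the window between the Markov and anti‑concentration regimes could be split by the threshold. This is resolved exactly as in the partial‑clustering framework of \cite{bakshi2022robustly} (already invoked for \cref{lem:partial-clustering-mean-frob}): guess the threshold radius over a geometric net of $\mathcal{O}_{k,\eps}(\log d)$ values and use that there are only $k$ clusters, so for some guess no cluster straddles. The remaining work — propagating the $\poly(k)$‑bounded mean shifts through every certificate via the shifted box polynomial, the bit‑complexity bookkeeping for the SoS identity, and the union bound over the $\poly(n)$‑many guesses — is routine given the tools already developed, and is what the $\mathcal{O}_{k,\eps}(\cdot)$ constants absorb.
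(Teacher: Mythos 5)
Your proposal is correct and follows essentially the same route as the paper: guess the scale $\delta_k$ and fraction $\eta_k$, guess and subtract a bounded mean shift, prove the $k$-cluster analogue of \cref{lem:key-sos-identity} using certifiable hypercontractivity plus the shifted certificate \cref{cor:anti-concentration-thm-shift}, re-weight by $\Iprod{x_j,v}^{2t}$ for a random sample to simulate fixing the quadratic form, Gaussian-round, and threshold the projection. The one organizational difference worth noting: the paper sorts the per-component variances $\sigma_1\le\cdots\le\sigma_k$ along $v$ and pigeonholes to find an index $t$ with $\sigma_t < k^{-k^{10}}\sigma_{t+1}$; it then sets $\delta_k\approx\sigma_t$ and collapses the components into two super-clusters $\calS_1$ (variances $\le\sigma_t$) and $\calS_2$ (the rest) \emph{before} running the SoS argument, so the "no cluster straddles the threshold" property is built in and the certificate is applied once to $\calS_2$ rather than to a single far cluster $\calC_j$ as in your identity (applying it only to one $\calC_j$ does not by itself bound the total $w$-mass on all far clusters — you would need to sum over the components of $\calS_2$). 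Your alternative of resolving straddling at the final thresholding step by guessing the radius over a net is the same pigeonhole, just placed later. Finally, the paper explicitly dispatches the degenerate case $\sigma_k < 1/k^k$ by showing that isotropy then forces mean separation (a case already handled by \cref{lem:partial-clustering-mean-frob}); your parenthetical that isotropy alone "forbids all cluster variances along a unit direction from being simultaneously tiny" is not quite right — it forbids this only after mean separation has been ruled out, since the unit mixture variance along $v$ can be carried entirely by the spread of the means.
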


\begin{proof}
Let $\sigma_1, \sigma_2, \ldots , \sigma_k$ be the variance of the unpaired mixture along $v$. We know that $\sigma_1 < k^{- k^{100}}$ and $\sigma_k \leq k$. Consider the case where $\sigma_k < 1/k^k$. Then, the must be some mean $\mu_i$ such that $\abs{ \Iprod{ \mu_i , v} } \geq 1/2$, since the mixture is isotropic and the mixture variance along $v$ is $1$ and since the mean of the mixture is $0$, there must be some $\mu_j $ such that $\abs{ \Iprod{ \mu_j , v} } \geq 1/k$ on the opposite side of $\mu_i$, along $v$. 
Therefore, there exists a pair $i,j$ such that $\Iprod{ \mu_i - \mu_j, v}^2 \geq k^{\poly(k) }  \sigma_{k}$, i.e. the mixture is mean separated and this is a contradiction. 

It suffices to consider the case where $\sigma_k \geq 1/k^k$. 
Since $\sigma_1 \leq 1/k^{k^{100}}$, observe there exists a variance  $\sigma_t \in [k^{-k^{100}}, k]$ such that $\sigma_t < k^{-k^{10}} \sigma_{t+1}$ and guess $\delta_k = \sigma_{t} \pm \eps  k^{-k^{10}} $.  
Let $\calS_1$ denote the effective component corresponding to points from $\calC_1 \ldots \calC_t$ and $\calS_2$ be the complement. The empirical mean of $\calS_1$ is bounded, so we can guess it by discretizing the interval $[-k, k]$ at granularity $\eps k^{-\poly(k)}$. Note, this only makes the success probability  $\eps k^{-\poly(k)}$, but we can boost this up by repeating the algorithm several times. Assuming we have guessed the mean, $\mu_{\calS_1}$, correctly, we can shift the samples by $\mu_{\calS_1}$ so that $\calS_1$ is mean $0$. Further, $\calS_1$ and $\calS_2$ have a spectral gap of $k^{-k^2}$ and we can essentially reduce to the $2$ component case.


Next, we show that we can modify the proof of  \cref{thm:clusterin-2-spectrally-separated} to work in this setting.  Let $k \cdot \eta_k$ be the guess of how many components lie in $\calS_1$. 
Next, we note that since the mixture is isotropic, $\norm{ \mu_i } \leq k$, and thus $\norm{\mu_{\calS_2} } \leq k$. 
Now, we observe that we can generalize \cref{lem:key-sos-identity} as follows: repeating the previous analysis, we have 
\begin{equation*}
\begin{split}
    \calC_{\delta_k, \eta_k}\sststile{}{} \Biggl\{   \Iprod{ \Sigma_{\calS_1}, vv^\top  } 
    & \leq \Paren{ \frac{1}{\abs{ \calS_1 } } \sum_{i \in \calS_1 } (1-w_i)  } \cdot \Paren{ c_4^4 (v^\top \Sigma_{\calS_1} v)^2} \\
    & \leq  c_4^4  \Paren{ \frac{1}{ \abs{ \calS_1} } \sum_{i \in \calS_2} w_i  } \cdot \Paren{ v^\top \Sigma_{\calS_1} v}^2 
    \Biggr\}, 
\end{split}
\end{equation*}
where the last inequality follows from $\sum_{i \in \calS_1} w_i + \sum_{i \in \calS_2 } w_i = \eta_k n $ and $\sum_{i \in \calS_1} 1  = \eta_k n$. Let $\mu_{\calS_2}$ be the mean of the components in $\calS_2$. Invoking our the shifted anti-concentration certificate from \cref{cor:anti-concentration-thm-shift}, with $\phi = \Omega( \norm{\mu_{\calS_2}}^2 )$ for the cluster $\calS_2$, we have
\begin{equation*}
\begin{split}
    \sststile{}{w,v}\Biggl\{ \delta_k - \frac{1}{\abs{\calS_2} } \sum_{i \in \calS_2} w_i & \geq \sum_{i \in \calS_2 } q^2(v,w) w_i \Paren{ \delta_k v^\top \Sigma_{\calS_2} v - \Paren{ \Iprod{ x_i -\mu_{\calS_2} , v}^2 - \phi \norm{v}^2 }  } \\
    & \geq \sum_{i \in \calS_2 } q^2(v,w) w_i \Paren{ \delta_k v^\top \Sigma_{\calS_2} v - \Paren{ \Iprod{ x_i , v}^2 + \norm{\mu_{\calS_2} }^2 \norm{v}^2 - \phi \norm{v}^2 }  } \\
    & \geq \sum_{i \in \calS_2 } q^2(v,w) w_i \Paren{ \delta_k v^\top \Sigma_{\calS_2} v - \delta \norm{v}^2/100 } \Biggr\},
\end{split}
\end{equation*}
where the last inequality follows from our constraints, and $\phi = \Omega(\norm{\mu_{\calS_2}}^2)$. Since $\abs{ \calS_1} $ and $\abs{\calS_2}$ are within a factor of $k$ of each other, we can conclude that 
\begin{equation*}
    \calC_{\delta_k, \eta_k}\sststile{}{} \Set{ \Iprod{\Sigma_{\calS_1}, vv^\top  }  \leq \bigO{ k \delta_k  }  -  \sum_{i \in \calS_2 } q^2(v,w) w_i \Paren{ \delta v^\top \Sigma_{\calS_2} v - \delta_k \norm{v}^2/100  } }.
\end{equation*}
Now, with probability at least $1/k$, we sample a point $x_i \in \calS_2$ and as before, re-weight by $\Iprod{x_i, v}^{2t}$. In expectation, this fixes $v^\top \Sigma_{\calS_2} v$, and repeating the argument in \cref{lem:simulating-fixing-the-quad-form,lem:rounding-the-pd}, we can find a direction $z$ such that 
\begin{equation*}
    z \Sigma_{\calS_1} z \leq \delta_k.
\end{equation*}
We then simply project onto the rounded direction $z$ and get an $\eps$-approximate partial clustering, where components corresponding to $\sigma_1, \ldots, \sigma_t$ are $\calP_1$ and the rest are $\calP_2$.  
\end{proof}

Now, we are ready to combine these algorithms together and complete the proof of \cref{thm:main-clusterin-theorem}, restated for convenience.

\KCluster*

\begin{proof}[Proof of \cref{thm:main-clusterin-theorem}]
Given $n$ samples from a mixture in isotropic position, we run the algorithm corresponding to the case where the mixture is mean or Frobenius separated. It follows from \cref{lem:partial-clustering-mean-frob} that if algorithm succeeds, we can simply recurse on the resulting partial clustering. 

Instead, consider the case where the mixture is not mean or Frobenius separated. Then, the means, $\mu_i$ are bounded by $\poly(k)$. Further, we know that for some pair must be spectrally separated and we can invoke \cref{lem:anti-conc-k-clusters} on the mixture and again obtain an $\eps$-partial clustering, allowing us to recurse until there's only a single component left.  
The running time is dominated by the anti-concentration program, and this concludes the proof. 
\end{proof}
\section{List-Decodable Regression}
\label{sec:list-decodable-regression}
In this section, we show that our certificate can be used to obtain list-decodable regression algorithms for \textit{reasonably anti-concentrated} distributions.

\begin{model}[List-Decodable Regression]\label{model:regression}
Given $0<\alpha<1$ and $\ell^* \in \mathbb{R}^d$ such that $\norm{\ell^* } =1$, we define list-decodable learning w.r.t. a distribution $\calD$ as, denoted by $\textsf{Lin}_{\calD}(\alpha, \ell^*)$ as follows: generate $\alpha n$ i.i.d. samples, $x_i$, from $\calD$ and set $y_i = \Iprod{ x_i, \ell^*} $. Let these $\Set{ (y_i, x_i) }_{i \in [\alpha n]}$ be denoted by $\calI$. Generate the remaining  $(1-\alpha)n$ pairs arbitrarily and potentially adversarially w.r.t. $\calI$. 
\end{model}

We show that we can output a short list containing a vector close to the true regressor for any \textit{reasonably anti-concentrated} distribution. This is the task of list-decodable linear regression.

\begin{theorem}[List-decodable Regression]
\label{thm:list-decodable-regression}
For any $0< \alpha, \eta < 1$, there exists an algorithm that takes $n \geq n_0$ samples from $\textsf{Lin}_{\calD}(\alpha, \ell^*)$, where $\calD$ is a \textit{reasonably anti-concentrated} distribution, runs in $\bigO{n^{ \log(d) \exp\Paren{1/ \alpha^8 \eta^8 }^{\exp\Paren{1/\alpha^4 \eta^4} } }}$ time and outputs a list of $\bigO{1/\alpha}$ vectors such that with probability at least $99/100$, the list contains at least one vector $\hat{\ell}$ such that $\norm{ \hat{\ell} - \ell }_2^2\leq \eta$, when $n_0 \geq d^{\log(d) \exp\Paren{1/ \alpha^8 \eta^8 }^{\exp\Paren{1/\alpha^4 \eta^4} } }$. 
\end{theorem}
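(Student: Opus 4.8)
The plan is to run the sum-of-squares framework for list-decodable learning, substituting our new anti-concentration certificate (\cref{thm:main-anti-concentration-thm}) for the Gaussian-specific ones of \cite{karmalkar2019list,raghavendra2020list}. After whitening the data (reducing to the case where $\calD$ is isotropic; the general affine case follows by rescaling, absorbing condition-number factors into the constants), I would introduce $0/1$ indicators $w_1,\dots,w_n$ and a vector-valued indeterminate $\ell$, and impose $w_i^2 = w_i$, $\frac{1}{n}\sum_i w_i = \alpha$, $w_i\Paren{y_i - \Iprod{x_i,\ell}} = 0$ for all $i$, and $\norm{\ell}_2^2 \le 1$. The true solution $w_i = w_i^*$ (the indicator of the inlier set $S^*$) together with $\ell = \ell^*$ is feasible, since $y_i = \Iprod{x_i,\ell^*}$ on $S^*$. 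I would then compute a degree-$t$ pseudo-distribution $\mu$ consistent with these constraints, where $t$ is the quasi-polynomial degree supplied by \cref{thm:main-anti-concentration-thm} at anti-concentration scale $\delta$, with $\delta$ chosen to be a sufficiently small polynomial in $\alpha$ and $\eta$ (so that $1/\delta^2 = \Theta(1/\alpha^4\eta^4)$, matching the claimed running time).

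The key structural step is a sum-of-squares lemma: any candidate $(\ell,w)$ whose support meets $S^*$ in more than an $O(\delta)$ fraction must have $\ell$ close to $\ell^*$. Set $b_i := w_i w_i^*$; since $w_i^*\in\{0,1\}$ is a constant and $w_i^2=w_i$, we get $b_i^2 = b_i$ in SoS. On $\{b_i=1\}$ the constraint forces $\Iprod{x_i,\ell-\ell^*} = y_i - y_i = 0$, so $b_i\Iprod{x_i,\ell-\ell^*}^2 = 0 \le \delta^2 b_i\norm{\ell-\ell^*}_2^2$. Hence, with $v := \ell-\ell^*$, the pair $(b,v)$ satisfies a homogeneous-in-$v$ form of the axioms $\calA_\delta$ applied to the inlier points $\{x_i\}_{i\in S^*}$, which are i.i.d.\ from $\calD$; substituting into the homogenized certificate identity \eqref{eqn:parametric-form-cert-intro} yields
\[
\calA \;\sststile{t}{\ell,w}\; \Set{\ \Paren{\ \frac{1}{\alpha n}\sum_{i\in S^*} w_i w_i^* \;-\; O(\delta)\ }\cdot \norm{\ell-\ell^*}_2^2 \;\le\; 0\ }.
\]
No shift is needed because the model is noiseless; allowing bounded adversarial noise would instead invoke \cref{cor:anti-concentration-thm-shift}, at the cost of an extra factor in $t$.

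To produce the list, I would apply the standard list-decodable rounding: repeatedly reweight $\mu$ by $w_j$ for a uniformly random index $j$, and for each of $O(1/\alpha)$ trials add the pseudo-mean $\pexpecf{\mu'}{\ell}$ to the output list. A standard conditioning argument (as in \cite{karmalkar2019list,raghavendra2020list}) shows that on some trial the anchor $j$ lands in $S^*$ and the conditioned pseudo-distribution decorrelates the pseudo-overlap $\frac{1}{\alpha n}\pexpecf{\mu'}{\sum_{i\in S^*}w_iw_i^*}$ from $\norm{\ell-\ell^*}_2^2$ while keeping the former above the $O(\delta)$ threshold; the lemma together with H\"older for pseudo-distributions (\cref{fact:pseudo-expectation-holder}) then forces $\norm{\pexpecf{\mu'}{\ell}-\ell^*}_2^2 \le \pexpecf{\mu'}{\norm{\ell-\ell^*}_2^2} = O(\delta)\cdot\mathrm{poly}(1/\alpha) \le \eta$, and a covering argument over anchor choices bounds the list length by $O(1/\alpha)$. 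The sample requirement $n_0$ is what is needed for the inliers to inherit reasonable anti-concentration and for the sub-exponential tail bound \cref{fact:sub-exponential-dist-bounded}, and the running time is dominated by solving the degree-$t$ SDP, giving $n^{O_{\alpha,\eta}(\log d)}$.

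The main obstacle is the interface between the anti-concentration certificate and the rounding: one must (i) homogenize \eqref{eqn:parametric-form-cert-intro} in the direction $v=\ell-\ell^*$ cleanly enough to substitute $w = b$ there, and (ii) track how the quasi-polynomial degree $t$ of \cref{thm:main-anti-concentration-thm} composes with the additional degree incurred by the conditioning/reweighting steps, so that the final degree --- and hence the running-time exponent --- remains $\mathrm{poly}(\log d)$ with the stated $\exp\exp(\mathrm{poly}(1/\alpha\eta))$ dependence on the remaining parameters.
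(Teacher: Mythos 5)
Your proposal is correct and follows essentially the same route as the paper: the same integer program with indicators $w_i$ and the constraint $w_i(y_i-\Iprod{x_i,\ell})^2=0$, the same key step of instantiating the (homogenized) anti-concentration certificate along $v=\ell-\ell^*$ on the inliers to conclude that the overlap times $\norm{\ell-\ell^*}_2^2$ is at most $O(\delta)$ with $\delta=\mathrm{poly}(\alpha\eta)$, and the same deferral to the rounding machinery of \cite{karmalkar2019list} to extract the $O(1/\alpha)$-sized list.
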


We consider the following constraint system: 
\begin{equation}
\label{eqn:cons-system-regression}
\begin{split}
    \calC_\alpha = & \left \{\begin{aligned}
      &\forall i\in [n]
      & w_i^2
      & = w_i \\
      && \sum_{i \in [n]} w_i & = \alpha n\\
      & \forall i\in [n] & w_i ( y_i -   \Iprod{x_i, \ell } )^2  & =  0     \\
      && \norm{v}_2^2 &=1 \\
    \end{aligned}\right\}
\end{split}
\end{equation}

\begin{lemma}[Key SoS Identity, analog of Lemma 4.1 in~\cite{karmalkar2019list}]
Given $0<\alpha, \eta <1$, let $\delta  = \alpha^2 \eta^2 /4$ and let $r = \Omega\Paren{ \log(d) \cdot \exp\Paren{ 1/\delta^4  }^{\exp\Paren{1/\delta^2}} }$. Then, 
\begin{equation*}
    \calC_{\alpha} \sststile{r }{w, v} \Set{ \frac{1}{\calI } \sum_{ i \in \calI 
    } w_i \norm{ \ell - \ell^* }^2 \leq  \delta }
\end{equation*}
\end{lemma}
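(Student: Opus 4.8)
The plan is to reduce the statement to the anti-concentration certificate of \cref{thm:main-anti-concentration-thm}, applied to the \emph{inlier} subsample $\Set{x_i}_{i\in\calI}$, via the substitution $v\mapsto \ell-\ell^*$. First I would set $v:=\ell-\ell^*$; this is a degree-one polynomial in the indeterminate $\ell$, since $\ell^*$ is a fixed vector. Using the normalization constraint $\norm{\ell}_2^2=1$ of $\calC_\alpha$ together with $\norm{\ell^*}_2=1$, and the fact that $\norm{\ell+\ell^*}_2^2$ is a sum of squares, one gets $\calC_\alpha\sststile{2}{w,\ell}\Set{0\leq \norm{\ell-\ell^*}_2^2\leq 4}$. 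Second, for an inlier index $i\in\calI$ we have $y_i=\Iprod{x_i,\ell^*}$, so the constraint $w_i(y_i-\Iprod{x_i,\ell})^2=0$ becomes $w_i\Iprod{x_i,\ell-\ell^*}^2=0$; hence $\calC_\alpha\sststile{4}{w,\ell}\Set{w_i\Iprod{x_i,v}^2=0}$ for every $i\in\calI$. Intuitively, every $\calC_\alpha$-feasible solution with $w_i=1$ places the inlier $x_i$ exactly on the hyperplane orthogonal to $v$ --- i.e.\ maximally concentrated in direction $v$ --- and anti-concentration will bound how many such $i$ there can be.

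Since the inliers $\Set{x_i}_{i\in\calI}$ are i.i.d.\ from the reasonably anti-concentrated distribution $\calD$ (which we take to be isotropic, whitening the samples if necessary) and $\abs{\calI}=\alpha n$ is polynomially large, \cref{thm:main-anti-concentration-thm} applies to this subsample at anti-concentration scale $\delta'$ equal to a fixed constant fraction of $\delta$, and --- through \eqref{eqn:parametric-form-cert-intro} --- supplies a degree-$r$ polynomial identity that I would use in its scale-invariant form:
\[
\norm{v}_2^2\Paren{\delta'-\frac{1}{\abs{\calI}}\sum_{i\in\calI}w_i}\;=\;\mathrm{sos}(v,w)\;+\;\sum_{i\in\calI}q_i(v,w)\cdot w_i\Paren{(\delta')^2\norm{v}_2^2-\Iprod{x_i,v}^2}\;+\;\sum_{i\in\calI}r_i(v,w)\Paren{w_i^2-w_i},
\]
where $\mathrm{sos}$ and the $q_i$ are sums of squares, the $r_i$ are arbitrary, and every term has degree at most $r$. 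Applying the substitution rule \eqref{eq:sos-substitution} with $v=\ell-\ell^*$ (which multiplies the degree by $\deg(\ell-\ell^*)=1$) and invoking $w_i\Iprod{x_i,\ell-\ell^*}^2=0$ from the first step, each box term collapses to $(\delta')^2 q_i(\ell-\ell^*,w)\cdot w_i\norm{\ell-\ell^*}_2^2=(\delta')^2 q_i(\ell-\ell^*,w)\sum_j\bigl(w_i(\ell-\ell^*)_j\bigr)^2$, which is a sum of squares modulo $w_i^2=w_i$; and each term $r_i(w_i^2-w_i)$ vanishes under $\calC_\alpha$. Consequently $\calC_\alpha\sststile{r}{w,\ell}\Set{\norm{\ell-\ell^*}_2^2\Paren{\delta'-\tfrac{1}{\abs{\calI}}\sum_{i\in\calI}w_i}\geq 0}$, and combining this with $\norm{\ell-\ell^*}_2^2\leq 4$ from the first step gives $\calC_\alpha\sststile{r}{w,\ell}\Set{\tfrac{1}{\abs{\calI}}\sum_{i\in\calI}w_i\norm{\ell-\ell^*}_2^2\leq 4\delta'}$, which is the claim once $\delta'$ is chosen so that $4\delta'\leq\delta$.

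The step I expect to be the crux is obtaining the \emph{scale-invariant} version of the anti-concentration certificate used above: \cref{thm:main-anti-concentration-thm} and identity \eqref{eqn:parametric-form-cert-intro} are stated relative to the program $\calA_{\delta'}$, which contains the constraint $\norm{v}_2^2=1$, whereas the substituted direction $\ell-\ell^*$ has squared norm anywhere in $[0,4]$, so a naive substitution would leave a stray term proportional to $(\norm{\ell-\ell^*}_2^2-1)$ of indeterminate sign. I would remove this by homogenizing: every hypothesis of $\calA_{\delta'}$ other than $\norm{v}_2^2=1$ (namely $w_i^2=w_i$ and the box inequality $w_i\Iprod{x_i,v}^2\leq w_i(\delta')^2\norm{v}_2^2$) is invariant under rescaling $v$, and so is the target $\norm{v}_2^2\bigl(\delta'-\tfrac1{\abs{\calI}}\sum_i w_i\bigr)\geq 0$ (it is vacuous at $v=0$ and, for $v\neq 0$, equivalent after normalization to the conclusion of \cref{thm:main-anti-concentration-thm}); hence one can clear the $(\norm{v}_2^2-1)$ term from \eqref{eqn:parametric-form-cert-intro} by multiplying the identity through by a suitable power of $\norm{v}_2^2$ and re-collecting squares, which inflates the degree only by a constant factor. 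The remaining ingredients --- restricting the anti-concentration program to the random inlier index set $\calI$, reducing to isotropic $\calD$, and verifying the sample-complexity hypothesis $n_0\geq d^{r}$ --- are routine and follow exactly as in the Gaussian case of \cite{karmalkar2019list}.
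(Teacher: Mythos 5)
Your proof follows essentially the same route as the paper's: instantiate the anti-concentration certificate on the inlier set $\calI$ in the direction $v=\ell-\ell^*$, use the exact-fit constraint $w_i(y_i-\Iprod{x_i,\ell})^2=0$ to kill the $\Iprod{x_i,\ell-\ell^*}^2$ terms so that the remaining box terms are sums of squares, conclude $\frac{1}{\abs{\calI}}\sum_{i\in\calI}w_i\leq\delta'$, and then multiply by the SoS-bounded quantity $\norm{\ell-\ell^*}^2$. The only substantive difference is that you explicitly flag and patch the normalization mismatch (the certificate of \cref{thm:main-anti-concentration-thm} is stated under $\norm{v}_2^2=1$ while $\norm{\ell-\ell^*}_2^2$ ranges over $[0,4]$) via homogenization, a point the paper's proof passes over silently.
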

\begin{proof}
Observe, for the inliers, our certificate on anti-concentration states 
\begin{equation*}
    \sststile{}{w,v} \Set{ \delta -  \frac{1}{ \abs{ \calI } } \sum_{i \in \calI } w_i -  \sum_{i \in \calI } q^2_i\Paren{ v,w }  
 w_i \Paren{ \delta v^\top \Sigma v -  \Iprod{ x_i , v }^2  } \geq 0 } 
\end{equation*}
Consider $v = \ell- \ell^*$ and observe that $w_i \Iprod{ x_i , \ell - \ell^* }^2 = 0$. 
Invoking the above identity along $ v =
 \ell - \ell^*$, observe $ \delta q_i^2(v,w) w_i (v^\top \Sigma v)  $ is a sum-of-squares polynomial and thus $\sststile{}{w} \Set{ \frac{1}{\abs{\calI} } \sum_{i \in \calI }  w_i  \leq \delta }   $. By sum-of-squares triangle inequality, $\Set{ \norm{ \ell - \ell^*}^2 \leq 2 } $ since both are unit vectors and therefore
\begin{equation*}
    \calC_{\alpha} \sststile{r }{w, v} \Set{ \frac{1}{\calI } \sum_{ i \in \calI 
    } w_i \norm{ \ell - \ell^* }^2 \leq  2\delta } ,
\end{equation*}
which yields the claim. 
\end{proof}

The theorem follows from repeating the remaining arguments in \cite{karmalkar2019list}. 

\section{Reasonably anti-concentrated Distributions}
\label{sec:distribution_families}

In this section, we show that uniform distributions over $\ell_p$ balls for $p > 1$ are reasonably anti-concentrated.

\begin{theorem}[Uniform distributions over $\ell_p$ balls]\label{thm:anti_concentrated_distributions}
For any  $p \in [1 + \alpha, \infty)$, for a fixed constant $\alpha$, and any $0<\delta<1$, the uniform distribution over the unit $\ell_p$ ball is  $\delta$-reasonably anti-concentrated and satisfies certifiable hypercontractivity of quadratic forms.
\end{theorem}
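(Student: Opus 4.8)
The plan is to verify the four conditions of Definition~\ref{def:reasonably-anti-concentrated-dist} plus certifiable hypercontractivity of quadratic forms, one at a time, for $\calD = \mathrm{Unif}(B_p^d)$ with $p \in [1+\alpha,\infty)$. First, for \emph{anti-concentration}: the key fact is that the one-dimensional marginal of $\mathrm{Unif}(B_p^d)$ along any direction $v$ has a bounded density. This follows from the representation of $\mathrm{Unif}(B_p^d)$ as (a scaling of) $(g_1,\dots,g_d)/\|(g_1,\dots,g_d,E)\|$ where the $g_i$ are i.i.d. $p$-generalized Gaussians (density $\propto e^{-|t|^p}$) and $E$ is an independent exponential — the Schechtman--Zinn / Barthe--Guédon--Mendelson--Naor representation. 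From this one reads off that the marginal density is uniformly bounded by an absolute constant depending only on $p$ (and hence on $\alpha$), since $p$-generalized Gaussians have bounded marginal density and the normalization concentrates; integrating the density bound over an interval of length $2\delta$ gives the $O(\delta)$ anti-concentration after isotropization. Second, \emph{strictly sub-exponential tails}: for $p \ge 1+\alpha$ the $p$-generalized Gaussian has tails $e^{-|t|^p/c}$, which is strictly sub-exponential with $\eps = \alpha$ in the sense of Definition~\ref{def:strictly-sub-exponential}; again using the product-then-normalize representation and the fact that the normalizing factor is $\Theta(\sqrt d)$ with overwhelming probability, a linear form $\langle x,v\rangle$ inherits a tail of the form $\exp(-t^{1+\alpha}/c)$ (one can invoke an argument in the style of Lemma~\ref{thm:product-distributions-are}, since the coordinates of the generating vector are independent, plus a standard conditioning on the norm being near its expectation).

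Third, \emph{certifiable hypercontractivity of linear forms}: here I would use that $\calD$ is rotationally-symmetric-up-to-$\ell_p$ structure but, more usefully, that after normalization we essentially have linear forms of a product distribution of $p$-generalized Gaussians. Certifiable hypercontractivity of linear forms of product distributions with sub-exponential (indeed sub-Weibull) marginals follows from \cite{kothari2018agnostic} exactly as invoked in the proof of Lemma~\ref{thm:product-distributions-are}; the only extra step is to transfer this from the unnormalized product vector to the normalized one, which costs only a constant factor because the normalizing scalar is a bounded positive random variable concentrated around $\Theta(\sqrt d)$, and certifiable hypercontractivity is preserved under multiplication of the sample by a bounded scalar (rescale the SoS certificate). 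Alternatively, for the range $p \le 2$ one can use convexity / log-concavity-adjacent tools, and for $p > 2$ direct moment computations $\E\langle x,v\rangle^{2h} \le (c_h)^{2h}(\E\langle x,v\rangle^2)^h$ admit SoS proofs by the same monomial-expansion argument used in Lemma~\ref{lem:upper-bound-independent} (the coordinates being independent with all moments controlled). Fourth, \emph{almost $k$-wise independence} (Definition~\ref{def:k-wise-independent}): the normalized coordinates are not exactly independent, but for $|S| \le k$ with $k = o(\sqrt d)$ one has $\E\prod_{i\in S} x_i^2 = (1\pm o_d(1))\prod_{i\in S}\E x_i^2$ because the dependence enters only through the single shared normalizing scalar $R = \|(g,E)\|$, and $\E[\prod_{i\in S} g_i^2 / R^{2|S|}]$ versus $\prod_i \E[g_i^2]\cdot \E[R^{-2}]^{|S|}$ differ by lower-order terms controlled by the concentration of $R$ around $\Theta(\sqrt d)$; this is a routine second-moment / Taylor-expansion estimate.

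Finally, \emph{certifiable hypercontractivity of quadratic forms} (Definition~\ref{def:certifiable-hypercontractivity}): I would reduce to the same product-distribution picture and either cite the relevant result from the robust-statistics literature establishing certifiable hypercontractivity of degree-$2$ polynomials of independent sub-exponential coordinates, or prove the moment bound $\E(x^\top Q x - \E x^\top Qx)^{2h} \le (c_h)^{2h}(\E(x^\top Qx - \E x^\top Qx)^2)^h$ directly by expanding in the monomial basis in the entries of $Q$ and bounding each term using independence and the moment control on the $g_i$, noting that every such scalar inequality among the $\E$-coefficients lifts to an SoS proof via Fact~\ref{fact:nonnegative-quadratic}-type reasoning applied to the matrix variable. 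The main obstacle I anticipate is precisely the \emph{transfer through the normalization}: all four properties are transparent for the i.i.d. product of $p$-generalized Gaussians, but $\mathrm{Unif}(B_p^d)$ is that product conditioned/rescaled by $\|(g,E)\|$, so each argument needs a quantitative ``the normalizer is a bounded scalar concentrated at $\Theta(\sqrt d)$'' lemma, and for the \emph{certifiable} (SoS) statements one must check that this rescaling does not blow up the degree or the bit-complexity of the certificates — which it does not, since multiplying the indeterminate by a fixed bounded constant is an SoS-admissible substitution (Substitution Rule), but making this airtight for the quadratic-form case with a matrix indeterminate is the step requiring the most care.
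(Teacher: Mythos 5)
Your verification of anti-concentration, strictly sub-exponential tails, and almost $k$-wise independence via the Schechtman--Zinn/BGMN representation $x = g/\bigl(\sum_i |g_i|^p + E\bigr)^{1/p}$ is a legitimate alternative to the paper's route: the paper instead gets anti-concentration from log-concavity of uniform measures on convex bodies, sub-exponentiality by citing Barthe et al., and almost-productness from negative association of Generalized Orlicz balls on one side plus an explicit Gamma-function computation (Lemma~\ref{lem: expectation_of_coordinate_prods_lpball}) on the other. (Incidentally, the normalizer concentrates at $\Theta(d^{1/p})$, not $\Theta(\sqrt d)$, though this does not affect the structure of your argument.) The genuine gap is in the two \emph{certifiable} hypercontractivity claims. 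Your transfer step asserts that certifiable hypercontractivity ``is preserved under multiplication of the sample by a bounded scalar (rescale the SoS certificate),'' but the normalizer $R$ is not a fixed scalar: it is a random variable correlated with every coordinate of $g$. The object to be certified is a polynomial inequality in the indeterminate $v$ (or $Q$) whose coefficients are the moments $\mathbb{E}[R^{-|\beta|}g^{\beta}]$, and these do not factor as $\mathbb{E}[R^{-|\beta|}]\,\mathbb{E}[g^{\beta}]$, so the product-distribution certificate of \cite{kothari2018agnostic} does not apply to them after any fixed rescaling, and the Substitution Rule (which substitutes polynomials of indeterminates, not random normalizations of the sample) does not repair this. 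Conditioning on $R$ being near its mean does not help directly either: the conditional law of $g$ is no longer a product, and SoS certificates are statements about exact moment tensors rather than high-probability events. Making your route airtight would require a quantitative perturbation argument --- establish the product-distribution certificate with multiplicative slack, bound the deviation of the moment tensor of $g/R$ from that of a fixed rescaling of $g$ in a norm compatible with SoS positivity, and absorb the deviation into the slack --- which is not routine, particularly for the matrix-indeterminate quadratic-form case.

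The paper sidesteps the issue entirely: it derives certifiable hypercontractivity of both linear and quadratic forms from the fact that the uniform distribution on the $\ell_p$ ball satisfies a Poincar\'e inequality with a dimension-independent constant (Sodin for $p\in[1,2]$, Lata{\l}a--Wojtaszczyk for $p>2$), combined with the Kothari--Steinhardt implication for linear forms and Lemma~\ref{lem:cert-hyper-poincare} for quadratic forms. If you wish to retain the representation-based framework for the first three properties, I would recommend swapping in the Poincar\'e route for exactly the two certifiability conditions; otherwise the transfer-through-normalization step must be carried out in full, and as written it rests on a fixed-scalar rescaling claim that does not apply to a random, coordinate-correlated normalizer.
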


We need to show that the uniform distribution on $\ell_p$ balls satisfies anti-concentration, certifiable hypercontractivity,  almost product-ness and strictly sub-exponential tails.

First, the uniform distribution over any convex body is log-concave, and therefore, for any univariate projection, the PDF is bounded by $e^{ - \abs{ \Iprod{ x, v}}/c }$, for a fixed universal constant $c$, along direction $v$. It is easy to check that such a distribution is anti-concentrated. 
Next, we show that the distribution is strictly sub-exponential.

\begin{lemma}[Strictly sub-exponential]
\label{lem:strictly-sub-exp}
For any $p \in [1+\alpha, \infty)$, where $\alpha$ is a fixed constant, let $x$ be a sample from the uniform distribution on the unit $\ell_p$ ball. Then, for all unit vectors $v \in \R^d$ and $t > 0$, we have
\begin{equation*}
    \Pr[ \abs{ \Iprod{x, v} } \geq t \sqrt{v^T \Sigma v} ] \leq \exp\Paren{ -t^{1+\alpha} / c  },
\end{equation*}
where $\Sigma$ is the covariance of the distribution.
\end{lemma}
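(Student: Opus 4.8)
The plan is to prove the strictly sub-exponential tail bound for uniform distributions over $\ell_p$ balls by reducing the high-dimensional statement to a one-dimensional computation. The key observation is that by rotational considerations we cannot assume spherical symmetry, but we can still control the density of a univariate projection $\Iprod{x,v}$ using log-concavity. Specifically, since the uniform distribution over the unit $\ell_p$ ball is log-concave (as $\ell_p$ balls are convex for $p \ge 1$), every univariate marginal $\Iprod{x,v}$ is a log-concave density on $\R$. A standard fact about one-dimensional log-concave distributions is that they have sub-exponential tails: $\Pr[|\Iprod{x,v}| \ge t\sqrt{v^\top \Sigma v}] \le \exp(-ct)$. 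However, this only gives $\eps = 0$, which is \emph{not} strictly sub-exponential. So the bulk of the work is to improve the exponent from $t$ to $t^{1+\alpha}$, and this is where the specific geometry of $\ell_p$ balls (as opposed to general convex bodies) must enter.

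The approach I would take is to use an explicit formula or sharp estimate for the marginal density. First I would recall that if $x$ is uniform on the unit $\ell_p$ ball in $\R^d$, one can write $x$ in terms of $d$ i.i.d.\ coordinates with density proportional to $\exp(-|z|^p)$ (the generalized Gaussian / $p$-generalized normal distribution), conditioned appropriately, or use the representation $x \stackrel{d}{=} R \cdot Y/\|Y\|_p$ where $Y$ has i.i.d.\ $p$-generalized normal coordinates and $R$ is an independent radial variable. Then $\Iprod{x,v} = R\, \Iprod{Y,v}/\|Y\|_p$, and since $R \in [0,1]$ the tail of $\Iprod{x,v}$ is dominated (up to constants, after conditioning on $\|Y\|_p$ concentrating around $d^{1/p}$) by the tail of $\Iprod{Y,v}/d^{1/p}$. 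Now $\Iprod{Y,v} = \sum_i v_i Y_i$ is a weighted sum of independent $p$-generalized normals, each of which has tail $\exp(-|z|^p/c)$, hence is sub-Weibull with parameter $p > 1$. The key technical step is then a concentration inequality for sums of independent sub-Weibull-$(p)$ random variables: such a sum with $\sum v_i^2 = 1$ has tails of the form $\exp(-c\min(t^2, t^p))$ by the generalized Bernstein / Hanson–Wright-type bounds for sub-Weibull variables. Since $p > 1$, for $t \ge 1$ the dominating term is $\exp(-ct^{\min(2,p)})$, and with $\alpha := \min(2,p) - 1 > 0$ (truncated so $\alpha$ is a fixed constant when $p$ is bounded, and $\alpha = 1$ once $p \ge 2$) this yields exactly the claimed $\exp(-t^{1+\alpha}/c)$.

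The steps in order: (1) Set up the radial representation of the uniform distribution on the $\ell_p$ ball and reduce to bounding the tail of $\Iprod{Y,v}$ where $Y$ has i.i.d.\ $p$-generalized-normal coordinates, using that the radial part is bounded by $1$ and that $\|Y\|_p \asymp d^{1/p}$ with overwhelming probability; also verify that $v^\top\Sigma v \asymp 1/d^{2/p}$ so the normalization matches. (2) Establish the single-coordinate tail bound $\Pr[|Y_i| \ge s] \le \exp(-s^p/c)$, i.e.\ each $Y_i$ is sub-Weibull with exponent $p$. (3) Invoke (or prove via a standard Chernoff/truncation argument) the concentration inequality for weighted sums of independent mean-zero sub-Weibull-$(p)$ variables with unit $\ell_2$ norm of weights, giving $\Pr[|\sum v_i Y_i| \ge t] \le 2\exp(-c\min(t^2, t^p))$. (4) Combine: for $t \le d^{\text{small}}$ the minimum is $t^{\min(2,p)}$, set $\alpha = \min(2,p)-1$, and conclude; handle the regime of very large $t$ (beyond the support, where the probability is literally $0$) trivially since the $\ell_p$ ball is bounded.

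The main obstacle I expect is step (3): getting a clean sub-Weibull concentration bound with the right constant and in a form robust to the weights $v_i$ being arbitrary (not just uniform). The subtlety is that sub-Weibull random variables with exponent $p \in (1,2)$ are \emph{not} sub-exponential in the classical sense, so one cannot just quote Bernstein; one needs the Adamczak–Latała-type or Kuchibhotla–Chakrabortty bounds for sums of sub-Weibull variables, which give a mixed tail $\exp(-c t^2)$ for moderate deviations and $\exp(-c t^p)$ for large deviations, with the crossover at $t \asymp 1$ in the normalized regime. Since we only need the weaker statement that \emph{some} fixed $\alpha > 0$ works (and we are free to take $\alpha$ as small as we like, e.g.\ $\alpha = \min(p,2)-1$ capped at a constant), we have a lot of slack, so I would aim for the simplest self-contained argument: bound $\E[\exp(\lambda \sum v_i Y_i)]$ for $\lambda$ in a suitable range using $\E[\exp(\lambda v_i Y_i)] \le \exp(C\lambda^2 v_i^2)$ for $|\lambda v_i| \le 1$ together with a direct tail-integral bound for the contribution of coordinates where $|\lambda v_i| > 1$, then optimize $\lambda$. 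A secondary, more routine obstacle is checking the concentration of $\|Y\|_p$ and the asymptotics of $v^\top\Sigma v$ — these follow from the law of large numbers and the known moment formulas for the $p$-generalized normal, but need to be stated carefully so the normalization constant $c$ in the final bound is genuinely dimension-free.
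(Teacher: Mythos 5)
Your proposal is correct, and it is essentially the argument behind the result the paper invokes: the paper's proof is a one-line citation to \cite{barthe2005probabilistic} (Proposition 10), which establishes bounded $\psi_p$ (resp.\ $\psi_2$) norms of linear functionals on $B_p^d$ via exactly the probabilistic representation $x \stackrel{d}{=} U^{1/d}\, Y/\norm{Y}_p$ with $p$-generalized-normal coordinates and Gluskin--Kwapie\'n-type moment bounds for weighted sums of variables with log-concave tails. Your self-contained reconstruction — including the normalization check $v^\top \Sigma v \asymp d^{-2/p}$, the concentration of $\norm{Y}_p$, and the mixed sub-Weibull tail $\exp(-c\min(t^2, t^p))$ with the slack afforded by only needing some fixed $\alpha \le \min(2,p)-1$ — is sound and matches the route of the cited reference.
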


\begin{proof}
This result follows from \cite[Proposition 10]{barthe2005probabilistic}, where for $p \ge 2$, we can choose $\alpha = 1$ because the distribution is sub-Gaussian, i.e. the $\psi_2$ norm is bounded by a fixed universal constant. For $p \in (1+\alpha, 2]$, the $\psi_p$ norm is bounded by a fixed constant. 

\end{proof}

To show almost-productness, we first show that the uniform distribution over $\ell_p$ balls satisfy
negative association.  We invoke a theorem from \cite{pilipczuk2008negative}, which shows this property for a larger class of uniform distributions over Generalized Orlicz balls.


\begin{definition}[Generalized Orlicz Ball]
A function $f:\R^+ \to \R^+ \cup \{\infty\}$ is a Young function if $f$ is convex, $f(0)=0$, $\exists x$  such that $f(x)\neq 0$, and $\exists x \neq 0$ such that $f(x)\neq \infty$. Let $\calF = \{ f_i \}_{i \in [d]}$ be a set of $d$ Young functions. Then, 
\begin{align*}
    \calK =\{ x \; :\; \sum_{i \in [d]} f_i(|x_i|) \leq 1  \} 
\end{align*}
is a Generalized Orlicz Ball. 
\end{definition} 

Note in particular that $\ell_p$ balls are Generalized Orlicz balls.

\begin{definition}[Negative Association]
\label{def:negative-association}
A set of random variables $\{x_i\}_{i \in [d]}$ is negatively associated if for any coordinate-wise increasing, bounded, functions $f, g$ and disjoint sets $\calS, \calT \subset [d]$ such that $\calS =\{i_1, i_2, \ldots i_s\},\calT =\{ j_1, j_2, \ldots j_t\}$,
\begin{equation*}
    \expecf{}{ f\Paren{ x_{i_1}, x_{i_2}, \ldots, x_{i_s} } \cdot g\Paren{ x_{j_1}, x_{j_2}, \ldots, x_{j_s} } } \leq \expecf{}{ f\Paren{ x_{i_1}, x_{i_2}, \ldots, x_{i_s} }  } \cdot \expecf{}{ g\Paren{ x_{j_1}, x_{j_2}, \ldots, x_{j_s} } }.
\end{equation*}
\end{definition}

\begin{theorem}[Generalized Orlicz Ball in the Positive Orthant, Theorem 1.2 in \cite{pilipczuk2008negative}]
\label{thm:negative-association}
Let $\calK$ be a Generalized Orlicz ball in $\R^d$ and let $\calD$ be the uniform distribution over $\calK$. If $x \sim \calD$, then the set of random variables $\Set{ \abs{ x_i }  }_{i \in [d]}$ is negatively associated. 
\end{theorem}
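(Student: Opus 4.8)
The plan is to follow the inductive strategy behind \cite{pilipczuk2008negative}. First I would reduce to the positive orthant: since $\calK$ is invariant under every sign flip $x_i \mapsto -x_i$, conditioning a uniform point $x \sim \calD$ on the vector of absolute values $(\abs{x_1},\dots,\abs{x_d})$ leaves the signs distributed as independent uniform $\pm 1$'s, so $(\abs{x_1},\dots,\abs{x_d})$ is exactly the uniform distribution on $\calK_+ := \calK \cap \R^d_{\geq 0} = \Set{ x \in \R^d_{\geq 0} : \sum_{i} f_i(x_i) \leq 1}$. Thus it suffices to show the coordinates of a uniform point on $\calK_+$ are negatively associated. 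To run an induction I would strengthen the statement to a family indexed by dimension and by a ``budget'' parameter: for every $d$, every tuple of Young functions $f_1,\dots,f_d$, and every $t>0$, the uniform distribution on $\calK_+(t) := \Set{x \in \R^d_{\geq 0}: \sum_i f_i(x_i) \leq t}$ has negatively associated coordinates, and moreover (an auxiliary clause used by the induction) for every bounded coordinatewise-increasing $h$ the map $t \mapsto \E_{\calK_+(t)}[h(x)]$ is non-decreasing.

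Second, I would induct on $d$. The base case $d=1$ is immediate, since one of the two disjoint index sets in the definition of negative association (\cref{def:negative-association}) is empty, and $t \mapsto \E_{[0,\,f_1^{-1}(t)]}[h]$ is visibly non-decreasing. For the inductive step, disintegrate a uniform point $X$ on $\calK_+(t)$ over its last coordinate $X_d$: conditionally on $X_d = a$, the law of $(X_1,\dots,X_{d-1})$ is exactly uniform on the $(d-1)$-dimensional body $\calK_+'(t - f_d(a))$ built from $f_1,\dots,f_{d-1}$, which by the induction hypothesis has negatively associated coordinates and satisfies the budget-monotonicity clause. Since the Young function $f_d$ is non-decreasing on $\R^+$, the map $a \mapsto t - f_d(a)$ is non-increasing, so budget-monotonicity yields the \emph{negative regression} property: $a \mapsto \E[h(X_1,\dots,X_{d-1}) \mid X_d = a]$ is non-increasing for every bounded increasing $h$. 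One then assembles conditional negative association together with negative regression into negative association of the full tuple $(X_1,\dots,X_d)$, and separately verifies the budget-monotonicity clause in dimension $d$ by exhibiting a coordinatewise-monotone coupling between the uniform laws on $\calK_+(t)$ and $\calK_+(t')$ for $t<t'$, built by first coupling the one-dimensional marginals of $X_d$ (which are stochastically ordered) and then recursing through the disintegration above.

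The hard part — and the reason this is a genuine theorem rather than a routine check — is precisely that last assembly step together with the monotone coupling. Conditional negative association plus a crude negative-regression statement is \emph{not} by itself sufficient: when the peeled coordinate $X_d$ lies in the ``increasing test set'' $T$ of the negative-association inequality, one must control a quantity of the form $a \mapsto \E[g(X_{T\setminus\{d\}}, a) \mid X_d = a]$, in which the explicit dependence on $a$ (increasing, since $g$ is increasing) competes with the conditional-law dependence on $a$ (decreasing, by negative regression), so one needs the finer fact that the whole conditional law is coordinatewise stochastically monotone in $a$ followed by a careful two-sided Chebyshev/FKG-on-the-line argument. Establishing that, and constructing the monotone coupling for the nested bodies $\calK_+(t) \subseteq \calK_+(t')$ — which fails for generic nested star-shaped bodies and truly uses the separable constraint $\sum_i f_i(x_i) \leq t$ — is the technical core of \cite{pilipczuk2008negative}. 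I note that for the only instance needed downstream here, namely $\ell_p$ balls with $f_i(x) = \abs{x}^p$, the body $\calK_+(t) = t^{1/p}\,\calK_+(1)$ is a pure rescaling of $\calK_+(1)$, so the budget-monotonicity clause and the monotone coupling become trivial, and only the assembly of conditional negative association into global negative association requires care.
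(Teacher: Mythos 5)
There is a mismatch you should be aware of: the paper does not prove this statement at all — it is imported verbatim as Theorem 1.2 of \cite{pilipczuk2008negative} and used as a black box, so there is no "paper proof" to match. Judged on its own terms, your proposal is an outline of the plausible inductive strategy (reduce to the positive orthant, strengthen to a budget-indexed family, peel one coordinate, combine conditional negative association with a negative-regression/stochastic-monotonicity statement), but it does not constitute a proof: you yourself identify the two load-bearing steps — (i) assembling conditional NA of the section $(X_1,\dots,X_{d-1})\mid X_d=a$ together with the monotonicity of the conditional law into NA of the full vector, and (ii) the coordinatewise-monotone coupling of the nested bodies $\calK_+(t)\subseteq\calK_+(t')$ — and then defer both to the cited paper as "the technical core." Step (i) is exactly where the naive argument breaks: when the peeled coordinate lies in the increasing test set, the function $a\mapsto \E[g(X_{T\setminus\{d\}},a)\mid X_d=a]$ has an increasing explicit dependence on $a$ competing with a stochastically decreasing conditional law, so it has no definite monotonicity and the one-dimensional Chebyshev/FKG inequality cannot be applied directly; conditional NA plus negative regression does not imply NA in general, which is precisely why the result of \cite{pilipczuk2008negative} is a long and delicate argument rather than the short induction sketched here.

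So the verdict is: genuine gap. The reduction to $\calK_+$, the disintegration over $X_d$, and the observation that sections of a generalized Orlicz ball are again (lower-dimensional) generalized Orlicz balls with a reduced budget are all correct and match the natural setup, and your remark that the $\ell_p$ case trivializes the budget-monotonicity clause by scaling is also correct. But the statement being certified is the general Orlicz-ball theorem, and the two steps you leave unestablished are the theorem's actual content. For the purposes of this paper the honest move is what the authors do — cite \cite{pilipczuk2008negative} — rather than present this sketch as a proof; if you want a self-contained argument for the application at hand, you would either have to carry out the stochastic-monotonicity/assembly argument in full (including the finer structural input about sections of the body that makes it go through) or restrict the claim to the special cases where a complete elementary proof is actually written out.
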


This establishes one side of almost $k$-wise independence. We establish the other side via standard integration techniques. In particular, we show that several families of distributions satisfy the following criterion:

\begin{lemma}[The solid $\ell_p$ ball is almost-product]\label{lem: expectation_of_coordinate_prods_lpball}
Let $x\sim \calD$ where $\calD$ is the uniform distribution on the solid $\ell_p$ ball in $\R^d$, that is, $\mathbb{B}_p = \{x \in \R^d, \norm{x}_p^p \le 1\}$, for any $p \geq 1$. Then, for any subset $S\subset [d]$, such that $|S|=k$,
\[
\expecf{}{\prod_{j \in S} x_j^2} \ge \Paren{ 1 - o_d(1)} \prod_{j \in S}\expecf{}{x_j^2} ,
\] 
where $k\ll d$.
\end{lemma}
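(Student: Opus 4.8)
The plan is to compute both sides by direct integration in "radial" coordinates adapted to the $\ell_p$ ball, exploiting the fact that the uniform distribution on $\mathbb{B}_p$ can be generated via the classical Schechtman--Zinn representation: if $g_1,\dots,g_d$ are i.i.d. with density proportional to $\exp(-|t|^p)$ and $R$ is an independent $[0,1]$-uniform (raised to an appropriate power), then $x = R^{1/d}\cdot \tfrac{g}{\|g\|_p}\cdot(\text{const})$ is uniform on $\mathbb{B}_p$. More elementarily, one can integrate the monomial $\prod_{j\in S} x_j^2$ over $\mathbb{B}_p$ by the substitution $u_j = |x_j|^p$, which turns the $\ell_p$ ball into the standard simplex $\{u \in \mathbb{R}_{\ge 0}^d : \sum u_j \le 1\}$ with Jacobian $\prod_j \tfrac{1}{p} u_j^{1/p - 1}$. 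Then $\expecf{}{\prod_{j\in S} x_j^2}$ becomes a Dirichlet-type integral over the simplex, and both the numerator (the integral of the monomial) and the normalizing volume are ratios of Gamma functions.

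Concretely, first I would reduce to the positive orthant by symmetry (each $x_j$ is symmetric, so $\expecf{}{\prod_{j\in S}x_j^2}$ over $\mathbb{B}_p$ equals the same expectation over $\mathbb{B}_p \cap \mathbb{R}_{\ge 0}^d$). Second, apply the substitution $u_j = x_j^p$ to get
\[
\expecf{}{\prod_{j\in S}x_j^2} \;=\; \frac{\displaystyle\int_{\Delta_d} \Big(\prod_{j\in S} u_j^{2/p}\Big)\prod_{j=1}^d u_j^{1/p-1}\,du}{\displaystyle\int_{\Delta_d}\prod_{j=1}^d u_j^{1/p-1}\,du},
\]
where $\Delta_d$ is the unit simplex. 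Third, evaluate each integral using the Dirichlet integral formula $\int_{\Delta_d}\prod u_j^{a_j-1}du = \tfrac{\prod\Gamma(a_j)}{\Gamma(1+\sum a_j)}$: the numerator has parameters $a_j = 1/p + 2/p$ for $j\in S$ and $a_j = 1/p$ otherwise, while the denominator has all $a_j = 1/p$. Fourth, take the ratio; almost everything cancels and one is left with
\[
\expecf{}{\prod_{j\in S}x_j^2} \;=\; \Big(\tfrac{\Gamma(3/p)}{\Gamma(1/p)}\Big)^{k}\cdot \frac{\Gamma\!\big(1 + d/p\big)}{\Gamma\!\big(1 + d/p + 2k/p\big)}.
\]
Since the same computation with $k=1$ gives $\expecf{}{x_j^2} = \tfrac{\Gamma(3/p)}{\Gamma(1/p)}\cdot \tfrac{\Gamma(1+d/p)}{\Gamma(1+d/p+2/p)}$, forming $\prod_{j\in S}\expecf{}{x_j^2}$ and dividing, the factors $(\Gamma(3/p)/\Gamma(1/p))^k$ cancel and the claim reduces to the inequality
\[
\frac{\Gamma(1+d/p)}{\Gamma(1+d/p+2k/p)} \;\ge\; \big(1-o_d(1)\big)\left(\frac{\Gamma(1+d/p)}{\Gamma(1+d/p+2/p)}\right)^{k}.
\]

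The last step is the only real content: I would bound the Gamma-ratio using $\Gamma(z+a)/\Gamma(z) = z^a(1+O(a^2/z))$ as $z\to\infty$ (equivalently, $\log\Gamma$ convexity / Stirling). With $z = 1+d/p$, the left side is $z^{-2k/p}(1+O(k^2/(pz)))$ and the right side is $z^{-2k/p}(1+O(1/(pz)))^k = z^{-2k/p}(1+O(k/(pz)))$; since $z \asymp d/p$ and $k \ll d$, the multiplicative correction is $1 - O(k^2/d) = 1 - o_d(1)$, which is exactly the claimed bound (and in fact it even comes out $\ge 1-o_d(1)$ on the nose because $\Gamma$-ratios of this shape are log-convex in a way that makes the product an overestimate up to lower-order terms). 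The main obstacle is getting the direction and the $o_d(1)$ rate of this Gamma-ratio estimate right — one has to be careful that the error is genuinely $o_d(1)$ for the regime $k\ll d$ (it is, as long as $k = o(\sqrt{d})$; if the paper's "$k \ll d$" is meant more liberally, one uses the sharper two-sided Stirling bound and notes the correction is $\exp(O(k^2/d))$, still $1+o_d(1)$ under the stated hypothesis). Everything else — the symmetry reduction, the simplex substitution, the Dirichlet integral — is routine.
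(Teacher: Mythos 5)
Your proposal is correct, and it reaches the same closed form as the paper, namely $\expecf{}{\prod_{j\in S}x_j^2} = \bigl(\Gamma(3/p)/\Gamma(1/p)\bigr)^k\,\Gamma(1+d/p)/\Gamma(1+(d+2k)/p)$: you derive it by mapping the positive-orthant part of the ball onto the simplex and invoking the Dirichlet integral, whereas the paper transports the integral to the $\ell_2$ ball via $y\mapsto y^{2/p}$; these are interchangeable computations. The genuine divergence is in the final step, and there your route is the stronger one. The paper attempts to lower-bound the ratio $\Gamma(d/p+1+2/p)^k/\bigl(\Gamma(d/p+1)^{k-1}\Gamma(d/p+1+2k/p)\bigr)$ by log-convexity of $\Gamma$ with "$t=k$", reducing it to $\Gamma(A+1-k)/\Gamma(A)=1/\bigl((A-1)\cdots(A-k+1)\bigr)$; but that quantity tends to $0$, not $1$, as $d\to\infty$ for $k\ge 2$ (and the convex combination $k x_1+(1-k)x_2$ actually equals $d/p+1+2k/p$, not the paper's shifted point; log-convexity only yields the easy direction, that the ratio is at most $1$). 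Your finish via $\Gamma(z+a)/\Gamma(z)=z^a(1+O(a^2/z))$ is the correct tool: it gives the ratio as $1-\Theta\bigl(k(k-1)/(pd)\bigr)$, which is $1-o_d(1)$ precisely in the regime $k=o(\sqrt{pd})$ that you explicitly flag — a restriction the lemma implicitly needs and which is harmless since the paper only applies it for $k$ bounded independently of $d$. One tiny nit: your parenthetical that log-convexity "makes the product an overestimate" is the right intuition (it shows the ratio is $\le 1$, consistent with negative association), but it is the Stirling expansion, not convexity, that supplies the claimed lower bound.
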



\begin{proof}
To prove this, we can compute these expectations explicitly as follows: 
\[
    \expecf{x\sim \calD}{ \prod_{i \in S} x_i^2} = \frac{\Gam(3/p)^k\Gam(d/p  + 1)}{\Gam(1/p)^k\Gam((d + 2k)/p + 1)}.
\]
The proof of this equality follows \cite{wang2005volumes} so we only give a brief sketch.
Let the uniform distribution on the standard $\ell_2$ ball $B_2^d$ be $\calD'$. Let $\phi(y) = (y_1^{2/p}, \ldots, y_d^{2/p})$ with the corresponding Jacobian determinant $J\phi(y) = (2/p)^dy_1^{2/p-1} \ldots y_d^{2/p - 1}$. Then,
    \[\int_{\calD} f(x)\,dx = \int_{\calD'} f(x) |J\phi(y)|\,dy = (2/p)^d\int_{\calD'} f(\phi(y)) |y_1|^{ 2/p - 1}\ldots |y_d|^{2/p - 1} \,dy\]
    Therefore,
    \[\int_{\calD} (\prod_S x_i^2) \,dx = (2/p)^d\int_{\calD'} (\prod_S |y_i|^{4/p}) |y_1|^{ 2/p - 1}\ldots |y_d|^{2/p - 1} \,dy\]
    
    For reals $\al_1, \ldots, \al_n > -1$, we have
    \[\int_{\calD'} |y_1|^{\al_1}\ldots |y_d|^{\al_d}\,dy =  \frac{\Gamma(\beta_1)\ldots \Gamma(\beta_d)}{\Gamma(\beta_1 + \ldots + \beta_d + 1)}\]
    where $\beta_i = (\al_i + 1) / 2$ for all $i$. Using this, we get
    \[\int_{\calD} (\prod_S x_i^2) \,dx = (2/p)^d\frac{\Gam(3/p)^k\Gam(1/p)^{d - k}}{\Gam((d + 2k)/p + 1)}\]
    Therefore,
    \[\EE_{\calD}[\prod_{S} x_i^2] = \frac{\int_{\calD} (\prod_S x_i^2) \,dx}{\int_{\calD} 1 \,dx} = \frac{\Gam(3/p)^k\Gam(d/p  + 1)}{\Gam(1/p)^k\Gam((d + 2k)/p + 1)}\]

Next, we use the above characterization to complete the proof. 
    \begin{align*}
    \frac{\EE[\prod_{j \in S} x_j^2]}{\prod_{j \in S} \EE[x_j^2]} &= \frac{\Gam(d/p+1 + 2/p)^k}{\Gam(d/p + 1)^{k - 1}\Gam((d/p + 1) + (2k/p))}
    \end{align*}
    
    Because the gamma function is strictly logarithmically convex on the positive reals, we have
    \[\Gam(x_1)^t\Gam(x_2)^{1 - t} \ge \Gam(tx_1 + (1 - t)x_2)\]
    for reals $x_1, x_2 > 0$ and $t \in [0, 1]$. Setting $x_1 = d/p + 1 + 2/p, y = d/p + 1, t = k$, we get
    \[\frac{\Gam(d/p+1 + 2/p)^k}{\Gam(d/p + 1)^{k - 1}} \ge \Gam((d/p + 1) + (2k/p) + (1 - k))\]
    Assuming $A = (d/p + 1) + (2k/p) \ge k - 1$ (follows if $d \gg pk$), we get
    \begin{align*}
    \frac{\EE[\prod_{j \in S} x_j^2]}{\prod_{j \in S} \EE[x_j^2]} &\ge \frac{\Gam(A + (1 - k))}{\Gam(A)}\\
    &= \frac{1}{(A - (k - 1))(A - (k - 2)) \ldots A}
    \end{align*}
    If we assume $p, k$ are constants and $d$ is growing, then $A \gg k$ in which case this will be $1 - o_d(1)$.
\end{proof}

    

    

Next, we show that we can obtain certifiable hypercontractivity of linear forms.

Finally, we show that distributions that satisfy a a Poincar\'e inequality also have certifiably-hypercontractive quadratic forms (recall ~\cref{def:certifiable-hypercontractivity}). 

\begin{lemma}[Certifiable Hypercontractivity from Poincar\'e]
\label{lem:cert-hyper-poincare}
A distribution $\calD$  over $\mathbb{R}^d$ is $\sigma$-Poincar\'e if for all differentiable functions $f: \mathbb{R}^d \to \mathbb{R}$, 
\[ 
\expecf{x \sim \calD }{  \Paren{ f(x) - \expecf{x \sim \calD }{ f(x) } }^2   } \leq \sigma^2  \norm{ \nabla  f(x) }_2^2,
\]
for some fixed constant $\sigma$. Any $\sigma$-Poincar\'e distribution has $t^2$-certifiably $(32\sigma t )^t$-hyper-contractive quadratic forms, for all even $t$.
\end{lemma}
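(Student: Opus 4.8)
Write $f_Q(x) := x^\top Q x - \E_{x\sim\calD}[x^\top Q x]$, which, since $\calD$ is isotropic (so $\E_{x\sim\calD}[x^\top Qx]=\tr(Q)$), is a polynomial of degree one in the matrix indeterminate $Q$ with $x$-dependent coefficients. By \cref{def:certifiable-hypercontractivity} it suffices to produce, for every even $t$, a degree-$O(t^2)$ sum-of-squares proof in $Q$ of
\[
\E_{x\sim\calD}\big[f_Q(x)^{t}\big]\ \le\ (32\sigma t)^{t}\big(\E_{x\sim\calD}[f_Q(x)^2]\big)^{t/2}.
\]
Two routine closure properties are used throughout: (i) if $r(x,Q)$ is a sum of squares in $Q$ for every fixed $x$, then $\E_{x\sim\calD}[r(x,Q)]$ is a sum of squares in $Q$, being a nonnegatively weighted combination of sums of squares; and (ii) Cauchy--Schwarz over the data distribution is itself an SoS identity in $Q$, since $\E_x[A_x^2]\,\E_x[B_x^2]-\E_x[A_xB_x]^2=\tfrac12\E_{x,x'}\big[(A_xB_{x'}-A_{x'}B_x)^2\big]$ for any polynomials $A_x,B_x$ in $Q$.

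\textbf{SoS form of Poincaré.}
The main tool is that the Poincaré inequality, when applied to a function polynomial in $Q$, is provable in SoS: for any $g(x,Q)$ of degree $m$ in $Q$,
\[
\sststile{2m}{Q}\ \Big\{\ \sigma^2\,\E_{x\sim\calD}\big[\norm{\nabla_x g(x,Q)}_2^2\big]\ -\ \E_{x\sim\calD}[g(x,Q)^2]\ +\ \E_{x\sim\calD}[g(x,Q)]^2\ \ge\ 0\ \Big\}.
\]
Indeed, expanding $g=\sum_\alpha c_\alpha(x)Q^\alpha$ and letting $\mathbf{q}(Q)$ be the vector of monomials $Q^\alpha$, one has $\E_x[g^2]-\E_x[g]^2=\mathbf{q}(Q)^\top\mathrm{Cov}_x(c)\,\mathbf{q}(Q)$ and $\E_x\norm{\nabla_x g}_2^2=\mathbf{q}(Q)^\top\E_x[J^\top J]\,\mathbf{q}(Q)$ for the Jacobian $J=J(x)$ of $c(\cdot)$; applying the scalar Poincaré inequality to $\iprod{\xi,c(\cdot)}$ for every $\xi$ gives $\mathrm{Cov}_x(c)\preceq\sigma^2\E_x[J^\top J]$, so the displayed quantity equals $\mathbf{q}(Q)^\top M\,\mathbf{q}(Q)$ for a fixed PSD matrix $M$, which is a sum of squares in the entries of $\mathbf{q}(Q)$ and hence in $Q$. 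Applying this with $g=f_Q^{q}$, and using $\nabla_x(x^\top Qx)=2Qx$ so that $\norm{\nabla_x f_Q^{q}}_2^2=4q^2 f_Q^{2q-2}\norm{Qx}_2^2$, yields for each even $q$ an SoS proof in $Q$ of
\[
\E_x[f_Q^{2q}]\ \le\ \E_x[f_Q^{q}]^2\ +\ 4\sigma^2 q^2\,\E_x\big[f_Q^{2q-2}\,\norm{Qx}_2^2\big].
\]

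\textbf{Unwinding the recursion.}
Since $\norm{Qx}_2^2=x^\top Q^2x$ is itself a sum of squares in $Q$, the cross term is handled by SoS Cauchy--Schwarz together with a second use of the SoS Poincaré inequality applied to the quadratic form $x\mapsto x^\top Q^2 x$ (whose variance is at most $4\sigma^2\norm{Q^2}_F^2\le 4\sigma^2\norm{Q}_F^4$): iterating, every $2q$-th moment of $f_Q$ is reduced to lower even moments of $f_Q$ multiplied by Schatten-norm factors of $Q$ that never exceed powers of $\norm{Q}_F^2$. The base of the recursion is the second moment, and the Frobenius factors are converted into $\E_x[f_Q^2]$ at the end using $\E_x[f_Q^2]\le 4\sigma^2\norm{Q}_F^2$ (Poincaré on $f_Q$) together with the matching lower bound $\E_x[f_Q^2]\ge c\norm{Q}_F^2$, which is a degree-two inequality in $Q$ and hence, whenever it holds numerically (as it does for isotropic product distributions and uniform measures on $\ell_p$ balls), is automatically SoS by \cref{fact:nonnegative-quadratic}. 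Each of the $O(t)$ iteration steps contributes a multiplicative factor $O(\sigma q)$ and raises the degree of the running SoS proof in $Q$ by $O(t)$, for an overall degree of $O(t^2)$ and an overall constant $(32\sigma t)^t$.

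\textbf{Main obstacle.}
The delicate part is keeping every inequality in the chain a bona fide SoS consequence: one is restricted to SoS Cauchy--Schwarz and cannot freely invoke Hölder interpolation with fractional exponents, so the recursion must be organized with integer-exponent Cauchy--Schwarz steps only, and care is needed so that it terminates at $(\E_x[f_Q^2])^{t/2}$ rather than at a bare power of $\norm{Q}_F^2$ --- the two are comparable only through the spectral-gap-type lower bound $\E_x[f_Q^2]\gtrsim\norm{Q}_F^2$, whose verification (and its effect on the constant, which must stay within the budget $(32\sigma t)^t$) is the crux of the bookkeeping.
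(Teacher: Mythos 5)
Your overall strategy is the same as the paper's: apply the Poincar\'e inequality to powers of the centered quadratic form $f_Q$, certify each application in SoS over the indeterminate $Q$, and induct/recurse on the moment order while controlling the gradient term $\norm{\nabla_x f_Q^q}_2^2 = 4q^2 f_Q^{2q-2}\norm{Qx}_2^2$. Your ``SoS form of Poincar\'e'' lemma (the difference is a PSD quadratic form in the monomial vector $\mathbf{q}(Q)$, hence SoS) is a genuinely useful piece of rigor: the paper only justifies this explicitly for the degree-2 base case and then applies the rearranged Poincar\'e inequality to $f_Q^t$ without spelling out why the resulting statement is SoS-provable in $Q$. You are also right to flag the normalization issue: the paper's induction terminates at $(32\sigma t)^{2t}\norm{Q}_F^{2t}$ rather than at $(\E[f_Q^2])^{t}$ as \cref{def:certifiable-hypercontractivity} literally requires, and converting between the two needs the lower bound $\E[f_Q^2]\gtrsim\norm{Q}_F^2$ that you mention (and that the paper does not address).

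The genuine gap is in your ``unwinding'' step. You forswear H\"older and propose to handle the cross term $\E_x[f_Q^{2q-2}\norm{Qx}_2^2]$ by SoS Cauchy--Schwarz, but Cauchy--Schwarz splits it as $(\E[f_Q^{4q-4}])^{1/2}(\E[\norm{Qx}_2^4])^{1/2}$ (or similar), which \emph{raises} the moment order from $2q$ to $4q-4$; the recursion as described does not terminate, and your ``main obstacle'' paragraph names this difficulty without resolving it. The paper's resolution has two ingredients you are missing. First, SoS H\"older with integer exponents is available (\cref{fact:sos-holder}): after raising to the $t$-th power the inequality $\bigl(\E[f_Q^{2t-2}\,x^\top Q^\top Qx]\bigr)^t \le \bigl(\E[f_Q^{2t}]\bigr)^{t-1}\E[(x^\top Q^\top Qx)^t]$ is a polynomial SoS statement, so no fractional exponents are ever needed. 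Second, this makes the argument \emph{self-bounding} rather than strictly downward: setting $X=\E[f_Q^{2t}]$ one obtains $X^t \le a^t + X^{t-1}b$ with $a,b$ controlled by the inductive hypothesis (for the $t$-th moment) and by $\E[(x^\top Q^\top Qx)^t]\le(4\sigma)^t\norm{Q}_F^{2t}$, and the paper's \cref{eqn:t-th-order-inequality} converts this, in SoS, into $X^t\le 2^t(a^t+b^t)$. Without the H\"older step and the self-bounding fact, your chain of inequalities does not close.
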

\begin{proof}
We proceed via induction, and consider the hypothesis that for even $t$,  
$$ \sststile{4t^2 }{Q} \Set{ \expecf{\calD}{  \Paren{ \Iprod{x, Qx} - \expecf{x\sim \calD }{ \Iprod{x, Qx} } }^{t}  } \leq (32\sigma t)^t \norm{Q}_F^{t} } $$ 

Consider the base case, i.e.
for a matrix-valued indeterminate $Q$, let $f_0(x) = \Iprod{x, Qx} - \expecf{x\sim \calD }{ \Iprod{x, Qx} }$. 
Then, 
\begin{equation*}
    \expecf{x \sim \calD }{ (f_0(x))^2  } \leq \sigma^2 \norm{ \nabla f_0(x) }_2^2 ,
\end{equation*}
where 
\begin{equation*}
    \norm{ \nabla f_0(x) }_2^2 = 4 \expecf{x\sim \calD }{ 
 x^\top Q^\top Q x } = 4 \Iprod{ QQ^\top , \Sigma } = 4 \norm{Q}_F^2, 
\end{equation*}
and thus 
\begin{equation*}
    \sststile{}{Q} \Set{ \expecf{x \sim \calD }{ \Paren{ \Iprod{x, Qx} - \expecf{x\sim \calD }{ \Iprod{x, Qx} } }^2  } \leq (2\sigma)^2 \norm{Q}_F^2 }.
\end{equation*}
Note, the above inequality admits a sum-of-squares proof since it is a homogeneous degree-$2$ polynomial inequality and can be reformulated as the quadratic form of a PSD matrix. 
Next, we can re-arrange the Poincar\'e inequality as follows: 

\begin{equation}
\label{eqn:re-arranged-poincare}
    \expecf{x\sim \calD }{f(x)^2 }\leq \expecf{x \sim \calD}{ f(x) }^2 + \sigma^2 \norm{\nabla f(x) }^2_2
\end{equation}
Next, following a simple calculation
\begin{equation*}
     \nabla f(x) = t   \Paren{ \Iprod{x , Qx} - \expecf{}{\Iprod{x , Qx}}^{t-1} } Qx 
\end{equation*}
and therefore, 
\begin{equation}
\label{eqn:bound-on-gradient}
\begin{split}
    \sststile{}{Q} \Biggl\{ \norm{\nabla f(x) }_2^{2t} & = t^{2t} \Paren{ \expecf{}{ \Paren{ \Iprod{x , Qx}  - \expecf{}{\Iprod{x , Qx} }}^{2t-2}  x^\top Q^\top Q x }  }^t \\
    & \leq t^{2t}  \Paren{ \expecf{}{ \Paren{ \Iprod{x , Qx} - \expecf{}{\Iprod{x , Qx} } }^{2t} }  }^{t-1} \Paren{\expecf{}{ \Paren{ x^\top Q^\top Q x }^t} }  \Biggr\}, 
\end{split}
\end{equation}
where the last inequality follows from applying sos H\"older's.

Then, applying \cref{eqn:re-arranged-poincare} with $f= \Paren{ \Iprod{x , Qx} - \expecf{}{ \Iprod{x , Qx} } }^t$,

\begin{equation}
    \begin{split}
        \sststile{}{Q} \Biggl\{ & \Paren{  \expecf{}{ \Paren{\Iprod{x , Qx} - \expecf{}{\Iprod{x , Qx} } }^{2t}    }  }^t \\
        & \leq \Paren{   \Paren{ \expecf{}{ \Paren{\Iprod{x , Qx} - \expecf{}{\Iprod{x , Qx} }}^t  } }^2 + \sigma^2 \norm{\nabla f(x) }^2_2    }^t  \\
        & \leq 2^t \Paren{ \expecf{}{ \Paren{\Iprod{x , Qx} - \expecf{}{\Iprod{x , Qx} }}^t  } }^{2t} + (2\sigma)^{2t} \norm{\nabla f(x) }^{2t}_2 \\
        & \leq \Paren{ 4 \sigma^{2t} \norm{Q}_F^{2t} }^t + (2\sigma t )^{2t}   \Paren{ \expecf{}{ \Paren{ \Iprod{x , Qx} - \expecf{}{\Iprod{x , Qx} } }^{2t} }  }^{t-1} \Paren{\expecf{}{ \Paren{ x^\top Q^\top Q x }^t} }\Biggr\} ,
    \end{split}
\end{equation}
where the second inequality follows from sos triangle inequality, and  last inequality follows from applying the inductive hypothesis to the first term and the bound obtained in \cref{eqn:bound-on-gradient} on the second. We now focus on bounding  $\Paren{\expecf{}{ \Paren{ x^\top Q^\top Q x }^t} }$ as follows: 
\begin{equation}
    \begin{split}
        \Paren{\expecf{}{ \Paren{ x^\top Q^\top Q x 
  }^t} } &  \leq 2^t   \Paren{\expecf{}{ \Paren{ x^\top Q^\top Q x 
   - \expecf{}{x^\top Q^\top Q x } }^t  } } + 2^t \Paren{ \expecf{}{x^\top Q^\top Q x}^t}\\
    & \leq 2^t \Paren{ \sigma^t \norm{Q^\top Q }_F^{t}  + \norm{Q}_F^{2t} } \leq (4\sigma)^t \norm{Q}_F^{2t}.
    \end{split}
\end{equation}

Therefore, we can re-arrange the above as:
\begin{equation}
\label{eqn:simplified-2t-th-moment}
\begin{split}
    \sststile{}{Q} \Biggl\{ & \Paren{  \expecf{}{ \Paren{\Iprod{x , Qx} - \expecf{}{\Iprod{x , Qx} } }^{2t}    }  }^t \\
    & \leq  \Paren{ 4 \sigma^{2t} \norm{Q}_F^{2t} }^t  + \Paren{  \expecf{}{ \Paren{\Iprod{x , Qx} - \expecf{}{\Iprod{x , Qx} } }^{2t}    }  }^{t-1} \cdot (16 \sigma t)^{2t} \norm{Q}_F^{2t} \Biggr\}
\end{split}
\end{equation}

Finally, applying \cref{eqn:t-th-order-inequality} to \cref{eqn:simplified-2t-th-moment} we can conclude that 
\begin{equation}
\sststile{}{Q} \Set{ \Paren{  \expecf{}{ \Paren{\Iprod{x , Qx} - \expecf{}{\Iprod{x , Qx} } }^{2t}    }  }^t  \leq \Paren{ \Paren{ 32 \sigma t }^{2t} \norm{Q}_F^{2t} }^t } ,
\end{equation}
which concludes the proof. 
 
\end{proof}

\begin{fact}[$t$-th order inequality]
\label{eqn:t-th-order-inequality}
For any even $t$, 
    $$ \Set{ x^{t} \leq  \Paren{ a^t + x^{t-1} b } } \sststile{}{x} \Set{ x^t \leq 2^t \Paren{a^t + b^t} } .$$
\end{fact}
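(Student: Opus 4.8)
Morally this is a two–case statement: if $x^t \le 2a^t$ we are done since $2a^t \le 2^t(a^t+b^t)$, and otherwise the axiom forces $x^{t-1}b > \tfrac12 x^t$, i.e. $x < 2b$, so $x^t < 2^tb^t$. The plan is to replace this case analysis by a single degree-$t$ sum-of-squares inequality — a Young/AM-GM bound — that lets us absorb the cross term $x^{t-1}b$ into a $\tfrac12 x^t$ plus a multiple of $b^t$.

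The key intermediate claim is
\[
  \sststile{}{x}\ \Set{\ x^{t-1}b \ \le\ \tfrac12\,x^{t} + 2^{t-1} b^{t}\ }.
\]
To see this, note that $h(x) := \tfrac12 x^t + 2^{t-1}b^t - x^{t-1}b$ is, for each fixed real $b$, a univariate polynomial in $x$ of degree $t$ which is nonnegative over all of $\R$: for $x,b\ge 0$ the weighted AM-GM inequality with exponents $\tfrac{t-1}{t},\tfrac1t$ (applied to $\mu^t x^t$ and $b^t/\mu^{t(t-1)}$ with $\mu^t=\tfrac{t}{2(t-1)}$) gives $x^{t-1}b\le \tfrac12 x^t + \tfrac1t\big(\tfrac{2(t-1)}{t}\big)^{t-1}b^t \le \tfrac12 x^t + 2^{t-1}b^t$, using $\tfrac{2(t-1)}{t}<2$ and $t\ge 2$ (valid since $t$ is even); and for the remaining sign patterns one uses that $t$ is even, so $x^t=|x|^t$ and $b^t=|b|^t$, while either $x^{t-1}b\le 0$ or $x^{t-1}b=|x|^{t-1}|b|$, reducing to the nonnegative case. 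Hence $h\ge 0$ on $\R$, and by \cref{fact:univariate-sos-proofs} it admits a sum-of-squares proof in the indeterminate $x$.

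With this in hand the rest is a one-line combination. Adding the axiom $\Set{x^t \le a^t + x^{t-1}b}$ to the inequality $\Set{x^{t-1}b \le \tfrac12 x^t + 2^{t-1}b^t}$ via the addition rule gives $\sststile{}{x}\Set{x^t \le a^t + \tfrac12 x^t + 2^{t-1}b^t}$, i.e. $\sststile{}{x}\Set{\tfrac12 x^t \le a^t + 2^{t-1}b^t}$; multiplying by $2$ yields $\sststile{}{x}\Set{x^t \le 2a^t + 2^t b^t}$, and since $2\le 2^t$ and $a^t=(a^{t/2})^2\cdot 1$ is a nonnegative constant, adding $(2^t-2)a^t\ge 0$ produces $\sststile{}{x}\Set{x^t \le 2^t(a^t+b^t)}$. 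Concretely the certificate is $2^t(a^t+b^t)-x^t = (2^t-2)a^t + 2h(x) + 2\big(a^t + x^{t-1}b - x^t\big)$, where $(2^t-2)a^t$ is a nonnegative constant, $2h(x)$ is twice a sum of squares, and the last term is $2$ times the axiom.

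The only place requiring care is the intermediate claim: one must check $h$ is nonnegative on the \emph{whole} real line (both signs of $x$ and $b$), not merely on the positive orthant, and that the AM-GM constant is genuinely at most $2^{t-1}$ so that the final constant is exactly $2^t$; both are handled above. If one prefers to avoid invoking \cref{fact:univariate-sos-proofs}, $h(x)$ also has an explicit SoS decomposition obtained by completing the square in $x^{t/2}$, but appealing to the univariate fact is the cleanest route. All SoS proofs here have degree $O(t)$, matching the (unspecified) degree in the statement.
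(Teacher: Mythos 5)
Your proof is correct, and it takes a genuinely different route from the paper's. The paper argues by contradiction at the scalar level: it rewrites the axiom as $x \le a^t/x^{t-1} + b$, assumes $x > a+b$, derives $x \le a+b$, concludes $x^t \le (a+b)^t \le 2^t(a^t+b^t)$, and then asserts that the whole implication "admits a sos proof" because it is univariate. That sketch divides by $x^{t-1}$ and implicitly assumes $x, a, b \ge 0$, and it does not exhibit how the axiom enters the certificate. You instead prove one Young-type inequality, $\sststile{}{x}\{x^{t-1}b \le \tfrac12 x^t + 2^{t-1}b^t\}$ (which is a genuine univariate nonnegativity statement, so \cref{fact:univariate-sos-proofs} applies cleanly), and then combine it linearly with the axiom via the addition rule. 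This buys you an explicit certificate, $2^t(a^t+b^t)-x^t = (2^t-2)a^t + 2h(x) + 2\bigl(a^t + x^{t-1}b - x^t\bigr)$, in which the axiom appears with the constant SoS multiplier $2$, and it handles all sign patterns of $x$ and $b$ correctly (your reduction via $x^{t-1}b \le 0$ or $x^{t-1}b = |x|^{t-1}|b|$, together with the weighted AM--GM constant $\tfrac1t(\tfrac{2(t-1)}{t})^{t-1} \le 2^{t-1}$, checks out). The only caveat worth stating is that $a$ and $b$ must be treated as fixed parameters (they do not appear under the turnstile), so that $h$ is genuinely univariate in $x$; in the paper's application to \cref{lem:cert-hyper-poincare} the substitution rule then transports the certificate to the polynomial instantiations of $x,a,b$. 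Your version is the more rigorous of the two.
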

\begin{proof}
Observe, we can re-write $x^t \leq a^t + x^{t-1}\cdot  b $ as $x \leq  \frac{a^t}{x^{t-1}} + b $. Assume $x > (a+b)$. Then,  $x \leq \frac{a^t}{(a+b)^{t-1}} + b \leq a+b $, and thus $x^t \leq 2^t (a^t + b^t)$. Since this is a univariate inequality, it admits a sos proof.   
\end{proof}

\begin{lemma}[Certifiable Hypercontractivity of $\ell_p$ balls]
\label{lem:certifiable-hyperconstractivity-of-linear-forms}
For any $p \in [1, \infty)$, the uniform distribution on the unit $\ell_p$ ball has  $2k$-certifiably $c$-hypercontractive linear and quadratic forms for any $k \in \mathbb{N}$.
\end{lemma}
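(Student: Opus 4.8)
The plan is to prove the two halves of \cref{lem:certifiable-hyperconstractivity-of-linear-forms} by rather different routes: for linear forms I would combine the negative association of $\ell_p$-ball coordinates (\cref{thm:negative-association}) with one-dimensional log-concavity to replay, essentially verbatim, the SoS monomial-expansion argument of \cref{lem:upper-bound-independent}; for quadratic forms I would appeal to the dimension-free Poincar\'e inequality for (isotropized) uniform measures on $\ell_p$ balls and feed it into \cref{lem:cert-hyper-poincare}. Throughout I would use that the uniform measure on $\mathbb{B}_p^d$ is invariant under coordinate permutations and sign changes, so all one-dimensional marginals coincide and the covariance is $\sigma_p^2 I$ for a scalar $\sigma_p > 0$; rescaling by $\sigma_p^{-1}$ puts the distribution in isotropic position (again an $\ell_p$-type body up to scaling), which is the normalization in which \cref{def:certifiable-hypercontractivity} is phrased.

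For linear forms, the first step is the moment comparison $\E\big[\prod_{i} x_i^{2\ell_i}\big] \le \prod_{i}\E\big[x_i^{2\ell_i}\big]$ for every multi-index $(\ell_i)$. This follows from \cref{thm:negative-association}: since $t \mapsto t^{2\ell}$ is increasing on $\R^{+}$, \cref{def:negative-association} applied to singleton index sets and iterated gives exactly this inequality (after truncating $t\mapsto t^{2\ell}$ at a large level to meet the boundedness hypothesis and letting the level go to infinity). Only the upper bound is used, so the one-sided nature of negative association is harmless. The second step is that each marginal is a mean-zero one-dimensional log-concave density, hence $\E[x_i^{2\ell}] \le (2\ell)^{2\ell}\,(\E[x_i^2])^{\ell}$ for all $\ell$ (and, for $p\ge 2$, the marginals are sub-Gaussian with a universal $\psi_2$-constant by \cref{lem:strictly-sub-exp}, improving this to $(C\ell)^{\ell}(\E[x_i^2])^{\ell}$). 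The third step reruns the proof of \cref{lem:upper-bound-independent}: expand $\iprod{x,v}^{2k} = \sum_{\ell_1+\dots+\ell_d=k}\frac{(2k)!}{\prod_i (2\ell_i)!}\prod_i x_i^{2\ell_i} v_i^{2\ell_i}$ by the substitution rule, take expectations, and bound $\E[\prod_i x_i^{2\ell_i}]$ using the two estimates above. Crucially, each $\E[\prod_i x_i^{2\ell_i}]$ enters as a \emph{nonnegative scalar} multiplying the sum-of-squares monomial $\prod_i v_i^{2\ell_i}$, so replacing it by its upper bound is a valid move in the proof system (addition/multiplication rules), and the resulting combinatorial sum is controlled by $\prod_i(2\ell_i)^{2\ell_i}\le (2k)^{2k}$ together with $\tfrac{(2k)!}{\prod_i (2\ell_i)!} \le \tfrac{(2k)!}{2^{k}k!}\cdot\tfrac{k!}{\prod_i \ell_i!}$, $\tfrac{(2k)!}{2^k k!} = (2k-1)!! \le (2k)^k$, and the multinomial theorem $\sum_\ell \tfrac{k!}{\prod_i\ell_i!}\prod_i v_i^{2\ell_i} = \norm{v}_2^{2k}$. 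This yields a degree-$4k$ SoS proof of $\E\iprod{x,v}^{2k} \le c_k^{2k}\,(\sigma_p^2\norm{v}_2^2)^k = c_k^{2k}\,(\E\iprod{x,v}^2)^k$ for a constant $c_k$ depending only on $k$ (polynomial in $k$, and $O(\sqrt k)$ when $p\ge 2$), which is $2k$-certifiable hypercontractivity of linear forms.

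For quadratic forms, I would invoke the fact that the isotropized uniform measure on $\mathbb{B}_p^d$ is $\sigma$-Poincar\'e for a universal constant $\sigma$: for $p\ge 2$ this is a consequence of sub-Gaussianity of the marginals and concentration of $\norm{x}_p$, and for $p\in[1,2)$ it is the $\ell_p$-ball case of the classical variance/KLS bounds (Lata\l a--Wojtaszczyk, Sodin, and related work). Plugging this into \cref{lem:cert-hyper-poincare} immediately gives $t^2$-certifiably $(32\sigma t)^t$-hypercontractive quadratic forms for every even $t$, i.e.\ the quadratic-form half of the statement. Combining the two parts over all $k$ (equivalently, all even $t$) finishes the proof.

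The main obstacles I anticipate are, first, making the negative-association step airtight at the level of the SoS proof: \cref{def:negative-association} requires bounded, coordinate-wise increasing functions on disjoint sets, so I would be careful to truncate and take limits, and to stress that the proof never needs an SoS-izable statement \emph{about} negative association --- the coordinate-moment products are mere constants that are bounded above, so the SoS derivation is unchanged from \cref{lem:upper-bound-independent}. Second, I would need to pin down the correct citation for the dimension-free Poincar\'e constant of $\mathbb{B}_p^d$ uniformly over $p \in [1,\infty)$, since the cleanest references handle $p\ge 2$ and the small-$p$ regime separately; this is the one ingredient genuinely external to the tools assembled in \cref{sec:distribution_families}, but it introduces no new ideas.
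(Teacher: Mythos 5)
Your proposal is correct, and the quadratic-form half coincides exactly with the paper's: both feed the dimension-free Poincar\'e inequality for $\ell_p$ balls (Sodin for $p\in[1,2]$, Lata{\l}a--Wojtaszczyk for $p>2$ --- note you have these two attributions swapped, though you flag the citation as a loose end) into \cref{lem:cert-hyper-poincare}. Where you diverge is on linear forms. The paper handles them with a single black-box citation: Kothari--Steinhardt's theorem that any distribution with a dimension-independent Poincar\'e constant has certifiably hypercontractive \emph{linear} forms, so both halves of the lemma flow from the same Poincar\'e input. You instead give a self-contained argument: unconditionality kills the odd monomials, negative association of $(|x_i|)_i$ on the Orlicz ball (\cref{thm:negative-association}) gives $\E[\prod_i x_i^{2\ell_i}]\le\prod_i\E[x_i^{2\ell_i}]$, one-dimensional log-concavity controls each $\E[x_i^{2\ell}]$, and the multinomial expansion then exhibits $c_k^{2k}\norm{v}_2^{2k}-\E\iprod{x,v}^{2k}$ as a nonnegative combination of the square monomials $\prod_i v_i^{2\ell_i}$ --- a valid low-degree SoS certificate, since the moment bounds enter only as scalar coefficients. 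This is essentially the machinery the paper already deploys in \cref{cor:upper-bound-k-th-moment} (which likewise decouples via negative association), so your route is more elementary and reuses in-paper tools at the cost of a longer write-up and a slightly worse (but still $k$-dependent only) constant; the paper's route is shorter and unifies both halves under Poincar\'e, at the cost of an external black box. Both are sound.
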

\begin{proof}
This proof follows from two known facts and \cref{lem:cert-hyper-poincare}. First, Kothari and Steinhardt showed that for any distribution that satisfies a Poincar\'e inequality with a dimension-independent constant has certifiably-hyper-contractive linear forms ~\cite[Theorem 1.1]{kothari2018agnostic}. \cref{lem:cert-hyper-poincare} shows this implication for quadratic forms. Second, for $p \in [1,2]$, Sodin~\cite{sodin2008isoperimetric} showed that uniform distribution over the $\ell_p$ ball satisfies a Poincar\'e inequality with a dimension-independent constant. For $p \in (2,\infty)$, this was shown by Lata{\l}a and  Wojtaszczyk~\cite{latala2008infimum}. 
\end{proof}

\paragraph{Acknowledgements}

We thank Xue Chen for extensive discussions.  
G. R. was supported in part by NSF grants CCF-1816372
and CCF-2008920.

\printbibliography

\appendix

\section{Bounding the coefficients of the certificates}\label{sec: bounding_coeffs}


In this section, we argue the polynomials that arise in our SoS certificate have polynomially bounded bit complexity, which is needed for efficient algorithms. The strategy is to instantiate the SoS identity for various values of $v, w$ and then analyze the identities together to bound the coefficients of $q_i$.

In the previous sections, we showed the existence of the identity
\begin{align*}
    \delta n - \sum_{i = 1}^n w_i &= \left(\sum_j p_{0, j}(v, w)^2\right) + \sum_{i \le n} \left(\sum_j p_{i, j}(v, w)^2\right) (\delta w_i \cdot v^T \Sigma v - w_i \ip{v}{x_i}^2)\\
    &\qquad + \sum_i q_i(v, w) (w_i^2 - w_i) + r(w, v) (\norm{v}^2 - 1)
\end{align*}
where $p_{0, j}, p_{i, j}, q_i, r$ are polynomials of degree at most $t$.

\begin{lemma}\label{lem:bitcomplexity}
    $\norm{p_{0, j}}^2, \norm{p_{i, j}}^2, \norm{q_i}^2, \norm{r}^2 \le d^{2t}$
\end{lemma}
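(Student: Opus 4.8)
The plan is to bound the bit complexity of the SoS certificate produced in the proof of \cref{thm:main-anti-concentration-thm} by exploiting the specific syntactic shape of the identity rather than any abstract completeness argument. The first step is to view the identity
\[
\delta n - \sum_{i=1}^n w_i = \Paren{\sum_j p_{0,j}(v,w)^2} + \sum_{i\le n}\Paren{\sum_j p_{i,j}(v,w)^2}\Paren{\delta w_i v^\top \Sigma v - w_i \iprod{v}{x_i}^2} + \sum_i q_i(v,w)(w_i^2 - w_i) + r(w,v)(\norm{v}^2 - 1)
\]
as a linear system in the unknown coefficient vectors of the $p_{0,j}, p_{i,j}, q_i, r$, once we fix the monomial supports (all of degree at most $t$). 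The key observation is that SDP duality / the ellipsoid algorithm producing the certificate can be taken to output a vertex (basic feasible solution) of the corresponding spectrahedron, and a standard argument (see e.g. \cite{fleming2019semialgebraic, raghavendra2017strongly}) bounds the bit complexity of such a vertex in terms of the bit complexity of the constraints defining it. Since every coefficient appearing in the constraints — the entries of $\Sigma$, the $x_i$, and the combinatorial constants $\delta n$, $1$ — has bit complexity $n^{O(1)}$ after the usual preprocessing (the $x_i$ are bounded almost surely by \cref{fact:sub-exponential-dist-bounded}, and $n_0 \ge d^t$), and the number of variables is $d^{O(t)}$, the vertex has entries of magnitude at most $2^{d^{O(t)}}$; the sharper bound $d^{2t}$ requires the instantiation trick below.

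The sharper route, which is the one sketched in the excerpt, is to \emph{instantiate} the identity at cleverly chosen $(v,w)$ and read off bounds directly. First I would set $w = 0$ and let $v$ range over the sphere: this kills the $w_i$-terms and the $(w_i^2 - w_i)$ terms, leaving $\delta n = \sum_j p_{0,j}(0,v)^2 + r(0,v)(\norm{v}^2-1)$, and then further restricting to $\norm{v}^2 = 1$ gives $\delta n = \sum_j p_{0,j}(0,v)^2$ on the sphere; evaluating the left side (a constant) and using that a nonnegative combination of squares summing to a bounded constant on a net of points forces each $p_{0,j}(0,v)$ to be bounded pointwise, hence (by interpolation / the fact that a degree-$t$ polynomial bounded on enough points has bounded coefficients) $\norm{p_{0,j}}^2 \le d^{O(t)}$. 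Next I would handle the off-diagonal pieces: plug in $w = e_i$ (the $i$-th standard basis vector, which satisfies $w_i^2 = w_i$, killing all $(w_j^2-w_j)$ terms) together with $\norm{v}^2 = 1$, so the identity collapses to $\delta n - 1 = \sum_j p_{0,j}(e_i,v)^2 + \sum_j p_{i,j}(e_i,v)^2 (\delta v^\top\Sigma v - \iprod{v}{x_i}^2)$; choosing $v$ in the direction where $\iprod{v}{x_i}^2$ is large enough (and using the already-established bound on the $p_{0,j}$) forces the coefficient $(\delta v^\top \Sigma v - \iprod{v}{x_i}^2)$ negative, pinning $\sum_j p_{i,j}(e_i,v)^2$, and then varying $v$ over a net and interpolating yields $\norm{p_{i,j}}^2 \le d^{O(t)}$. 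Finally, with the $p$'s controlled, the $q_i$ and $r$ are recovered by linear algebra: fix generic $v,w$, treat the identity (and its evaluations at $w$ perturbed off the cube and $v$ perturbed off the sphere) as a linear system for the coefficients of $q_i$ and $r$, and bound its solution by Cramer's rule using that all other terms are now bounded by $d^{O(t)}$.

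The main obstacle I anticipate is the interpolation/conditioning step: converting pointwise bounds "$|p(v)| \le B$ on a net $N$ of the sphere" into a coefficient bound "$\norm{p}^2 \le B^2 \cdot d^{O(t)}$" requires control over the smallest singular value of the Vandermonde-type matrix indexed by monomials of degree $\le t$ and evaluation points in $N$; on the sphere this is delicate because the monomials are not linearly independent modulo $\norm{v}^2 - 1$, so one must work in a chosen monomial basis for the quotient ring and verify the net is "generic enough" to be unisolvent for that basis. A clean way around this is to only ever claim coefficient bounds in a \emph{fixed} reduced basis (e.g. monomials where no variable has degree $\ge 2$ after reducing by the sphere and cube constraints), and to choose the evaluation points as a deterministic well-separated grid for which the relevant Vandermonde minor is bounded below by an explicit $d^{-O(t)}$; this is the technical heart and the place where the $d^{2t}$ (rather than a worse $d^{O(t)}$) bound must be squeezed out by being careful about constants and the number of monomials, which is at most $\binom{d+t}{t} \le d^t$. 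The rest — propagating bounds through the finitely many instantiations and the final Cramer's-rule step — is routine linear algebra over numbers of bit complexity $n^{O(1)}$, so I would state those as lemmas and not grind the arithmetic.
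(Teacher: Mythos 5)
Your overall plan---instantiate the identity at carefully chosen $(v,w)$ and read off coefficient bounds---is the same as the paper's, but the two steps you flag as routine or delicate are exactly where your argument diverges from (and falls short of) what the paper does. First, the conversion from evaluation bounds to coefficient bounds: you propose a net on the sphere plus a Vandermonde/unisolvence argument and correctly identify it as the technical heart, but you do not carry it out, and the quotient-ring issues you raise are real. The paper sidesteps nets entirely: it keeps $w$ at a Boolean point $w^*$ and takes $v$ \emph{Gaussian}, with covariance $\Theta$ chosen to be $\eta^2 = \delta^2/(400\log t)$ on $\mathrm{span}\{x_i : i \in U\cup\{1\}\}$ and the identity on the orthogonal complement. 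Then $\EE_v[p_{i,j}(v,w^*)^2] \ge \Var_v[p_{i,j}(v,w^*)]$, and after the change of variables $z=\Theta^{-1/2}v$ the variance of a polynomial of a standard Gaussian equals the sum of squares of its coefficients, which is related back to the monomial coefficients of $p_{i,j}$ through the smallest singular value of $\Theta^{1/2}$, losing only a factor $t!/\eta^t \le c_t/\delta^t$. This replaces your Vandermonde-conditioning problem with a one-line singular-value bound.

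Second, your sign choice creates a circularity, and your set of $w$-instantiations is too small. The paper chooses the Gaussian so that $\iprod{v,x_i}^2 \le \delta^2/100$ for every $i$ in the support of $w^*$, making each coefficient $\delta w_i\, v^\top\Sigma v - w_i\iprod{v,x_i}^2$ \emph{positive} and bounded below by $\Omega(\delta)$ in expectation; then every term on the right-hand side is separately nonnegative and hence separately at most $\delta n$, which bounds all the $p_{i,j}$ at once with no dependence on the $p_{0,j}$ bounds. You instead drive that coefficient negative, so your bound on $\sum_j p_{i,j}(e_i,v)^2$ needs a prior bound on $\sum_j p_{0,j}(e_i,v)^2$---but your $w=0$ instantiation only controls $p_{0,j}(\cdot,0)$, i.e., the $w$-free coefficients, not the values at $w=e_i$. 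Relatedly, evaluating only at $w=0$ and $w=e_i$ gives no handle on coefficients of monomials of $p_{i,j}$ involving $w_{i'}$ for $i'\neq i$, or several $w$'s at once; you must range over all Boolean $w^*$ of support size at most $t$ (the paper's subsets $U$) and combine. The Cramer's-rule recovery of $q_i$ and $r$ is plausible but left at the level of intent; the paper is equally terse there, so the decisive gaps are the two above.
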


\begin{proof}
We will show the argument for $\norm{p_{1, j}}^2$, the rest are similar.
Assume without loss of generality that $\Sigma = I$. Fix a subset $U$ of indices in $[n]$ of size at most $t$. We will consider the coefficients of the terms which depends on $(w_i: i \in U)$ and not on $(w_i: i \not\in U)$. To do this, set $w = w^*$ where $w^*_i = 1$ if $i = 1$ or $i \in U$ and $0$ otherwise. 
Also sample $v$ from a Gaussian distribution with $\EE\norm{v}^2 = 1$ such that $\ip{v}{x_i}^2 \le \frac{\delta^2}{100}$ w.h.p. for all $i \in U \cup \{1\}$.
Note that this can be done because there are only $t+1$ such constraints.
Moreover, we can choose the mean of $v$ to be $0$ and  covariance $\Theta$ of the distribution of $v$ to be $\eta^2= \delta^2/(400 \log t)$ in the space corresponding to the span of $\{x_i: i \in U \cup \{1\}\}$ and $1$ in every direction orthogonal to this subspace. Then, we scale the covariance by a constant factor so as to satisfy the norm constraint. 
With this choice, then the identity simplifies to (after taking expectation)
\begin{align*}
    \delta n - |U \cup \{1\}| &= \left(\sum_j \EE_v p_{0, j}(v, w^*)^2\right) + \sum_{i \in U \cup \{1\}} \left(\sum_j p_{i, j}(v, w^*)^2\right) \big(\frac{99 \delta}{100}\big)
\end{align*}
This implies
\begin{align}
    \delta n  \ge \frac{99\delta}{100} \sum_{i \in U \cup \{1\}} \EE_v\left(\sum_j p_{i, j}(v, w^*)^2\right) \label{eq:bitcomplex:eq1}
\end{align}
where many terms got zeroed out because of our choice of $w^*$ and for the nonzero $w^*_i$, we invoked the constraint bound on the terms $\delta w_i \cdot v^T \Sigma v - w_i \ip{v}{x_i}^2$.

It now suffices to prove the following claim.

\noindent {\bf Claim.} {\em For some constant $c=c(t)>0$ 
\begin{equation}\label{eq:bitcomplex:eq2}
\EE_{v}[p_{ij}(v,w^*)^2] \ge \Var[p_{ij}(v,w^*)] \ge \Big( \frac{c}{\delta^t}\Big) \|p_{ij} \|^2,
\end{equation}
where $\| p_{ij}\|^2$ denotes the sum of the squares of the coefficients. }

Once we establish the claim,  \eqref{eq:bitcomplex:eq2} together with \eqref{eq:bitcomplex:eq1} establishes Lemma~\ref{lem:bitcomplexity}. We now prove the above claim. 

Let $z=\Theta^{-1/2} v$, so $z$ is a standard Gaussian with identity covariance. For simplicity, let the polynomial $f(v) \coloneqq p_{ij}(v,w^*)$ and let $f(v)=\sum_{j=0}^t f_j(v)$ where $f_j$ is a homogenous polynomial of degree $j$. Finally let $\forall j \in \{0,1,\dots, t\}$ let $h_j(z)=f_j(\Theta^{1/2} z)$, and let $h(z) \coloneqq \sum_{j=0}^t h_j(z) = f(\Theta^{1/2} z)$.

From standard facts about standard Gaussians, when $z \sim N(0,I)$ we have
\begin{equation}\Var_v[p_{ij}(v,w^*)]=\Var_v[f(v)]=\Var_{z}[h(z)]=\norm{h}^2=\sum_{j=0}^t \norm{h_j}^2,\label{eq:bitcomplexity:3}
\end{equation}
where $\norm{h}^2$ denotes the sum of the squares of the coefficients  of $h$. 

To relate $\norm{h_j}$ and $\norm{f_j}$, let us denote by $T^{(f)}_j$ the symmetric tensor that satisfies $f_j(v)=\iprod{T^{(f)}_j, v^{\otimes j}}$. We have 
$$ f_j(v)=\iprod{T^{(f)}_j, v^{\otimes j}} = \Big\langle (\Theta^{1/2})^{\otimes j} ~ T^{(f)}_j, z^{\otimes j} \Big\rangle = h_j(z).$$ 
Hence it follows that
\begin{align*}
\norm{h_j}^2 &\ge \Big\langle (\Theta^{1/2})^{\otimes j} ~ T^{(f)}_j, (\Theta^{1/2})^{\otimes j} ~T^{(f)}_j \Big\rangle \ge \sigma_{\min}(\Theta)^{j} \cdot \norm{T^{(f)}_j}_F^2 \ge \eta^j \cdot \frac{\norm{f_j}^2}{j!}.\\
\text{Hence } \norm{f_j}^2&\le \frac{j!}{\eta^j} \norm{h_j}^2 \le \frac{t!}{\eta^t} \norm{h_j}^2 \le \frac{c_t}{\delta^t} \cdot \norm{h_j}^2 ~~~\forall j \in \{0,1,\dots, t\},
\end{align*}
and some constant $c_t>0$ that depends on $t$. Summing over all $j \in \{0,1,\dots,t\}$ and using \eqref{eq:bitcomplexity:3} establishes the claim. This completes the proof of Lemma~\ref{lem:bitcomplexity}. 
\end{proof}

\end{document}